\theoremstyle{plain}                  
\newtheorem{theorem}{Theorem}
\newtheorem{proposition}[theorem]{Proposition}
\newtheorem{conjecture}[theorem]{Conjecture}
\newtheorem{corollary}[theorem]{Corollary}         
\newtheorem{definition}[theorem]{Definition}
\newtheorem{example}[theorem]{Example}
\title{Counting Triangles under Updates in Worst-Case Optimal Time}
\author{
Ahmet Kara$^1$, Hung Q. Ngo$^2$, Milos Nikolic$^3$, Dan Olteanu$^1$, Haozhe Zhang$^1$ \\ \\
$^1$University of Oxford \enspace\enspace $^2$RelationalAI, Inc. \enspace\enspace $^3$University of Edinburgh
}
\definecolor{light-gray}{gray}{0.7.2}
\newcommand{\bigO}[1]{\mathcal{O}(#1)}
\newcommand{\eps}{\epsilon}    
\newcommand{\ivme}{\text{IVM}$^{\eps}$\xspace}
\newcommand{\db}{\mathbf{D}}
\newcommand{\dbeps}{\mathbf{P}}
\newcommand{\Dom}{\mathsf{Dom}}
\newcommand{\inst}[1]{\mathbf{#1}}
\newcommand{\state}{\mathcal{Z}}
\newcommand{\upd}{\mathit{u}}
\newcommand{\update}{\mathit{apply}}
\newcommand{\minor}{\mathit{minor}}
\newcommand{\major}{\mathit{major}}
\newcommand{\OuMv}{\textsf{OuMv}\xspace}
\newcommand{\OMv}{\textsf{OMv}\xspace}
\newcommand{\ztimes}{\cdot}
\newcommand{\vecnormal}[1]{\textnormal{\bf #1}\xspace}
\newcommand{\tuple}[1]{(#1)}
\newcommand{\floor}[1]{\left\lfloor #1 \right\rfloor}
\newcommand{\ceil}[1]{\left\lceil #1 \right\rceil}
\newcommand{\Qsj}[0]{}
\newcommand{\deltaA}{\alpha}
\newcommand{\deltaB}{\beta}
\newcommand{\bdeltaA}{\boldsymbol{\alpha}}
\newcommand{\deltaC}{\gamma}
\newcommand{\p}{\mathit{m}}
\newcommand{\threer}{3\text{R}\xspace}
\newcommand{\threerpj}[1]{3\text{R}\ensuremath{^{#1}}\xspace}
\newcommand{\atoms}[1]{\mathit{rels}(#1)}
\newcommand{\parts}[1]{\textit{parts}(#1)}
\newcommand{\SAV}{\textit{SAV}}
\newcommand{\TAB}{\makebox[2.5ex][r]{}}%
\newcommand{\STAB}{\makebox[1.5ex][r]{}}%
\newcommand{\OUTPUT}{\textbf{output}\xspace}%
\newcommand{\LET}{\textbf{let}\xspace}%
\newcommand{\IF}{\textbf{if}\xspace}%
\newcommand{\ELSE}{\textbf{else}\xspace}%
\newcommand{\FOREACH}{\textbf{foreach}\xspace}%
\newcommand{\DO}{\textbf{do}\xspace}%
\newcommand{\AND}{\textbf{and}\xspace}%
\newcommand{\OR}{\textbf{or}\xspace}%
\newcommand{\RETURN}{\textbf{return}\xspace}%
\newcommand{\nop}[1]{}
\newcounter{CommentCounter}
\newcounter{magicrownumbers}
\newcommand\rownumber{\footnotesize\stepcounter{magicrownumbers}\arabic{magicrownumbers}}
\begin{document}
\maketitle
\begin{abstract}
\noindent
We consider the problem of incrementally maintaining the triangle count query under single-tuple updates to the input relations. We introduce an approach that exhibits a space-time tradeoff such that the space-time product is quadratic in the size of the input database and the update time can be as low as the square root of this size. This lowest update time is worst-case 
optimal conditioned on the Online Matrix-Vector Multiplication conjecture.

The classical and factorized incremental view maintenance approaches are recovered as special cases of our approach within the space-time tradeoff. In particular, they require linear-time maintenance under updates, which is suboptimal. Our approach can also count all triangles in a static database in the worst-case optimal time needed for enumerating them.
\end{abstract}

\section{Introduction}\label{sec:intro}
 
We consider the problem of incrementally maintaining the result of the triangle count query
\begin{align}
\label{query:triangle}
Q() = \sum\limits_{a \in \Dom(A)}\sum\limits_{b \in \Dom(B)}\sum\limits_{c \in \Dom(C)} R(a,b)\ztimes S(b,c) \ztimes T(c,a)
\end{align}
under single-tuple updates to the relations $R$, $S$, and $T$
with schemas $(A,B)$, $(B,C)$, and $(C,A)$, respectively.
The relations are given as functions mapping tuples over relation schemas 
to tuple multiplicities. 
 A single-tuple update $\delta R = \{\,(\deltaA,\deltaB) \mapsto \p\,\}$ to relation $R$ maps the tuple 
$(\deltaA, \deltaB)$ to a nonzero multiplicity $\p$, which is positive for inserts and negative for deletes.

The triangle query and its counting variant have served as a milestone for worst-case optimality of join algorithms in the centralized and parallel settings and for randomized approximation schemes for data processing. They serve as the workhorse showcasing suboptimality of mainstream join algorithms used currently by virtually all commercial database systems. For a database $\db$ consisting of $R$, $S$, and $T$, standard binary join plans implementing these queries may take $O(|\db|^2)$ time, yet these queries can be solved in $\bigO{|\db|^{\frac{3}{2}}}$ time~\cite{AYZ:Counting:1997}.
This observation motivated a new line of work on worst-case optimal algorithms for arbitrary join queries~\cite{Ngo:SIGREC:2013}. The triangle query has also served as a yardstick for understanding the optimal communication cost for parallel query evaluation in the Massively Parallel Communication model~\cite{Koutris:FTDB:2018}. The triangle count query has witnessed the development of randomized approximation schemes with increasingly lower time and space requirements, e.g.,~\cite{Eden:approximately:FOCS:2015}.

A worst-case optimal result for incrementally maintaining the exact triangle count query has so far not been established. Incremental maintenance algorithms may benefit from a good range of processing techniques whose flexible combinations may make it harder to reason about optimality. Such techniques include algorithms for aggregate-join queries with low complexity developed for the non-incremental case~\cite{Ngo:PODS:2018}; pre-materialization of views that reduces maintenance of the query to that of simpler subqueries~\cite{KochAKNNLS14}; and delta processing that allows to only compute the change in the result instead of the entire result~\cite{Chirkova:Views:2012:FTD}.

\subsection{Existing Incremental View Maintenance (IVM) Approaches}

The problem of incrementally maintaining the triangle count has received a fair amount of attention. Existing exact approaches require at least linear time in worst case. 
After each update to a database $\db$, 
the na\"ive approach joins the relations 
$R$, $S$, and $T$ in time $\bigO{|\db|^{\frac{3}{2}}}$ 
using a worst-case optimal algorithm~\cite{AYZ:Counting:1997,Ngo:SIGREC:2013}
and counts the result tuples. 
The number of distinct tuples in the result is at most 
$|\db|^{\frac{3}{2}}$, which is a well-known result by 
Loomis and Whitney from 1949 (see recent notes on the history of this result~\cite{Ngo:PODS:2018}). 
The classical first-order IVM~\cite{Chirkova:Views:2012:FTD} computes on the 
fly a delta query $\delta Q$ per single-tuple update $\delta R$ to relation $R$ (or any other relation) and updates the query result:
\begin{align*}
\delta Q() = \delta R(\deltaA,\deltaB) \ztimes \sum\limits_{c \in \Dom(C)} S(\deltaB,c) \ztimes T(c,\deltaA),
\hspace*{6em} Q() = Q() + \delta Q().
\end{align*}
The delta computation takes $\bigO{|\db|}$ time since it needs to intersect two lists of possibly linearly many $C$-values that are paired with $\deltaB$ in $S$ and with $\deltaA$ in $T$ (i.e., the multiplicity of such pairs in $S$ and $T$ is nonzero).  
The recursive IVM~\cite{KochAKNNLS14} speeds up the delta computation by precomputing three auxiliary views representing the update-independent parts of the delta queries for updates to $R$, $S$, and $T$:

\begin{align*}
V_{ST}(b,a) &= \sum\limits_{c \in \Dom(C)} S(b,c) \ztimes T(c,a) \\
V_{TR}(c,b) &= \sum\limits_{a \in \Dom(A)} T(c,a) \ztimes R(a,b) \\
V_{RS}(a,c) &= \sum\limits_{b \in \Dom(B)} R(a,b) \ztimes S(b,c).
\end{align*}
These three views take $\bigO{|\db|^2}$ space but allow to compute the delta query for single-tuple updates to the input relations in $\bigO{1}$ time. 
Computing the delta $\delta Q() = \delta R(\deltaA,\deltaB) \ztimes  V_{ST}(\deltaB,\deltaA)$ requires just a constant-time lookup in $V_{ST}$;
however, maintaining the views $V_{RS}$ and $V_{TR}$, which refer to $R$, still requires $\bigO{|\db|}$ time.
The factorized IVM~\cite{NO18} materializes only one of the three views, for instance, $V_{ST}$. In this case, the maintenance under updates to $R$ takes $\bigO{1}$ time, but the maintenance under updates to $S$ and $T$ still takes $\bigO{|\db|}$ time.

Further exact IVM approaches focus on acyclic conjunctive queries. For free-connex acyclic conjunctive queries, the dynamic Yannakakis approach allows for enumeration of result tuples with constant delay under single-tuple updates~\cite{Idris:dynamic:SIGMOD:2017}. For databases with or without integrity constraints, it is known that a strict, small subset of the class of acyclic conjunctive queries admit constant-time update, while all other conjunctive queries have update times dependent on the size of the input database~\cite{BerkholzKS17,Berkholz:ICDT:2018}. 

Further away from our line of work is the development of dynamic descriptive complexity, starting with the DynFO complexity class and the much-acclaimed result on FO expressibility of the maintenance for graph reachability under edge inserts and deletes, cf.\@ a recent survey~\cite{Schwentick:DynamicComplexity:2016}. 
The $k$-clique query can be maintained under edge inserts by a quantifier-free update program of arity $k-1$ but not of arity $k-2$~\cite{Zeume:Clique:2017}.

A distinct line of work investigates randomized approximation schemes with an arbitrary relative error for counting triangles in a graph given as a stream of edges, e.g.,~\cite{Bar:reductions:SODA:2002,Jowhari:new:COCOON:2005,Buriol:counting:PODS:2006,Mcgregor:better:PODS:2016,Cormode:secondlook:TCS:2017}. Each edge in the data stream corresponds to 
a tuple insert, and tuple deletes are not considered. The emphasis of these approaches is on space efficiency, and they express the space utilization as a function of the number of nodes and edges in the input graph and of the number of triangles. The space utilization is generally sublinear but may become superlinear if, for instance, the number of edges is greater than the square root of the number of triangles. The update time is polylogarithmic in the number of nodes in the graph.

A complementary line of work unveils structure in the PTIME complexity class by giving lower bounds on the complexity of problems under various conjectures~\cite{Henzinger:OMv:2015,Williams:2018:finegrained}.

\begin{definition}[Online Matrix-Vector Multiplication (\OMv)~\cite{Henzinger:OMv:2015}]\label{def:OMv}
We are given an $n \times n$ Boolean matrix $\vecnormal{M}$ and  receive $n$ column vectors of size $n$, denoted by $\vecnormal{v}_1, \ldots, \vecnormal{v}_n$, one by one; after seeing each vector $\vecnormal{v}_i$, we output the product $\vecnormal{M} \vecnormal{v}_i$ before we see the next vector.
\end{definition}

\begin{conjecture}[\OMv Conjecture, Theorem 2.4 in~\cite{Henzinger:OMv:2015}]\label{conj:omv}
For any $\gamma > 0$, there is no algorithm that solves \OMv in time $\bigO{n^{3-\gamma}}$.
\end{conjecture}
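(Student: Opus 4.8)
Because this statement is a hardness \emph{conjecture}, ``proving'' it means establishing an unconditional lower bound of $n^{3-o(1)}$ on the total time of any \OMv algorithm. The plan is to first isolate what makes the problem hard. If the vectors $\vecnormal{v}_1, \ldots, \vecnormal{v}_n$ were revealed \emph{all at once}, one could stack them into an $n \times n$ matrix and compute every product $\vecnormal{M}\vecnormal{v}_i$ by a single fast matrix multiplication in subcubic $\bigO{n^{\omega}}$ time, where $\omega < 3$ is the matrix-multiplication exponent. Thus the entire force of the conjecture lies in the \emph{online} restriction --- each $\vecnormal{v}_i$ must be answered before the next is seen --- which forbids this batching. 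A proof must therefore show that no query-independent precomputation on $\vecnormal{M}$ alone lets an algorithm amortize the cost of the adaptively chosen, one-at-a-time queries.

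I would cast \OMv as a static data-structure problem: preprocess $\vecnormal{M}$ into $s$ memory cells of $w = \Theta(\log n)$ bits, then answer each query vector with as few cell probes as possible. Reading the $n$ bits of an input vector and writing the $n$ bits of $\vecnormal{M}\vecnormal{v}_i$ already forces $\Theta(n)$ work per query and hence a trivial $\Theta(n^2)$ total; the real target is the far stronger per-query bound of $\Omega(n^2)$, so that summing over the $n$ queries yields $\Omega(n^3)$. To obtain it I would attempt an adversary or encoding argument that chooses the $\vecnormal{v}_i$ adaptively and shows that a sublinear-probe query algorithm would let us compress $\vecnormal{M}$ below its $\Theta(n^2)$-bit entropy, a contradiction --- the point being to rule out any shortcut in the fixed precomputed state that all $n$ online vectors could share.

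The hard part --- and the reason this remains a conjecture rather than a theorem --- is precisely the gap between the easy $\Theta(n^2)$ total bound and the conjectured $\Omega(n^3)$: closing it requires forbidding the cross-query amortization that the offline version achieves in $\bigO{n^{\omega}}$ time, and no known technique separates online from offline complexity at the polynomial scale. Cell-probe lower bounds for static data-structure problems currently top out at polylogarithmic per-query factors, nowhere near the $\Omega(n^2)$ needed, and an unconditional $n^{3-o(1)}$ bound would be a major breakthrough that must evade the relativization, algebrization, and natural-proofs barriers. Consistent with this, the best known \OMv algorithms improve on $\bigO{n^3}$ only by subpolynomial factors. I therefore do not expect the sketched program to go through with present tools; accordingly the statement is adopted as an unproven hardness assumption, and the remainder of the paper invokes it only to derive \emph{conditional} worst-case optimality.
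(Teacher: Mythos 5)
Your treatment is exactly right and matches the paper's: the statement is the \OMv conjecture of Henzinger et al.~\cite{Henzinger:OMv:2015}, which the paper does not prove but imports as an unproven hardness assumption used only to derive conditional lower bounds (Proposition~\ref{prop:lower_bound_triangle_count} and its corollaries). Your discussion of why an unconditional proof is out of reach is sound, and your conclusion --- adopt the conjecture as an assumption --- is precisely what the paper does.
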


The \OMv conjecture has been used to exhibit conditional lower bounds 
for many dynamic problems, including those previously based on other popular problems and conjectures, such as 3SUM and combinatorial Boolean matrix multiplication~\cite{Henzinger:OMv:2015}. This also applies to our triangle count query:
For any $\gamma > 0$ and database of domain size $n$, 
there is no algorithm that incrementally maintains
the triangle count under single-tuple updates
with arbitrary preprocessing time, $\bigO{n^{1-\gamma}}$ update time,
and $\bigO{n^{2-\gamma}}$ answer time,
unless the \OMv conjecture fails~\cite{BerkholzKS17}.

\subsection{Our Contribution}
This paper introduces \ivme, an incremental view maintenance approach that maintains the triangle count in amortized sublinear time. Our main result is as  follows:

\begin{theorem}\label{theo:main_result}
Given a database $\db$ and $\eps \in [0,1]$, 
\ivme incrementally maintains the result of Query~\eqref{query:triangle} under single-tuple updates to $\db$ with 
$\bigO{|\inst{D}|^{\frac{3}{2}}}$ preprocessing time, $\bigO{|\inst{D}|^{\max\{\eps,1-\eps\}}}$ amortized update time, 
constant answer time, and $\bigO{|\inst{D}|^{1 + \min\{\eps,1-\eps\}}}$ space.
\end{theorem}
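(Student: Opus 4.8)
The plan is to turn data skew into an asset via a degree-based partition, materializing auxiliary views only for the ``heavy'' interactions that are too costly to recompute, and handling every ``light'' interaction by bounded iteration. I would fix a threshold $\theta=|\db|^{\eps}$ and call a value \emph{heavy} in a relation when its degree there exceeds $\theta$; since the degrees in a relation sum to at most $|\db|$, there are at most $|\db|/\theta=|\db|^{1-\eps}$ heavy values, while each light value has at most $\theta=|\db|^{\eps}$ matching partners. I would then partition each relation on its first attribute in the cyclic order, writing $R=R_{H}\uplus R_{L}$, $S=S_{H}\uplus S_{L}$, $T=T_{H}\uplus T_{L}$ according to whether the $A$-, $B$-, resp.\ $C$-value is heavy, which splits $Q()$ into the eight terms $\sum_{a,b,c}X(a,b)\cdot Y(b,c)\cdot Z(c,a)$ with $X\in\{R_{H},R_{L}\}$, and so on. Besides the scalar $Q()$ itself (which gives constant answer time), I would maintain exactly three partial views, one per relation's delta:
\begin{align*}
V_{S_{H}T_{L}}(b,a) &= \sum_{c} S_{H}(b,c)\cdot T_{L}(c,a),\\
V_{T_{H}R_{L}}(c,b) &= \sum_{a} T_{H}(c,a)\cdot R_{L}(a,b),\\
V_{R_{H}S_{L}}(a,c) &= \sum_{b} R_{H}(a,b)\cdot S_{L}(b,c).
\end{align*}

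Next I would process a single-tuple update by a uniform case analysis; by cyclic symmetry it suffices to treat $\delta R=\{(\alpha,\beta)\mapsto m\}$, where $\delta Q()=m\cdot\sum_{c}S(\beta,c)\cdot T(c,\alpha)$. If $\beta$ is light in $S$, then $S(\beta,\cdot)$ has at most $\theta$ entries, so I iterate over them and look up $T(c,\alpha)$ at cost $\bigO{\theta}$. If $\beta$ is heavy in $S$, the contribution splits into the part where $c$ is heavy in $T$ — recovered by scanning the at most $|\db|/\theta$ globally heavy $C$-values and testing membership in $S$ and $T$, at cost $\bigO{|\db|/\theta}$ — and the part where $c$ is light in $T$, which is precisely $V_{S_{H}T_{L}}(\beta,\alpha)$ and is read in $\bigO{1}$. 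Hence every update costs $\bigO{\theta+|\db|/\theta}=\bigO{|\db|^{\max\{\eps,1-\eps\}}}$ once $\theta=|\db|^{\eps}$, and the same budget covers view maintenance: a change to $S(\beta,\gamma)$ with $\beta$ heavy touches $V_{S_{H}T_{L}}(\beta,a)$ only for the at most $\theta$ values $a$ with $T_{L}(\gamma,a)\neq0$, and a change to $T(\gamma,\alpha)$ with $\gamma$ light touches it only for the at most $|\db|/\theta$ heavy $b$ with $S(b,\gamma)\neq0$. For the space I would bound the nonzero entries of $V_{S_{H}T_{L}}$ two complementary ways: fixing the at most $|\db|$ tuples $(b,c)$ of $S_{H}$, a light $c$ has at most $\theta$ partners $a$ in $T$, giving at most $|\db|\cdot\theta$; alternatively there are at most $|\db|/\theta$ heavy values $b$, each paired with at most $|\db|$ values $a$, giving at most $(|\db|/\theta)\cdot|\db|$. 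The minimum is $|\db|^{1+\min\{\eps,1-\eps\}}$, and the other two views are symmetric. Initialization then computes $Q()$ with the worst-case optimal triangle algorithm and builds the three views by the cheaper of the two enumeration orders above, all within $\bigO{|\db|^{3/2}}$ preprocessing.

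The hard part is that the partition is not static: each update shifts one degree by one, so a value can cross $\theta$ and must migrate between the heavy and light parts of its relation, which forces a bulk update of the affected view, and moreover $\theta=|\db|^{\eps}$ itself drifts as $|\db|$ changes. I would control both effects by amortization. First, \emph{hysteresis}: a value keeps its current side until its degree passes a constant factor of $\theta$, so that $\Omega(\theta)$ updates touching that value must separate two consecutive migrations; since migrating a value of degree $\Theta(\theta)$ recomputes $\bigO{|\db|/\theta}$ view entries per incident tuple, its amortized share stays within $\bigO{|\db|^{\max\{\eps,1-\eps\}}}$. Second, \emph{periodic rebuilding}: whenever $|\db|$ has grown or shrunk by a constant factor I rebuild the partition and all views from scratch in $\bigO{|\db|^{3/2}}$ and reset $\theta$; as consecutive rebuilds are $\Omega(|\db|)$ updates apart, the amortized cost is $\bigO{|\db|^{1/2}}\le\bigO{|\db|^{\max\{\eps,1-\eps\}}}$. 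Carefully accounting these two sources of amortized work — and checking that no migration ever pushes a view past the $\min$-bound above — is the crux; the remaining steps are the routine case checks already sketched.
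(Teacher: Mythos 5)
Your proposal is correct and follows essentially the same route as the paper: the degree-based heavy/light partition on the first attribute of each relation in the cyclic order, the same three auxiliary views over one heavy and one light part, the same four-way case analysis giving $\bigO{\theta+|\db|/\theta}$ per update, the two complementary enumeration orders for the $|\db|^{1+\min\{\eps,1-\eps\}}$ space bound, and the same two-level amortization (your hysteresis is the paper's minor rebalancing with the $\frac{1}{2}\theta$ versus $\frac{3}{2}\theta$ slack, and your periodic rebuild is its major rebalancing triggered by the size invariant). The only cosmetic difference is that the paper does not recount triangles at a rebuild (the count is unaffected by repartitioning), but your looser $\bigO{|\db|^{3/2}}$ rebuild cost still amortizes to $\bigO{|\db|^{1/2}}$ over $\Omega(|\db|)$ updates, so the stated bounds all go through.
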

 
The preprocessing time is for computing the triangle count on the initial database before the updates; if we start with the empty database, then this time is $\bigO{1}$.
The \ivme approach exhibits a tradeoff between space and amortized update time,
cf.\@ Figure~\ref{fig:time_space_plot}.

\ivme uses a data structure that partitions each input relation into a heavy part 
and a light part based on the degrees of data values. 
The degree of an $A$-value $a$ in relation $R$ is 
the number of $B$-values paired with $a$ in $R$.
The light part of $R$ consists of all tuples $(a,b)$ from $R$
such that the degree of $a$ in $R$ is below a certain threshold 
that depends on the database size and $\eps$. All other tuples
are included in the heavy part of $R$. Similarly, the 
relations $S$ and $T$ are partitioned based on the degrees
of $B$-values in $S$ and $C$-values in $T$, respectively.    
The maintenance is adaptive in that it uses different evaluation strategies for different heavy-light combinations of parts of the input relations that overall keep the update time sublinear. Section~\ref{sec:strategy} introduces this adaptive maintenance strategy.

As the database evolves under updates, \ivme needs to rebalance the heavy-light partitions to account for a new database size and updated degrees of data values. While this rebalancing may take superlinear time, it remains sublinear per single-tuple update. The update time is therefore amortized.
Section~\ref{sec:rebalancing} discusses the rebalancing strategy of \ivme.

\begin{figure}[t]
\begin{center}
\begin{tikzpicture}
\begin{axis}[
grid=major,
    grid style={dotted},
xmin=0, xmax=1, ymin=0, ymax=1.5,
every axis plot post/.append style={mark=none},
  xtick ={0, 0.5, 1},
  ytick ={0, 0.5, 1, 1.5},
  xticklabels={\footnotesize{$0$},\footnotesize{$\frac{1}{2}$},\footnotesize{$1$}},
   yticklabels={$$,\footnotesize{$\frac{1}{2}$},\footnotesize{$1$},\footnotesize{$\frac{3}{2}$}},  
y=2cm,
    x=3.2cm,
axis lines=middle,
    axis line style={->},
    x label style={at={(axis description cs:1.18,-0.06)}},
    xlabel={\footnotesize{$\eps$}},
    y label style={at={(axis description cs:-0.9,1.1)},align=center},
    ylabel=\footnotesize{Asymptotic} \\  \footnotesize{complexity} \\ $|\inst{D}|^y$,
  axis x line*=bottom,
  axis y line*=left,
  legend style={at={(1.6,1)},draw=none},
  legend entries={\footnotesize{Space},\footnotesize{Time}}
 ]

  \addplot[color=blue,mark=none,domain=0:1,thick,dashed] coordinates{ 
  (0, 1) 
  (1/2, 3/2)
  (1,1) 
}; 
  \addplot[color=red,mark=none,domain=0:1,thick] coordinates{ 
  (0, 1) 
  (1/2, 1/2)
  (1,1) 
}; 
\end{axis}
\node at(0,3.2) {\footnotesize{$y$}};
\node at(5,-1) {\footnotesize{$\eps = 1$}};
\node at(5,-1.4) {\footnotesize{classical IVM}};
\draw [->,>=stealth, dotted, line width=0.3mm] (4.95,-0.8) -- (3.3,-0.1);

\node at(-1.8,-1) {\footnotesize{$\eps = 0$}};
\node at(-1.8,-1.4) {\footnotesize{classical IVM}};
\draw [->,>=stealth, dotted, line width=0.3mm] (-1.8,-0.8) -- (-0.1,-0.1);

\node at(1.7,-1) {\footnotesize{$\eps_S = 0$}\ \ \ \footnotesize{$\eps_R = \eps_T = 1$}};
\node at(1.7,-1.3) {\footnotesize{or}};
\node at(1.7,-1.6) {\footnotesize{$\eps_R = \eps_S = 0$}\ \ \ \footnotesize{$\eps_T = 1$}};
\node at(1.7,-2) {\footnotesize{factorized IVM}};
\draw [->,>=stealth, dotted, line width=0.3mm] (0.8,-0.8) -- (0.1,-0.1);
\draw [->,>=stealth,dotted, line width=0.3mm] (2.4,-0.8) -- (3.15,-0.1);

\node at(4.3,1.4) {\footnotesize{}};
\node at(4.3,1.15) {\footnotesize{static case}};
\node at(4.3,0.8) {\footnotesize{$\eps = \frac{1}{2}$}};
\draw [->,>=stealth,dotted, line width=0.3mm] (3.65,0.65) -- (1.7,0.1);

\draw[rounded corners=1mm] (-2.8,-1.8) rectangle (-0.8,-0.7);
\draw[rounded corners=1mm] (4,-1.8) rectangle (6,-0.7);

\draw[rounded corners=1mm] (0.1,-2.3) rectangle (3.2,-0.7);

\draw[rounded corners=1mm](3.4,1.4) rectangle (5.3,0.5);
\end{tikzpicture}
\caption{
\ivme's space and amortized update time parameterized by $\eps$. 
The classical IVM is recovered by setting $\eps \in\{0,1\}$. 
The factorized IVM is recovered by setting $\eps_R \in \{0,1\}$, $\eps_S = 0$, and $\eps_T = 1$ 
when $V_{ST}$ is materialized (similar treatment when $V_{RS}$ or $V_{TR}$ is materialized).
For $\eps = \frac{1}{2}$, \ivme counts all triangles in a static database in the worst-case optimal time for enumerating them.}
\label{fig:time_space_plot}
\end{center}
\end{figure} 

For $\eps=\frac{1}{2}$, \ivme achieves the lowest update time $\bigO{|\inst{D}|^{\frac{1}{2}}}$ while requiring $\bigO{|\inst{D}|^{\frac{3}{2}}}$ space. This update time is optimal conditioned on the \OMv conjecture. For this, we specialize the 
lower bound result in~\cite{BerkholzKS17}  to refer to the size $|\inst{D}|$ of the database:

\begin{proposition}\label{prop:lower_bound_triangle_count}
For any $\gamma > 0$ and database $\db$,
there is no algorithm that incrementally maintains the result of Query~\eqref{query:triangle} under single-tuple updates to $\db$ with arbitrary preprocessing time, $\bigO{|\db|^{\frac{1}{2} - \gamma}}$ amortized update time, and $\bigO{|\db|^{1 - \gamma}}$ answer time, unless the \OMv conjecture fails.
\end{proposition}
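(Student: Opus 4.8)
The plan is to \emph{specialize} the domain-size lower bound of~\cite{BerkholzKS17} recalled above, which forbids $\bigO{n^{1-\gamma}}$ amortized update time together with $\bigO{n^{2-\gamma}}$ answer time, by exploiting the quadratic relationship between the database size $|\db|$ and the domain size $n$. I would first note that it suffices to establish the claim for $\gamma \in (0,\tfrac{1}{2})$: for any larger $\gamma$ the bounds $\bigO{|\db|^{1/2-\gamma}}$ and $\bigO{|\db|^{1-\gamma}}$ are pointwise no larger (for $|\db|\ge 1$) than the corresponding bounds for any $\gamma_0\in(0,\tfrac12)$, so any algorithm meeting the former also meets the latter; hence ruling out the latter rules out the former.

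I would then fix $\gamma\in(0,\tfrac12)$ and suppose, for contradiction, that some algorithm $\mathcal{A}$ maintains Query~\eqref{query:triangle} under single-tuple updates with amortized update time $\bigO{|\db|^{1/2-\gamma}}$ and answer time $\bigO{|\db|^{1-\gamma}}$ after arbitrary preprocessing. The key step is the size bound: since $R$, $S$, and $T$ are binary relations over a domain of size $n$, each holds at most $n^2$ tuples, so $|\db|\le 3n^2$ at every point of any update sequence over such a domain. Because $0<\gamma<\tfrac12$ makes both exponents positive, the maps $x\mapsto x^{1/2-\gamma}$ and $x\mapsto x^{1-\gamma}$ are nondecreasing, and therefore $\mathcal{A}$ runs with amortized update time $\bigO{(n^2)^{1/2-\gamma}}=\bigO{n^{1-2\gamma}}$ and answer time $\bigO{(n^2)^{1-\gamma}}=\bigO{n^{2-2\gamma}}$ on every database of domain size $n$. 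Setting $\gamma':=2\gamma\in(0,1)$, algorithm $\mathcal{A}$ thus maintains the triangle count with $\bigO{n^{1-\gamma'}}$ amortized update time and $\bigO{n^{2-\gamma'}}$ answer time, which is precisely the regime excluded by~\cite{BerkholzKS17}; hence the \OMv conjecture fails, contradicting our assumption.

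I do not expect a deep obstacle here, since the argument is essentially a change of the complexity parameter from $n$ to $|\db|$; the care lies in two bookkeeping points. First, the bound of~\cite{BerkholzKS17} must be read as forbidding \emph{amortized} update time, which it is, as it is obtained by bounding the \emph{total} running time over the $\Theta(n^2)$ updates of the \OMv/\OuMv reduction; this is what legitimizes contradicting it with an amortized-time algorithm. Second, to confirm that the $\tfrac12$ threshold in $|\db|$ lines up exactly with the $1$ threshold in $n$ (rather than merely giving a valid but loose translation), I would verify that the reduction underlying~\cite{BerkholzKS17} can be taken to produce databases of size $\Theta(n^2)$: encoding a dense Boolean matrix in $S$ yields $|S|=\Theta(n^2)$, while the vectors encoded in $R$ and $T$ contribute only $\bigO{n}$ tuples, so $|\db|=\Theta(n^2)$ and the exponent halving is tight.
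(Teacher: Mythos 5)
Your proposal is correct, but it takes a genuinely different route from the paper. You derive the proposition as a black-box corollary of the domain-size lower bound of~\cite{BerkholzKS17} quoted in the introduction, using only the inequality $|\db|\le 3n^2$ for binary relations over a domain of size $n$ and the substitution $\gamma'=2\gamma$; the paper instead gives a self-contained reduction from \OuMv (procedure \textsc{SolveOuMv} in Appendix~\ref{sec:lowerbound}): it encodes the matrix $\vecnormal{M}$ in $S$ and the vectors $\vecnormal{u}_r,\vecnormal{v}_r$ in $R$ and $T$, and bounds the total running time of the resulting \OuMv solver by $\bigO{n^{3-2\gamma}}$, contradicting Conjecture~\ref{conj:OuMv}. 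The substance is the same in both cases --- the quadratic relationship $|\db|=\bigO{n^2}$ is what halves the exponents --- but the trade-offs differ. Your argument is shorter and cleanly separates the parameter translation from the hardness reduction; however, it leans on two features of the cited statement that are not visible from its surface form, namely that it tolerates \emph{amortized} update time and that it quantifies over all $\gamma>0$. You correctly identify the first as the delicate point and justify it by the fact that the underlying reduction only ever bounds total time over $\Theta(n^2)$ updates, but making this airtight still requires opening up the proof of~\cite{BerkholzKS17}, which is essentially what the paper's self-contained version does. The paper's explicit reduction also pays for itself later: the same construction is reused verbatim to get the enumeration-delay bound of Corollary~\ref{prop:lower_bound_triangle_enum} and is adapted for the Loomis--Whitney and 4-path lower bounds. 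Your closing observation about tightness (that the hard instances have $|\db|=\Theta(n^2)$, so the $\tfrac12$ exponent is not an artifact of a loose translation) is a sensible sanity check but is not needed for the proposition as stated.
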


This lower bound is shown in Appendix \ref{sec:lowerbound}.
Theorem~\ref{theo:main_result} and Proposition~\ref{prop:lower_bound_triangle_count} imply that \ivme incrementally maintains the triangle count with optimal update time:

\begin{corollary}[Theorem~\ref{theo:main_result} and Proposition~\ref{prop:lower_bound_triangle_count}]\label{cor:ivme_optimal}
Given a database $\db$, \ivme incrementally maintains the result of Query~\eqref{query:triangle} under single-tuple updates to $\db$ with worst-case optimal amortized update time $\bigO{|\db|^{\frac{1}{2}}}$ and constant answer time, unless the \OMv conjecture fails. 
\end{corollary}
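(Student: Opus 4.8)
The plan is to obtain the corollary as a direct consequence of the matching upper and lower bounds already in hand, with the single design choice $\eps = \frac{1}{2}$ tying them together; no new analytic machinery is required.

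First I would instantiate Theorem~\ref{theo:main_result} at $\eps = \frac{1}{2}$. The amortized update-time exponent it guarantees is $\max\{\eps,\,1-\eps\}$, which as a function of $\eps$ on $[0,1]$ is piecewise linear, strictly decreasing on $[0,\frac{1}{2}]$ and strictly increasing on $[\frac{1}{2},1]$, so it attains its unique minimum $\frac{1}{2}$ at $\eps=\frac{1}{2}$. Hence among all admissible parameters, $\eps=\frac{1}{2}$ yields the smallest amortized update time, namely $\bigO{|\db|^{\frac{1}{2}}}$, together with the constant answer time asserted by the theorem (and, incidentally, $\bigO{|\db|^{\frac{3}{2}}}$ space, since $1 + \min\{\frac{1}{2},\frac{1}{2}\} = \frac{3}{2}$). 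This settles the upper-bound half of the claim.

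Second I would invoke Proposition~\ref{prop:lower_bound_triangle_count} to certify that this update time cannot be lowered. The proposition forbids, unless the \OMv conjecture fails, any algorithm from simultaneously achieving amortized update time $\bigO{|\db|^{\frac{1}{2}-\gamma}}$ and answer time $\bigO{|\db|^{1-\gamma}}$ for some $\gamma > 0$. The step to check is that the operating regime of \ivme at $\eps=\frac{1}{2}$ falls inside the scope of this prohibition: its answer time is constant and therefore lies in $\bigO{|\db|^{1-\gamma}}$ for every $\gamma \in (0,1)$, so any hypothetical competitor that pushed the update exponent below $\frac{1}{2}$ while retaining a comparable answer time would contradict the conjecture. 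Because the proposition allows arbitrary preprocessing, the $\bigO{|\db|^{\frac{3}{2}}}$ preprocessing cost of \ivme poses no impediment. Combining the two halves shows that the $\bigO{|\db|^{\frac{1}{2}}}$ amortized update time realized at $\eps=\frac{1}{2}$ is optimal in the sense that no positive constant can be subtracted from the exponent, conditioned on \OMv.

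There is essentially no analytic obstacle here — the argument is a bookkeeping match between two prior results — so the only point demanding care is the precise reading of \emph{worst-case optimal}: the lower bound rules out shaving any factor $|\db|^{\gamma}$ off the exponent rather than pinning down constants or lower-order factors, and the answer-time hypothesis of Proposition~\ref{prop:lower_bound_triangle_count} must be re-checked against the constant answer time delivered by Theorem~\ref{theo:main_result} so that the lower bound genuinely governs the regime in which the upper bound operates.
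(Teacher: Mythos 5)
Your proposal is correct and follows essentially the same route as the paper: the corollary is obtained exactly by instantiating Theorem~\ref{theo:main_result} at $\eps=\frac{1}{2}$ for the upper bound and pairing it with Proposition~\ref{prop:lower_bound_triangle_count} for the conditional lower bound, observing that constant answer time satisfies the proposition's $\bigO{|\db|^{1-\gamma}}$ hypothesis and that arbitrary preprocessing is permitted. Your added checks (uniqueness of the minimizer of $\max\{\eps,1-\eps\}$ and the precise reading of worst-case optimality as ruling out any polynomial improvement in the exponent) are consistent with the paper's intent.
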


\ivme also applies to triangle count queries with self-joins, such as when maintaining the count of triangles in a graph given by the edge relation. The space and time complexities are the same as in Theorem~\ref{theo:main_result} (Appendix~\ref{sec:self_joins}).

\ivme defines a continuum of maintenance approaches that exhibit a space-time tradeoff based on $\epsilon$. As depicted in Figure~\ref{fig:time_space_plot},
the classical first-order IVM and the factorized IVM are specific extreme points in this continuum. To recover the former, we set $\eps \in \{0,1\}$ for $\bigO{|\db|}$ update time and $\bigO{|\db|}$ space for the input relations. 
To recover the latter, we use a distinct parameter $\epsilon$ per relation: for example, using $\eps_R \in \{0,1\}$, $\eps_S=0$, and $\epsilon_T=1$, we support updates to $R$ in $\bigO{1}$ time and updates to $S$ and $T$ in $\bigO{|\db|}$ time; the view $V_{ST}$ takes $\bigO{|\db|^2}$ space (Appendix~\ref{sec:recovery}).

We observe that at optimality, \ivme recovers the worst-case optimal time $\bigO{|\db|^{\frac{3}{2}}}$ of non-incremental algorithms for enumerating all triangles~\cite{Ngo:SIGREC:2013}.
Whereas these algorithms are monolithic and require processing the input data in bulk and all joins at the same time, \ivme achieves the same complexity by inserting $|\db|$ tuples one at a time in initially empty relations $R$, $S$, and $T$, and by using standard join plans (Appendix~\ref{sec:recovery}).

\section{Preliminaries}\label{sec:preliminaries}

\paragraph*{Data Model.}
A schema $\inst{X}$ is a tuple of variables. 
Each variable $X$ has a discrete domain 
$\Dom(X)$ of data values. A tuple 
$\inst{x}$ of data values over schema $\inst{X}$ is an element from 
$\Dom(\inst{X}) = \prod_{X \in \inst{X}}{\Dom(X)}$.
We use uppercase letters for variables and lowercase letters for data values.
Likewise, we use bold uppercase letters for schemas and bold lowercase letters for tuples of data values.

A relation $K$ over schema $\inst{X}$ is a function 
$K: \Dom(\inst{X}) \to \mathbb{Z}$ mapping tuples over
$\inst{X}$ to integers such that   
$K(\inst{x}) \neq 0$ for finitely many tuples $\inst{x}$. 
We say that a tuple $\inst{x}$ is in 
$K$, denoted by $\inst{x} \in K$, if $K(\inst{x}) \neq 0$. 
The value $K(\inst{x})$ represents the multiplicity 
of $\inst{x}$ in $K$.
The size $|K|$ of $K$ is the size of the set $\{ \inst{x} \mid \inst{x} \in K \}$. 
A database $\db$ is a set of relations, and its size $|\db|$ is the sum of the sizes of the relations in $\db$.

Given a tuple $\inst{x}$ over 
schema  $\inst{X}$ and a variable $X$ in $\inst{X}$, 
we write $\inst{x}[X]$ to denote the value of $X$ in $\inst{x}$.
For a relation $K$ over $\inst{X}$, a variable $X$ in $\inst{X}$, and a data value 
$x \in \Dom(X)$, 
we use $\sigma_{X = x} K$ to denote the set of tuples in  
$K$ whose $X$-value is $x$, that is, 
$\sigma_{X = x} K = 
\{\, \inst{x} \,\mid\, \inst{x} \in K \land \inst{x}[X] = x \,\}$.
We write $\pi_{X}K$ to denote the set of $X$-values 
in $K$, that is, 
$\pi_{X}K = \{\, \inst{x}[X] \,\mid\, \inst{x} \in K \,\}$.

\paragraph*{Query Language.}
We express queries and view definitions in the language of 
functional aggregate queries (FAQ)~\cite{FAQ:PODS:2016}. 
 Compared to the original FAQ definition that uses several 
commutative semirings, we define our queries using the single 
commutative ring $(\mathbb{Z},+,\ztimes,0,1)$ of integers   
with the usual addition and multiplication. 
A query Q has one of the two forms:

\begin{enumerate}
    \item Given a set $\{X_i\}_{i \in [n]}$ of variables and an index set $S \subseteq [n]$, 
let $\inst{X}_{S}$ denote a tuple $(X_i)_{i \in S}$ of variables and 
$\inst{x}_{S}$ denote a tuple of data values over the schema $\inst{X}_{S}$.
Then,

\begin{equation*}
Q(\inst{x}_{[f]}) = \sum\limits_{x_{f+1} \in \Dom(X_{f+1})} \cdots 
\sum\limits_{x_{n} \in \Dom(X_{n})}\  \ \prod_{S\in\mathcal{M}} K_S(\inst{x}_S),\text{ where:}
\end{equation*}

\begin{itemize}
\item $\mathcal{M}$ is a multiset of index sets.

\item For every index set $S \in \mathcal{M}$, 
$K_S : \Dom(\inst{X}_S) \rightarrow \mathbb{Z}$ 
is a relation over the schema $\inst{X}_S$.

\item $\inst{X}_{[f]}$ is the tuple of free variables of $Q$.
The variables $X_{f+1}, \ldots ,X_n$ are called bound.
\end{itemize}

\item $Q(\inst{x}) = Q_1(\inst{x}) + Q_2(\inst{x})$, where $Q_1$ and $Q_2$ are queries over the same tuple of free variables.
\end{enumerate}

In the following, we use $\sum_{x_i}$ as a shorthand for $\textstyle\sum_{x_i\in \Dom(X_i)}$.

\paragraph*{Updates and Delta Queries.}
An update $\delta K$ to a relation $K$ is a relation over the schema of $K$.
A single-tuple update, written as $\delta K = \{\inst{x} \mapsto \p\}$, maps the tuple $\inst{x}$ to the nonzero multiplicity $\p\in\mathbb{Z}$ and any other tuple to $0$; that is, $|\delta K| = 1$.
The data model and query language make no distinction between inserts and deletes -- these are updates represented as relations in which tuples have positive and negative multiplicities.

Given a query $Q$ and an update $\delta K$, the delta query $\delta Q$ defines the change in the query result after applying $\delta K$ to the database. The rules for deriving delta queries follow from the associativity, commutativity, and distributivity of the ring operations. 

\vspace{3pt}
\begin{tabular}{l@{\hskip 0.4in}l}
Query $Q(\inst{x})$ & Delta query $\delta Q(\inst{x})$ \\
\midrule
$Q_1(\inst{x}_1) \ztimes Q_2(\inst{x}_2)$ & 
$\delta Q_1(\inst{x}_1) \ztimes Q_2(\inst{x}_2) +
  Q_1(\inst{x}_1)\, \ztimes \delta Q_2(\inst{x}_2) + 
  \delta Q_1(\inst{x}_1) \ztimes \delta Q_2(\inst{x}_2)$ \\

$\textstyle\sum_{x} Q_1(\inst{x}_1)$ & 
$\textstyle\sum_{x} \delta Q_1(\inst{x}_1)$ \\

$Q_1(\inst{x}) + Q_2(\inst{x})$ & 
$\delta Q_1(\inst{x}) + \delta Q_2(\inst{x})$ \\

$K'(\inst{x})$ & 
$\delta K(\inst{x})$ when $K = K'$ and $0$ otherwise \\
\end{tabular}
\vspace{3pt}

\paragraph*{Computation Time.}
Our maintenance algorithm takes as input the triangle count query  $Q$ and a 
database $\db$ and maintains the result of $Q$ under a sequence of single-tuple updates. 
We distinguish the following computation times: 
(1) \emph{preprocessing time} is spent on initializing the algorithm using $\db$ before any update is received, 
(2) \emph{update time}  is spent on processing one single-tuple update, and
(3) \emph{answer time} is spent on obtaining the result of $Q$. 
We consider two types of bounds on the update time: 
\emph{worst-case bounds}, which limit the time each individual update takes in the worst case, and
\emph{amortized worst-case bounds}, which limit the average worst-case time taken by a sequence of updates. 
Enumerating a set of tuples with constant delay means that the time until reporting the first tuple, the time between reporting two consecutive tuples, and the time between reporting the last tuple and the end of enumeration is constant.
When referring to sublinear time, we mean $\bigO{|\db|^{1-\gamma}}$ for some $\gamma > 0$, where $|\db|$ is the database size.

\paragraph*{Computational Model.}
We consider the RAM model of computation. 
Each relation (view) $K$ over schema $\inst{X}$ is implemented by a data structure that stores key-value entries $(\inst{x},K(\inst{x}))$ for each tuple $\inst{x}$ over $\inst{X}$ with $K(\inst{x}) \neq 0$ and 
needs space linear in the number of such tuples. 
We assume that this data structure supports
(1) looking up, inserting, and deleting entries in constant time,
(2) enumerating all stored entries in $K$ with constant delay, and
(3) returning $|K|$ in constant time.
For instance, a hash table with chaining, where entries are doubly linked for efficient enumeration, 
can support these operations in constant time on average, under the assumption of simple uniform hashing. 

For each variable $X$ in the schema $\inst{X}$ of relation $K$, we further assume there is an index structure on $X$ that allows:
(4) enumerating all entries in $K$ matching $\sigma_{X=x}K$ with constant delay,
(5) checking $x \in \pi_{X}K$ in constant time, and 
(6) returning $|\sigma_{X=x}K|$ in constant time, 
for any $x \in \Dom(X)$, and
(7) inserting and deleting index entries in constant time.
Such an index structure can be realized, for instance, as a hash table with chaining
where each key-value entry stores an $X$-value $x$ and a doubly-linked list of pointers to the entries in $K$ having the $X$-value $x$.
Looking up an index entry given $x$ takes constant time on average, and its doubly-linked list enables enumeration of the matching entries in $K$ with constant delay. 
Inserting an index entry into the hash table additionally prepends a new pointer to the doubly-linked list for a given $x$; overall, this operation takes constant time on average.
For efficient deletion of index entries, each entry in $K$ also stores back-pointers to its index entries (as many back-pointers as there are index structures for $K$). When an entry is deleted from $K$, locating and deleting its index entries takes constant time per index.

\paragraph*{Data Partitioning.}
We partition each input relation into two parts based on the degrees of its values. 
Similar to common techniques used in databases to deal with data skew, our IVM approach employs different maintenance strategies for values of high and low frequency.

\begin{definition}[Relation Partition]
\label{def:loose_relation_partition}
Given a relation $K$ over schema $\inst{X}$, a variable  $X$ from the schema $\inst{X}$, 
and a threshold $\theta$, a  partition of $K$ on $X$ with threshold $\theta$ 
is a set $\{ K_h, K_l \}$ satisfying the following conditions:
\\[6pt]
\begin{tabular}{@{\hskip 0.5in}rl}
{(union)} & 
$K(\inst{x}) = K_h(\inst{x}) + K_l(\inst{x})$ for $\inst{x} \in \Dom(\inst{X})$\\[4pt]
{(domain partition)} & $(\pi_{X}K_h) \cap (\pi_{X}K_l) = \emptyset$ \\[4pt]
{(heavy part)} & for all $x \in \pi_{X}K_h:\; |\sigma_{X=x} K_h| \geq 
\frac{1}{2}\,\theta$ \\[4pt]
{(light part)} & for all $x \in \pi_{X}K_l:\; |\sigma_{X=x} K_l| < \frac{3}{2}\,\theta$
\end{tabular}\\[6pt]
The set $\{ K_h, K_l \}$ is called a strict partition of $K$ on $X$ with threshold 
$\theta$ if it satisfies the union and 
domain partition conditions and the following strict versions
of the heavy part and light part conditions: 
\\[6pt]
\begin{tabular}{@{\hskip 0.5in}rl}
{(strict heavy part)} & for all $x \in \pi_{X}K_h:\; |\sigma_{X=x} K_h| \geq 
\theta$ \\[4pt]
{(strict light part)} & for all $x \in \pi_{X}K_l:\; |\sigma_{X=x} K_l| < \theta$
\end{tabular}\\[6pt]
The relations $K_h$ and $K_l$ are called the heavy and light parts of $K$.
\end{definition}

Definition~\ref{def:loose_relation_partition} admits multiple ways to (non-strictly) partition a relation
$K$ on variable $X$ with threshold $\theta$. 
For instance, assume that $|\sigma_{X=x} K| = \theta$ for some $X$-value $x$ in $K$. Then, all tuples in $K$ with $X$-value $x$ can be in either the heavy or light part of $K$; but they cannot be in both parts because of the domain partition condition. If the partition is strict, then all such tuples are in the heavy part of $K$. 

The strict partition of a relation $K$  is unique for a given threshold and can be computed in time linear in the size of $K$.

\section{IVM\textsuperscript{$\eps$}: Adaptive Maintenance of the Triangle Count}\label{sec:strategy}
We present \ivme, our algorithm for the incremental maintenance of 
the result  of Query~\eqref{query:triangle}.
We start with a high-level overview.
Consider a database $\db$ consisting of three relations 
$R$, $S$, and $T$ with schemas $(A,B)$, $(B,C)$, and $(C,A)$, respectively.
We partition $R$, $S$, and $T$ on variables $A$, $B$, and $C$, respectively, for 
a given threshold.
We then decompose Query~\eqref{query:triangle} into eight skew-aware views expressed over these relation parts:
\begin{align*}
Q_{rst}() = \sum\limits_{a,b,c} R_r(a,b)\ztimes S_s(b,c) \ztimes T_t(c,a), \quad\text{ for } 
r,s,t \in \{h,l\}.
\end{align*}
Query~\eqref{query:triangle} is then the sum of these skew-aware views: $Q() = 
\textstyle\sum_{r,s,t \in \{h,l\}} Q_{rst}()$.

\ivme adapts its maintenance strategy to each skew-aware view $Q_{rst}$ to ensure amortized update time that is sublinear in the database size. While most of these views admit sublinear delta computation over the relation parts, few exceptions require linear-time maintenance. For these exceptions, \ivme precomputes the update-independent parts of the delta queries as \emph{auxiliary materialized views} and then exploits these views to speed up the delta evaluation. 

One such exception is the view $Q_{hhl}$. Consider a single-tuple update 
$\delta R_h = \{(\deltaA,\deltaB) \mapsto \p\}$ to the heavy part $R_h$ of relation $R$, 
where $\deltaA$ and $\deltaB$ are fixed data values.
Computing the delta view 
$\delta Q_{hhl}() = \delta R_h(\deltaA,\deltaB) \ztimes \textstyle\sum_c  S_h(\deltaB,c) \ztimes T_l(c,\deltaA)$ requires iterating over all the $C$-values $c$ paired with $\deltaB$ in $S_h$ and with $\deltaA$ in $T_l$; the number of such $C$-values can be linear in the size of the database. To avoid this iteration, \ivme precomputes the view 
$V_{ST}(b,a) = \sum_c S_h(b,c) \ztimes T_l(c,a)$  and uses this view to evaluate 
$\delta Q_{hhl}() = \delta R_h(\deltaA,\deltaB) \ztimes V_{ST}(\deltaB,\deltaA)$ in constant time.

Such auxiliary views, however, also require maintenance. All such views 
created by \ivme can be maintained in sublinear time under single-tuple updates to the input relations. Figure~\ref{fig:view_definitions} summarizes these views used by \ivme to maintain Query~\eqref{query:triangle}: $V_{RS}$, $V_{ST}$ and $V_{TR}$. They serve to avoid linear-time delta computation for updates to $T$, $R$, and $S$, respectively. \ivme also materializes the result of Query~\eqref{query:triangle}, which ensures constant answer time. 

\begin{figure}[t]
  \begin{center}
    \renewcommand{\arraystretch}{1.2}  
    \begin{tabular}{@{\hskip 0.05in}l@{\hskip 0.4in}l@{\hskip 0.05in}}
      \toprule
      Materialized View Definition & Space Complexity \\    
      \midrule
      $Q() = \sum\limits_{r,s,t \in \{h,l\}} \,
      \sum\limits_{a,b,c} R_{r}(a,b) \ztimes S_{s}(b,c) \ztimes T_{t}(c,a)$ & 
      $\bigO{1}$ \\
      $V_{RS}(a,c) = \sum_{b} R_{h}(a,b) \ztimes S_{l}(b,c)$ & 
      $\bigO{|\inst{D}|^{1+\min{\{\,\eps, 1-\eps \,\}}}}$ \\
      $V_{ST}(b,a) = \sum_{c} S_{h}(b,c) \ztimes T_{l}(c,a)$ & 
      $\bigO{|\inst{D}|^{1+\min{\{\,\eps, 1-\eps \,\}}}}$ \\
      $V_{TR}(c,b) = \sum_{a} T_{h}(c,a) \ztimes R_{l}(a,b)$ & 
      $\bigO{|\inst{D}|^{1+\min{\{\,\eps, 1-\eps \,\}}}}$ \\
      \bottomrule    
    \end{tabular}
  \end{center}
  \caption{The definition and space complexity of the materialized views in $\inst{V} = \{ Q, V_{RS}, V_{ST}, V_{TR} \}$  as part of an \ivme  state of a database $\inst{D}$
  partitioned for $\eps \in [0,1]$.} 
  \label{fig:view_definitions}
\end{figure}

\medskip
We now describe our strategy in detail. We start by defining the state that \ivme initially creates and maintains upon each update. Then, we specify the procedure for processing a single-tuple update to any input relation, followed by the space complexity analysis of \ivme. Section~\ref{sec:rebalancing} gives the procedure for processing a sequence of such updates.

\begin{definition}[\ivme State]\label{def:ivme_state}
Given a database $\db = \{ R, S, T\}$ and $\eps \in [0,1]$, an 
\ivme  state of $\db$ is a tuple $\state = (\eps,N, \dbeps, \inst{V})$, where: 
\begin{itemize}
\item $N$ is a natural number such that the size 
invariant $\floor{\frac{1}{4}N} \leq |\db| < N$ holds.
$N$ is called the threshold base.  

\item $\dbeps = \{R_h, R_l, S_h, S_l, T_h, T_l\}$ consists of the partitions of 
$R$, $S$, and $T$ on variables $A$, $B$, and $C$, respectively, with threshold 
$\theta = N^{\eps}$.

\item $\inst{V}$ is the set of materialized views $\{ Q, V_{RS}, V_{ST}, V_{TR} \}$
as defined in Figure~\ref{fig:view_definitions}. 
\end{itemize} 
The initial state $\state$ of $\db$ has $N = 2\ztimes|\db| + 1$ and the three partitions in $\inst{P}$ are strict.

\end{definition}
By construction, $|\dbeps|=|\db|$. 
The size invariant implies $|\db| = \Theta(N)$ and, 
together with the heavy and light part conditions, 
facilitates the amortized analysis of \ivme in Section~\ref{sec:rebalancing}.
Definition~\ref{def:loose_relation_partition} provides two essential upper bounds for each relation partition in an \ivme state:
The number of distinct $A$-values in $R_h$ is at most $\frac{N}{\frac{1}{2}N^{\eps}} = 2N^{1-\eps}$, i.e., $|\pi_A R_h| \leq 2N^{1-\eps}$, and the number of tuples in $R_l$ with an $A$-value $a$ is less than 
$\frac{3}{2}N^{\eps}$, i.e., $|\sigma_{A = a}R_l| < \frac{3}{2}N^{\eps}$, for any $a \in \Dom(A)$. The same bounds hold for $B$-values in $\{S_h,S_l\}$ and $C$-values in $\{T_h,T_l\}$.

\subsection{Preprocessing Stage}

The preprocessing stage constructs the initial \ivme state given a database $\db$ and $\eps\in[0,1]$.

\begin{proposition}\label{prop:preprocessing_step}
Given a database $\db$ and $\eps\in[0,1]$, constructing the initial \ivme state of $\db$ takes $\bigO{|\db|^{\frac{3}{2}}}$ time.
\end{proposition}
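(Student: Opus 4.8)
The plan is to account separately for the cost of each component of the initial state $\state = (\eps, N, \dbeps, \inst{V})$ and to show that each is $\bigO{|\db|^{\frac{3}{2}}}$. Computing $N = 2\ztimes|\db|+1$ takes constant time, since the data structures report $|\db|$ in constant time. The three strict partitions in $\dbeps$ with threshold $\theta = N^{\eps}$ are computed in $\bigO{|\db|}$ time, as noted after Definition~\ref{def:loose_relation_partition}, and building the hash-table and index structures for the six resulting parts is likewise linear. The materialized count $Q()$ is the triangle count over the full relations $R$, $S$, $T$, which a worst-case optimal join algorithm evaluates in $\bigO{|\db|^{\frac{3}{2}}}$ time~\cite{AYZ:Counting:1997,Ngo:SIGREC:2013}. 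The remaining and dominant work is to materialize the three auxiliary views $V_{RS}$, $V_{ST}$, $V_{TR}$; I claim each takes $\bigO{|\db|^{\frac{3}{2}}}$ time.

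I would compute, say, $V_{ST}(b,a) = \sum_c S_h(b,c)\ztimes T_l(c,a)$ by a single index-nested-loop join on $C$: enumerate the values $c \in \pi_C S_h$, and for each such $c$ that also lies in $\pi_C T_l$, iterate over all pairs $(b,c)\in S_h$ and $(c,a)\in T_l$, accumulating $S_h(b,c)\ztimes T_l(c,a)$ into the entry $V_{ST}(b,a)$. Using the index structures, each emitted triple $(b,c,a)$ is handled in constant time, so the running time is $\bigO{|S_h| + \tau}$, where $\tau = \sum_c |\sigma_{C=c}S_h|\ztimes|\sigma_{C=c}T_l|$ is the number of such triples. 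The crux is to bound $\tau$, which I would do in two ways using the partition invariants. Since $T_l$ is light on $C$, we have $|\sigma_{C=c}T_l| < \tfrac{3}{2}N^{\eps}$, giving $\tau < \tfrac{3}{2}N^{\eps}\ztimes|S_h| \le \tfrac{3}{2}N^{1+\eps}$; and since $S_h$ is heavy on $B$, for fixed $c$ the tuples of $\sigma_{C=c}S_h$ carry distinct $B$-values, so $|\sigma_{C=c}S_h| \le |\pi_B S_h| \le 2N^{1-\eps}$, giving $\tau \le 2N^{1-\eps}\ztimes|T_l| \le 2N^{2-\eps}$. Hence $\tau \le \min\{\tfrac{3}{2}N^{1+\eps},\,2N^{2-\eps}\} = \bigO{N^{1+\min\{\eps,\,1-\eps\}}}$.

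Because $\min\{\eps,1-\eps\}\le\tfrac12$ and $N = \Theta(|\db|)$ by the size invariant, this yields $\tau = \bigO{|\db|^{\frac{3}{2}}}$, so $V_{ST}$ is materialized in $\bigO{|\db|^{\frac{3}{2}}}$ time. The same argument handles $V_{RS}$ and $V_{TR}$, with the light-fan-out and few-heavy-keys roles played by $R_h,S_l$ (joined on $B$) and by $T_h,R_l$ (joined on $A$), respectively. Summing the contributions of $N$, the partitions, $Q()$, and the three views gives the claimed $\bigO{|\db|^{\frac{3}{2}}}$ preprocessing time. I expect the bound on $\tau$ to be the main obstacle: one must notice that the join materializes intermediate triples rather than only output tuples, so the cost is governed by $\tau$ and not by the view size, and then combine the ``few heavy keys'' bound with the ``small light fan-out'' bound and take their minimum, so that the exponent stays at $\tfrac{3}{2}$ uniformly over all $\eps \in [0,1]$.
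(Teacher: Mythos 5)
Your proof is correct and follows essentially the same route as the paper: constant time for $N$, linear time for the strict partitions, $\bigO{|\db|^{\frac{3}{2}}}$ for the triangle count via a worst-case optimal join, and for each auxiliary view the minimum of the two bounds obtained from the light part's bounded fan-out ($<\frac{3}{2}N^{\eps}$) and the heavy part's bounded number of distinct keys ($\leq 2N^{1-\eps}$), yielding $\bigO{N^{1+\min\{\eps,1-\eps\}}} = \bigO{|\db|^{\frac{3}{2}}}$. The only cosmetic difference is that you make explicit that the cost is governed by the number of intermediate joining triples $\tau$ rather than the view size, which the paper states more loosely but bounds identically.
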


\begin{proof}
We analyze the time to construct the initial state $\state = (\eps, N, \inst{P}, \inst{V})$ of $\inst{D}$.
Retrieving the size $|\inst{D}|$ and computing $N = 2\ztimes|\db| + 1$ take constant time. 
Strictly partitioning the input relations from $\db$ using the threshold $N^{\eps}$, as described in Definition~\ref{def:loose_relation_partition}, takes $\bigO{|\db|}$ time.
Computing the result of the triangle count query on $\db$ (or $\dbeps$) using a worst-case optimal join algorithm~\cite{Ngo:SIGREC:2013} takes $\bigO{|\db|^{\frac{3}{2}}}$ time. Computing the auxiliary views $V_{RS}$, $V_{ST}$, and $V_{TR}$ takes $\bigO{|\db|^{1+\min\{\eps, 1-\eps\}}}$ time, as shown next.  
Consider the view 
$V_{RS}(a,c) = \textstyle\sum_b R_h(a,b) \ztimes S_l(b,c)$. 
To compute $V_{RS}$, one can iterate over all $(a,b)$ pairs in $R_h$ and then find the $C$-values in $S_l$ for each $b$. 
The light part $S_l$ contains at most $N^{\eps}$ distinct $C$-values for any $B$-value, which gives an upper bound of $|R_{h}| \ztimes N^{\eps}$ on the size of $V_{RS}$. Alternatively, one can iterate over all $(b,c)$ pairs in $S_l$ and then find the $A$-values in $R_h$ for each $b$. The heavy part $R_h$ contains at most $N^{1-\eps}$ distinct $A$-values, which gives an upper bound of $|S_{l}| \ztimes N^{1-\eps}$ on the size of $V_{RS}$. 
The number of steps needed to compute this result is upper-bounded by $\min\{\,|R_{h}| \ztimes N^{\eps},\, |S_{l}| \ztimes N^{1-\eps}\,\} < \min\{\, N \ztimes N^{\eps},\, N \ztimes N^{1-\eps}\,\} = N^{1+\min\{\eps,1-\eps\}}$. From $|\db| = \Theta(N)$ follows that computing $V_{RS}$ on the database partition $\inst{P}$ takes $\bigO{|\db|^{1+\min\{\eps,1-\eps\}}}$ time; the analysis for $V_{ST}$ and $V_{TR}$ is analogous.  
Note that $\max_{\eps\in[0,1]}\{1+\min\{\eps, 1-\eps\}\} = \frac{3}{2}$.
Overall, the initial state $\state$ of $\db$ can be constructed in $\bigO{|\db|^{\frac{3}{2}}}$ time.
\end{proof}

The preprocessing stage of \ivme happens \emph{before} any update is received. In case we start from an empty database, the preprocessing cost of \ivme is $\bigO{1}$.

\subsection{Processing a Single-Tuple Update}\label{sec:single-tuple-update}

We describe the \ivme strategy for maintaining the result of Query~\eqref{query:triangle} under a single-tuple update to the relation $R$. This update can affect either the heavy or light part of $R$, hence we write $\delta R_r$, where $r$ stands for $h$ or $l$. 
We assume that checking whether the update affects the heavy or light part of $R$ takes constant time. 
The update is represented as a relation $\delta R_r=\{\, (\deltaA,\deltaB) \mapsto \p \,\}$, where $\deltaA$ and $\deltaB$ are data values and $\p\in\mathbb{Z}$. 
Due to the symmetry of the triangle query and auxiliary views, updates to $S$ and $T$ are handled similarly.

\begin{figure}[t]
\begin{center}
\renewcommand{\arraystretch}{1.2}
\setcounter{magicrownumbers}{0}
\begin{tabular}{ll@{\hskip 0.25in}l@{\hspace{0.6cm}}c}
\toprule
\multicolumn{2}{l}{\textsc{ApplyUpdate}($\delta R_r,\state$)}& & Time \\
\cmidrule{1-2} \cmidrule{4-4}
\rownumber & \LET $\delta R_r = \{(\deltaA,\deltaB) \mapsto \p\}$ \\
\rownumber & \LET $\state = (\eps, N, 
\{R_h, R_l, S_h, S_l, T_h, T_l\},
\{Q,V_{RS}, V_{ST}, V_{TR}\})$ \\

\rownumber & $\delta Q_{rhh}() = \delta{R_{r}(\deltaA,\deltaB)} \ztimes \textstyle\sum_c 
S_{h}(\deltaB,c) \ztimes T_{h}(c,\deltaA)$
&& $\bigO{|\inst{D}|^{1-\eps}}$ \\

\rownumber & 
$\delta Q_{rhl}() = \delta{R_{r}(\deltaA,\deltaB)} \ztimes V_{ST}(\deltaB,\deltaA)$ & &
$\bigO{1}$ \\

\rownumber & $\delta Q_{rlh}() = \delta R_{r}(\deltaA,\deltaB) \ztimes \textstyle\sum_c 
S_{l}(\deltaB,c) \ztimes T_{h}(c,\deltaA)$  & &
$\bigO{|\inst{D}|^{\min{\{\eps, 1-\eps\}}}}$ \\ 

\rownumber & $\delta Q_{rll}() = \delta R_{r}(\deltaA,\deltaB) \ztimes \textstyle\sum_c 
S_{l}(\deltaB,c) \ztimes T_{l}(c,\deltaA)$  & &
$\bigO{|\inst{D}|^{\eps}}$ \\

\rownumber & 
$Q() = Q() + \delta Q_{rhh}() + \delta Q_{rhl}() + \delta Q_{rlh}() + 
\delta Q_{rll}()$ & &
$\bigO{1}$ \\

\rownumber & \IF ($r$ is $h$) &\\  
\rownumber & \TAB  
$V_{RS}(\deltaA,c) = V_{RS}(\deltaA,c) + \delta R_h(\deltaA,\deltaB) \ztimes 
S_{l}(\deltaB,c)$  & &
$\bigO{|\inst{D}|^{\eps}}$ \\

\rownumber & \ELSE &\\
\rownumber & \TAB 
 $V_{TR}(c,\deltaB) = V_{TR}(c,\deltaB) + T_{h}(c,\deltaA) \ztimes \delta R_l(\deltaA,\deltaB)$ & &
$\bigO{|\inst{D}|^{1-\eps}}$ \\

\rownumber & $R_r(\deltaA,\deltaB) = R_r(\deltaA,\deltaB) + \delta{R}_r(\deltaA,\deltaB)$ & &
$\bigO{1}$ \\

\rownumber & \RETURN 
$\state$ & &
 \\
\midrule
\multicolumn{2}{r}{Total update time:} & &
$\bigO{|\inst{D}|^{\max\{\eps, 1-\eps\}}}$ \\
\bottomrule
\end{tabular}
\end{center}\vspace{-1em}
\caption{
 (left) Counting triangles under a single-tuple update.
 \textsc{ApplyUpdate} takes as input an update $\delta R_r$ to the heavy or light part of $R$, hence $r \in \{h,l\}$, and 
  the current \ivme state $\state$ of a database $\inst{D}$ partitioned using $\eps\in[0,1]$.
It returns a new state that results from applying $\delta R_r$ to $\state$. 
Lines 3-6 compute the deltas of the affected skew-aware views, and Line 7 
maintains $Q$.
Lines 9 and 11 maintain the auxiliary views $V_{RS}$ and $V_{TR}$, respectively. Line 12 maintains the affected part $R_r$.
(right) The time complexity of computing and applying deltas.  
The evaluation strategy for computing $\delta Q_{rlh}$ in Line 5 may choose either 
$S_l$ or $T_h$ to bound $C$-values, depending on $\eps$. The total time is the maximum of all individual times. The maintenance procedures for $S$ and $T$ are similar.
}
\label{fig:applyUpdate}
\vspace{-5pt}
\end{figure}

Figure~\ref{fig:applyUpdate} shows the procedure {\textsc{ApplyUpdate}} that 
takes as input a current \ivme state $\state$ and the update 
$\delta R_r$, and returns a new state that results from applying $\delta R_r$ to $\state$.
The procedure computes the deltas of the skew-aware views referencing $R_{r}$, which are   
$\delta Q_{rhh}$ (Line 3), $\delta Q_{rhl}$ (Line 4), 
$\delta Q_{rlh}$ (Line 5), and $\delta Q_{rll}$ (Line 6), 
and uses these deltas to maintain the triangle count (Line 7).
These skew-aware views are not materialized, but their 
deltas facilitate the maintenance of the triangle count.  
If the update affects the heavy part $R_h$ of $R$, the procedure maintains 
$V_{RS}$ (Line 9) and $R_h$ (Line 12); otherwise, it maintains $V_{TR}$ (Line 11) and
$R_l$ (Line 12).
The view $V_{ST}$ remains unchanged as it has no reference to $R_h$ or $R_l$.

Figure~\ref{fig:applyUpdate} also gives the time complexity of computing these deltas and applying them to $\state$. This complexity is either constant or dependent on the number of $C$-values for which matching tuples in the parts of $S$ and $T$ have nonzero multiplicities.  

\begin{proposition}\label{prop:single_step_time}
Given a state $\state$ constructed from a database $\db$ for $\eps\in[0,1]$, \ivme maintains $\state$ under a single-tuple update to any input relation in $\bigO{|\db|^{\max\{\eps,1-\eps\}}}$ time.
\end{proposition}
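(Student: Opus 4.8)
The plan is to prove the bound by charging the cost of the procedure \textsc{ApplyUpdate} in Figure~\ref{fig:applyUpdate} line by line and taking the maximum, then appealing to the symmetry of the triangle query to extend the argument from updates to $R$ to updates to $S$ and $T$. The two facts I would rely on throughout are the partition bounds recorded right after Definition~\ref{def:ivme_state}: in any \ivme state the number of distinct heavy values of the partitioning variable is at most $2N^{1-\eps}$ (so $|\pi_C T_h| \le 2N^{1-\eps}$, and symmetrically for $R_h$ and $S_h$), whereas every light value has degree below $\frac{3}{2}N^{\eps}$ (so $|\sigma_{B=\deltaB} S_l| < \frac{3}{2}N^{\eps}$, and symmetrically). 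Combined with the size invariant, which gives $|\db| = \Theta(N)$, each $N^{x}$ bound becomes a $\bigO{|\db|^{x}}$ bound.

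First I would dispose of the constant-time lines: Lines 1, 2, 7, and 12, together with the branch tests in Lines 8 and 10, each perform a bounded number of lookups, arithmetic operations, or index updates, all supported in $\bigO{1}$ by the computational model, and Line 4 is the single lookup $V_{ST}(\deltaB,\deltaA)$. The three remaining delta lines are the crux, and the point is to enumerate over the \emph{cheaper} of the two candidate relation parts that bind the summation variable $c$. For $\delta Q_{rhh}$ (Line 3) I would iterate over $c \in \sigma_{A=\deltaA} T_h$ and test $S_h(\deltaB,c)\neq 0$ by index lookup; since distinct tuples of $T_h$ with $A=\deltaA$ carry distinct $C$-values, we have $|\sigma_{A=\deltaA} T_h| \le |\pi_C T_h| \le 2N^{1-\eps}$, giving $\bigO{|\db|^{1-\eps}}$ (iterating over $\sigma_{B=\deltaB} S_h$ instead would be unsafe, as a heavy $B$-value may pair with linearly many $C$-values). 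For $\delta Q_{rll}$ (Line 6) I would iterate over the fewer than $\frac{3}{2}N^{\eps}$ values $c \in \sigma_{B=\deltaB} S_l$, testing $T_l(c,\deltaA)\neq 0$, which yields $\bigO{|\db|^{\eps}}$. For the mixed view $\delta Q_{rlh}$ (Line 5) both parts binding $c$ admit a bound — $\sigma_{B=\deltaB} S_l$ has fewer than $\frac{3}{2}N^{\eps}$ tuples and $\sigma_{A=\deltaA} T_h$ has at most $2N^{1-\eps}$ — so choosing the smaller side as a function of $\eps$ gives $\bigO{|\db|^{\min\{\eps,1-\eps\}}}$.

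Finally, the auxiliary-view maintenance is symmetric to the delta computations: Line 9 (when $r=h$) ranges over $\sigma_{B=\deltaB} S_l$ to update $V_{RS}(\deltaA,\cdot)$, costing $\bigO{|\db|^{\eps}}$, while Line 11 (when $r=l$) ranges over $\sigma_{A=\deltaA} T_h$ — again bounded by $|\pi_C T_h| \le 2N^{1-\eps}$ — to update $V_{TR}(\cdot,\deltaB)$, costing $\bigO{|\db|^{1-\eps}}$. Taking the maximum over all lines leaves $\max\{|\db|^{\eps},|\db|^{1-\eps}\} = |\db|^{\max\{\eps,1-\eps\}}$, since the $\min\{\eps,1-\eps\}$ term is dominated; because the query and the view set $\{Q,V_{RS},V_{ST},V_{TR}\}$ are invariant under cyclic rotation of $(R,S,T)$ and $(A,B,C)$, the identical analysis covers updates to $S$ and $T$. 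The one step demanding genuine care — and the main obstacle — is the accounting for the lines whose iteration variable is \emph{not} the partitioning variable of the part being scanned, namely the scans of $\sigma_{A=\deltaA}T_h$ in Lines 3, 5, and 11: here no per-value degree bound is available, so I must instead bound the scan by the total count of distinct heavy values $|\pi_C T_h| \le 2N^{1-\eps}$, which is precisely what keeps those lines sublinear.
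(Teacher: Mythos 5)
Your proposal is correct and follows essentially the same line-by-line accounting as the paper's proof of Proposition~\ref{prop:single_step_time}: constant-time lines, the three delta computations bounded via the heavy-part count $2N^{1-\eps}$ and the light-part degree $\frac{3}{2}N^{\eps}$, the auxiliary-view maintenance, the maximum over all lines, and the appeal to $|\db|=\Theta(N)$ and to symmetry for updates to $S$ and $T$. The only difference is that you make explicit the observation that $|\sigma_{A=\deltaA}T_h|\le|\pi_C T_h|$, which the paper leaves implicit.
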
 
\begin{proof}
We analyze the running time of the procedure from 
Figure~\ref{fig:applyUpdate} given a single-tuple update $\delta R_r = \{(\deltaA,\deltaB) \mapsto \p\}$ and a state 
$\state=(\eps, N, \dbeps, \inst{V})$ of $\db$. Since the query and auxiliary views are symmetric, the analysis for updates to $S$ and $T$ is similar.

We first analyze the evaluation strategies for the deltas of the skew-aware views $Q_{rst}$:
\begin{itemize}
\item (Line 3) Computing $\delta Q_{rhh}$ requires summing over $C$-values ($\deltaA$ and $\deltaB$ are fixed). The minimum degree of each $C$-value in $T_h$ is $\frac{1}{2}N^{\eps}$, which means the number of distinct $C$-values in $T_h$ is at most $\frac{N}{\frac{1}{2}N^{\eps}} = 2N^{1-\eps}$. Thus, this delta evaluation takes $\bigO{N^{1-\eps}}$ time.

\item (Line 4) Computing $\delta Q_{rhl}$ requires constant-time lookups in $\delta R_r$ and $V_{ST}$.

\item (Line 5) Computing $\delta Q_{rlh}$ can be done in two ways, depending on $\eps$: either sum over at most $2N^{1-\eps}$ $C$-values in $T_h$ for the given $\deltaA$ or sum over at most $\frac{3}{2}N^{\eps}$ $C$-values in $S_l$ for the given $\deltaB$. This delta computation takes at most $\min\{2N^{1-\eps}, \frac{3}{2}N^{\eps}\}$ constant-time operations, thus $\bigO{N^{\min{\{\eps, 1-\eps\}}}}$ time.

\item (Line 6) Computing $\delta Q_{rll}$ requires summing over at most $\frac{3}{2}N^{\eps}$ $C$-values in $S_l$ for the given $\deltaB$. This delta computation takes $\bigO{N^{\eps}}$ time. 
\end{itemize}
Maintaining the result of Query~\eqref{query:triangle} using these deltas takes constant time (Line 7).
The views $V_{RS}$ and $V_{TR}$ are maintained for updates to distinct parts of R.
Maintaining $V_{RS}$ requires iterating over at most $\frac{3}{2}N^{\eps}$ $C$-values in $S_l$ for the given $\deltaB$ (Line 9);
similarly, maintaining $V_{TR}$ requires iterating over at most $2N^{1-\eps}$ $C$-values in $T_h$ for the given $\deltaA$ (Line 11).
Finally, maintaining the (heavy or light) part of $R$ affected by $\delta R_{r}$ takes constant time (Line 12).  
The total update time is 
$\bigO{\max\{1,N^{\eps}, N^{1-\eps}, N^{\min\{\eps,1-\eps\}}\}} = \bigO{N^{\max\{\eps,1-\eps\}}}$. 
From the invariant $|\db| = \Theta(N)$ follows the claimed time complexity $\bigO{|\db|^{\max\{\eps,1-\eps\}}}$.
\end{proof}

\subsection{Space Complexity}
We next analyze the space complexity of the \ivme maintenance strategy.
  
\begin{proposition}\label{prop:space_complexity}
Given a database $\db$ and $\eps\in[0,1]$, the \ivme state 
constructed from $\db$ to support the maintenance of the result of Query~\eqref{query:triangle} takes $\bigO{|\db|^{1 +\min\{\eps,1-\eps\}}}$ space.
\end{proposition}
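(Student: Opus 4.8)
The plan is to decompose the state $\state = (\eps, N, \dbeps, \inst{V})$ from Definition~\ref{def:ivme_state} into its constituent parts and bound the space of each, then observe which part dominates. The parameter $\eps$ and the threshold base $N$ are single values and occupy $\bigO{1}$ space, and the materialized count $Q()$ is a single integer, also $\bigO{1}$. The partitioned database $\dbeps = \{R_h, R_l, S_h, S_l, T_h, T_l\}$ satisfies $|\dbeps| = |\db|$ by construction; under the computational model each relation part and each of its index structures occupies space linear in its number of entries, so $\dbeps$ takes $\bigO{|\db|}$ space in total. The task therefore reduces to bounding the space of the three auxiliary views $V_{RS}$, $V_{ST}$, and $V_{TR}$ in $\inst{V}$.

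The main step is to bound the number of nonzero entries in each auxiliary view, since by the data-structure assumptions the space of a view equals (up to a constant) its size. I would reuse the double-counting argument already carried out in the proof of Proposition~\ref{prop:preprocessing_step}, noting that the quantity bounded there as the computation cost of each view is in fact an upper bound on the view's output size. Concretely, for $V_{RS}(a,c) = \sum_b R_h(a,b) \ztimes S_l(b,c)$, every nonzero entry $(a,c)$ arises from some common $B$-value $b$ with $R_h(a,b)\neq 0$ and $S_l(b,c)\neq 0$. Counting from the $R_h$ side, each of the $|R_h| \le |\db|$ pairs $(a,b)$ contributes at most $|\sigma_{B=b}S_l| < \frac{3}{2}N^{\eps}$ values of $c$; counting from the $S_l$ side, each of the $|S_l| \le |\db|$ pairs $(b,c)$ contributes at most $|\pi_A R_h| \le 2N^{1-\eps}$ values of $a$. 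Taking the minimum and using $|R_h|,|S_l| < N$ yields $|V_{RS}| < \min\{\tfrac{3}{2}N^{1+\eps},\, 2N^{1+(1-\eps)}\} = \bigO{N^{1+\min\{\eps,1-\eps\}}}$. The symmetry of Query~\eqref{query:triangle} and of the view definitions in Figure~\ref{fig:view_definitions} gives the identical bound for $V_{ST}$ and $V_{TR}$.

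Finally I would combine the parts. Since $1 + \min\{\eps,1-\eps\} \ge 1$ for every $\eps\in[0,1]$, the view term dominates both the $\bigO{|\db|}$ cost of $\dbeps$ and the constant terms, so the total space is $\bigO{N^{1+\min\{\eps,1-\eps\}}}$; invoking the invariant $|\db| = \Theta(N)$ from Definition~\ref{def:ivme_state} converts this to the claimed $\bigO{|\db|^{1+\min\{\eps,1-\eps\}}}$. I do not expect a genuine obstacle here: the only substantive content is the view-size bound, and that is precisely the output-size estimate underlying the preprocessing cost in Proposition~\ref{prop:preprocessing_step}. The one point requiring care is to state explicitly that index structures add only a constant factor per entry, so they do not affect the asymptotics of any component.
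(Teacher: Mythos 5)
Your proposal is correct and takes essentially the same route as the paper's proof: both bound each auxiliary view's size by $\min\{\,|R_h|\cdot\max_b|\sigma_{B=b}S_l|,\ |S_l|\cdot|\pi_A R_h|\,\} < \min\{\frac{3}{2}N^{1+\eps},\,2N^{2-\eps}\}$ using the light- and heavy-part bounds, appeal to symmetry for $V_{ST}$ and $V_{TR}$, and convert via $|\db|=\Theta(N)$. Your explicit remarks that the view term dominates the linear cost of $\dbeps$ and that index structures contribute only a constant factor per entry are minor elaborations of what the paper leaves implicit, not a different argument.
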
  
\begin{proof}
We consider a state $\state = (\eps, N, \dbeps, \inst{V})$ of database $\db$. 
$N$ and $\eps$ take constant space and $|\dbeps| = |\inst{D}|$. 
Figure~\ref{fig:view_definitions} summarizes the space complexity of the materialized views $Q$, $V_{RS}$, $V_{ST}$, and $V_{TR}$ from $\inst{V}$. 
The result of $Q$ takes constant space.
 As discussed in the proof of Proposition~\ref{prop:preprocessing_step}, to compute 
 the auxiliary view $V_{RS}(a,c) = \textstyle\sum_b R_h(a,b) \ztimes 
S_{l}(b,c)$, we can use either $R_h$ or $S_l$ as the outer relation:
\begin{align*}
\!\!\!\!\!|V_{RS}|
\,\leq\, \min\{\, |R_h| \ztimes\!\! \max_{b\in\pi_B S_l}\!|\sigma_{B=b}S_l|,\, |S_l| \ztimes\!\! \max_{b\in\pi_B R_h}\!|\sigma_{B=b}R_h| \,\} 
\,<\, \min\{\, N \ztimes \frac{3}{2}N^{\eps}, N \ztimes 2N^{1-\eps} \,\}
\end{align*} 
The size of $V_{RS}$ is thus $\bigO{N^{1+\min\{\eps, 1-\eps\}}}$. From $|\db|=\Theta(N)$ follows that $V_{RS}$ takes $\bigO{|\db|^{1+\min\{\eps,1-\eps\}}}$ space; the space analysis for $V_{ST}$ and $V_{TR}$ is analogous. Overall, the state $\state$ of $\db$ takes $\bigO{|\db|^{1+\min\{\eps,1-\eps\}}}$ space.
\end{proof}

\section{Rebalancing Partitions}\label{sec:rebalancing}
The partition of a relation may change after updates.
For instance,  an insert 
$\delta R_l = \{(\deltaA,\deltaB) \mapsto 1\}$  may violate the size invariant $\floor{\frac{1}{4}N} \leq |\db| < N$ or may violate the light part condition $|\sigma_{A=\deltaA}R_l| < \frac{3}{2}N^{\eps}$ and require moving all tuples with the $A$-value $\deltaA$ from $R_l$ to $R_h$. 
As the database evolves under updates, \ivme performs \emph{major} and \emph{minor} rebalancing steps to ensure the size invariant and the conditions for heavy and light parts of each partition always hold. This rebalancing also ensures that the upper bounds on the number of data values, such as the number of $B$-values paired with $\deltaA$ in $R_l$ and the number of distinct $A$-values in $R_h$, are valid. The rebalancing cost is amortized over multiple updates. 

\begin{figure}[t]
\begin{center}
\begin{tikzpicture}
\node at(0,0)[anchor=north] {
\renewcommand{\arraystretch}{1.1}
\begin{tabular}{@{\hskip 0.02in}l@{\hskip 0.02in}}
\toprule
{\textsc{OnUpdate}}($\delta R,\state$)\\
\midrule
\LET $\delta R = \{(\deltaA,\deltaB) \mapsto \p\}$  \\
\LET $\state = (\eps, N, \{R_h,R_l\} \cup \dbeps, \inst{V})$ \\
\IF ($\deltaA \in \pi_A R_h$ \OR $\eps=0$) \\
\TAB {$\state$ = \textsc{ApplyUpdate}}($\delta R_h = \{(\deltaA,\deltaB) \mapsto \p\},\state$)\\
\ELSE \\
\TAB $\state$ = {\textsc{ApplyUpdate}}($\delta R_l = \{(\deltaA,\deltaB) \mapsto \p\},\state$)\\
\IF($|\inst{D}| = N$)\\
\TAB $N = 2N$ \\
\TAB $\state$ = {\textsc{MajorRebalancing}}($\state$) \\
\ELSE \IF($|\inst{D}| < \floor{\frac{1}{4}N}$) \\
\TAB $N = \floor{\frac{1}{2}N}-1$ \\
\TAB $\state$ = {\textsc{MajorRebalancing}}($\state$) \\
\ELSE \IF($\deltaA \in \pi_{A}R_l$ \AND $|\sigma_{A=\deltaA} R_l| \geq \frac{3}{2}N^{\eps}$) \\
\TAB $\state$ = {\textsc{MinorRebalancing}}($R_l, R_h, A, \deltaA, \state$) \\
\ELSE \IF($\deltaA \in \pi_{A}R_h$ \AND $|\sigma_{A=\deltaA} R_h| < \frac{1}{2}N^{\eps}$) \\
\TAB $\state$ =  {\textsc{MinorRebalancing}}($R_h, R_l, A, \deltaA, \state$)\\
\RETURN $\state$\\
\bottomrule
\end{tabular}
};

\node at(4.2,0)[anchor=north west] {
\renewcommand{\arraystretch}{1.1}
    \begin{tabular}{@{\hskip 0.02in}l@{\hskip 0.02in}}
\toprule
{\textsc{MajorRebalancing}}($\state$)\\
\midrule 
\LET $\state = (\eps,N, \{R_h, R_l, S_h, S_l, T_h, T_l\},$ \\
\TAB\TAB\TAB\TAB $\{ Q, V_{RS}, V_{ST}, V_{TR} \})$ \\
$\{ R_h, R_l \} = \textsc{StrictPartition}(R_h,R_l, A, N^{\eps})$ \\
$\{ S_h, S_l \} = \textsc{StrictPartition}(S_h, S_l, B, N^{\eps})$ \\
$\{ T_h, T_l \} = \textsc{StrictPartition}(T_h, T_l, C, N^{\eps})$ \\
$V_{RS}(a,c) = \textstyle\sum_b R_{h}(a,b) \ztimes S_{l}(b,c)$\\
$V_{ST}(b,a) = \textstyle\sum_c S_{h}(b,c) \ztimes T_{l}(c,a)$\\
$V_{TR}(c,b) = \textstyle\sum_a T_{h}(c,a) \ztimes R_{l}(a,b)$\\
\RETURN $\state$ \\
\bottomrule
\end{tabular}
};

\node at(4.2,-8.99)[anchor=south west] {
\renewcommand{\arraystretch}{1.1}
\begin{tabular}{@{\hskip 0.02in}l@{\hskip 0.02in}}
\toprule
{\textsc{MinorRebalancing}}($K_{\mathit{src}}, K_{\mathit{dst}}, X, x, \state$)\\
\midrule
\FOREACH $\vecnormal{t} \in \sigma_{X=x} K_{\mathit{src}}$ \DO \\
\TAB $m = K_{\mathit{src}}(\vecnormal{t})$ \\
\TAB $\state$ = {\textsc{ApplyUpdate}}($\delta K_{\mathit{src}} = \{\, \vecnormal{t} \mapsto -m \,\},\state$)\\ 
\TAB $\state$ = {\textsc{ApplyUpdate}}($\delta K_{\mathit{dst}} = \{\, \vecnormal{t} \mapsto m \,\},\state$)\\
\RETURN $\state$\\
\bottomrule
\end{tabular}
};
\end{tikzpicture}
\end{center}
\vspace{-1.2em}
\caption{
Counting triangles under a single-tuple update with rebalancing. 
\textsc{OnUpdate} takes as input an update $\delta R$ and 
the current \ivme state $\state$ of a database $\inst{D}$. 
It returns a new state that results from applying 
$\delta R$ to $\state$ and, if necessary, rebalancing partitions.
The condition $\eps=0$ in the third line ensures that all tuples are in $R_h$ when $\eps=0$.
\textsc{ApplyUpdate} is given in Figure~\ref{fig:applyUpdate}.
\textsc{MinorRebalancing}($K_{\mathit{src}}, K_{\mathit{dst}}, X, x, \state)$ 
moves all tuples with the $X$-value $x$ from $K_{\mathit{src}}$ to $K_{\mathit{dst}}$. 
\textsc{MajorRebalancing}($\state$) recomputes the relation partitions and views
in $\state$. 
\textsc{StrictPartition}($K_h,K_l,X,\theta$) constructs a strict partition of relation $K$
on variable $X$ with threshold $\theta$ 
(see Definition~\ref{def:loose_relation_partition}).
The \textsc{OnUpdate} procedures for updates to relations $S$ and $T$ are analogous. 
}
\label{fig:updateProgram}
\vspace{-0.45em}
\end{figure}

\paragraph*{Major Rebalancing}
If an update causes the database size to fall below $\lfloor \frac{1}{4} N \rfloor$ or reach $N$, \ivme halves or, respectively, doubles $N$, followed by strictly repartitioning the database with the new threshold $N^{\eps}$ and recomputing the materialized views, as shown in Figure~\ref{fig:updateProgram}. 

\begin{proposition}\label{prop:major_cost}
Given $\eps\in[0,1]$, major rebalancing of an \ivme state constructed from a database $\db$ takes $\bigO{|\db|^{1+\min\{\eps,1-\eps\}}}$ time.
\end{proposition}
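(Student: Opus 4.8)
The plan is to read off the cost of \textsc{MajorRebalancing} directly from its pseudocode in Figure~\ref{fig:updateProgram}: the procedure performs three strict repartitionings followed by the from-scratch recomputation of the three auxiliary views $V_{RS}$, $V_{ST}$, $V_{TR}$, so I would bound these two phases separately and then sum.

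First I would account for the repartitioning. Each call $\textsc{StrictPartition}(\cdot,\cdot,X,N^{\eps})$ reconstructs a strict partition of one input relation from the union of its current heavy and light parts; as noted after Definition~\ref{def:loose_relation_partition}, a strict partition is computable in time linear in the size of the relation, so the three calls together cost $\bigO{|\db|}$. Since the exponent $1+\min\{\eps,1-\eps\}$ is at least $1$ for every $\eps\in[0,1]$, this term is dominated by the claimed bound.

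Second I would bound the view recomputation, which reuses the argument from the proof of Proposition~\ref{prop:preprocessing_step} applied to the \emph{freshly} strict-partitioned relations. For $V_{RS}(a,c)=\sum_b R_h(a,b)\ztimes S_l(b,c)$ one may iterate over $R_h$ and, for each $B$-value, over the fewer than $N^{\eps}$ matching $C$-values in $S_l$ (strict light part), or dually iterate over $S_l$ and over the at most $N^{1-\eps}$ distinct $A$-values in $R_h$ (strict heavy part). Using $|R_h|,|S_l|\le|\db|<N$, the number of steps is at most $\min\{\,|R_h|\ztimes N^{\eps},\,|S_l|\ztimes N^{1-\eps}\,\} < N^{1+\min\{\eps,1-\eps\}}$, and the analysis for $V_{ST}$ and $V_{TR}$ is symmetric.

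The one point requiring care, which I would isolate as the main step, is converting these $N$-based bounds back into $|\db|$-based bounds: the analysis uses the \emph{new} threshold $N^{\eps}$, i.e.\ the value of $N$ \emph{after} it has been doubled (when $|\db|=N$) or halved (when $|\db|<\floor{\frac{1}{4}N}$). I would verify that in both cases the updated $N$ again satisfies the size invariant $\floor{\frac{1}{4}N}\le|\db|<N$, so that $|\db|=\Theta(N)$ continues to hold. Granting this, each view recomputation costs $\bigO{|\db|^{1+\min\{\eps,1-\eps\}}}$, and adding the $\bigO{|\db|}$ repartitioning cost leaves the total at $\bigO{|\db|^{1+\min\{\eps,1-\eps\}}}$, as claimed.
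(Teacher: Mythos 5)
Your proposal is correct and follows essentially the same route as the paper: it bounds the three strict repartitionings at $\bigO{|\db|}$ and then reuses the view-recomputation analysis from the proof of Proposition~\ref{prop:preprocessing_step} together with $|\db|=\Theta(N)$ to get $\bigO{|\db|^{1+\min\{\eps,1-\eps\}}}$. Your extra check that the updated threshold base still satisfies the size invariant (so $|\db|=\Theta(N)$ persists) is a detail the paper leaves implicit, but it does not change the argument.
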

\begin{proof}
We consider the major rebalancing procedure from Figure~\ref{fig:updateProgram}.
Strictly partitioning the input relations takes $\bigO{|\db|}$ time. From the proof of Proposition~\ref{prop:preprocessing_step} and $|\db|=\Theta(N)$ follow that recomputing $V_{RS}$, $V_{ST}$, and $V_{TR}$ takes $\bigO{|\db|^{1+\min\{\eps,1-\eps\}}}$ time.
\end{proof}

The (super)linear time of major rebalancing is amortized over $\Omega{(N)}$ updates. After a major rebalancing step, it holds that $|\db| = \frac{1}{2}N$ (after doubling), or $|\db| = \frac{1}{2}N - \frac{1}{2}$ or $|\db| = \frac{1}{2}N - 1$ (after halving, i.e., setting $N$ to $\floor{\frac{1}{2}N}-1$; the two options are due to the floor functions in the size invariant and halving expression). To violate the size invariant $\floor{\frac{1}{4}N} \leq |\db| < N$ and trigger another major rebalancing, the number of required updates is at least $\frac{1}{4}N$.
Section~\ref{sec:main_proof} proves the amortized $\bigO{|\db|^{\min\{\eps,1-\eps\}}}$ time of major rebalancing.

\paragraph*{Minor Rebalancing}
After each update $\delta R = \{(\deltaA,\deltaB) \mapsto \p\}$, \ivme checks whether the two conditions $|\sigma_{A=\deltaA} R_h| \geq \frac{1}{2}N^{\eps}$ and $|\sigma_{A=\deltaA} R_l| < \frac{3}{2}N^{\eps}$ still hold. 
If the first condition is violated, all tuples in $R_h$ with the $A$-value $\deltaA$ are moved to $R_l$ and the affected views are updated; similarly, if the second condition is violated, all tuples with the $A$-value $\deltaA$ are moved from $R_l$ to $R_h$, followed by updating the affected views. Figure~\ref{fig:updateProgram} shows the procedure for minor rebalancing, which deletes affected tuples from one part and inserts them into the other part. 

\begin{proposition}\label{prop:minor_cost}
Given $\eps\in[0,1]$, minor rebalancing of an \ivme state constructed from a database $\db$ takes $\bigO{|\db|^{\eps+\max\{\eps,1-\eps\}}}$ time.
\end{proposition}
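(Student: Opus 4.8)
The plan is to charge the cost of a single \textsc{MinorRebalancing} call to the number of tuples it relocates, multiplied by the cost of one \textsc{ApplyUpdate} invocation. Recall from Figure~\ref{fig:updateProgram} that \textsc{MinorRebalancing}($K_{\mathit{src}},K_{\mathit{dst}},X,x,\state$) moves every tuple $\vecnormal{t}\in\sigma_{X=x}K_{\mathit{src}}$ from $K_{\mathit{src}}$ to $K_{\mathit{dst}}$ by issuing, per tuple, one delete from $K_{\mathit{src}}$ and one insert into $K_{\mathit{dst}}$, each realized as a call to \textsc{ApplyUpdate}. Hence the running time is at most $2\ztimes|\sigma_{X=x}K_{\mathit{src}}|$ times the worst-case cost of \textsc{ApplyUpdate}. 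Note that \textsc{MinorRebalancing} calls \textsc{ApplyUpdate} directly rather than \textsc{OnUpdate}, so these moves never trigger nested rebalancing and the cost does not cascade. First I would bound the number of moved tuples, then invoke Proposition~\ref{prop:single_step_time} for the per-call cost.

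For the tuple count, I would use the fact, visible in Figure~\ref{fig:updateProgram}, that minor rebalancing is triggered in exactly two situations, each detected immediately after a single-tuple update has changed one degree by one: either the light-part condition fails, so $|\sigma_{A=\deltaA}R_l|\geq\frac{3}{2}N^{\eps}$ just after an insert into $R_l$, or the heavy-part condition fails, so $|\sigma_{A=\deltaA}R_h|<\frac{1}{2}N^{\eps}$ just after a delete from $R_h$ (the cases for $S$ and $T$ are symmetric). Since the triggering degree crosses its threshold by a single tuple, at the moment of the call $|\sigma_{X=x}K_{\mathit{src}}|$ is at most $\ceil{\frac{3}{2}N^{\eps}}$ in the first case and strictly below $\frac{1}{2}N^{\eps}$ in the second; in both cases it is $\bigO{N^{\eps}}$.

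By Proposition~\ref{prop:single_step_time}, each \textsc{ApplyUpdate} call runs in $\bigO{N^{\max\{\eps,1-\eps\}}}$ time. The hard part will be arguing that this bound remains valid \emph{during} the rebalancing, when the partition of the relation being rebalanced (say $R$) is only partially updated and may transiently violate its own heavy/light conditions. I would dispatch this by observing that \textsc{ApplyUpdate} for an update to $R$ never iterates over $R$'s own parts: every loop in Figure~\ref{fig:applyUpdate} ranges over $C$-values drawn from $S_l$, $T_h$, or $T_l$, bounded respectively by $\frac{3}{2}N^{\eps}$ and $2N^{1-\eps}$ through the degree bounds of $S$ and $T$. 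Since \textsc{MinorRebalancing} moves tuples only within $R$ and leaves $S$ and $T$ untouched, these bounds continue to hold throughout the sequence of moves, so the per-call cost stays $\bigO{N^{\max\{\eps,1-\eps\}}}$ even in the partially rebalanced intermediate states.

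Combining the two estimates yields a running time of $\bigO{N^{\eps}}\ztimes\bigO{N^{\max\{\eps,1-\eps\}}}=\bigO{N^{\eps+\max\{\eps,1-\eps\}}}$, which equals $\bigO{|\db|^{\eps+\max\{\eps,1-\eps\}}}$ by the size invariant $|\db|=\Theta(N)$. The only genuinely nontrivial step is the invariance argument of the previous paragraph; the remainder is a direct multiplication of the $\bigO{N^{\eps}}$ tuple count by the $\bigO{N^{\max\{\eps,1-\eps\}}}$ per-update cost.
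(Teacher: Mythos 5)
Your proof is correct and follows essentially the same route as the paper's: bound the number of relocated tuples by $\bigO{N^{\eps}}$ using the heavy/light trigger conditions, multiply by the $\bigO{N^{\max\{\eps,1-\eps\}}}$ per-call cost of \textsc{ApplyUpdate} from Proposition~\ref{prop:single_step_time}, and convert via $|\db|=\Theta(N)$. Your additional observations --- that \textsc{MinorRebalancing} calls \textsc{ApplyUpdate} rather than \textsc{OnUpdate} (so no cascading) and that the per-call bound survives the transiently inconsistent partition of $R$ because the loops only range over parts of $S$ and $T$ --- are correct refinements the paper leaves implicit.
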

\begin{proof}
Consider a state $\state=(\eps, N, \dbeps, \inst{V})$. Minor rebalancing moves fewer than $\frac{1}{2}N^{\eps}$ tuples (from heavy to light) or fewer than $\frac{3}{2}N^{\eps} + 1$ tuples (from light to heavy). 
Each tuple move performs one delete and one insert and costs $\bigO{|\db|^{\max\{\eps,1-\eps\}}}$ by Proposition~\ref{prop:single_step_time}. Since there are $\bigO{N^{\eps}}$ such operations and $|\db|=\Theta(N)$, the total time is $\bigO{|\db|^{\eps+\max\{\eps,1-\eps\}}}$.
\end{proof}

The (super)linear time of minor rebalancing is amortized over $\Omega(N^{\eps})$ updates. 
This lower bound on the number of updates comes from the heavy and light part conditions (cf. Definition~\ref{def:loose_relation_partition}), namely from the gap between the two thresholds in these conditions. Section~\ref{sec:main_proof} proves the amortized $\bigO{|\db|^{\max\{\eps,1-\eps\}}}$ time of minor rebalancing.

Figure~\ref{fig:updateProgram} gives the trigger procedure \textsc{OnUpdate} that maintains Query~\eqref{query:triangle} under a single-tuple update to relation $R$ and, if necessary, rebalances partitions; the procedures for updates to $S$ and $T$ are analogous.
Given an update $\delta R = \{(\deltaA,\deltaB) \mapsto \p\}$ and an \ivme state of a database $\inst{D}$, the procedure first checks in constant time whether the update affects the heavy or light part of $R$. 
The update targets $R_h$ if there exists a tuple with the same $A$-value $\deltaA$ already in $R_h$, or $\eps$ is set to $0$; otherwise, the update targets $R_l$.
When $\eps = 0$, all tuples are in $R_h$, while $R_l$ remains empty.
Although this behavior is not required by \ivme (without the $\eps=0$ condition, $R_l$ would contain only tuples whose $A$-values have the degree of $1$, and $R_h$ would contain all other tuples), it allows us to recover existing IVM approaches, such as classical IVM and factorized IVM, which do not partition relations; by setting $\eps$ to $0$ or $1$, \ivme ensures that all tuples are in $R_h$ or respectively $R_l$.
The procedure \textsc{OnUpdate} then invokes \textsc{ApplyUpdate} from Figure~\ref{fig:applyUpdate}.
If the update causes a violation of the size invariant $\floor{\frac{1}{4}N} \leq |\db| < N$, the procedure invokes \textsc{MajorRebalancing} to recompute the relation partitions and auxiliary views (note that major rebalancing has no effect on the triangle count). 
Otherwise, if the heavy or light part condition is violated, \textsc{MinorRebalancing} moves all tuples with the given $A$-value $\deltaA$ from the source part to the destination part of $R$. 

\subsection{Proof of Theorem~\ref{theo:main_result}}\label{sec:main_proof}
We are now ready to prove Theorem~\ref{theo:main_result} that states the complexity of \ivme.

\begin{proof}
The preprocessing stage constructs the initial \ivme state from a database $\inst{D}$ in $\bigO{|\inst{D}|^{\frac{3}{2}}}$ time, as shown in Proposition~\ref{prop:preprocessing_step}.
Materializing the query result ensures constant answer time. 
The space complexity $\bigO{|\inst{D}|^{1 + \min\{\eps,1-\eps\}}}$ follows from Proposition~\ref{prop:space_complexity}.

We next analyze the amortized update time complexity.
Let $\state_0 = (\eps,N_{0},\inst{P}_{0},\inst{V}_{0})$ be the initial \ivme state of a database $\db_0$ and 
$\upd_0, \upd_1, \ldots ,\upd_{n-1}$ a sequence of arbitrary single-tuple updates. 
The application of this update sequence to $\state_0$ yields a sequence 
$\state_0 \overset{\upd_0}{\longrightarrow} \state_1 \overset{\upd_1}{\longrightarrow} \ldots \overset{\upd_{n-1}}{\longrightarrow} \state_{n}$
of \ivme states, 
where $\state_{i+1}$ is the result of executing the procedure $\textsc{OnUpdate}(\upd_{i},\state_{i})$ from Figure~\ref{fig:updateProgram}, for $0\leq i < n$.
Let $c_i$ denote the actual execution cost of $\textsc{OnUpdate}(\upd_i,\state_{i})$. 
For some $\Gamma>0$, we can decompose each $c_i$ as:
\begin{align*}
c_i = c_i^{\update} + c_i^{\major} + c_i^{\minor} + \Gamma, \text{\qquad for } 0 \leq i < n,
\end{align*} 
where $c_i^{\update}$, $c_i^{\major}$, and $c_i^{\minor}$ are the actual costs of the subprocedures \textsc{ApplyUpdate}, \textsc{MajorRebalancing}, and
\textsc{MinorRebalancing}, respectively, in \textsc{OnUpdate}.
If update $\upd_i$ causes no major rebalancing, then $c_i^{\major} = 0$; 
similarly, if $\upd_i$ causes no minor rebalancing, then $c_i^{\minor} = 0$. 
These actual costs admit the following worst-case upper bounds: 
\\[6pt]
\begin{tabular}{r@{\hskip 0.05in}l@{\hskip 0.1in}l}
$c^{\update}_i$ & $\leq \gamma N_{i}^{\max\{\eps,1-\eps\}}$ & (by Proposition~\ref{prop:single_step_time}), \\[4pt]
$c^{\major}_i$ & $\leq \gamma N_{i}^{1 + \min\{\eps,1-\eps\}}$& {(by Proposition~\ref{prop:major_cost})}, and \\[4pt]
$c^{\minor}_i$ & $\leq \gamma N_{i}^{\eps+\max\{\eps,1-\eps\}}$ & {(by Proposition~\ref{prop:minor_cost})}, 
\end{tabular}\\[6pt]
where $\gamma$ is a constant derived from their asymptotic bounds, and $N_i$ is the threshold base of $\state_i$.
The actual costs of major and minor rebalancing can be superlinear in the database size. 

The crux of this proof is to show that assigning a {\em sublinear amortized cost} $\hat{c}_{i}$ to each update $\upd_{i}$ accumulates enough budget to pay for such expensive but less frequent rebalancing procedures.
For any sequence of $n$ updates, our goal is to show that the accumulated amortized cost is no smaller than the accumulated actual cost:
\begin{align}\label{eq:inequality_proof}
\sum_{i=0}^{n-1} \hat{c}_i \geq \sum_{i=0}^{n-1}c_i.
\end{align}  
The amortized cost assigned to an update $\upd_i$ is $\hat{c}_i = \hat{c}^{\update}_i + \hat{c}^{\major}_i + \hat{c}^{\minor}_i + \Gamma$, where
\begin{align*}
\hat{c}^{\update}_i = \gamma N_{i}^{\max\{\eps,1-\eps\}}, \quad
\hat{c}^{\major}_i = 4\gamma N_{i}^{\min\{\eps,1-\eps\}}, \quad
\hat{c}^{\minor}_i = 2\gamma N_{i}^{\max\{\eps,1-\eps\}}, \quad \text{ and }
\end{align*}
$\Gamma$ and $\gamma$ are the constants used to upper bound the actual cost of \textsc{OnUpdate}.
In contrast to the actual costs $c^{\major}_i$ and $c^{\minor}_i$, the amortized costs $\hat{c}^{\major}_i$ and $\hat{c}^{\minor}_i$ are always nonzero. 

We prove that such amortized costs satisfy Inequality~\eqref{eq:inequality_proof}.
Since $\hat{c}^{\update}_i \geq c^{\update}_i $ for $0 \leq  i < n$, it suffices to show that the following inequalities hold:

\vspace{-12pt}
\begin{align}
\label{ineq:major}
\text{\em (amortizing major rebalancing) \TAB} & \sum_{i=0}^{n-1} \hat{c}_i^{\major}  \geq \sum_{i=0}^{n-1}c_i^{\major} \qquad\text{ and}\\
\label{ineq:minor}
\text{\em (amortizing minor rebalancing) \TAB} & \sum_{i=0}^{n-1} \hat{c}_i^{\minor}  \geq \sum_{i=0}^{n-1}c_i^{\minor}. 
\end{align}
We prove Inequalities~\eqref{ineq:major} and \eqref{ineq:minor} by induction on the length $n$ of the update sequence.

\vspace{-10pt}
\subparagraph*{Major rebalancing.}
\begin{itemize}
\item {\em Base case}:
We show that Inequality~\eqref{ineq:major} holds for $n = 1$. 
The preprocessing stage sets $N_{0} = 2\ztimes|\inst{D}_0| + 1$. 
If the initial database $\inst{D}_0$ is empty, then $N_{0}=1$ and $\upd_0$ triggers major rebalancing (and no minor rebalancing).  
The amortized cost $\hat{c}_0^{\major} = 4 \gamma N_0^{\min\{\eps,1-\eps\}} = 4\gamma$ suffices to cover the actual cost $c_0^{\major}\leq \gamma N_0^{1 + \min\{\eps,1-\eps\}} = \gamma$.
If the initial database is nonempty, $\upd_0$ cannot trigger major rebalancing (i.e., violate the size invariant) because 
$\floor{\frac{1}{4}N_{0}} = \floor{\frac{1}{2}|\inst{D}_0|} \leq |\inst{D}_0| -1$ (lower threshold) and 
$|\inst{D}_0| + 1 < N_{0} = 2\ztimes|\inst{D}_0| + 1$ (upper threshold); 
then, $\hat{c}_0^{\major} \geq c_0^{\major} = 0$. Thus, Inequality~\eqref{ineq:major} holds for $n = 1$. 

\item {\em Inductive step}:
Assumed that Inequality~\eqref{ineq:major} holds for all update sequences of length up to $n-1$, we show it holds for update sequences of length $n$.
If update $\upd_{n-1}$ causes no major rebalancing, then 
$\hat{c}_{n-1}^{\major} = 4 \gamma N_{n-1}^{\min\{\eps, 1-\eps\}} \geq 0$ and $c_{n-1}^{\major} = 0$, thus Inequality~\eqref{ineq:major} holds for $n$. 
Otherwise, if applying $\upd_{n-1}$ violates the size invariant, 
the database size $|\inst{D}_{n}|$ is either 
$\floor{\frac{1}{4}N_{n-1}}-1$ or $N_{n-1}$. 
Let $\state_j$ be the state created after the previous major rebalancing or, 
if there is no such step, the initial state. 
For the former ($j>0$), the major rebalancing step ensures
$|\inst{D}_j| = \frac{1}{2}N_j$ after doubling and 
$|\inst{D}_j| = \frac{1}{2}N_j - \frac{1}{2}$ or $|\inst{D}_j| = \frac{1}{2}N_j - 1$ after halving the threshold base $N_j$;
for the latter ($j=0$), the preprocessing stage ensures $|\inst{D}_j| = \frac{1}{2}N_j - \frac{1}{2}$.
The threshold base $N_j$ changes only with major rebalancing, thus $N_j = N_{j+1}=\ldots=N_{n-1}$.
The number of updates needed to change the database size from 
$|\inst{D}_j|$ to $|\inst{D}_{n}|$ (i.e., between two major rebalancing) is at least $\frac{1}{4}N_{n-1}$ since
$\min\{ \frac{1}{2}N_j - 1 - (\floor{\frac{1}{4}N_{n-1}} - 1), N_{n-1} - \frac{1}{2}N_j \} \geq \frac{1}{4}N_{n-1}$. 
Then,

\vspace{-13pt}
\begin{align*}
  \sum_{i=0}^{n-1} \hat{c}_i^{\major} &\geq \sum_{i=0}^{j-1} c_i^{\major} + \sum_{i=j}^{n-1} \hat{c}_i^{\major} \qquad\qquad\quad && \textit{(by induction hypothesis)}\\
  & = \sum_{i=0}^{j-1} c_i^{\major} + \sum_{i=j}^{n-1} 4\gamma 
  N_{n-1}^{\min\{\eps,1-\eps\}} && \textit{($N_j = \ldots = N_{n-1}$)}\\
  & \geq \sum_{i=0}^{j-1} c_i^{\major} + \frac{1}{4}N_{n-1} \,4 \gamma
  N_{n-1}^{\min\{\eps,1-\eps\}} && \textit{(at least $\frac{1}{4}N_{n-1}$ updates)}\\
  & = \sum_{i=0}^{j-1} c_i^{\major} + \gamma N_{n-1}^{1 + \min\{\eps,1-\eps\}} \\
  & \geq \sum_{i=0}^{j-1} c_i^{\major} + c_{n-1}^{\major} = \sum_{i=0}^{n-1} c_i^{\major} && \textit{($c_{j}^{\major} = \ldots = c_{n-2}^{\major} = 0$)}.
\end{align*}
\end{itemize}

\vspace{-1pt}
\noindent Thus, Inequality~\eqref{ineq:major} holds for update sequences of length $n$.

\subparagraph*{Minor rebalancing.}
When the degree of a value in a partition changes such that the heavy or light part condition no longer holds, minor rebalancing moves the affected tuples between the heavy and light parts of the partition.
To prove Inequality~\eqref{ineq:minor}, we decompose the cost of minor rebalancing per relation and data value of its partitioning variable. 
\begin{align*}
c_i^{\minor} & =  \sum_{a \in \Dom(A)} c_i^{R,a} +  \sum_{b \in \Dom(B)}  c_i^{S,b} +  \sum_{c \in \Dom(C)}  c_i^{T,c} \quad\text{ and } \\[3pt]
\hat{c}_i^{\minor} & =  \sum_{a \in \Dom(A)}  \hat{c}_i^{R,a} +  \sum_{b \in \Dom(B)}  \hat{c}_i^{S,b} +  \sum_{c \in \Dom(C)}  \hat{c}_i^{T,c}
\end{align*}

We write $c_i^{R,a}$ and $\hat{c}_i^{R,a}$ to denote the actual and respectively amortized costs of minor rebalancing caused by update $\upd_i$, for relation $R$ and an $A$-value $a$.
If update $\upd_i$ is of the form $\delta R = \{(\deltaA,\deltaB) \mapsto \p\}$ and causes minor rebalancing, then $c_i^{R,\deltaA} = c_i^\minor$; otherwise, $c_i^{R,\deltaA} = 0$.
If update $\upd_i$ is of the form $\delta R = \{(\deltaA,\deltaB) \mapsto \p\}$, then $\hat{c}_i^{R,\deltaA} = \hat{c}_i^\minor$ regardless of whether $\upd_i$ causes minor rebalancing or not; otherwise, $\hat{c}_i^{R,\deltaA} = 0$.  
The actual costs $c_i^{S,b}$ and $c_i^{T,c}$ and the amortized costs $\hat{c}_i^{S,b}$ and $\hat{c}_i^{T,c}$ are defined similarly.

We prove that for the partition of $R$ and any $\deltaA \in \Dom(A)$ the following inequality holds:
\begin{align}
\label{ineq:minor_decomposed}
\sum_{i=0}^{n-1} \hat{c}_i^{R,\deltaA}  \geq \sum_{i=0}^{n-1}c_i^{R,\deltaA}.
\end{align}
Due to the symmetry of the triangle query, Inequality~\eqref{ineq:minor} follows directly from Inequality~\eqref{ineq:minor_decomposed}.

We prove Inequality~\eqref{ineq:minor_decomposed} for an arbitrary $\deltaA \in \Dom(A)$ by induction on the length $n$ of the update sequence.
\begin{itemize}
\item {\em Base case}:
We show that Inequality~\eqref{ineq:minor_decomposed} holds for $n = 1$. 
Assume that update $\upd_0$ is of the form $\delta R = \{(\deltaA,\deltaB) \mapsto \p\}$; 
otherwise, $\hat{c}_0^{R,\deltaA} = c_0^{R,\deltaA} = 0$, and Inequality~\eqref{ineq:minor_decomposed} follows trivially for $n=1$.  
If the initial database is empty, $\upd_0$ triggers major rebalancing but no minor rebalancing, thus $\hat{c}_0^{R,\deltaA} = 2\gamma N_{0}^{\max\{\eps,1-\eps\}} \geq  c_0^{R,\deltaA} = 0$.
If the initial database is nonempty, each relation is partitioned using the threshold $N_0^{\eps}$. 
For update $\upd_0$ to trigger minor rebalancing, the degree of the $A$-value $\deltaA$ in $R_h$ or $R_l$ 
has to either decrease from $\ceil{N_0^{\eps}}$ to $\ceil{\frac{1}{2}N_0^{\eps}}-1$ (heavy to light) or increase from $\ceil{N_0^{\eps}}-1$ to $\ceil{\frac{3}{2}N_0^{\eps}}$ (light to heavy). The former happens only if $\ceil{N_0^{\eps}}=1$ and update $\upd_0$ removes the last tuple with the $A$-value $\deltaA$ from $R_h$, thus no minor rebalancing is needed; 
the latter cannot happen since update $\upd_0$ can increase $|\sigma_{A=\deltaA}R_l|$ to at most $\ceil{N_0^{\eps}}$, and $\ceil{N_0^{\eps}} < \ceil{\frac{3}{2}N_0^{\eps}}$. 
In any case, $\hat{c}_0^{R,\deltaA} \geq c_0^{R,\deltaA}$, which implies that Inequality~\eqref{ineq:minor_decomposed} holds for $n = 1$. 

\item {\em Inductive step}:
Assumed that Inequality~\eqref{ineq:minor_decomposed} holds for all update sequences of length up to $n-1$, we show it holds for update sequences of length $n$.
Consider that update $u_{n-1}$ is of the form $\delta R = \{(\deltaA,\deltaB) \mapsto \p\}$ and causes minor rebalancing; otherwise, $\hat{c}_{n-1}^{R,\deltaA} \geq 0$ and $c_{n-1}^{R,\deltaA} = 0$, and Inequality~\eqref{ineq:minor_decomposed} follows trivially for $n$.  
Let $\state_j$ be the state created after the previous major rebalancing or, if there is no such step, the initial state. The threshold changes only with major rebalancing, thus $N_j=N_{j+1}=\ldots=N_{n-1}$. 
Depending on whether there exist minor rebalancing steps since state $\state_j$, we distinguish two cases:

  \begin{itemize}
    \item[Case 1:] There is no minor rebalancing caused by an update of the form
    $\delta R = \{(\deltaA,\deltaB') \mapsto \p'\}$ since state $\state_j$; thus, 
    $c_{j}^{R,\deltaA} = \ldots = c_{n-2}^{R,\deltaA} = 0$.
    From state $\state_j$ to state $\state_{n}$, the number of tuples with the
    $A$-value $\deltaA$ either decreases from at least $\ceil{N_{j}^{\eps}}$ to $\ceil{\frac{1}{2}N_{n-1}^{\eps}}-1$ (heavy to light) or increases from at most $\ceil{N_{j}^{\eps}}-1$ to $\ceil{\frac{3}{2}N_{n-1}^{\eps}}$ (light to heavy). 
    For this change to happen, the number of updates needs to be greater than $\frac{1}{2}N_{n-1}^{\eps}$ since 
    $N_j=N_{n-1}$ and
    $\min\{ \ceil{N_{j}^{\eps}} - (\ceil{\frac{1}{2}N_{n-1}^{\eps}} - 1),
      \ceil{\frac{3}{2}N_{n-1}^{\eps}} - (\ceil{N_{j}^{\eps}} -1)
     \} > \frac{1}{2}N_{n-1}^{\eps}$. Then,

    \begin{align*}
      \sum_{i=0}^{n-1} \hat{c}_i^{R,\deltaA} &\geq \sum_{i=0}^{j-1} c_i^{R,\deltaA} + \sum_{i=j}^{n-1} \hat{c}_i^{R,\deltaA} \qquad\qquad\qquad && \textit{(by induction hypothesis)}\\
      & = \sum_{i=0}^{j-1} c_i^{R,\deltaA} + \sum_{i=j}^{n-1} 2 \gamma N_{n-1}^{\max\{\eps,1-\eps\}} && \textit{($N_j = \ldots = N_{n-1}$)}\\
      & > \sum_{i=0}^{j-1} c_i^{R,\deltaA} + \frac{1}{2}N_{n-1}^{\eps} 2 \gamma N_{n-1}^{\max\{\eps,1-\eps\}} && \textit{(more than $\frac{1}{2}N_{n-1}^{\eps}$ updates)}\\
      & \geq \sum_{i=0}^{j-1} c_i^{R,\deltaA} + c_{n-1}^{R,\deltaA} = \sum_{i=0}^{n-1} c_i^{R,\deltaA} && \textit{($c_{j}^{R,\deltaA} = \ldots = c_{n-2}^{R,\deltaA} = 0$)}.\\
    \end{align*}

    \vspace{-4pt}
    \item[Case 2:] There is at least one minor rebalancing step caused by an update of the form $\delta R = \{(\deltaA,\deltaB') \mapsto \p'\}$ since state $\state_j$. 
    Let $\state_\ell$ denote the state created after the previous minor rebalancing caused by an update of this form; thus, $c_{\ell}^{R,\deltaA} = \ldots = c_{n-2}^{R,\deltaA} = 0$.
    The minor rebalancing steps creating $\state_\ell$ and $\state_{n}$ move tuples with the $A$-value $\deltaA$ between $R_h$ and $R_l$ in opposite directions.
    From state $\state_\ell$ to state $\state_n$, the number of such tuples either decreases from $\ceil{\frac{3}{2}N_{l}^{\eps}}$ to $\ceil{\frac{1}{2}N_{n-1}^{\eps}}-1$ (heavy to light) or increases from $\ceil{\frac{1}{2}N_{l}^{\eps}}-1$ to $\ceil{\frac{3}{2}N_{n-1}^{\eps}}$ (light to heavy). 
    For this change to happen, the number of updates needs to be greater than $N_{n-1}^{\eps}$ since
    $N_l = N_{n-1}$ and
    $\min\{\ceil{\frac{3}{2}N_{l}^{\eps}} - (\ceil{\frac{1}{2}N_{n-1}^{\eps}}-1), 
      \ceil{\frac{3}{2}N_{n-1}^{\eps}} - (\ceil{\frac{1}{2}N_{l}^{\eps}} - 1)
    \} > N_{n-1}^{\eps}$. Then,
    \begin{align*}
      \hspace{-0.2cm}
      \sum_{i=0}^{n-1} \hat{c}_i^{R,\deltaA} &\geq \sum_{i=0}^{\ell-1} c_i^{R,\deltaA} + \sum_{i=\ell}^{n-1} \hat{c}_i^{R,\deltaA} \qquad\qquad\qquad && \textit{(by induction hypothesis)}\\
      & = \sum_{i=0}^{\ell-1} c_i^{R,\deltaA} + \sum_{i=\ell}^{n-1} 2 \gamma N_{n-1}^{\max\{\eps,1-\eps\}} && \textit{($N_j = \ldots = N_{n-1}$)} \\
      & > \sum_{i=0}^{\ell-1} c_i^{R,\deltaA} + N_{n-1}^{\eps} 2 \gamma N_{n-1}^{\max\{\eps,1-\eps\}} && \textit{(more than $N_{n-1}^{\eps}$ updates)} \\
      & > \sum_{i=0}^{\ell-1} c_i^{R,\deltaA} + c_{n-1}^{R,\deltaA} = \sum_{i=0}^{n-1} c_i^{R,\deltaA} && \textit{($c_{\ell}^{R,\deltaA} = \ldots = c_{n-2}^{R,\deltaA} = 0$)}.
    \end{align*}
  \end{itemize}
Cases 1 and 2 imply that Inequality~\eqref{ineq:minor_decomposed} holds for update sequences of length $n$.
\end{itemize}

This shows that Inequality~\eqref{eq:inequality_proof} holds when the amortized cost of $\textsc{OnUpdate}(\upd_i, \state_i)$ is
$$\hat{c}_{i} = 
\gamma N_{i}^{\max\{\eps,1-\eps\}} + 
4\gamma N_{i}^{\min\{\eps,1-\eps\}} +
2\gamma N_{i}^{\max\{\eps,1-\eps\}} + 
\Gamma, \text{\quad  for } 0\leq i < n,$$
where $\Gamma$ and $\gamma$ are constants. 
The amortized cost $\hat{c}^{\major}_i$ 
of major rebalancing is $4\gamma N_{i}^{\min\{\eps,1-\eps\}}$, 
and the amortized cost $\hat{c}^{\minor}_i$ 
 of minor rebalancing is $2\gamma N_{i}^{\max\{\eps,1-\eps\}}$. 
From the size invariant $\floor{\frac{1}{4}N_{i}} \leq |\db_{i}| < N_{i}$ follows that $|\db_i| < N_i < 4(|\db_i|+1)$ for $0\leq i < n$, where $|\db_i|$ is the database size before update $\upd_i$. 
This implies that for any database $\inst{D}$,
the amortized major rebalancing time is $\bigO{|\db|^{\min\{\eps, 1-\eps\}}}$,
the amortized minor rebalancing time is $\bigO{|\db|^{\max\{\eps, 1-\eps\}}}$, and
the overall amortized update time of \ivme is $\bigO{|\inst{D}|^{\max\{\eps, 1-\eps\}}}$.
\end{proof}

Given $\eps \in [0,1]$, \ivme maintains the triangle count query in 
$\bigO{|\inst{D}|^{\max\{\epsilon,1-\epsilon\}}}$ amortized update time while using $\bigO{|\inst{D}|^{1 + \min\{\epsilon,1-\epsilon\}}}$ space. 
It thus defines a tradeoff between time and space parameterized by $\eps$, as shown in Figure~\ref{fig:time_space_plot}.
\ivme achieves the optimal amortized update time $\bigO{|\inst{D}|^{\frac{1}{2}}}$ at $\eps = \frac{1}{2}$, for which the space is $\bigO{|\inst{D}|^{\frac{3}{2}}}$.

 \section{Conclusion and Future Work}
 
This paper introduces \ivme, an incremental maintenance approach to counting 
triangles under updates that exhibits a space-time tradeoff such that the space-time product is quadratic in the size of the database. \ivme can trade space for update time. The amortized update time can be as low as the square root of the database size, which is worst-case optimal conditioned on the Online Matrix-Vector Multiplication (\OMv) conjecture.
The space requirements of  \ivme can be improved to linear while keeping the amortized update time optimal by using a refined partitioning that takes into account the degrees of data values for both variables (instead of one variable only) in each relation 
(Appendix~\ref{sec:part_all_var}).
 \ivme captures classical and factorized IVM as special cases with suboptimal, linear update time (Appendix~\ref{sec:recovery}).

There are worst-case optimal algorithms for {\em join} queries in the {\em static} setting~\cite{Ngo:PODS:2018}. In contrast, \ivme is worst-case optimal for the {\em count} aggregate over the triangle join query in the {\em dynamic} setting. The latter setting poses challenges beyond the former. 
First, the optimality argument for static join algorithms follows from their runtime being linear(ithmic) in their output size; this argument does not apply to our triangle count query, since its output is a scalar and hence of constant size. 
Second, optimality in the dynamic setting requires a more fine-grained argument that exploits the skew in the data for different evaluation strategies, view materialization, and delta computation; 
in contrast, there are static worst-case optimal join algorithms 
that do not need to exploit skew, materialize views, nor delta computation. 

This paper opens up a line of work on {\em dynamic worst-case optimal query evaluation algorithms}. The goal is a complete characterization of the complexity of incremental maintenance for arbitrary functional aggregate queries over various rings~\cite{FAQ:PODS:2016}. 
Different rings can be used as the domain of tuple multiplicities (or payloads). We used here the ring $(\mathbb{Z},+,\ztimes,0,1)$ of integers to support counting. The relational data ring supports payloads with listing and factorized representations of relations, and the degree-$m$ matrix ring supports payloads with gradients used for learning linear regression models~\cite{NO18}.

Towards the aforementioned goal, we would first like to find a syntactical characterization of all queries that admit incremental maintenance in (amortized) sublinear time. Using known (first-order, fully recursive, or factorized) incremental maintenance techniques, cyclic and even acyclic joins require at least linear update time. Our intuition is that this characterization is given by a notion of diameter of the query hypergraph. This class strictly contains the q-hierarchical queries, which admit constant-time updates~\cite{BerkholzKS17}. 

Minor variants of \ivme can be used to maintain the counting versions of any query built using three relations (Appendix~\ref{sec:inc_maint_count_three_rels}), 
the 4-path query (Appendix~\ref{sec:4_path}), and the Loomis-Whitney queries
(Appendix~\ref{sec:loomis-whitney})
 in worst-case optimal time.
The same conditional lower bound on the update time shown for the triangle count applies for most of the mentioned queries, too.
This leads to the striking realization that, while in the static setting the counting versions of the cyclic query computing triangles and the acyclic query computing paths of length $3$ have different complexities, $\bigO{|\db|^{\frac{3}{2}}}$ and $\bigO{|\db|}$, and pose distinct computational challenges, they share the same complexity and can use a very similar approach in the dynamic setting. 
A further \ivme variant 
allows the constant-delay enumeration of all triangles 
after each update, while preserving the same optimal amortized update time as for counting 
triangles (Appendix~\ref{sec:enum_triangles}). These variants exploit the fact that
our amortization technique is agnostic to the query to maintain and the update mechanism. It relies on two prerequisites. First, rebalancing is performed by moving tuples between relation parts. Second, the number of moved tuples per rebalancing is asymptotically no more than the number of updates performed since the previous rebalancing.

\bibliographystyle{abbrv}
\bibliography{bibliography}
  
\appendix
\section{Recovering Existing Dynamic and Static Approaches}\label{sec:recovery}
In sections \ref{sec:classicalIVM} and \ref{sec:facIVM} we show how 
\ivme recovers the classical first order IVM~\cite{Chirkova:Views:2012:FTD}
and the factorized  IVM~\cite{NO18} for maintaining 
the triangle count.  
Section \ref{sec:staticCount} describes how \ivme 
counts all triangles in a static database in the worst-case optimal time 
to list them~\cite{Ngo:SIGREC:2013}.

We assume that $\inst{D}= \{R,S,T\}$ is the input database.

\subsection{The Classical First-Order IVM} 
\label{sec:classicalIVM}
The classical first order IVM materializes only the input relations
and the result of Query \eqref{query:triangle}.
Given a single-tuple update $\delta R = \{(\deltaA,\deltaB) \mapsto \p\}$
to relation $R$, 
it maintains Query \eqref{query:triangle} under $\delta R$ by computing 
the delta query $\delta Q() = \delta R(\deltaA,\deltaB)\ztimes \textstyle\sum_{c}  S(\deltaB,c) \ztimes T(c,\deltaA)$ and setting 
$Q() = Q() + \delta Q()$. The delta computation
needs $\bigO{|\inst{D}|}$ time, since it requires
the iteration over possibly linearly many $C$-values paired with $\deltaB$
in $S$ and with $\deltaA$ in $T$.  
The evaluation of updates to the relations 
$S$ and $T$ is analogous. 

To recover the classical IVM by \ivme, 
we choose $\epsilon = 0$ or $\epsilon = 1$.
In the former case, all tuples stay in the 
heavy relation parts, and in the latter case, they stay 
in the light relation parts.
In both cases, the auxiliary views stay empty, since 
each of them uses one light and one heavy relation part. 
Likewise, almost all skew-aware views 
stay empty as well. 
The only possibly nonempty 
skew-aware view 
returns the full triangle count. 
We explain both cases in more detail. 

We first  
consider the case  
$\eps=1$. 
The preprocessing stage 
sets the threshold base 
$N_0$ of the initial state $\state_0$ 
to $2 \cdot |\inst{D}| +1$
 and strictly 
 partitions 
each relation with threshold $N_0^{\eps} = N_0$.
Since, $|\sigma_{A=a}R| < N_0$ for all $A$-values
$a$, all tuples in $R$ end up in  the light part 
of $R$. Similarly, 
all tuples in $S$ or $T$ are assigned 
to the light parts of their 
relations.
Consequently, all auxiliary views 
in $\state_0$ are empty, since each of them
refers to at least one heavy relation part. 
The only possibly  
nonempty skew-aware view is
 $\delta Q_{lll}()$, which returns the full triangle count.
 Given an initial update $\delta R = \{(\deltaA, \deltaB) \mapsto \p\}$, the procedure  
 \textsc{OnUpdate} in Figure \ref{fig:updateProgram} 
 invokes \textsc{ApplyUpdate}
 from Figure \ref{fig:applyUpdate}  for the light part 
 of $R$, since $\deltaA \notin \pi_A R_h$
 and $\eps \neq 0$. 
  The degree of $\deltaA$ in 
 $R_l$ cannot reach  $\frac{3}{2} N_0^{\eps}$
 as a result of the update, hence,
\textsc{OnUpdate} does 
 not invoke minor rebalancing.
The procedure 
\textsc{MajorRebalancing}, which might be 
invoked by \textsc{OnUpdate},
does not move 
tuples to the heavy relation parts, 
since the partition threshold is always greater 
than  the database size.   
Similarly, all subsequent updates are propagated 
to the light relation parts and do not trigger movements 
between relation parts.
This means that 
the materialized auxiliary views $V_{RS}$ 
$V_{ST}$, and $V_{TR}$ as well as 
all skew-aware views referring to 
heavy relation parts  stay empty. 
The only materialized view maintained by 
 \textsc{ApplyUpdate} is the triangle count
 $Q$. 
 Hence, the space complexity is dominated 
 by the size of the input relations and is therefore 
 linear like for the classical IVM.
 Given any update $\delta R_l = \{(\deltaA,\deltaB) \mapsto \p\}$,
 \textsc{ApplyUpdate} computes the delta   
 $\delta Q_{lll}() = \delta R_l(\deltaA,\deltaB) \ztimes
 \textstyle\sum_{c} S_l(\deltaB,c)\ztimes T_l(c,\deltaA)$
 and sets $Q() = Q() + \delta Q_{lll}()$.
 Following the proof of 
 Proposition \ref{prop:single_step_time}, this 
 delta computation requires $\bigO{|\inst{D}|^{\eps}}
 = \bigO{|\inst{D}|}$ time. 
 Together with the linear 
 time needed for major rebalancing, the overall 
 worst-case update time is linear.

The case for  
$\eps = 0$ is analogous.
All tuples stay in the heavy relation parts, and the 
light parts are kept empty. 
This means
that all auxiliary views 
 $V_{RS}$, $V_{ST}$, and $V_{TR}$
 as well as 
all skew-aware views besides 
 $Q_{hhh}$  
 stay empty.
 Hence, the space complexity
 is linear. 
 Given an update
 $\delta R_h = \{(\deltaA,\deltaB) \mapsto \p\}$,
the procedure \textsc{ApplyUpdate}
maintains the result of $Q$ 
by computing  the delta 
 $\delta Q_{hhh}() = \delta R_h(\deltaA,\deltaB) \ztimes
 \textstyle\sum_{c} S_h(\deltaB,c) \ztimes T_h(c,\deltaA)$
 and setting $\delta Q() = Q() + \delta Q_{hhh}()$.
 Due to the proof of 
 Proposition \ref{prop:single_step_time}, this 
 delta computation 
requires $\bigO{|\inst{D}|^{1-\eps}}
= \bigO{|\inst{D}|}$ time in the worst case.

\subsection{The Factorized IVM}
\label{sec:facIVM}
The factorized IVM  materializes besides 
the input relations and the result of Query
\eqref{query:triangle}, an additional view to speed up 
updates to one relation. 
Here we choose this relation to be 
$R$; the cases for $S$ and $T$  are analogous. 
The materialized view is 
$\hat{V}_{ST}(b,a) = \textstyle\sum_c S(b,c) \ztimes T(c,a)$, which is of 
size $\bigO{\inst{|D}|^2}$. 
Updates $\delta R = \{(\deltaA, \deltaB) \mapsto \p\}$  to relation $R$
can be processed by computing 
$\delta Q() = \delta R(\deltaA, \deltaB) \ztimes 
V_{ST}(\deltaB,\deltaA)$
and setting $Q() = Q() + \delta Q()$  in overall 
constant time.
Updates to  
relations $S$ and $T$, however, affect not only $Q$, but also 
$V_{ST}$. Given an update 
$\delta S = \{(\deltaB, \deltaC) \mapsto \p\}$, the factorized IVM
maintains $Q$  by computing the delta
$\delta Q() = \delta S(\deltaB,\deltaC)\ztimes 
\textstyle\sum_{a}  R(a,\deltaB) \ztimes T(\deltaC,a)$ 
and setting $Q() = Q() + \delta Q$. The view
 $V_{ST}$ is maintained by computing  
$V_{ST}(\deltaB,a) = \delta S(\deltaB,\deltaC) \ztimes T(\deltaC,a)$.
Both computations need 
$\bigO{|\inst{D}|}$ time, as they require the iteration 
over possibly linearly many $A$-values.
Updates to $T$ are handled analogously.

To recover factorized IVM, we use a different parameter $\eps_K$ 
for each relation $K$: 
$\eps_R  \in \{0,1\}$, $\eps_S = 0$, and $\eps_T = 1$.
This means that in each \ivme state with threshold base 
$N$, relations $R$, $S$, and $T$ are partitioned with thresholds
$N^{\eps_R}$, $N^{\eps_S}$, and $N^{\eps_T}$, respectively. 
We consider the setting $\eps_R = \eps_S = 0$ and $\eps_T = 1$.
The other setting $\eps_R = \eps_T = 1$ and $\eps_S = 0$ is analogous. 
Similar to the case of recovering the classical IVM 
in Section~\ref{sec:classicalIVM}, 
the relations $R$ and $S$ are completely included in their heavy parts and $T$ is contained in its light part. 
The only possibly nonempty auxiliary and skew-aware views 
are  
$V_{ST}(b,a) = \textstyle\sum_c S_h(b,c) \ztimes T_l(c,a)$ 
and
 $\delta Q_{hhl}$.
 Following the space analysis in the proof of 
 Proposition \ref{prop:space_complexity},
 the view $V_{ST}$ is of size 
$\bigO{|\inst{D}|^{1+ \min\{\eps_T,1-\eps_S\}}} = \bigO{|\inst{D}|^{2}}$. 
Query $Q$
is maintained under updates 
$\delta R_h = \{(\deltaA,\deltaB) \mapsto \p\}$
by computing the delta 
$\delta Q_{hhl}() = \delta R_h(\deltaA, \deltaB)
\ztimes V_{ST}(\deltaB,\deltaA)$ 
and setting $Q() = Q() + \delta Q_{hhl}()$
in constant time. 
Updates to $S$ and $T$ affect besides Query $Q$,
the
view $V_{ST}$. Let  
$\delta S_h = \{(\deltaB, \deltaC) \mapsto \p\}$
be an update to $S_h$.
\ivme maintains $Q$ and 
$V_{ST}$ under $\delta S_h$ by computing:
\begin{itemize}
\item $\delta Q_{hhl}() = \delta S_h(\deltaB,\deltaC) \ztimes
\textstyle\sum_a R_h(a, \deltaB) \ztimes T_l(\deltaC,a)$,
\TAB  $Q() = Q() + \delta Q_{hhl}()$, and
\item $V_{ST}(\deltaB,a) = V_{ST}(\deltaB,a) + 
\delta S_h(\deltaB,\deltaC) \ztimes
T_l(\deltaC,a)$.
\end{itemize}
According to the analysis  
in the proof of  
Proposition \ref{prop:single_step_time},
the computation in 
the first line requires  
$\bigO{|\inst{D}|^{\min\{\eps_T,1-\eps_R\}}}$
time
and the second line needs 
$\bigO{|\inst{D}|^{\eps_T}}$ time.
Since $\eps_R  = 0$ and 
$\eps_T = 1$, the computation time amounts to  
$\bigO{|\inst{D}|}$.
In case of an update  
$\delta T_l = \{(\deltaC, \deltaA) \mapsto \p\}$,
\ivme perform the following computations:
\begin{itemize}
\item $\delta Q_{hhl}() = \delta T_l(\deltaC,\deltaA) \ztimes
\textstyle\sum_b R_h(\deltaA, b) \ztimes S_h(b, \deltaC)$,
\TAB  $Q() = Q() + \delta Q_{hhl}()$, and
\item $V_{ST}(b,\deltaA) = V_{ST}(b,\deltaA) + 
\delta T_l(\deltaC,\deltaA) \ztimes
S_h(b,\deltaC)$.
\end{itemize}
Following the proof of  
Proposition \ref{prop:single_step_time},
both lines need  
$\bigO{|\inst{D}|^{1-\eps_S}}$ time.
As $\eps_S = 0$, the computation time is  
$\bigO{|\inst{D}|}$.

\subsection{Counting Triangles in  Static Databases}\label{sec:static_count} 
\label{sec:staticCount}
All triangles in the database $\inst{D} = \{R,S,T\}$ can 
be counted by computing the join of 
$R$, $S$, and $T$ using a worst-case optimal algorithm and 
summing up the multiplicities of the result tuples in overall
time $\bigO{|\db|^{\frac{3}{2}}}$~\cite{Ngo:SIGREC:2013}.
\ivme allows to recover this computation time for counting  triangles
in the static case.
We fix $\eps = \frac{1}{2}$ and insert all tuples from $\db$, one at a time, into a
database $\inst{D}'$ with initially empty relations.  
The preprocessing time is constant. 
By Theorem~\ref{theo:main_result}, \ivme guarantees  
$\bigO{M^{\frac{1}{2}}}$ amortized update time, where $M$
is the size of $\inst{D}'$ at update time.
Thus, the total time to count the triangles in $\inst{D}$ is
$\bigO{\sum_{M=0}^{|\db|-1} M^{\frac{1}{2}}} = \bigO{|\db|\ztimes |\db|^{\frac{1}{2}}} = \bigO{|\db|^{\frac{3}{2}}}$.

To avoid rebalancing, we can preprocess the input relations in $\inst{D}$ 
to decide for each tuple its ultimate relation part.
More precisely, each tuple $(a,b)$ from $R$ with $|\sigma_{A=a} R| \geq |\db|^{\frac{1}{2}}$ is inserted to the heavy part of $R$ while all other tuples from $R$ are inserted to the light part. The distribution of the tuples from $S$ and $T$ is analogous.
Since we do not perform any rebalancing,
by Proposition~\ref{prop:single_step_time},
the worst-case (and not only amortized) time of each insert is $\bigO{|\db|^{\frac{1}{2}}}$.

\section{Counting Triangles with Self-Joins}\label{sec:self_joins}

\ivme also applies to the triangle count query with self-joins over one relation. This query models the common case of counting triangles in a graph given by its edge relation $R$.
\begin{align*}
Q^{\Qsj}() = \sum\limits_{a,b,c} R(a,b) \ztimes R(b,c) \ztimes R(c,a)
\end{align*}
A trivial way to maintain $Q^{\Qsj}$ is to create three distinct copies of $R$ and treat each update to $R$ as a sequence of three updates to its copies. The asymptotic time and space complexities remain unchanged (see Theorem~\ref{theo:main_result}). 
Alternatively, we can maintain only $R$ and apply the delta rules from Section~\ref{sec:preliminaries} on the self-joins in $Q^{\Qsj}$. We describe this latter approach next.

We partition the relation $R$ with threshold $N^{\eps}$ for a fixed $\eps \in [0,1]$
into $R_h$ and $R_l$ . The query $Q^{\Qsj}$ is the sum of all the partial counts: $Q^{\Qsj}() = \textstyle\sum_{u,v,w \in \{h,l\}} 
\textstyle\sum_{a,b,c} R_u(a,b) \ztimes R_v(b,c) \ztimes R_w(c,a)$.
Now, consider an update $\delta R_{r} = \{(\deltaA, \deltaB) \mapsto \p\}$ to $R$, where 
$r$ is fixed to either $h$ or $l$. For simplicity, we omit the function arguments 
 in what follows. 
The delta query $\delta Q^{\Qsj}$ is:
\begin{align*}
\delta Q^{\Qsj}
= \delta\big(\textstyle\sum\limits_{u,v,w \in \{h,l\}}\; \textstyle\sum\limits_{a,b,c} R_u \ztimes R_v \ztimes R_w \big)
= \textstyle\sum\limits_{u,v,w \in \{h,l\}}\; \textstyle\sum\limits_{a,b,c}\delta\big(R_u \ztimes R_v \ztimes R_w \big)
\end{align*}

We simplify the analysis using algebraic transformations. From the delta rules, we have:
\begin{align*}
& \delta(R_u \ztimes R_v \ztimes R_w) \\
&\qquad = 
\left(\delta R_{u} \ztimes R_v \ztimes R_w \right) \,+\, \left( R_u \ztimes \delta(R_v \ztimes R_w) \right) \,+\, \left(\delta R_u \ztimes \delta(R_v \ztimes R_w) \right) \\
& \qquad =
\left(\delta R_{u} \ztimes R_v \ztimes R_w \right) \,+\,
\left( R_u \ztimes \delta R_{v} \ztimes R_w \right) \,+\,
\left( R_u \ztimes R_v \ztimes \delta R_{w} \right) \,+\,
\left( R_u \ztimes \delta R_{v} \ztimes \delta R_{w} \right) \,+\, \\
& \qquad\quad\,
\left( \delta R_{u} \ztimes \delta R_{v} \ztimes R_w \right) \,+\,
\left( \delta R_{u} \ztimes R_v \ztimes \delta R_{w} \right) \,+\,
\left( \delta R_{u} \ztimes \delta R_{v} \ztimes \delta R_{w} \right)
\end{align*}
 
Since the query $Q^{\Qsj}$ is symmetric, we have the following equalities:
\begin{align*}
&\textstyle\sum\limits_{u,v,w \in \{h,l\}} \; \textstyle\sum\limits_{a,b,c}
 \delta R_{u} \ztimes R_v \ztimes R_w = 
\textstyle\sum\limits_{u,v,w \in \{h,l\}} \; \textstyle\sum\limits_{a,b,c}
R_u  \ztimes \delta R_{v} \ztimes   R_w =
\textstyle\sum\limits_{u,v,w \in \{h,l\}} \; \textstyle\sum\limits_{a,b,c}
R_u \ztimes   R_v \ztimes \delta R_{w} \;\text{ and}\\[3pt]
&\textstyle\sum\limits_{u,v,w \in \{h,l\}} \; \textstyle\sum\limits_{a,b,c}
\delta R_{u} \ztimes \delta R_{v} \ztimes   R_w 
=
\textstyle\sum\limits_{u,v,w \in \{h,l\}} \; \textstyle\sum\limits_{a,b,c}
\delta R_{u}  \ztimes R_v \ztimes \delta R_{w}
= 
\textstyle\sum\limits_{u,v,w \in \{h,l\}} \; \textstyle\sum\limits_{a,b,c}
R_u \ztimes \delta R_{v}  \ztimes \delta R_{w}
\end{align*}

Let ${\bf 3}$ be a relation mapping the empty tuple to multiplicity $3$, that is,  ${\bf 3} = \{\, \tuple{} \mapsto 3 \,\}$. Then,
\begin{align*}
\hspace{-1.2em}
\delta Q^{\Qsj} = &\hspace{0.1cm} {\bf 3} \ztimes 
\textstyle\sum\limits_{u,v,w \in \{h,l\}}\;
\textstyle\sum\limits_{a,b,c} \delta R_u \ztimes  R_v \ztimes R_w + \\
& \hspace{0.1cm}{\bf 3} \ztimes 
\textstyle\sum\limits_{u,v,w \in \{h,l\}} \; \textstyle\sum\limits_{a,b,c} \delta R_{u} \ztimes \delta R_{v} \ztimes R_w  + 
\textstyle\sum\limits_{u,v,w \in \{h,l\}}\;
\textstyle\sum\limits_{a,b,c}\delta R_{u} \ztimes \delta R_{v} \ztimes \delta R_{w}
\end{align*}

We can further simplify $\delta Q$ because each $\delta R_u$ with $u \neq r$ evaluates to the empty relation.
\begin{align*}
\hspace{-1.2em}
\delta Q^{\Qsj} = &\hspace{0.1cm} {\bf 3} \ztimes 
\textstyle\sum\limits_{v,w \in \{h,l\}}\;
\textstyle\sum\limits_{c} \delta R_r \ztimes  R_v \ztimes R_w +  \hspace{0.1cm}{\bf 3} \ztimes 
\textstyle\sum\limits_{w \in \{h,l\}} \; \delta R_{r} \ztimes \delta R_{r} \ztimes R_w  + 
\delta R_{r} \ztimes \delta R_{r} \ztimes \delta R_{r}
\end{align*}

The last two summands having multiple occurrences of $\delta R_r$ yield non-empty relations only when $\deltaA = \deltaB$, that is, when $\delta R_r(\deltaA, \deltaA)$ (or $\delta R_r(\deltaB, \deltaB)$) returns a nonzero multiplicity. Thus, we can write the delta query as:
\begin{align*}
\hspace{-1.2em}
\delta Q^{\Qsj}() = &\hspace{0.1cm} {\bf 3} \ztimes  
\delta R_r(\deltaA,\deltaB) \ztimes \textstyle\sum\limits_{v,w \in \{h,l\}}\;
 \textstyle\sum\limits_{c}    R_v(\deltaB,c) \ztimes R_w(c,\deltaA) + \\
& \hspace{0.1cm}{\bf 3} \ztimes 
\delta R_{r}(\deltaA,\deltaA) \ztimes \delta R_{r}(\deltaA,\deltaA) \ztimes \textstyle\sum\limits_{w \in \{h,l\}}
 R_w(\deltaA,\deltaA)  + 
\delta R_{r}(\deltaA,\deltaA) \ztimes \delta R_{r}(\deltaA,\deltaA) \ztimes \delta R_{r}(\deltaA,\deltaA)
\end{align*}

\begin{figure}[t]
\begin{center}
\renewcommand{\arraystretch}{1.2}
\setcounter{magicrownumbers}{0}
\begin{tabular}{ll@{\hskip 0.25in}l@{\hspace{0.6cm}}c}
\toprule
\multicolumn{2}{l}{\textsc{ApplyUpdateSelfJoin}($\delta R_r,\state$)}& & Time \\
\cmidrule{1-2} \cmidrule{4-4}

\rownumber & \LET $\delta R_r = \{(\deltaA,\deltaB) \mapsto \p\}$ \\
\rownumber & \LET $\state = (\eps, N, \{R_l,R_h\}, \{Q,V\})$ \\

\rownumber & $Q() = Q() + {\bf 3} \ztimes 
\delta{R_{r}(\deltaA,\deltaB)} \ztimes \textstyle\sum_c  R_{h}(\deltaB,c) \ztimes R_{h}(c,\deltaA)$ & &
$\bigO{|\inst{D}|^{1-\eps}}$ \\

\rownumber & $Q() = Q() + {\bf 3} \ztimes \delta{R_{r}(\deltaA,\deltaB)} \ztimes  
V(\deltaB,\deltaA)$ & &
$\bigO{1}$ \\

\rownumber & $Q() = Q() + {\bf 3} \ztimes \delta R_{r}(\deltaA,\deltaB) \ztimes\textstyle\sum_c  R_{l}(\deltaB,c) \ztimes R_{h}(c,\deltaA)$ & &
$\bigO{|\inst{D}|^{\min{\{\eps, 1-\eps\}}}}$ \\ 

\rownumber & $Q() = Q() + {\bf 3} \ztimes \delta R_{r}(\deltaA,\deltaB) \ztimes\textstyle\sum_c  R_{l}(\deltaB,c) \ztimes R_{l}(c,\deltaA)$ & &
$\bigO{\inst{D}^{\eps}}$ \\

\rownumber & $Q() = Q() + {\bf 3} \ztimes \delta R_{r}(\deltaA,\deltaA) \ztimes \delta R_{r}(\deltaA,\deltaA) \ztimes R_{h}(\deltaA,\deltaA)$ & &   
$\bigO{1}$ \\

\rownumber & $Q() = Q() + {\bf 3} \ztimes \delta R_{r}(\deltaA,\deltaA) \ztimes \delta R_{r}(\deltaA,\deltaA) \ztimes R_{l}(\deltaA,\deltaA)$ & &   
$\bigO{1}$ \\

\rownumber & $Q() = Q() + \delta R_{r}(\deltaA,\deltaA) \ztimes 
\delta R_{r}(\deltaA,\deltaA) \ztimes \delta R_{r}(\deltaA,\deltaA)$ & &
$\bigO{1}$ \\


\rownumber & \IF ($r$ is $h$) &\\  
\rownumber & \TAB $V(\deltaA,c) = V(\deltaA,c) +  \delta R_{h}(\deltaA,\deltaB) \ztimes  
R_{l}(\deltaB,c)$  & &
$\bigO{\inst{D}^{\eps}}$ \\

\rownumber & \ELSE &\\
\rownumber & \TAB $V(a,\deltaB) = V(a,\deltaB) +  R_{h}(a,\deltaB) \ztimes \delta R_{l}(\deltaA,\deltaB)$  & &
$\bigO{\inst{D}^{1-\eps}}$ \\

\rownumber & $R_{r}(\deltaA,\deltaB) = R_{r}(\deltaA,\deltaB) + \delta{R}_{r}(\deltaA,\deltaB)$ & &
$\bigO{1}$ \\

\rownumber & \RETURN $\state$ \\
\midrule
\multicolumn{2}{r}{Total update time:} & &
$\bigO{|\inst{\inst{D}}|^{\max\{\eps, 1-\eps\}}}$ \\
\bottomrule
\end{tabular}
\end{center}\vspace{-1em}
\caption{
 (left) Counting triangles over a self-join under a single-tuple update. 
\textsc{ApplyUpdateSelfJoin} takes as input an update $\delta R_r$ to the heavy or light part of $R$, hence $r \in \{h,l\}$, and the current \ivme state $\state$ of a database $\inst{D}$ partitioned using $\eps\in[0,1]$. 
It returns a new state that arises from applying $\delta R_r$ to $\state$.
Notation: ${\bf 3} = \{\, \tuple{} \mapsto 3 \,\}$. 
Lines 7-9 compute non-empty deltas only when $\deltaA = \deltaB$.  
(right) The time complexity of computing and applying deltas. 
}
\label{fig:applyUpdate-selfjoin}
\end{figure}

Figure~\ref{fig:applyUpdate-selfjoin} shows the procedure for maintaining the result of the triangle count query with self-joins under an update 
$\delta R_{r} = \{(\deltaA,\deltaB) \mapsto \p\}$.
Lines $3-6$ compute the first summand of $\delta Q$ similarly as in the self-join free case except that now each delta count is multiplied by $3$.
As before, we materialize an auxiliary view 
$V(a,c) = \textstyle\sum_b R_h(a,b) \ztimes R_l(b,c)$ to achieve sublinear delta computation for $Q^{\Qsj}$. 

Whereas for maintaining the result of the triangle count query without self-joins \ivme materializes three distinct views (see Figure~\ref{fig:view_definitions}), here these three views are equivalent; thus, only $V$ is needed. 
In the remaining two summands of $\delta Q$, $\delta R_{r}$ binds all three variables 
$A$, $B$, $C$ such that the count results are nonzero if $\deltaA = \deltaB$. 
Lines $7-9$ compute these two summands in $\bigO{1}$ time. 
Overall, the complexity of maintaining the result of the triangle count query with self-joins under a single-tuple update is the same as that without self-joins: $\bigO{N^{\max\{\eps, 1-\eps\}}}$ time and $\bigO{N^{1+\min\{\eps, 1-\eps\}}}$ space. The amortized analysis is also similar.

\section{Worst-Case Optimality of \texorpdfstring{\ivme}{IVM-eps}}\label{sec:lowerbound}
In this section, we prove Proposition 
\ref{prop:lower_bound_triangle_count} that states
a lower bound on the incremental maintenance
of the triangle count, conditioned on the \OMv conjecture (Conjecture \ref{conj:omv}).
The proof is inspired by recent work~\cite{BerkholzKS17}.
It relies on the Online Vector-Matrix-Vector Multiplication (\OuMv) conjecture, which is implied by the \OMv conjecture. We introduce the \OuMv
problem and state the corresponding conjecture.

\begin{definition}[Online Vector-Matrix-Vector Multiplication (\OuMv)~\cite{Henzinger:OMv:2015}]\label{def:OuMv}
We are given an $n \times n$ Boolean matrix $\vecnormal{M}$ and  receive $n$ pairs of Boolean column-vectors of size $n$, denoted by $(\vecnormal{u}_1,\vecnormal{v}_1), \ldots, (\vecnormal{u}_n,\vecnormal{v}_n)$; after seeing each pair $(\vecnormal{u}_i,\vecnormal{v}_i)$, we output the product 
$\vecnormal{u}_i^{\text{T}} \vecnormal{M} \vecnormal{v}_i$ before we see the next pair.
\end{definition}

\begin{conjecture}[\OuMv Conjecture, Theorem 2.7 in~\cite{Henzinger:OMv:2015}]\label{conj:OuMv}
For any $\gamma > 0$, there is no algorithm that solves \OuMv in time 
$\bigO{n^{3-\gamma}}$.
\end{conjecture}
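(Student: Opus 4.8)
The plan is to derive Conjecture~\ref{conj:OuMv} from the \OMv conjecture (Conjecture~\ref{conj:omv}): I would assume, for contradiction, an algorithm $\mathcal{A}$ that solves \OuMv in time $\bigO{n^{3-\gamma}}$ for some $\gamma>0$, and use $\mathcal{A}$ as a subroutine to solve \OMv in truly subcubic time, contradicting Conjecture~\ref{conj:omv}. The whole argument is thus a reduction from \OMv to \OuMv. The one clean, reusable building block is the identity $(\vecnormal{M}\vecnormal{v})_i = \vecnormal{e}_i^{\text{T}}\vecnormal{M}\vecnormal{v}$, where $\vecnormal{e}_i$ is the $i$-th unit vector: every coordinate of an \OMv answer is itself a single vector-matrix-vector product, so $\mathcal{A}$, initialized on the \OMv matrix $\vecnormal{M}$, can report any chosen coordinate of any product $\vecnormal{M}\vecnormal{v}_t$. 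An \OMv solver therefore only needs to schedule \OuMv rounds of the form $(\vecnormal{e}_i,\vecnormal{v}_t)$ (and, more generally, rounds against rectangular restrictions of $\vecnormal{M}$) and read off the bits of each output vector, so correctness of the derived \OMv algorithm is immediate.

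The difficulty, and the heart of the proof, is the accounting. An \OMv instance must output $\Theta(n^2)$ bits in total, whereas each \OuMv round returns a single bit; the naive schedule issues one round per output coordinate, i.e.\ $n^2$ rounds, and since the only guarantee we may invoke on $\mathcal{A}$ is a per-round cost of order $n^{2-\gamma}$, this spends $\Theta(n^{4-\gamma})$ and fails to beat the cubic barrier. The technical core, following~\cite{Henzinger:OMv:2015}, is therefore to batch the queries rather than extract the output one bit at a time: partition the row index set (and correspondingly the query vectors) according to a balancing parameter, group the \OuMv rounds into epochs run on smaller, rectangular instances, and invoke the $\bigO{n^{3-\gamma}}$ bound once per epoch so that the number of rounds times their cost aggregates to $\bigO{n^{3-\gamma'}}$ for some $\gamma'>0$ depending on $\gamma$. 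A secondary but necessary ingredient is to convert the total-time guarantee of $\mathcal{A}$ into a legitimate per-epoch bound, re-initializing $\mathcal{A}$ at epoch boundaries so that the round count always matches the instance size for which the assumed running time holds.

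I expect this batching to be the main obstacle: one must turn a one-bit-per-round online oracle into dense matrix-vector outputs while keeping the aggregate work subcubic, which forces the right choice of block size and a careful split between preprocessing cost and per-round cost across epochs. Once the schedule and its balancing parameter are fixed, choosing $\gamma'$ as the appropriate function of $\gamma$ yields an $\bigO{n^{3-\gamma'}}$ algorithm for \OMv, contradicting Conjecture~\ref{conj:omv} and thereby establishing Conjecture~\ref{conj:OuMv}.
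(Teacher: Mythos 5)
First, a framing point: the statement you were asked about is a \emph{conjecture}, not a result this paper proves. The paper imports it verbatim from~\cite{Henzinger:OMv:2015}, where the implication ``\OMv conjecture $\Rightarrow$ \OuMv conjecture'' is Theorem~2.7; the paper under review only states it and uses it as a black box. So your proposal is really an attempt to re-prove that implication, and it must be judged against the argument in~\cite{Henzinger:OMv:2015}. Your high-level framing is the right one (assume an $\bigO{n^{3-\gamma}}$ algorithm for \OuMv, build a subcubic \OMv algorithm, contradict Conjecture~\ref{conj:omv}), and your diagnosis that the naive schedule of $n^2$ unit-vector rounds costs $\Theta(n^{4-\gamma})$ is correct.

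However, the mechanism you propose to fix the accounting cannot work, for any $\gamma \le 1$ --- which is the entire range of interest. You commit to rounds of the form $(\vecnormal{e}_i, \vecnormal{v}_t)$, i.e., unit row-selectors, batched into epochs on smaller (possibly rectangular) blocks with re-initialization. With $u$ restricted to unit vectors, one \OuMv round resolves at most one coordinate of one product, and only relative to the columns covered by the queried block; in particular, certifying that a coordinate of $\vecnormal{M}\vecnormal{v}_t$ equals $0$ requires covering all of the support of $\vecnormal{v}_t$, i.e., $n/m$ rounds per zero coordinate when blocks have $m$ columns. Since there can be $\Theta(n^2)$ zero coordinates, any such schedule uses $\Omega(n^3/m)$ rounds, and under the assumed bound a run of $m$ rounds on an $m\times m$ instance costs $m^{3-\gamma}$, i.e., amortized $m^{2-\gamma}$ per round; the total is $\Omega(n^3 m^{1-\gamma})$, never subcubic for $\gamma\le 1$, while full-width blocks ($m=n$) reproduce your original $n^{4-\gamma}$ failure. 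No choice of block size or epoch schedule escapes this; the barrier is intrinsic to per-coordinate extraction. The missing idea --- and the heart of Theorem~2.7 in~\cite{Henzinger:OMv:2015} --- is to let $u$ range over \emph{indicator vectors of row sets}: a single round with $u=\mathbf{1}_S$ certifies every coordinate in $S$ at once when the answer is $0$, and when the answer is $1$, binary search inside $S$ locates a $1$-coordinate in $O(\log n)$ further rounds; charging each $1$-coordinate only the first time it is discovered (and excluding it from all later searches) bounds the total number of rounds by $O(n^3/m^2 + n^2\log m)$ over $m\times m$ blocks. With $m=n^{1/2}$, your restart accounting then gives an \OMv algorithm running in time $\bigO{n^{3-\gamma/2}\log n}$, the desired contradiction. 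Without this group-testing ingredient your reduction is not merely incomplete --- its cost provably stays at or above $n^3$.
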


\subsection{Reduction from the \OuMv Problem}
The following proof of Proposition 
\ref{prop:lower_bound_triangle_count}
reduces the \OuMv problem to  
the problem of incrementally maintaining the triangle count.
This reduction implies that 
if there is an algorithm that incrementally maintains the triangle count
under single-tuple updates 
with arbitrary preprocessing time, $\bigO{|\inst{D}|^{\frac{1}{2}-\gamma}}$
amortized update time, and  $\bigO{|\inst{D}|^{1-\gamma}}$ enumeration delay
for some $\gamma > 0$ and database $\inst{D}$, then  
the \OuMv problem can be solved in subcubic time. 
This contradicts the \OuMv conjecture and, consequently, the \OMv conjecture.

\begin{figure}[t]
\begin{center}
\renewcommand{\arraystretch}{1.0}
\begin{tabular}{l}
\toprule 
{\textsc{SolveOuMv}}($\vecnormal{M}, \vecnormal{u}_1,\vecnormal{v}_1, \ldots, \vecnormal{u}_n,\vecnormal{v}_n$)\\
\midrule
initial database state $\state = (\eps,1,  \dbeps_0, \inst{V}_0)$ \\
$\FOREACH\STAB (i,j) \in \vecnormal{M} \STAB\DO$ \\
$\TAB\STAB\delta S = \{\, (i,j) \mapsto \vecnormal{M}(i,j) \,\}$ \\
$\TAB\STAB\textsc{OnUpdate}(\delta S, \state)$\\
$\FOREACH\STAB r = 1, \ldots ,n \STAB\DO$ \\
$\TAB\STAB\FOREACH\STAB i = 1, \ldots ,n \STAB\DO$ \\
$\TAB\STAB\TAB\STAB\delta R = \{\, (a,i) \mapsto (\vecnormal{u}_r(i) - R(a,i)) \,\}$ \\
$\TAB\STAB\TAB\STAB\textsc{OnUpdate}(\delta R, \state)$ \\
$\TAB\STAB\TAB\STAB\delta T = \{\, (i,a) \mapsto (\vecnormal{v}_r(i) - T(i,a)) \,\}$ \\
$\TAB\STAB\TAB\STAB\textsc{OnUpdate}(\delta T, \state)$ \\
$\TAB\STAB \textbf{output}\STAB(Q(\,) \neq 0)$ \\
\bottomrule
\end{tabular}

\end{center}
\caption{The procedure \textsc{SolveOuMv} solves the \OuMv problem using an incremental maintenance algorithm that counts triangles under single-tuple updates.
The state $\state = (\eps,1, \dbeps_0, \inst{V}_0)$ is the initial \ivme state of a database  
with empty relations $R$, $S$ and $T$.
The procedure \textsc{OnUpdate} maintains the triangle count query under single-tuple updates
 as given in Figure~\ref{fig:updateProgram}.}
\label{fig:lower_bound_reduction}
\end{figure}

\begin{proof}[Proof of Proposition~\ref{prop:lower_bound_triangle_count}](inspired by~\cite{BerkholzKS17}) 
For the sake of contradiction, assume that there is an incremental maintenance algorithm $\mathcal{A}$
with arbitrary preprocessing time, amortized update time $\bigO{|\inst{D}|^{\frac{1}{2}-\gamma}}$, 
and answer time $\bigO{|\inst{D}|^{1-\gamma}}$  that counts triangles under single-tuple updates.
We show that this algorithm can be used to design an algorithm $\mathcal{B}$ that solves the $\OuMv$ problem in subcubic time, which contradicts the \OuMv conjecture.

Figure~\ref{fig:lower_bound_reduction} gives the pseudocode of $\mathcal{B}$ processing an $\OuMv$ input $(\vecnormal{M}, (\vecnormal{u}_1,\vecnormal{v}_1), \ldots ,(\vecnormal{u}_n,\vecnormal{v}_n))$.
We denote the entry of $\vecnormal{M}$ in row $i$ and column $j$ by $\vecnormal{M}(i,j)$ and the $i$-th component of $\vecnormal{v}$ by $\vecnormal{v}(i)$.
The algorithm first constructs the initial state $\state$ from an empty database $\db = \{R,S,T\}$.
Then, it executes at most $n^2$ updates to the relation $S$ such that 
$S = \{\, (i,j) \mapsto \vecnormal{M}(i,j) \,\mid\, i,j \in \{1,\ldots, n\} \,\}$. 
In each round $r \in \{1, \ldots , n\}$, the algorithm executes at most $2n$ updates to the relations $R$ and $T$ 
such that $R = \{\, (a, i) \mapsto \vecnormal{u}_r(i) \,\mid\, i \in \{1,\ldots, n\} \,\}$
and $T = \{\, (i, a) \mapsto \vecnormal{v}_r(i) \,\mid\, i \in \{1,\ldots, n\} \,\}$, where $a$
is some constant.
By construction, $\vecnormal{u}_r^{\text{T}}\vecnormal{M}\vecnormal{v}_r = 1$ if and only if there exist 
$i,j\in\{1,\ldots, n\}$ such that $\vecnormal{u}_r(i) = 1$, $\vecnormal{M}(i,j)=1$, and $\vecnormal{v}_r(j) = 1$, which is equivalent to 
$R(a,i) \ztimes S(i,j) \ztimes T(j,a) = 1$. Thus, the algorithm outputs 
$1$ at the end of round $r$ if and only if the triangle count is nonzero.

Constructing the initial state from an empty database takes constant time. 
The construction of relation $S$ from $\vecnormal{M}$ requires at most $n^2$ updates. 
Given that the amortized time for each update is $\bigO{|\db|^{\frac{1}{2}-\gamma}}$
and the database size $|\db|$ is $\bigO{n^2}$, the overall time for this phase is $\bigO{n^2 \ztimes n^{1-2\gamma}} = \bigO{n^{3-2\gamma}}$. 
In each round, the algorithm performs at most $2n$ updates and outputs the result in $\bigO{|\inst{D}|^{1-\gamma}}$ time. The overall execution time is $\bigO{n^{2-2\gamma}}$ per round and $\bigO{n^{3-2 \gamma}}$ for $n$ rounds. Thus, algorithm $\mathcal{B}$ needs $\bigO{n^{3-2 \gamma}}$ time to solve the \OuMv problem, which contradicts the \OuMv conjecture and, consequently, the \OMv conjecture. 
\end{proof}

Theorem~\ref{theo:main_result} and Proposition~\ref{prop:lower_bound_triangle_count} imply that for $\eps = \frac{1}{2}$, \ivme incrementally maintains the count of triangles under single-tuple updates to a database $\db$ with optimal amortized update time $\bigO{|\db|^{\frac{1}{2}}}$ and constant answer time, unless the \OMv conjecture fails (Corollary~\ref{cor:ivme_optimal}).

Note that the proof of Proposition~\ref{prop:lower_bound_triangle_count}
can easily be turned into a lower bound proof for maintaining the result 
of any triangle query with free variables.  In the reduction described 
in the proof of Proposition~\ref{prop:lower_bound_triangle_count}, we output 1 at the end of a round if and only if the triangle count is nonzero.
To turn the proof into lower bound proof for the maintenance
of any triangle query with free variables, we can use the same encoding
for the matrix
$\vecnormal{M}$ and the vectors and ask at the end of each round 
whether the query result 
contains at least one tuple. This reduction gives us a conditional lower bound on the update time 
and enumeration delay for triangle queries with free variables: 

\begin{corollary}[Proof of Proposition \ref{prop:lower_bound_triangle_count}]\label{prop:lower_bound_triangle_enum}
For any $\gamma > 0$ and database $\db$,
there is no algorithm that incrementally maintains the result of any triangle query with free variables under single-tuple updates to $\db$ with arbitrary preprocessing time, $\bigO{|\db|^{\frac{1}{2} - \gamma}}$ amortized update time, and $\bigO{|\db|^{1 - \gamma}}$ enumeration delay, 
unless the \OMv conjecture fails.
\end{corollary}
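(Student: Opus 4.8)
The plan is to reuse verbatim the \OuMv-to-triangle reduction from the proof of Proposition~\ref{prop:lower_bound_triangle_count}, changing only the test performed at the end of each round. As in Figure~\ref{fig:lower_bound_reduction}, I would start from the empty database, encode the Boolean matrix $\vecnormal{M}$ into $S$ with at most $n^2$ single-tuple updates, and in each round $r$ encode $\vecnormal{u}_r$ into $R$ and $\vecnormal{v}_r$ into $T$ with at most $2n$ single-tuple updates, where the $A$-value and (dually) the second $T$-coordinate are pinned to a fixed constant $a$. Suppose for contradiction that some algorithm $\calA$ maintains a fixed triangle query with free variables under single-tuple updates with arbitrary preprocessing, $\bigO{|\db|^{\frac{1}{2}-\gamma}}$ amortized update time, and $\bigO{|\db|^{1-\gamma}}$ enumeration delay; I would build from $\calA$ an algorithm $\calB$ that solves \OuMv in subcubic time.

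The key step is to replace the scalar count test by an emptiness test on the query result. By the same argument as in Proposition~\ref{prop:lower_bound_triangle_count}, $\vecnormal{u}_r^{\text{T}}\vecnormal{M}\vecnormal{v}_r = 1$ holds iff there exist $i,j$ with $R(a,i)\ztimes S(i,j)\ztimes T(j,a)=1$, i.e.\ iff the underlying triangle join is nonempty. Since the result of any triangle query with free variables is the projection of this join onto the free variables, it is nonempty exactly when the join is nonempty. Hence deciding $\vecnormal{u}_r^{\text{T}}\vecnormal{M}\vecnormal{v}_r$ reduces to deciding whether the maintained query result contains at least one tuple.

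To perform this emptiness test within the allotted budget, I would invoke the enumeration guaranteed by $\calA$ and stop as soon as either the first tuple is emitted (result nonempty, so output $1$) or the enumeration signals termination (result empty, so output $0$); by the enumeration-delay bound this costs $\bigO{|\db|^{1-\gamma}}$ time per round. The remaining timing analysis is identical to that of Proposition~\ref{prop:lower_bound_triangle_count}: since $|\db|=\bigO{n^2}$ throughout, building $S$ costs $\bigO{n^2\ztimes n^{1-2\gamma}}=\bigO{n^{3-2\gamma}}$, and each of the $n$ rounds costs $\bigO{n\ztimes n^{1-2\gamma}}$ for its $2n$ updates plus $\bigO{n^{2-2\gamma}}$ for the emptiness test, giving $\bigO{n^{3-2\gamma}}$ in total. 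This subcubic bound contradicts the \OuMv conjecture (Conjecture~\ref{conj:OuMv}) and hence the \OMv conjecture (Conjecture~\ref{conj:omv}).

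I expect the only real subtlety, rather than a genuine obstacle, to be justifying that a single invocation of the enumeration suffices to decide emptiness in $\bigO{|\db|^{1-\gamma}}$ time independently of the number of free variables: the enumeration-delay semantics let $\calB$ halt after the first reported tuple (or after the termination signal), so neither the arity of the output nor the total number of result tuples affects the cost of the test. Everything else transfers unchanged from the proof of Proposition~\ref{prop:lower_bound_triangle_count}.
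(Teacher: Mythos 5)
Your proposal is correct and follows essentially the same route as the paper: the paper likewise reuses the \OuMv encoding from Proposition~\ref{prop:lower_bound_triangle_count} unchanged and simply replaces the count test at the end of each round by asking whether the query result contains at least one tuple, which the $\bigO{|\db|^{1-\gamma}}$ enumeration delay lets you decide by waiting for either the first reported tuple or the end-of-enumeration signal. Your added observation that the result of any triangle query with free variables is nonempty exactly when the underlying join is nonempty is the (implicit) justification the paper relies on as well.
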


\section{Improving Space by Refining Relation Partitions}
\label{sec:part_all_var}
The \ivme algorithm 
 for maintaining the result of the triangle count, as presented in Sections 
\ref{sec:strategy} and \ref{sec:rebalancing},  
partitions each relation on its first variable
and achieves a space-time tradeoff 
such that 
the space-time product is quadratic.
This tradeoff is visualized in 
Figure \ref{fig:time_space_plot}.
In this section, we design a variant 
of this algorithm that achieves better 
space complexity  
by using more refined 
relation partitions.
The partitioning 
takes the degrees of 
data values of both variables for each relation into account. 
While the preprocessing time, amortized
update time, and answer time stay as before, the 
space complexity becomes 
$\bigO{|\inst{D}|^{\max\{\min\{1 + \eps,2-2\eps\}, 1\}}}$.
Figure \ref{fig:time_space_plot_part_both_var}
shows the obtained amortized update time and space complexity 
parameterized by 
$\eps \in [0,1]$.  
For $\eps = 0.5$, the algorithm achieves 
optimal amortized update time $\bigO{\inst{|D|^{\frac{1}{2}}}}$ while using 
linear space, hence the space-time product becomes 
$\bigO{\inst{|D|^{\frac{3}{2}}}}$. 
\begin{figure}[t]
\begin{center}
\begin{tikzpicture}
\begin{axis}[
grid=major,
    grid style={dotted},
xmin=0, xmax=1, ymin=0, ymax=1.5,
every axis plot post/.append style={mark=none},
  xtick ={0,0.333, 0.5, 1},
  ytick ={0, 0.5, 1,1.333},
  xticklabels={\footnotesize{$0$},\footnotesize{$\frac{1}{3}$},\footnotesize{$\frac{1}{2}$},\footnotesize{$1$}},
   yticklabels={$$,\footnotesize{$\frac{1}{2}$},\footnotesize{$1$},
   \footnotesize{$\frac{4}{3}$}},  
y=2cm,
    x=3.2cm,
axis lines=middle,
    axis line style={->},
    x label style={at={(axis description cs:1.18,-0.06)}},
    xlabel={\footnotesize{$\eps$}},
    y label style={at={(axis description cs:-0.9,1.1)},align=center},
    ylabel=\footnotesize{Asymptotic} \\  \footnotesize{complexity} \\ $|\inst{D}|^y$,
  axis x line*=bottom,
  axis y line*=left,
  legend style={at={(1.6,1)},draw=none},
  legend entries={\footnotesize{Space},\footnotesize{Time}}
 ]

  \addplot[color=blue,mark=none,domain=0:1,thick,dashed] coordinates{ 
  (0, 1) 
  (1/3, 4/3)
  (1/2, 1)
  (1,1) 
}; 
  \addplot[color=red,mark=none,domain=0:1,thick] coordinates{ 
  (0, 1) 
  (1/2, 1/2)
  (1,1) 
}; 
\end{axis}
\node at(0,3.2) {\footnotesize{$y$}};

\end{tikzpicture}
\caption{
The space and amortized update time parameterized by $\eps$
of \ivme with 
refined partitions.}
\label{fig:time_space_plot_part_both_var}
\end{center}
\end{figure} 
We call this \ivme version, \ivme with refined
partitions.  
Theorem \ref{theo:main_result_part_both_var}
summarizes the main results we obtain under
refined partitions. 

 \begin{theorem}\label{theo:main_result_part_both_var}
Given a database $\db$ and $\eps \in [0,1]$, 
\ivme with refined partitions 
 incrementally maintains the triangle count under single-tuple updates to $\db$ with 
$\bigO{|\inst{D}|^{\frac{3}{2}}}$ preprocessing time, $\bigO{|\inst{D}|^{\max\{\eps,1-\eps\}}}$ amortized update time, 
constant answer time, and 
$\bigO{|\inst{D}|^{\max\{\min\{1 + \eps,2-2\eps\}, 1\}}}$ space.
\end{theorem}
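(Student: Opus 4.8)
The plan is to mirror the proof of Theorem~\ref{theo:main_result} line by line, replacing the one-variable partition of each relation by a \emph{refined} partition on \emph{both} of its variables, and then re-deriving the four complexity bounds. First I would define the refined \ivme state: partition $R$ on both $A$ and $B$, $S$ on both $B$ and $C$, and $T$ on both $C$ and $A$, each with threshold $\theta = N^{\eps}$, so that every relation splits into at most four parts indexed by the heaviness of each of its two values. I would then decompose Query~\eqref{query:triangle} into the corresponding refined skew-aware views and, as in Section~\ref{sec:strategy}, materialize $Q$ together with a small set of auxiliary views. The answer time stays constant because $Q$ is materialized, and the preprocessing time stays $\bigO{|\db|^{\frac{3}{2}}}$ because computing the triangle count with a worst-case optimal join still dominates, exactly as in Proposition~\ref{prop:preprocessing_step}; the auxiliary views are built within this budget.

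The crux is the space bound. Consider the auxiliary view used to speed up updates to $R$, which must return $\sum_c S(\deltaB,c) \ztimes T(c,\deltaA)$ for an update $\{(\deltaA,\deltaB) \mapsto \p\}$. This sum is expensive to compute directly only when $\deltaB$ has many paired $C$-values in $S$ (i.e.\ $\deltaB$ is heavy on $B$) \emph{and} $\deltaA$ has many paired $C$-values in $T$ (i.e.\ $\deltaA$ is heavy on $A$); if $\deltaB$ is light on $B$ or $\deltaA$ is light on $A$, then the shorter of the two $C$-lists has fewer than $\frac{3}{2}N^{\eps}$ entries, and the sum is computed on the fly in $\bigO{N^{\eps}}$ time. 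Hence it suffices to materialize only the \emph{heavy--heavy restriction} of each auxiliary view, namely $V_{ST}(b,a) = \sum_c S_h(b,c) \ztimes T_h(c,a)$ where $S_h$ ranges over the $B$-heavy part and $T_h$ over the $A$-heavy part. Since there are at most $2N^{1-\eps}$ heavy $B$-values in $S$ and at most $2N^{1-\eps}$ heavy $A$-values in $T$, this restricted view has $\bigO{N^{2-2\eps}}$ entries. Taking the minimum of this bound and the coarse bound $\bigO{N^{1+\min\{\eps,1-\eps\}}}$ of Proposition~\ref{prop:space_complexity} (obtained by instead materializing the full view), and adding the linear space for the relation parts, gives total space $\bigO{N^{\max\{\min\{1+\eps,2-2\eps\},1\}}}$; the invariant $|\db| = \Theta(N)$ then yields the claimed bound.

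For the update time I would rerun the case analysis of Proposition~\ref{prop:single_step_time} over the refined skew-aware views. Each delta summand falls into one of two regimes: either both bound values are heavy on the relevant dimension, in which case the summand is obtained by a constant-time lookup in the materialized heavy--heavy view, or at least one is light, in which case we iterate over fewer than $\frac{3}{2}N^{\eps}$ matching $C$-values (or over at most $2N^{1-\eps}$ heavy $C$-values), exactly as in Lines~3--6 of Figure~\ref{fig:applyUpdate}. Maintaining each refined auxiliary view under a single-tuple update likewise touches at most $\bigO{N^{\max\{\eps,1-\eps\}}}$ entries. Thus the per-update work before rebalancing remains $\bigO{N^{\max\{\eps,1-\eps\}}}$.

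Finally, the amortized bound follows from the machinery of Section~\ref{sec:main_proof}, which, as observed in the conclusion, is agnostic to the precise view set and update mechanism and needs only two prerequisites: rebalancing proceeds by moving tuples between relation parts, and the number of tuples moved per rebalancing is asymptotically at most the number of updates since the previous rebalancing. I would therefore extend minor rebalancing to react to a violated heavy/light condition on \emph{either} partitioning variable---relocating the $\bigO{N^{\eps}}$ tuples sharing the offending value and repairing the affected views---and extend major rebalancing to recompute both-variable partitions and the refined views in $\bigO{N^{\max\{\min\{1+\eps,2-2\eps\},1\}}}$ time. I expect the hard part to be precisely this step: with two partitioning dimensions per relation, a single tuple move can also change its value's degree along the other dimension, so I must argue that the two prerequisites still hold---in particular that relocations do not cascade and that the gap between the $\frac{1}{2}\theta$ and $\frac{3}{2}\theta$ thresholds still guarantees $\Omega(N^{\eps})$ updates between consecutive minor rebalancings on any fixed value. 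Once this is established, Inequalities~\eqref{ineq:major} and~\eqref{ineq:minor} carry over verbatim and deliver the amortized update time $\bigO{|\db|^{\max\{\eps,1-\eps\}}}$.
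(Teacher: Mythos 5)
Your overall plan matches the paper's, but there are two concrete gaps. The first is in the space argument. Your primary construction materializes $V_{ST}(b,a)=\sum_c S_h(b,c)\ztimes T_h(c,a)$ with $S_h$ the $B$-heavy part of $S$ and $T_h$ the $A$-heavy part of $T$, unrestricted on $C$. This view only satisfies the bound $\bigO{N^{2-2\eps}}$: a single $C$-value can be paired with up to $2N^{1-\eps}$ heavy $B$-values in $S_h$ and up to $2N^{1-\eps}$ heavy $A$-values in $T_h$, so for $\eps<\frac{1}{3}$ the view can have $\Theta(N^{2-2\eps})\gg N^{1+\eps}=N^{\min\{1+\eps,\,2-2\eps\}}$ entries (at $\eps=0$ it degenerates to the full quadratic join). ``Taking the minimum'' with the bound of Proposition~\ref{prop:space_complexity} is not a property of a single view; it means materializing a \emph{different} view, with a different delta case analysis, for $\eps<\frac{1}{3}$ --- a switch you mention but do not carry out. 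The paper avoids the switch entirely: its view is $V_{ST}(b,a)=\sum_c S_{hl}(b,c)\ztimes T_{lh}(c,a)$, i.e.\ heavy on the exported variable \emph{and light on the join variable $C$}. The combinations where $S$ or $T$ is $C$-heavy need no view at all, because one can iterate over the at most $2N^{1-\eps}$ distinct $C$-values of a $C$-heavy part; and the $C$-light restriction caps the per-key fanout at $\frac{3}{2}N^{\eps}$, so the one view simultaneously satisfies both $\bigO{N^{1+\min\{\eps,1-\eps\}}}$ and $\bigO{N^{2-2\eps}}$, giving $\bigO{N^{\min\{1+\eps,\,2-2\eps\}}}$ uniformly in $\eps$.

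The second gap is the one you flag yourself: you do not establish that minor rebalancing with two partitioning dimensions per relation still satisfies the amortization prerequisites. The resolution is that the heavy and light conditions of the refined partition (Definition~\ref{def:loose_relation_partition_sets}) are stated in terms of degrees in the \emph{whole} relation $K$, not in its parts, so moving tuples between parts changes no degree and relocations cannot cascade; a single-tuple update can violate the conditions only for the two values it touches (one per dimension), so \textsc{OnUpdateRP} triggers at most two minor rebalancings, each moving $\bigO{N^{\eps}}$ tuples and each amortized over the $\Omega(N^{\eps})$ updates to that value on that dimension, exactly as in Inequality~\eqref{ineq:minor_decomposed}. With these two repairs your argument goes through.
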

The rest of Appendix \ref{sec:part_all_var} is dedicated to the 
introduction of \ivme with refined partitions 
and the proof of Theorem  
\ref{theo:main_result_part_both_var}. We emphasize on the
main differences to the original \ivme algorithm that 
partitions each relation on a single variable. 

\subsection{Refined Partitions}
We generalize the notion of a
relation partition from 
Definition 
 \ref{def:loose_relation_partition}. 
Let $K$ be a relation with schema $\inst{X}$
and let $S$ be a set of variables in $\inst{X}$.
A partition of $K$ on the variables 
in $S$  
depends on the degrees of the 
values of each variable in $S$.
The partition contains a relation $K_{\rho}$
for each function 
$\rho$ mapping each variable
in $S$ to $h$ or $l$.
For instance, 
in case $S = \{X,Y\}$ and $\rho$ is the function that maps 
$X$ to $l$ and $Y$ to $h$, $K_{\rho}$ intuitively 
consists of all tuples $\inst{x}$ from $K$ such that 
 the degree of $\inst{x}[X]$  is low
and that of $\inst{x}[Y]$ is high in $K$.

We now formalize the above intuition.  
Given a variable set $S$, 
we denote the set of all mappings
$\rho: S \rightarrow \{h,l\}$ 
from $S$ to $\{h,l\}$
by $\{h,l\}^{S}$.

\begin{definition}[Relation Partition on Multiple Variables]\label{def:loose_relation_partition_sets}
Given a relation $K$ over schema $\inst{X}$,
 a set $S$ of variables from $\inst{X}$, 
 and a threshold $\theta$, a  partition of $K$ on 
the variables in $S$ with threshold $\theta$ 
is a set $\{K_\rho\}_{\rho \in \{h,l\}^{S}}$ satisfying the following conditions:
\\[6pt]
\begin{tabular}{@{\hskip 0.5in}rl}
{(union)} & $K(\inst{x}) = \sum\limits_{\rho \in \{h,l\}^{S}}
K_{\rho}(\inst{x})$ for $\inst{x} \in \Dom(\inst{X})$ \\[4pt]
{(domain partition)} & $(\pi_{S}R_\rho) \cap (\pi_{S}R_{\rho'}) = \emptyset$ 
for each pair $\rho,\rho' \in \{h,l\}^{S}$ \\[8pt]
 & for all $\rho \in \{h,l\}^{S}$
and  $Y \in S$: \\[4pt]
(heavy part) & \TAB
if $\rho(Y)=h$, then $|\sigma_{Y=y} K| \geq 
\frac{1}{2}\,\theta$ for all $y \in \pi_{Y}K_\rho$, \\[4pt]
(light part) & \TAB
if $\rho(Y)=l$, then $|\sigma_{Y=y} K| < \frac{3}{2}\,\theta$
for all $y \in \pi_{Y}K_\rho$
\end{tabular}\\[6pt]
The set $\{K_\rho\}_{\rho \in \{h,l\}^{S}}$ is called a 
 strict partition of $K$ on the variables in $S$ with 
threshold $\theta$ if it satisfies the union and 
domain partition conditions and the following strict versions
of the heavy part and light part conditions: 
\\[6pt]
\begin{tabular}{@{\hskip 0.5in}rl}

& for all $\rho \in \{h,l\}^{S}$ and $Y \in S$: \\[4pt]
(strict heavy part)  & 
\TAB if $\rho(Y)=h$, then $|\sigma_{Y=y} K| \geq 
\theta$ for all $y \in \pi_{Y}K_\rho$ and \\[4pt]
(strict light part) & 
\TAB if $\rho(Y)=l$, then $|\sigma_{Y=y} K| < \theta$
for all $y \in \pi_{Y}K_\rho$
\end{tabular}\\[6pt]
A relation $K_\rho$ is called heavy on a variable $Y \in S$ if 
$\rho(Y) = h$, otherwise it is light on $Y$.
\end{definition}

\subsection{Adaptive Maintenance under Refined Partitions}
We first give an intuitive description of
\ivme with refined 
partitions. 
The algorithm partitions each input 
relation to the triangle count query on 
both variables in its schema.   
Given an input relation 
$K \in \{R,S,T\}$ with schema $(X,Y)$ 
and a function $\rho: \{X,Y\} \rightarrow \{h,l\}$,
we denote $K_{\rho}$ simply by $K_{\rho(X)\rho(Y)}$. 
The algorithm  decomposes the triangle count query 
into skew-aware views of the form 
$$Q_{r_Ar_Bs_Bs_Ct_Ct_A}() = 
\sum\limits_{a,b,c} R_{r_Ar_B}(a,b) 
\ztimes S_{s_Bs_C}(b,c) \ztimes
T_{t_Ct_A}(c,a),$$
with 
$r_A,r_B,s_B,s_C,t_C,t_A \in \{h,l\}$.
The triangle count query can be rewritten 
as a sum of such views:
$$Q() = \sum\limits_{r_A,r_B,s_B,s_C,t_C,t_A \in \{h,l\}} \;\,
\sum\limits_{a,b,c} R_{r_Ar_B}(a,b) \ztimes S_{s_Bs_C}(b,c) \ztimes
T_{t_Ct_A}(c,a).$$ 
The main difference to the original \ivme  
algorithm is that each materialized auxiliary view is
composed of relations that are heavy on the 
variables exported by the view.
This implies better bounds on the view 
sizes.
For instance, consider 
a single-tuple update   
$\delta R_{r_Ar_B}$
for some $r_A,r_B \in \{h,l\}$ 
to relation $R$. 
\ivme with refined 
partitions 
uses an auxiliary view
only for computing the delta skew-aware 
view $\delta Q_{r_Ar_Bhllh}$.  
The view is of the form 
$V_{ST}(b,a) = 
\sum_{c} S_{hl}(b,c) \ztimes T_{lh}(c,a)$.
The relations $S_{hl}$ and $T_{lh}$
are heavy on the exported variables 
$A$ and $B$.
\nop{
Given a single-tuple update 
$\delta R_{r_A,r_B} = \{(\deltaA,\deltaB) \mapsto \p\}$
for some $r_A,r_B \in \{h,l\}$,
the \ivme algorithm computes the deltas 
of almost all  skew-aware views 
by using the same strategies 
as the \ivme algorithm  that partitions only on the first variables. 
The only cases that differ  
are the computations 
of the delta skew-aware views of form 
$\delta Q_{r_Ar_Bhs_Clt_A}() = 
\delta R_{r_Ar_B}(\deltaA,\deltaB) \ztimes 
\textstyle\sum_{c}  S_{h s_C}(\deltaB,c) \ztimes
T_{lt_A}(c,\deltaA)$. 
As seen in Section \ref{sec:single-tuple-update},
the \ivme algorithm that considers partitioning on
only the first variables  uses  in these case  
an auxiliary view consisting of the product of $S_h$
and $T_l$. 
The \ivme algorithm that considers partitioning 
on both variables of each relation 
further distinguishes   on $s_C$ and $t_A$.
In case $s_C = h$ or $t_A = l$, sublinear computation 
time can be obtained without using
any auxiliary view. 
}
\begin{figure}[t]
  \begin{center}
    \renewcommand{\arraystretch}{1.2}  
    \begin{tabular}{@{\hskip 0.05in}l@{\hskip 0.4in}l@{\hskip 0.05in}}
      \toprule
      Materialized View Definition & Space Complexity \\    
      \midrule
      $Q() = \sum\limits_{r_A, r_B,s_B,s_C,t_C,t_A \in \{h,l\}} \,\,
      \sum\limits_{a,b,c} R_{r_Ar_B}(a,b) \ztimes S_{s_Bs_C}(b,c) \ztimes 
      T_{t_Ct_A}(c,a)$ & 
      $\bigO{1}$ \\
      $V_{RS}(a,c) = \sum_{b} R_{hl}(a,b) \ztimes S_{lh}(b,c)$ & 
      $\bigO{|\inst{D}|^{\min\{\,2-2\eps,\,1+\eps\}}}$ \\
      $V_{ST}(b,a) = \sum_{c} S_{hl}(b,c) \ztimes T_{lh}(c,a)$ & 
      $\bigO{|\inst{D}|^{\min\{\,2-2\eps,\,1+\eps\}}}$ \\
      $V_{TR}(c,b) = \sum_{a} T_{hl}(c,a) \ztimes R_{lh}(a,b)$ & 
       $\bigO{|\inst{D}|^{\min\{\,2-2\eps,\,1+\eps\}}}$ \\
      \bottomrule    
    \end{tabular}
  \end{center}
  \caption{The definitions and space complexities of the 
  views in $\inst{V} = \{ Q, V_{RS}, V_{ST}, V_{TR} \}$ 
  maintained by \ivme with refined partitions as part of 
  an \ivme   state of a database $\inst{D}$ for a fixed $\eps \in [0,1]$.} 
  \label{fig:view_definitions_part_all}
\end{figure}
\nop{
Only in case  
$s_C = l$ and $t_A = h$, an auxiliary  
view of the form 
$V_{ST}(b,a) = \sum_{c} S_{hl}(b,c) \ztimes T_{lh}(c,a)$
is needed.
Using this view, the computation 
of the delta skew-aware view  
$\delta Q_{r_Ar_Bhllh}$
reduces to a constant-time lookup
of the multiplicity of $(\deltaB,\deltaA)$ 
in $V_{ST}$.
} 
Figure~\ref{fig:view_definitions_part_all}
gives all three auxiliary views materialized by
\ivme with refined partitions. 
The views $V_{TR}$ and $V_{RS}$ are used 
to facilitate delta computation under updates to 
the parts of 
relations $S$ and $T$.

\paragraph{\ivme States with Refined Partitions.} 
We slightly extend Definition \ref{def:ivme_state}. 
 Given a fixed $\eps \in [0,1]$,
a state of a database $\inst{D}$ maintained by 
\ivme with refined partitions 
is of the form 
$(\eps,N, \inst{P}, \inst{V})$,
where: 
\begin{itemize}
\item $N$ is a natural number such that the
invariant $\floor{\frac{1}{4}N} \leq |\db| < N$ holds.

\item $\dbeps= 
\{R_{r_Ar_B}\}_{r_A,r_B \in \{h,l\}} \cup
\{S_{s_Bs_C}\}_{s_B,s_C \in \{h,l\}} \cup
\{T_{t_Ct_A}\}_{t_C,t_A \in \{h,l\}}$ 
consists of the refined partitions
of $R$, $S$, and T
with threshold $N^{\eps}$.

\item $\inst{V}$ consists of the materialized views in
Figure \ref{fig:view_definitions_part_all}. 
\end{itemize}
In the initial state of $\inst{D}$ it holds $N = 2\ztimes|\inst{D}| + 1$ and the 
partitions in $\inst{P}$ are strict.

The bounds
on the frequencies of 
data values
in heavy and light relation parts
mentioned after Definition \ref{def:ivme_state}
carry over to refined partitions:  
The number of distinct values 
of a variable in a part of relation $R$ that is 
heavy on that variable is at most 
$\frac{N}{\frac{1}{2}N^{\eps}} = 
2N^{1-\eps}$, i.e.,
$|\pi_{A} R_{hr_B}| \leq   
2N^{1-\eps} \geq 
|\pi_{B} R_{r_Ah}|$ 
for any $r_A,r_B \in \{h,l\}$;
the number of values 
paired with a single data value
of a variable in a part of $R$ 
that is light on that variable
is less than $\frac{3}{2}N^{\eps}$, i.e.,    
$|\sigma_{A = a} R_{lr_B}| < \frac{3}{2}N^{\eps}
>|\sigma_{B = b} R_{r_Al}|$
for any $r_A,r_B \in \{h,l\}$, $A$-value $a$ and $B$-value $b$. 
Analogous bounds hold for the parts of $S$ and 
$T$.

\subsection{Preprocessing Time Under Refined Partitions}
\label{sec:pre_time_refined}
In the preprocessing stage, \ivme with refined 
partitions computes the initial
state 
$\state = (\eps, N, \inst{P}, \inst{V})$ 
of a given database $\inst{D} = \{R,S,T\}$ such that 
$N =2 |\inst{D}|+1$ and
$\inst{P}$ consists of the strict partitions of 
$R$, $S$, and $T$ on $\{A,B\}$, $\{B,C\}$, and 
$\{C,A\}$, respectively, with threshold $N^{\eps}$.
The preprocessing time is the same as for the original 
\ivme algorithm. 

\begin{proposition}\label{prop:preprocessing_step_refined}
Given a database $\db$ and $\eps\in[0,1]$, \ivme 
with refined partitions constructs the initial \ivme state
of $\db$ in $\bigO{|\db|^{\frac{3}{2}}}$ time.
\end{proposition}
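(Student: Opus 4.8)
The plan is to mirror the proof of Proposition~\ref{prop:preprocessing_step}, since the only structural differences under refined partitions are that each relation is now partitioned on both of its variables rather than one, and that the three auxiliary views are built from parts heavy on the variables they export. I would establish the bound term by term. Reading $|\db|$ and computing $N = 2\ztimes|\db|+1$ cost constant time, and computing the triangle count $Q$ on $\db$ with a worst-case optimal join algorithm~\cite{Ngo:SIGREC:2013} costs $\bigO{|\db|^{\frac{3}{2}}}$ exactly as before. It then remains to bound the cost of the refined strict partitioning and of materializing $V_{RS}$, $V_{ST}$, and $V_{TR}$, and to verify that neither exceeds $\bigO{|\db|^{\frac{3}{2}}}$.

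First I would argue that strictly partitioning each relation on both of its variables still takes $\bigO{|\db|}$ time. For a relation $K$ over schema $(X,Y)$, one linear scan (using the index structures of the computational model) computes the degrees $|\sigma_{X=x}K|$ and $|\sigma_{Y=y}K|$ of all values; each tuple $(x,y)$ is then placed in constant time into the unique part $K_\rho$ determined by comparing these two degrees to the threshold $\theta = N^{\eps}$, which yields the strict partition of Definition~\ref{def:loose_relation_partition_sets}. Summing over $R$, $S$, and $T$ gives $\bigO{|\db|}$.

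The crux is bounding the time to materialize the refined views; I would analyze $V_{ST}(b,a) = \sum_c S_{hl}(b,c)\ztimes T_{lh}(c,a)$ and invoke symmetry for $V_{RS}$ and $V_{TR}$. The heavy-part and light-part conditions of Definition~\ref{def:loose_relation_partition_sets}, together with $|\db| < N$, give the two bounds I need: $S_{hl}$ is light on $C$, so fewer than $\frac{3}{2}N^{\eps}$ distinct $B$-values are paired with any $c$ in $S_{hl}$; and $T_{lh}$ is heavy on $A$, so it has at most $2N^{1-\eps}$ distinct $A$-values, whence at most $2N^{1-\eps}$ $A$-values are paired with any $c$ in $T_{lh}$. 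Reusing the two iteration strategies of Proposition~\ref{prop:preprocessing_step}, enumerating the pairs of $T_{lh}$ and, for each, the matching $B$-values in $S_{hl}$ costs less than $|T_{lh}|\ztimes\frac{3}{2}N^{\eps} < \frac{3}{2}N^{1+\eps}$, while enumerating the pairs of $S_{hl}$ and, for each, the matching $A$-values in $T_{lh}$ costs less than $|S_{hl}|\ztimes 2N^{1-\eps} < 2N^{2-\eps}$. Choosing the cheaper strategy bounds the computation by $\bigO{N^{1+\min\{\eps,1-\eps\}}}$, hence by $\bigO{|\db|^{1+\min\{\eps,1-\eps\}}}$ since $|\db| = \Theta(N)$.

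Finally I would combine the pieces: partitioning is linear, each view costs $\bigO{|\db|^{1+\min\{\eps,1-\eps\}}}$, and $\max_{\eps\in[0,1]}\{1+\min\{\eps,1-\eps\}\} = \frac{3}{2}$, so every view is materialized within $\bigO{|\db|^{\frac{3}{2}}}$, matching the triangle-count cost; the overall preprocessing is therefore $\bigO{|\db|^{\frac{3}{2}}}$. The only subtlety I anticipate is that the view-computation time here coincides with that of the single-variable case rather than with the tighter \emph{storage} bound $\bigO{|\db|^{\min\{2-2\eps,1+\eps\}}}$ of Figure~\ref{fig:view_definitions_part_all}: the smaller output of a refined view does not lower its computation cost, but since $1+\min\{\eps,1-\eps\}\le\frac{3}{2}$ this looser computation bound is still dominated by the triangle count, leaving the preprocessing bound unchanged.
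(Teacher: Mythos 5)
Your proposal is correct and follows essentially the same route as the paper's proof: constant time for $N$, linear time for the strict refined partitioning, $\bigO{|\db|^{\frac{3}{2}}}$ for the triangle count via a worst-case optimal join, and $\bigO{|\db|^{1+\min\{\eps,1-\eps\}}}$ per auxiliary view, with the last exponent bounded by $\frac{3}{2}$. The only (immaterial) difference is in how the view bound is obtained: you take the minimum over two iteration strategies, each bounded one-sidedly as in Proposition~\ref{prop:preprocessing_step}, whereas the paper notes that a single strategy already suffices because the inner relation part (e.g.\ $S_{lh}$ in $V_{RS}$) is simultaneously light on the join variable and heavy on the exported variable, so its inner loop is directly $\bigO{N^{\min\{\eps,1-\eps\}}}$ — both yield the same $\bigO{N^{1+\min\{\eps,1-\eps\}}}$.
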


\begin{proof}
We analyze the time to construct the 
initial \ivme state $\state = (\eps, N, \inst{P}, \inst{V})$
of $\inst{D}$.
Setting the value of $N$ is a constant-time operation. 
Strictly partitioning the relations can be 
accomplished in linear time.  
The triangle count  
$Q$ can be computed in time 
$|\inst{D}|^{\frac{3}{2}}$
by using a
worst-case optimal join algorithm~\cite{Ngo:SIGREC:2013}.
We now analyze the time to compute the view
$V_{RS}(a,c) = \sum_{b} R_{hl}(a,b) \ztimes S_{lh}(b,c)$.
The analysis for the other two auxiliary views in $\inst{V}$ is analogous. 
The view $V_{RS}$ can be computed in two ways. One option is to iterate 
over the tuples $(a,b)$ in $R_{hl}$ and, for each such tuple,
to go over all $C$-values paired with $b$ in $S_{lh}$.
For each $(a,b)$ in $R_{hl}$ and 
$(b,c)$ in $S_{lh}$, 
the multiplicity of $(a,c)$  in $V_{RS}$ is increased 
by $R_{hl}(a,b) \ztimes S_{lh}(b,c)$. 
Since $S_{lh}$ is light on $B$ and heavy on $C$, 
it contains at most $N^{\min\{\eps,1-\eps\}}$ $C$-values for each
$B$-value. 
Hence, 
the computation 
time is $\bigO{|R_{hl}| \ztimes N^{\min\{\eps,1-\eps\}}}$, which,
due to $N = \Theta(|\inst{D}|)$, is  
$\bigO{|\inst{D}|^{1 + \min\{\eps,1-\eps\}}}$.
Alternatively, one can   iterate 
over the tuples $(b,c)$ in $S_{lh}$ and, for each such tuple,
go over all $A$-values paired with $b$ in $R_{hl}$.
Since $R_{hl}$ is light on $B$ and heavy on $A$, 
the computation time is the same.

Thus, the overall computation time 
for the initial state of $\inst{D}$ is dominated by 
the computation time for $Q$, which is 
$\bigO{|\inst{D}|^{\frac{3}{2}}}$.
\end{proof}

\subsection{Space Complexity  Under Refined Partitions}
While the original \ivme algorithm 
needs $\bigO{|\inst{D}|^{1 + \min\{\eps,1-\eps\}}}$,
the space complexity of \ivme with refined partitions 
is improved to 
$\bigO{|\inst{D}|^{\max\{\min\{1 + \eps,2-2\eps\}, 1\}}}$

\begin{proposition}\label{prop:space_complexity_ref_part}
Given a database $\db$ and $\eps\in[0,1]$, the state 
of $\db$ maintained by \ivme with refined partitions 
to support the maintenance of the result of Query~\eqref{query:triangle} takes 
$\bigO{|\inst{D}|^{\max\{\min\{1 + \eps,2-2\eps\}, 1\}}}$ space.
\end{proposition}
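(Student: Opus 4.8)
The plan is to mirror the structure of the proof of Proposition~\ref{prop:space_complexity}, but to exploit the stronger structural guarantee of the refined partition to derive a second, complementary bound on the sizes of the auxiliary views. First I would observe that in a state $\state = (\eps, N, \dbeps, \inst{V})$ the quantities $N$ and $\eps$ occupy constant space, the refined partition $\dbeps$ stores exactly the tuples of $\db$ and hence occupies $\bigO{|\db|}$ space, and the materialized count $Q$ occupies constant space. The argument therefore reduces to bounding $|V_{RS}|$, $|V_{ST}|$, and $|V_{TR}|$; by the symmetry of the three view definitions it suffices to bound $|V_{RS}|$, where $V_{RS}(a,c) = \sum_b R_{hl}(a,b) \ztimes S_{lh}(b,c)$.

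The crux is to combine two incomparable upper bounds on the number of distinct $(a,c)$ pairs with nonzero multiplicity in $V_{RS}$. For the first bound I would iterate the join through the shared variable $B$: since $S_{lh}$ is light on $B$, each $B$-value is paired with fewer than $\frac{3}{2}N^{\eps}$ values of $C$, so the number of produced $(a,c)$ pairs is at most $|R_{hl}| \ztimes \frac{3}{2}N^{\eps} < \frac{3}{2}N^{1+\eps}$ (symmetrically one may iterate over $S_{lh}$, using that $R_{hl}$ is light on $B$, for the same bound). For the second bound -- the new ingredient absent from Proposition~\ref{prop:space_complexity} -- I would count the pairs by the product of the numbers of distinct values of the \emph{exported} variables: because $R_{hl}$ is heavy on $A$ it has at most $2N^{1-\eps}$ distinct $A$-values, and because $S_{lh}$ is heavy on $C$ it has at most $2N^{1-\eps}$ distinct $C$-values, so $|V_{RS}| \leq 4N^{2-2\eps}$. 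Taking the smaller of the two yields $|V_{RS}| = \bigO{N^{\min\{1+\eps,2-2\eps\}}}$, and the analysis for $V_{ST}$ and $V_{TR}$ is identical by symmetry.

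Finally I would assemble the pieces using the size invariant $|\db| = \Theta(N)$: the three auxiliary views together take $\bigO{|\db|^{\min\{1+\eps,2-2\eps\}}}$ space, while the partition $\dbeps$ contributes the linear term $\bigO{|\db|}$. The total state size is the larger of the two, namely $\bigO{|\db|^{\max\{\min\{1+\eps,2-2\eps\},1\}}}$, as claimed; concretely, the view term dominates for $\eps \in [0,\tfrac{1}{2}]$ and the linear partition term dominates for $\eps \in [\tfrac{1}{2},1]$, matching Figure~\ref{fig:time_space_plot_part_both_var}.

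I expect the main obstacle to be isolating and justifying the product bound cleanly: one must argue that every $A$-value appearing in $V_{RS}$ already occurs as an $A$-value of $R_{hl}$ and every $C$-value already occurs in $S_{lh}$, and then invoke the heavy-part degree condition of Definition~\ref{def:loose_relation_partition_sets} to cap each count by $2N^{1-\eps}$. This product bound is exactly what the single-variable partition of Section~\ref{sec:strategy} cannot supply -- there the light part $S_l$ is unconstrained on $C$ -- and it is the sole source of the improved space complexity.
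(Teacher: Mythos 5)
Your proposal is correct and follows essentially the same route as the paper's proof: bound each auxiliary view by the minimum of a join-through-$B$ bound and a product-of-exported-heavy-variables bound, then take the maximum with the linear cost of the partitions. The only cosmetic difference is that your first bound is $\bigO{N^{1+\eps}}$ where the paper also uses the heavy-on-$C$ property to get $\bigO{N^{1+\min\{\eps,1-\eps\}}}$, but after taking the minimum with the $\bigO{N^{2-2\eps}}$ product bound both yield the same exponent $\min\{1+\eps,\,2-2\eps\}$.
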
  

\begin{proof}
We analyze the space complexity of a state 
$\state = (\eps,N, \inst{P}, \inst{V})$ of a database $\inst{D}$ 
maintained by \ivme with refined  
partitions. 
The space occupied by $\eps$ and $N$ is constant, and the size of the 
relation partitions in $\inst{P}$ is linear.  

Figure  \ref{fig:view_definitions_part_all} gives the 
sizes of the views in $\inst{V}$. The size of $Q$ is constant,
since it consists of an empty tuple mapped to the triangle count. 
We investigate the space complexity of the auxiliary view
$V_{RS}(a,c) = \sum_{b} R_{hl}(a,b) \ztimes S_{lh}(b,c)$.
The analysis for the other auxiliary views in $\inst{V}$  is  analogous.
The view $V_{RS}$ admits two size bounds. The first bound 
is the product of the size of $R_{hl}$ 
and the maximum number of tuples in $S_{lh}$ 
that match with a single tuple in $R_{hl}$, that is, 
the size of $V_{RS}$ is bounded by 
$|R_{hl}| \cdot \max_{b \in \Dom(B)}\{|\sigma_{B = b} S_{lh}|\}$.
Since $S$ is light on $B$ and heavy on $C$, the latter
expression simplifies to 
$N \cdot \min\{ \frac{3}{2}N^\eps, 2N^{1-\eps}\}
=
\bigO{N^{1 + \min\{\eps,1-\eps\}}}$. Note that 
swapping the roles of $R_{hl}$ and $S_{lh}$ in the above 
calculation does not give a better size upper bound. 
The second bound 
is obtained by taking the product of the number of all 
$A$-values in $R_{hl}$ and the number of all 
$C$-values in $S_{lh}$. 
Since $R_{hl}$ is heavy on $A$ and
$S_{lh}$ is heavy on $C$,  
we obtain the size bound $2N^{1-\eps} \cdot 2N^{1-\eps} = 
\bigO{N^{2-2\eps}}$. 
Hence, the size of  
$V_{RS}$ is 
$\bigO{N^{\min\{1 + \min\{\eps, 1-\eps\},2-2\eps\}}}
= \bigO{N^{\min\{1 + \eps,2-2\eps\}}}$ 
which, due to $N = \Theta(|\inst{D}|)$,
is $\bigO{|\inst{D}|^{\min\{1 + \eps,2-2\eps\}}}$.

Thus, taking the linear space of the relation partitions 
in $\dbeps$  into account,
 the overall space complexity is 
$\bigO{|\inst{D}|^{\max\{\min\{1 + \eps,2-2\eps\}, 1\}}}$.
\end{proof}
 
\subsection{Processing a Single-Tuple Update Under Refined Partitions}
\label{sec:proc_single_tuple_refined}
Figure \ref{fig:applyUpdate_part_all}
gives the procedure
\textsc{ApplyUpdateRP} of \ivme with refined partitions 
that takes as input an update
$\delta R_{r_Ar_B}= \{(\deltaA,\deltaB) \mapsto \p\}$
with $r_A,r_B \in \{h,l\}$ 
and a state   
$\state = (\eps, N, \dbeps,\inst{V})$ 
of a database $\inst{D}$ and
returns a new state that 
results from applying $\delta R_{r_Ar_B}$ to $\state$.
The procedure is a straightforward extension of  the procedure  
\textsc{ApplyUpdate} described in  
  Figure~\ref{fig:applyUpdate}.
\begin{figure}[t]
\begin{center}
\renewcommand{\arraystretch}{1.2}
\setcounter{magicrownumbers}{0}
\begin{tabular}{ll@{\hskip 0.25in}l@{\hspace{0.6cm}}c}
\toprule
\multicolumn{2}{l}{\textsc{ApplyUpdateRP}($\delta R_{r_Ar_B},\state$)}& & Time \\
\cmidrule{1-2} \cmidrule{4-4}
\rownumber & \LET $\delta R_{r_Ar_B} = \{(\deltaA,\deltaB) \mapsto \p\}$ \\

\rownumber & \LET $\state = (\eps, N, 
\{R_{r_Ar_B}\}_{r_A,r_B \in \{h,l\}}\cup \{S_{s_Bs_C}\}_{s_B,s_C \in \{h,l\}} \cup  
\{T_{t_Ct_A}\}_{t_C,t_A\in \{h,l\}},$ \\

&\TAB\TAB\TAB\TAB $\{Q,V_{RS}, V_{ST}, V_{TR}\})$ \\

\rownumber & $\delta Q_{r_Ar_Bhs_Cht_A}() =  \delta{R_{r_Ar_B}(\deltaA,\deltaB)} \ztimes \textstyle\sum_c 
S_{hs_C}(\deltaB,c) \ztimes T_{ht_A}(c,\deltaA)$ 
for $s_C,t_A \in \{h,l\}$
& & $\bigO{|\inst{D}|^{1-\eps}}$ \\

\rownumber & $\delta Q_{r_Ar_Bhhlt_A}() =  \delta{R_{r_Ar_B}(\deltaA,\deltaB)} \ztimes \textstyle\sum_c 
S_{hh}(\deltaB,c) \ztimes T_{lt_A}(c,\deltaA)$ 
for $t_A \in \{h,l\}$
& & $\bigO{|\inst{D}|^{1-\eps}}$ \\

\rownumber & $\delta Q_{r_Ar_Bhs_Cll}() =  \delta{R_{r_Ar_B}(\deltaA,\deltaB)} \ztimes \textstyle\sum_c 
S_{hs_C}(\deltaB,c) \ztimes T_{ll}(c,\deltaA)$ 
for $s_C \in \{h,l\}$ & & $\bigO{|\inst{D}|^{\eps}}$ \\

\rownumber & $\delta Q_{r_Ar_Bhllh}() =  \delta{R_{r_Ar_B}(\deltaA,\deltaB)} \ztimes 
V_{ST}(\deltaB, \deltaA)$ 
& & $\bigO{1}$ \\

\rownumber & $\delta Q_{r_Ar_Bls_Cht_A}() =  \delta{R_{r_Ar_B}(\deltaA,\deltaB)} \ztimes \textstyle\sum_c 
S_{ls_C}(\deltaB, c) \ztimes T_{ht_A}(c,\deltaA)$ 
for $s_C,t_A \in \{h,l\}$ & &
$\bigO{|\inst{D}|^{\min{\{\eps, 1-\eps\}}}}$ \\ 

\rownumber & $\delta Q_{r_Ar_Bls_Clt_A}() \mathrel{+{=}} \delta{R_{r_Ar_B}(\deltaA,\deltaB)} \ztimes \textstyle\sum_c 
S_{ls_C}(\deltaB,c) \ztimes T_{lt_A}(c,\deltaA)$ 
for $s_C,t_A \in \{h,l\}$
& &
$\bigO{|\inst{D}|^{\eps}}$ \\

\rownumber & 
$Q() = Q() + \sum_{r_A,r_B,s_B,s_C,t_C,t_A \in \{h,l\}} \delta Q_{r_Ar_Bs_Bs_Ct_Ct_A}()$ & &
$\bigO{1}$ \\

\rownumber & \IF ($r_A$ is  $h$ and $r_B$ is  $l$) &\\  
\rownumber & \TAB $V_{RS}(\deltaA,c) = 
V_{RS}(\deltaA,c) + \delta R_{hl}(\deltaA,\deltaB) \ztimes 
S_{lh}(\deltaB,c)$ & &
$\bigO{|\inst{D}|^{\min\{\eps,1-\eps\}}}$ \\

\rownumber & \ELSE\IF ($r_A = l$ and $r_B = h$) &\\  
\rownumber & \TAB $V_{TR}(c,\deltaB) = 
V_{TR}(c,\deltaB) + \delta R_{lh}(\deltaA,\deltaB) \ztimes 
 T_{hl}(c,\deltaA)$ & &
$\bigO{|\inst{D}|^{\min\{\eps,1-\eps\}}}$ \\

\rownumber & $R_{r_Ar_B}(\deltaA,\deltaB) = 
R_{r_Ar_B}(\deltaA,\deltaB) + \delta{R}_{r_Ar_B}(\deltaA,\deltaB)$ & &
$\bigO{1}$ \\

\rownumber & \RETURN 
$\state$ & &\\
\midrule
\multicolumn{2}{r}{Total update time:} & &
$\bigO{|\inst{D}|^{\max\{\eps, 1-\eps\}}}$ \\
\bottomrule
\end{tabular}
\end{center}\vspace{-1em}
\caption{
 \textsc{ApplyUpdateRP} adapts
  the procedure \textsc{ApplyUpdate} from 
  Figure~\ref{fig:applyUpdate} to refined partitions.
 It takes as input an update
  $\delta R_{r_Ar_B} = \{(\deltaA,\deltaB) \mapsto \p\}$
 with $r_Ar_B \in \{h,l\}$
 and the current \ivme state 
 of a database $\inst{D}$ 
 partitioned using $\eps\in[0,1]$ and returns  
a new state that results from applying $\delta R_{r_Ar_B}$ to $\state$. 
Lines 3-8 compute the deltas of the affected skew-aware views, and Line 9 
maintains $Q$.
Lines 11 and 13 maintain the auxiliary views $V_{RS}$ and $V_{TR}$, respectively. 
Line 14 maintains the affected part $R_{r_Ar_B}$.
The maintenance procedures for updates to $S$ and $T$ are similar.
}
\label{fig:applyUpdate_part_all}
\end{figure}
\begin{proposition}\label{prop:single_step_time_ref_part}
Given a state $\state$ constructed from a database $\db$ for $\eps\in[0,1]$, 
\ivme with refined partitions maintains $\state$ under a single-tuple update to any input relation in $\bigO{|\db|^{\max\{\eps,1-\eps\}}}$ time.
\end{proposition}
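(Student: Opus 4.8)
The plan is to mirror the line-by-line analysis of Proposition~\ref{prop:single_step_time}, now applied to the procedure \textsc{ApplyUpdateRP} of Figure~\ref{fig:applyUpdate_part_all}, and to show that every line runs in $\bigO{N^{\max\{\eps,1-\eps\}}}$ time; since the size invariant gives $|\db| = \Theta(N)$, this yields the claimed bound. By the symmetry of the triangle query and of the three auxiliary views, it suffices to treat an update $\delta R_{r_Ar_B} = \{(\deltaA,\deltaB) \mapsto \p\}$ to a part of $R$; updates to $S$ and $T$ are handled identically after relabeling. Throughout I would invoke the two degree bounds for refined partitions recorded after the refined-state definition: a part heavy on a variable contains at most $2N^{1-\eps}$ distinct values of that variable, while a part light on a variable pairs fewer than $\frac{3}{2}N^{\eps}$ tuples with any single value of that variable.

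First I would bound the delta computations in Lines 3--8, each of which (except Line 6) sums over the $C$-values matching the fixed $\deltaA$ and $\deltaB$. For each such line I identify a relation part that caps the number of summands: when the $T$-part is heavy on $C$ (Lines 3 and 7) or the $S$-part is heavy on $C$ (Line 4), the number of distinct $C$-values is at most $2N^{1-\eps}$, giving $\bigO{N^{1-\eps}}$; when the $T$-part is light on $C$ (Lines 5 and 8) or the $S$-part is light on $B$ (Lines 7 and 8), the number of $C$-values paired with the fixed value is below $\frac{3}{2}N^{\eps}$, giving $\bigO{N^{\eps}}$; and when both bounds apply simultaneously (Line 7), the algorithm picks the smaller, yielding $\bigO{N^{\min\{\eps,1-\eps\}}}$. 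All of these are $\bigO{N^{\max\{\eps,1-\eps\}}}$. Line 9 then maintains $Q$ by summing a constant number of scalar deltas in $\bigO{1}$ time.

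The crux, and the one case that resists a direct sublinear bound, is Line 6, the delta $\delta Q_{r_Ar_Bhllh}$ over $S_{hl}$ and $T_{lh}$: here $S_{hl}$ is light on $C$ and $T_{lh}$ is light on $C$ as well, so neither part caps the number of matching $C$-values (each is heavy only on the variable it exports, namely $B$ for $S_{hl}$ and $A$ for $T_{lh}$, while the summed variable $C$ is light in both). This is precisely why the single auxiliary view $V_{ST}(b,a) = \sum_c S_{hl}(b,c) \ztimes T_{lh}(c,a)$ is materialized; Line 6 then reduces to the constant-time lookup $\delta R_{r_Ar_B}(\deltaA,\deltaB) \ztimes V_{ST}(\deltaB,\deltaA)$, running in $\bigO{1}$. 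I would stress that, exactly as in the single-variable case, this is the sole heavy--light combination forcing view materialization, so refined partitioning creates no new expensive case.

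Finally I would account for view and relation maintenance. The view $V_{ST}$ refers to no part of $R$ and stays untouched; $V_{RS}$ is refreshed only when $r_Ar_B = hl$ (Line 11) and $V_{TR}$ only when $r_Ar_B = lh$ (Line 13), whereas updates to $R_{hh}$ or $R_{ll}$ trigger neither. Each refresh iterates over the $C$-values paired with the fixed value in $S_{lh}$ (respectively the $C$-values in $T_{hl}$), and since $S_{lh}$ is light on $B$ and heavy on $C$ (respectively $T_{hl}$ is light on $A$ and heavy on $C$), taking the smaller of the two caps $\frac{3}{2}N^{\eps}$ and $2N^{1-\eps}$ gives $\bigO{N^{\min\{\eps,1-\eps\}}}$. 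Updating the affected part $R_{r_Ar_B}$ itself costs $\bigO{1}$ (Line 14). Taking the maximum over all lines gives total update time $\bigO{N^{\max\{\eps,1-\eps\}}} = \bigO{|\db|^{\max\{\eps,1-\eps\}}}$, which completes the proof. The only genuinely nontrivial step is isolating the $hl$--$lh$ combination of Line 6 as the unique obstruction to an auxiliary-view-free maintenance; all remaining bounds are routine applications of the heavy/light degree counts.
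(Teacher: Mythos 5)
Your proof follows the paper's own argument essentially line for line: a per-line analysis of \textsc{ApplyUpdateRP}, identifying Line 6 ($S_{hl}$ joined with $T_{lh}$ on the light-in-both variable $C$) as the unique case requiring the materialized view $V_{ST}$, and bounding the view maintenance in Lines 11 and 13 by $\bigO{N^{\min\{\eps,1-\eps\}}}$ via the light/heavy caps on $S_{lh}$ and $T_{hl}$. The conclusions are all correct and match the paper.

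One justification is misstated, though, and it concerns the central mechanism of the refined partitions. You write that when the $T$-part is ``light on $C$'' (Lines 5 and 8) the number of $C$-values paired with the fixed $\deltaA$ is below $\frac{3}{2}N^{\eps}$. That is the wrong direction: lightness of $T_{t_Ct_A}$ on $C$ caps $|\sigma_{C=c}T|$, i.e.\ the number of $A$-values per $C$-value, and says nothing about how many $C$-values accompany a fixed $\deltaA$ (indeed $T_{lh}$ is light on $C$ yet a single $\deltaA$ may pair with linearly many $C$-values there --- which is exactly why Line 6 needs a view). The bound you need for Line 5 comes from $T_{ll}$ being light on $A$, which gives $|\sigma_{A=\deltaA}T_{ll}| < \frac{3}{2}N^{\eps}$; for Line 8 the bound is carried by $S_{ls_C}$ being light on $B$, which you also cite. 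So every stated complexity survives, but the principle as you phrased it is false in general and would break if applied to, say, $T_{lh}$.
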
 
\begin{proof}
We investigate the runtime of the 
procedure  \textsc{ApplyUpdateRP}.
We first analyze the evaluation strategies for 
computing the delta
skew-aware views
$\delta Q_{r_Ar_Bs_Bs_Ct_Ct_A}$
with $s_B,s_C,t_C,t_A \in \{h,l\}$.
 The case
$s_B = t_C = h$ is handled in Line 3,
the case
$s_B = l$ and $t_C = h$ in Line 7, and
the case  
$s_B =t_C = l$ in Line 8. 
In all these cases we ignore 
the partitioning of the 
relations on their second variables  
and mimic 
the evaluation strategies for computing the delta
skew-aware views 
$\delta Q_{r_A hh}$, 
$\delta Q_{r_A lh}$, and 
$\delta Q_{r_A ll}$ from the procedure
 \textsc{ApplyUpdate} in  
  Figure  \ref{fig:applyUpdate}. 
Hence, we get the same time complexities. 
The remaining delta skew-aware views, namely
those where $s_B$ is  $h$ and  $t_C$ is $l$, 
are handled in Lines 4-6.
The computation of such delta views 
requires the iteration over all 
$C$-values paired with $\deltaB$ in $S_{hs_C}$ and with $\deltaA$
in $T_{lt_A}$. 
We make a case distinction
on whether
$S_{hs_C}$ and $T_{lt_A}$ are heavy or light on 
 $s_C$ and $t_A$, respectively. 
 
\begin{itemize}
\item Case $s_C = h$ (Line 4): The delta view is of the form 
$\delta Q_{r_Ar_B h h l t_A}() = 
\delta{R_{r_Ar_B}(\deltaA,\deltaB)} \ztimes \textstyle\sum_c 
S_{hh}(\deltaB,c) \ztimes T_{lt_A}(c,\deltaA)$
with $t_A \in \{h,l\}$.
Since $S_{hh}$ is heavy on $C$, the number 
of distinct $C$-values in $S_{hh}$ is at most $2N^{1-\eps}$,
which means that 
the delta computation takes $\bigO{N^{1-\eps}}$ 
time.   

\item Case $t_A = l$ (Line 5): The delta view is of the form
$\delta Q_{r_Ar_B h s_C l l}() =
\delta{R_{r_Ar_B}(\deltaA,\deltaB)} \ztimes \textstyle\sum_c 
S_{hs_C}(\deltaB,c) \ztimes T_{ll}(c,\deltaA)$
with $s_C \in \{h,l\}$.
Since $T_{ll}$ is light on $A$, there are less than 
$\frac{3}{2} N^{\eps}$ $C$-values paired with $\deltaA$ in $T_{ll}$.
Hence, the computation time is $\bigO{N^{\eps}}$. 

\item Case $s_C = l$ and $t_A = h$ (Line 6): The delta view
is of the form $\delta Q_{r_Ar_B h l l h}()=
\delta{R_{r_Ar_B}(\deltaA,\deltaB)} \ztimes \textstyle\sum_c 
S_{hl}(\deltaB,c) \ztimes T_{lh}(c,\deltaA)$. 
Delta computation amounts to a constant-time 
lookup in $V_{ST}$.
\end{itemize} 
Summing up the deltas in Line 9 takes constant time. 

We now turn to the maintenance 
of the auxiliary views. 
Computing $\delta V_{RS}$ (Line 11) requires the iteration 
over all $C$-values paired with  $\deltaB$ in $S_{lh}$. Similarly, 
computing $\delta V_{TR}$ (Line 13) requires the iteration 
over all $C$-values paired with  $\deltaA$ in $T_{hl}$. Both 
relations $S_{lh}$ and $T_{hl}$ are heavy on $C$
and light on the other variable. 
Hence, the number of $C$-values iterated over 
is bounded by 
$\min\{2N^{1-\eps}, \frac{3}{2}N^{\eps}\}$. This implies
that the computation time is $\bigO{N^{\min\{\eps, 1-\eps\}}}$.
The update of $R_{r_Ar_B}$ in Line 14 can be done in constant 
time. 

It follows that the overall  time  of the 
procedure \textsc{ApplyUpdateRP} 
in Figure \ref{fig:applyUpdate_part_all}
is $\bigO{N^{\max\{\eps, 1-\eps}\}}$, which,   
by $N = \Theta(|\inst{D}|)$, is 
$\bigO{|\inst{D}|^{\max\{\eps, 1-\eps}\}}$. 
\end{proof}

\subsection{Rebalancing Refined Partitions}
\begin{figure}[t]
\begin{center}
\begin{tikzpicture}
\hspace{-0.4cm}
\node at(-1,0)[anchor=north] {
\renewcommand{\arraystretch}{1.2}
\setcounter{magicrownumbers}{0}
\begin{tabular}{l}
\toprule
\textsc{OnUpdateRP}($\delta R,\state$) \\
\midrule
\begin{minipage}{5.2cm}
\renewcommand{\arraystretch}{1.2}
\begin{tabular}{@{\hskip -0.05in}ll@{\hskip 0.25in}l@{\hspace{0.6cm}}c}
&\LET $\delta R = \{(\deltaA,\deltaB) \mapsto \p\}$  \\
&\LET $\state = (\eps, N, \{R_{r_Ar_B}\}_{r_A,r_B \in \{h,l\}} \cup \dbeps, \inst{V})$ \\[0.1cm]
\rownumber & \IF ($(\deltaA,\deltaB) \in R_{hh}$ \OR $\eps=0$) \\
\rownumber & \TAB {\textsc{ApplyUpdateRP}}($\delta R_{hh},\state$)\\
\rownumber & \ELSE\IF ($(\deltaA,\deltaB) \in R_{hl}$) \\
\rownumber & \TAB {\textsc{ApplyUpdateRP}}($\delta R_{hl},\state$)\\
\rownumber & \ELSE\IF ($(\deltaA,\deltaB) \in R_{lh}$) \\
\rownumber & \TAB {\textsc{ApplyUpdateRP}}($\delta R_{lh},\state$)\\
\rownumber & \ELSE \\
\rownumber & \TAB {\textsc{ApplyUpdateRP}}($\delta R_{ll},\state$)\\
\rownumber & \IF($|\inst{D}| = N$)\\
\rownumber & \TAB $N = 2N$ \\
\rownumber & \TAB {\textsc{MajorRebalancingRP}}($\state$) \\
\rownumber & \ELSE \IF($|\inst{D}| < \floor{\frac{1}{4}N}$) \\
\rownumber & \TAB $N = \floor{\frac{1}{2}N}-1$ \\
\rownumber & \TAB {\textsc{MajorRebalancingRP}}($\state$) \\
\end{tabular}
\end{minipage}
\hspace{1.3cm}
\begin{minipage}{9.5cm}
\vspace{0.5cm}
\renewcommand{\arraystretch}{1.2}
\begin{tabular}{ll@{\hskip 0.25in}l@{\hspace{0.6cm}}c}
\rownumber & \ELSE \\

\rownumber & \TAB \IF ($\deltaA \in (\pi_A R_{ll} \cup \pi_A R_{lh})$ \AND\\
& \TAB\TAB\TAB  
$|\sigma_{A = \deltaA} R_{ll}| + |\sigma_{A = \deltaA} R_{lh}| \geq \frac{3}{2} N^{\eps}$) \\
\rownumber & \TAB \TAB {\textsc{MinorRebalancingRP}}($R_{ll}, R_{hl},R_{lh},R_{hh}, A,\deltaA, \state$) \\

\rownumber & \TAB \ELSE\IF ($\deltaA \in  (\pi_A R_{hl} \cup \pi_A R_{hh})$ \AND \\
& \TAB\TAB\TAB\TAB\TAB$|\sigma_{A = \deltaA} R_{hl}| +  |\sigma_{A = \deltaA}R_{hh}| < \frac{1}{2} N^{\eps}$) \\
\rownumber & \TAB \TAB {\textsc{MinorRebalancingRP}}($R_{hl}, R_{ll},R_{hh},R_{lh}, A,\deltaA, \state$) \\

\rownumber & \TAB \IF ($\deltaB \in  (\pi_B R_{ll} \cup \pi_B R_{hl})$ \AND \\
&\TAB\TAB\TAB $|\sigma_{B = \deltaB}
R_{ll}| + |\sigma_{B = \deltaB} R_{hl}| \geq \frac{3}{2} N^{\eps}$) \\
\rownumber & \TAB \TAB {\textsc{MinorRebalancingRP}}($R_{ll}, R_{lh},R_{hl},R_{hh}, A,\deltaA, \state$) \\

\rownumber & \TAB \ELSE\IF ($\deltaB \in (\pi_B R_{lh} \cup \pi_B R_{hh})$ \AND \\
& \TAB\TAB\TAB\TAB\TAB $|\sigma_{B = \deltaB}
R_{lh}| + |\sigma_{B = \deltaB} R_{hh}| < \frac{1}{2} N^{\eps}$) \\
\rownumber & \TAB \TAB {\textsc{MinorRebalancingRP}}($R_{lh}, R_{ll},R_{hh},R_{hl}, A,\deltaA, \state$) \\
\end{tabular}
\end{minipage}\\
\bottomrule
\end{tabular}
};

\end{tikzpicture}
\end{center}
\vspace{-1.2em}
\caption{
\textsc{OnUpdateRP} adapts the procedure 
\textsc{OnUpdate} from Figure~\ref{fig:updateProgram}
to refined partitions. 
It takes as input an update $\delta R = \{(\deltaA,\deltaB) \mapsto \p\}$ and 
a current \ivme state $\state$ of a database $\inst{D}$
and returns a state that results from applying $\delta R$
to $\state$.
\textsc{ApplyUpdateRP} is given in  Figure~\ref{fig:applyUpdate_part_all}.
\textsc{MinorRebalancingRP}($K_{\mathit{src}}, K_{\mathit{dst}}, K_{\mathit{src}}', K_{\mathit{dst}}', X, x, \state$)
is defined in Figure \ref{fig:minor_rebal_part_both_var} and 
moves all tuples with $X$-value $x$ from 
$K_{\mathit{src}}$ and $K_{\mathit{src}}'$ to 
$K_{\mathit{dst}}$ and $K_{\mathit{dst}}'$, respectively. 
\textsc{MajorRebalancingRP}, which is defined similarly
to \textsc{MajorRebalancing} in Figure 
\ref{fig:updateProgram}, strictly repartitions
the relations in $\dbeps$ with threshold $N^{\eps}$ and  
recomputes the views in $\inst{V}$. 
The \textsc{OnUpdateRP} procedures for updates to $S$
and $T$ are analogous.
}
\label{fig:updateProgram_Part_both_var}
\end{figure}

\begin{figure}[t]
\begin{center}
\begin{tikzpicture}
\node at(-8.3,0)[anchor=west] {
\renewcommand{\arraystretch}{1.1}
\begin{tabular}{@{\hskip 0.02in}l@{\hskip 0.02in}}
\toprule
{\textsc{MinorRebalancingRP}}($K_{\mathit{src}}, K_{\mathit{dst}}, K_{\mathit{src}}', K_{\mathit{dst}}', X, x, \state$)\\
\midrule
\FOREACH $\vecnormal{t} \in \sigma_{X=x} K_{\mathit{src}}$ \DO \\
\TAB $m = K_{\mathit{src}}(\vecnormal{t})$ \\
\TAB $\state$ = {\textsc{ApplyUpdateRP}}($\delta K_{\mathit{src}} = \{\, \vecnormal{t} \mapsto -m \,\},\state$)\\ 
\TAB $\state$ = {\textsc{ApplyUpdateRP}}($\delta K_{\mathit{dst}} = \{\, \vecnormal{t} \mapsto m \,\},\state$)\\
\FOREACH $\vecnormal{t} \in \sigma_{X=x} K_{\mathit{src}}'$ \DO \\
\TAB $m = K_{\mathit{src}}'(\vecnormal{t})$ \\
\TAB $\state$ = {\textsc{ApplyUpdateRP}}($\delta K_{\mathit{src}}' = \{\, \vecnormal{t} \mapsto -m \,\},\state$)\\ 
\TAB $\state$ = {\textsc{ApplyUpdateRP}}($\delta K_{\mathit{dst}}' = \{\, \vecnormal{t} \mapsto m \,\},\state$)\\
\RETURN $\state$\\
\bottomrule
\end{tabular}
};

\end{tikzpicture}
\end{center}
\vspace{-1.2em}
\caption{
\textsc{MinorRebalancingRP} adapts the procedure 
\textsc{MinortRebalancing} from Figure~\ref{fig:updateProgram}
to refined partitions. 
It moves all tuples with $X$-value $x$ from 
$K_{\mathit{src}}$ to 
$K_{\mathit{dst}}$ and 
from $K_{\mathit{src}}'$ to
$K_{\mathit{dst}}'$. }
\label{fig:minor_rebal_part_both_var}
\vspace{-0.45em}
\end{figure}

The trigger procedure \textsc{OnUpdateRP} in
Figure \ref{fig:updateProgram_Part_both_var} 
adapts the procedure  
\textsc{OnUpdate} in Figure 
\ref{fig:updateProgram} to refined partitions.   
It takes as input an update 
$\delta R = \{(\deltaA,\deltaB) \mapsto \p\}$ 
and a current state $\state = (\eps, N, \dbeps,\inst{V})$ 
with refined partitions of relations of 
a database $\inst{D}$, 
 maintains $\state$
under the update and, if necessary, performs 
major and minor rebalancing.   
The trigger procedures for updates 
relations $S$ and $T$ are defined
analogously. 

In Lines 1-8, the procedure identifies the part 
$R_{r_Ar_B}$ of relation $R$ that is affected
by the update.  
Then, it calls the procedure
\textsc{ApplyUpdateRP}($\delta R_{r_Ar_B} = \{(\deltaA,\deltaB) \mapsto \p\}, \state$), which is defined in Figure
  \ref{fig:applyUpdate_part_all}.
  If the update causes the violation of 
  the size invariant
$\floor{\frac{1}{4}N} \leq |\db| < N$,
the procedure halves or doubles $N$ and 
performs major rebalancing by calling 
\textsc{MajorRebalancingRP}, which,  similar 
to \textsc{MajorRebalancing}
in Figure  \ref{fig:updateProgram},   
strictly repartitions the relations in $\dbeps$ 
with threshold $N^{\eps}$
and recomputes the auxiliary views
in $\inst{V}$.  As explained in Appendix \ref{sec:pre_time_refined},
recomputation of partitions and views, and hence, 
major rebalancing    
takes $\bigO{|\inst{D}|^{1 + \min\{\eps, 1-\eps\}}}$
time.

In Lines 16-23, the procedure checks whether 
the update causes a violation of the 
light or heavy part conditions from
Definition \ref{def:loose_relation_partition_sets}
and, if so, performs minor rebalancing
by invoking \textsc{MinorRebalancingRP}($K_{\mathit{src}}, K_{\mathit{dst}}, K_{\mathit{src}}', K_{\mathit{dst}}', X, x, \state$) described in 
Figure \ref{fig:minor_rebal_part_both_var}.
In contrast to the procedure 
\textsc{MinorRebalancing} 
in Figure \ref{fig:updateProgram}, the parameter list 
of
\textsc{MinorRebalancingRP} contains  
two source relations 
$K_{\mathit{src}}$ and $K_{\mathit{src}}'$ 
and two target relations 
$K_{\mathit{dst}}$ and $K_{\mathit{dst}}'$.
\textsc{MinorRebalancingRP} moves all tuples in
$K_{\mathit{src}}$ and $K_{\mathit{src}}'$
with $X$-value $x$ to
$K_{\mathit{dst}}$ and $K_{\mathit{dst}}'$, respectively. 
While the procedure 
\textsc{OnUpdate}
in Figure \ref{fig:updateProgram} 
invokes  \textsc{MinorRebalancing} at most once per update,
the procedure \textsc{OnUpdateRP} might call 
\textsc{MinorRebalancingRP} up to two times
per update.
We consider the case where the update affects 
the part $R_{ll}$ and 
implies that the latter relation becomes heavy on $\alpha$
as well as on $\beta$. In this case,
the conditions in both lines 16 and 20 hold.
Note that both parts $R_{ll}$ and $R_{lh}$ can contain 
tuples from relation $R$ with
  $A$-value $\alpha$.
The procedure \textsc{MinorRebalancingRP}
called in Line 17 moves such tuples
from $R_{ll}$ to $R_{hl}$ and from
$R_{lh}$ to $R_{hh}$.
Likewise, both $R_{ll}$ and $R_{hl}$ can contain 
$R$-tuples with $B$-value $\beta$.
The procedure \textsc{MinorRebalancingRP}
called in Line 21 moves these tuples 
from $R_{ll}$ to $R_{lh}$ and from
$R_{hl}$ to $R_{hh}$. 
In general, the number of tuples moved from one relation part to the other 
is less than $\frac{3}{2}N^{\eps}+1$. A minor rebalancing step 
moves tuples between at most between two pairs of relation 
parts. Since each tuple move needs 
$\bigO{N^{\max\{\eps,1-\eps\}}}$  
time and $|\inst{D}| = \Theta(N)$, minor rebalancing 
needs overall $\bigO{|\inst{D}|^{\eps + \max\{\eps,1-\eps\}}}$ 
time. 

We now put all pieces together to 
prove the main theorem of this section. 

\begin{proof}[Proof of Theorem~\ref{theo:main_result_part_both_var}]
As shown in Proposition~\ref{prop:preprocessing_step_refined},
the preprocessing stage takes $\bigO{|\inst{D}|^{\frac{3}{2}}}$ time.
Constant answer time is ensured by the fact that 
the triangle count is materialized and maintained.  
By Proposition~\ref{prop:space_complexity_ref_part},
the space complexity is 
$\bigO{|\inst{D}|^{\max\{\min\{1 + \eps,2-2\eps\}, 1\}}}$.

The analysis of the amortized update time follows the 
the proof of Theorem \ref{theo:main_result}. 
The time $\bigO{|\inst{D}|^{1 + \min\{\eps, 1-\eps\}}}$ needed to perform major rebalancing 
is amortized over 
$\Omega(N)$ updates following the previous major 
rebalancing step. 
The minor rebalancing time $\bigO{|\inst{D}|^{\eps + \max\{\eps,1-\eps\}}}$  
needed to move all tuples
with a specific value between relation parts is amortized
over $\Omega(N^{\eps})$ updates to tuples with the same
values following the previous minor rebalancing step for that value. 
Hence, the amortized rebalancing time is 
$\bigO{|\inst{D}|^{\max\{\eps,1-\eps\}}}$.
Since, by Proposition~\ref{prop:single_step_time_ref_part},
 the time to process a single-tuple update 
is $\bigO{|\db|^{\max\{\eps,1-\eps\}}}$,
this implies 
that the overall amortized single-tuple update time is
$\bigO{|\inst{D}|^{\max\{\eps,1-\eps\}}}$.
\end{proof}

\section{Enumerating Triangles under Updates}
\label{sec:enum_triangles}
We focus on the problem of enumerating all triangles in 
the result of the full triangle query 
\begin{align*}
Q(a,b,c) = R(a,b) \cdot S(b,c) \cdot T(c,a)
\end{align*}
under single-tuple updates to the input relations. 
We present an \ivme variant that requires 
$\bigO{|\inst{D}|^{\max{\eps,1-\eps}}}$ update time and ensures 
constant-delay enumeration after each update. 
Due to Corollary \ref{prop:lower_bound_triangle_enum}
in Appendix \ref{sec:lowerbound},
this is worst-case optimal  for $\eps = 0.5$.
The enumeration process reports only distinct tuples and their multiplicities in the query result.

The rest of this section is dedicated to the proof of the following theorem.

\begin{theorem}\label{theo:main_result_fulljoin}
Given a database $\db$ and $\eps\in[0,1]$, \ivme maintains 
the result of the full triangle query under single-tuple updates 
to $\db$ with $\bigO{|\db|^{\frac{3}{2}}}$ preprocessing 
time, $\bigO{|\db|^{\max\{\eps,1-\eps\}}}$ amortized update time,
constant enumeration delay,  
and $\bigO{|\db|^{\frac{3}{2}}}$ space.
\end{theorem}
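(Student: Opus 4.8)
The plan is to reuse the entire \ivme skeleton from Sections~\ref{sec:strategy} and~\ref{sec:rebalancing} --- the degree-based partition of $R$, $S$, $T$ on $A$, $B$, $C$, the threshold base $N$ with $|\db|=\Theta(N)$, and the major/minor rebalancing machinery --- and to change only the set of materialized structures and the enumeration procedure. Since the amortized analysis of Section~\ref{sec:main_proof} is agnostic to what is maintained (it only charges each tuple move at the single-update cost and uses that rebalancing moves $\bigO{N^{\eps}}$ tuples per $\Omega(N^{\eps})$ updates and recomputes everything once per $\Omega(N)$ updates), it transfers verbatim once I show that (i) each single-tuple update and (ii) each tuple move during rebalancing costs $\bigO{|\db|^{\max\{\eps,1-\eps\}}}$, and that all materialized structures fit in $\bigO{|\db|^{\frac{3}{2}}}$ space and are (re)computable in $\bigO{|\db|^{\frac{3}{2}}}$ time. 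Thus the real content is the enumeration data structure.

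I decompose the output into the eight skew-aware parts $Q_{rst}(a,b,c)=R_r(a,b)\ztimes S_s(b,c)\ztimes T_t(c,a)$. Because Definition~\ref{def:loose_relation_partition} makes each heavy/light split a genuine partition of its relation, the output sets of the eight parts are pairwise disjoint, so it suffices to enumerate each part with constant delay and concatenate them (the $\bigO{1}$ transition between parts is harmless); the multiplicity of a reported triangle is the product of three $\bigO{1}$ lookups. A short case analysis on which relation's update can create an unbounded number of completions shows that exactly two parts, $Q_{hhh}$ and $Q_{lll}$, admit direct materialization of their triangle sets maintainable in budget: under each of the three update types one of the two inner relations has either $\bigO{N^{1-\eps}}$ distinct join values (a heavy part) or $\bigO{N^{\eps}}$ matching tuples (a light part), so at most $\bigO{N^{\max\{\eps,1-\eps\}}}$ triangles change per update. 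Storing each such part in a doubly-linked list gives constant-delay enumeration, and each part is a subset of all triangles, hence of size $\bigO{|\db|^{\frac{3}{2}}}$ by the Loomis--Whitney bound.

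For the remaining six parts the plan is to generalize the three auxiliary views of Figure~\ref{fig:view_definitions} from counts to \emph{witness lists}: $E_{RS}(a,c)$, $E_{ST}(b,a)$, $E_{TR}(c,b)$ store, for each surviving key, the list of witnesses $b$, $c$, $a$ respectively of the corresponding heavy--light two-path (e.g.\ $E_{ST}(b,a)$ holds all $c$ with $S_h(b,c)\ztimes T_l(c,a)\neq0$). The key observation making this affordable is that a single-tuple update changes each affected entry's witness list by exactly one element, so maintaining $E_{ST}$ touches the same $\bigO{N^{\max\{\eps,1-\eps\}}}$ entries as maintaining $V_{ST}$ in Proposition~\ref{prop:single_step_time}, while the total size $\sum_{b,a}|E_{ST}(b,a)|$ equals the number of such two-paths, which the same argument as in Proposition~\ref{prop:space_complexity} caps at $\bigO{N^{1+\min\{\eps,1-\eps\}}}=\bigO{|\db|^{\frac{3}{2}}}$. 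Each of the three views serves the two parts differing only in the partition membership of the third relation; I enumerate such a part by iterating over the \emph{productive} outer tuples of that third relation and, for each, reading off the nonempty witness list. To make every outer tuple productive (and the first output appear in $\bigO{1}$), I maintain alongside each view a doubly-linked \emph{live list} of outer tuples whose witness list is currently nonempty, updated incrementally whenever an entry toggles between empty and nonempty; this adds only $\bigO{1}$ work per already-charged entry.

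The main obstacle is precisely this last point: delivering \emph{genuine} constant delay --- a first tuple and inter-tuple gap in $\bigO{1}$, with no per-request preprocessing --- for the six cyclic mixed parts, where a naive nested loop would waste $\Theta(N^{\eps})$ or $\Theta(N^{1-\eps})$ time scanning completions that fail to close a triangle. This is what forces the witness-list materialization (rather than reusing the count views) together with the live-list bookkeeping, and it is the step I expect to require the most care: I must verify that the union of the three witness views plus the two materialized parts stays within $\bigO{|\db|^{\frac{3}{2}}}$ space for \emph{all} $\eps$ (using $1+\min\{\eps,1-\eps\}\le\frac{3}{2}$ and the $\eps\leftrightarrow 1-\eps$ symmetry to choose each view's key), that every toggle of a witness list is detected and propagated to the live list within the per-update budget, and that major and minor rebalancing rebuild these structures within the same bounds so that the query-agnostic amortization of Section~\ref{sec:main_proof} applies unchanged.
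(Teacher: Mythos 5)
Your proposal is correct and follows essentially the same route as the paper's Appendix on enumerating triangles under updates: the paper likewise materializes $Q_{hhh}\cup Q_{lll}$ in listing form (its view $Q^{L}$) and handles each of the six mixed parts in factorized form via a view tree consisting of a ternary witness view (e.g.\ $V_{ST}(b,c,a)=S_h(b,c)\cdot T_l(c,a)$, which is your witness list $E_{ST}$), its aggregate $V^{(C)}_{ST}(b,a)$ used to detect empty/nonempty toggles, and a top-level view $Q_{rhl}(a,b)=R_r(a,b)\cdot V^{(C)}_{ST}(b,a)$ that is exactly your live list of productive outer tuples, with the same space bound $\bigO{|\db|^{3/2}}$, the same per-update cost $\bigO{|\db|^{\max\{\eps,1-\eps\}}}$, and the same transfer of the amortized rebalancing analysis.
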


\subsection{Adaptive Maintenance for the Full Triangle Query}
\label{sec:adaptieve_maint_enum}
\begin{figure}[t]
  \begin{minipage}[b]{0.74\textwidth}
  \begin{center}
    \small
    \renewcommand{\arraystretch}{1.3}  
    \begin{tabular}{@{\hskip 0.05in}l@{\hskip 0.4in}l@{\hskip 0.05in}}
      \toprule
      Materialized View Definition & Space Complexity \\    
      \midrule
      $Q^{L}(a,b,c) = \sum_{u \in \{h,l\}} 
      R_{u}(a,b) \cdot S_{u}(b,c) \cdot T_{u}(c,a)$ & 
      $\bigO{|\inst{D}|^{\frac{3}{2}}}$ \\[3pt]

      $Q^{F}_{hlt}(a,b,c) = 
      R_{h}(a,b) \cdot S_{l}(b,c) \cdot T_{t}(c,a)$ & \\

      $\TAB\TAB V_{RS}(a,b,c) = R_{h}(a,b) \cdot S_{l}(b,c)$ & 
      $\bigO{|\inst{D}|^{1+\min{\{\,\eps, 1-\eps \,\}}}}$ \\

      $\TAB\TAB V^{(B)}_{RS}(a,c) = \sum_b V_{RS}(a,b,c)$ & 
      $\bigO{|\inst{D}|^{1+\min{\{\,\eps, 1-\eps \,\}}}}$ \\

      $\TAB\TAB Q_{hlt}(a,c) =
      V^{(B)}_{RS}(a,c) \cdot T_{t}(c,a)$ & 
      $\bigO{|\inst{D}|}$\\[3pt]

      $Q^{F}_{rhl}(a,b,c) =  
      R_{r}(a,b) \cdot S_{h}(b,c) \cdot T_{l}(c,a)$ & \\

      $\TAB\TAB V_{ST}(b,c,a) = S_{h}(b,c) \cdot T_{l}(c,a)$ & 
      $\bigO{|\inst{D}|^{1+\min{\{\,\eps, 1-\eps \,\}}}}$ \\

      $\TAB\TAB V^{(C)}_{ST}(b,a) = \sum_{c} V_{ST}(b,c,a)$ & 
      $\bigO{|\inst{D}|^{1+\min{\{\,\eps, 1-\eps \,\}}}}$ \\

      $\TAB\TAB Q_{rhl}(a,b) =
      R_{r}(a,b) \cdot V^{(C)}_{ST}(b,a)$ & 
      $\bigO{|\inst{D}|}$\\[3pt]

      $Q^{F}_{lsh}(a,b,c) =  
       R_{l}(a,b) \cdot S_{s}(b,c) \cdot T_{h}(c,a)$ & \\

      $\TAB\TAB V_{TR}(c,a,b) = T_{h}(c,a) \cdot R_{l}(a,b)$ & 
      $\bigO{|\inst{D}|^{1+\min{\{\,\eps, 1-\eps \,\}}}}$ \\

      $\TAB\TAB V^{(A)}_{TR}(c,b) = \sum_a V_{TR}(c,a,b)$ & 
      $\bigO{|\inst{D}|^{1+\min{\{\,\eps, 1-\eps \,\}}}}$ \\

      $\TAB\TAB Q_{lsh}(b,c) =  
      S_{s}(b,c) \cdot V^{(A)}_{TR}(c,b)$ & 
      $\bigO{|\inst{D}|}$\\

      \bottomrule    
    \end{tabular}\vspace*{-0.5em}
  \end{center}
  \end{minipage}  
  \begin{minipage}[b]{0.25\textwidth}
    \small
    \begin{tikzpicture}[xscale=0.45, yscale=1]
      \node at (0, 0) (A) {$Q_{hlt}(a,c)$};
      \node at (2, -2) (B) {$T_{t}(c,a)$} edge[-] (A);
      \node at (-2, -2) (C) {$V^{(B)}_{RS}(a,c)$} edge[-] (A);
      \node at (-2, -4) (D) {$V_{RS}(a,b,c)$} edge[-] (C);
      \node at (-4, -6) (E) {$R_{h}(a,b)$} edge[-] (D);
      \node at (0, -6) (F) {$S_{l}(b,c)$} edge[-] (D);

      \node at (-0.5, 0.95) (G) {View tree for $Q^{F}_{hlt}$};
    \end{tikzpicture}
  \end{minipage}
  \caption{
  (left) The materialized views 
  $\inst{V} = \{ Q^{L}, Q_{hlt}, Q_{rhl}, Q_{lsh}, V_{RS}, V^{(B)}_{RS}, V_{ST}, V^{(C)}_{ST}, V_{TR}, V^{(A)}_{TR} \}$ supporting 
  the constant-delay enumeration of the result of the full triangle query.
  $r$, $s$, and $t$ stand for $h$ or $l$.
  $\inst{D}$ is the input database. The input relations are partitioned 
  for a fixed $\eps \in [0,1]$. The superscripts $L$ and $F$ denote listing and factorized forms of query results. The result of $Q^L$ is materialized, while the results of $Q^F_{hlt}$, $Q^F_{rhl}$, and $Q^F_{lsh}$ are enumerable with constant delay using other auxiliary views (denoted by indentation). 
  (right) The view tree for maintaining the result of $Q^F_{hlt}$ in factorized form. } 
  \label{fig:view_definitions_full_join}
\end{figure}

The \ivme variant employs a similar adaptive maintenance strategy as with the triangle count query. It first partitions the relations $R$, $S$, and $T$ on the variables $A$, $B$, and $C$, respectively, with the same thresholds as in the original \ivme algorithm. It 
then decomposes $Q$ into eight skew-aware views defined over the relation parts:
\begin{align*}
Q_{rst}(a,b,c) = R_r(a,b) \cdot S_s(b,c) \cdot T_t(c,a), \quad\text{ for } 
r,s,t \in \{h,l\}.
\end{align*}
Enumerating the result of $Q$ is equivalent to enumerating the result of each $Q_{rst}$.
As with the triangle count query, the \ivme variant customizes the maintenance strategy 
for each of these views and relies on auxiliary views to speed up the view maintenance.  

The original \ivme algorithm, however, fails to achieve sublinear maintenance time for most of these skew-aware views. 
Consider for instance the view $Q_{hhl}$ and a single-tuple update 
$\delta{R_h} = \{(\alpha,\beta) \mapsto \p\}$ to the heavy part $R_h$ of relation $R$. 
The delta $\delta Q_{hhl}(\alpha,\beta,c) = \delta R_h(\alpha,\beta) \cdot 
S_h(\beta,c) \cdot T_l(c,\alpha)$ iterates over linearly many $C$-values in the worst case. 
Precomputing the view $V_{ST}(b,c,a) = S_h(b,c) \cdot T_l(c,a)$ and 
rewriting the delta as $\delta Q_{hhl}(\alpha,\beta,c) = \delta R_h(\alpha,\beta) \cdot V_{ST}(\beta,c,\alpha)$ makes no improvement in the worst-case running time. In contrast, for the triangle count query, the view $V_{ST}(b,a) = S_h(b,c) \cdot T_l(c,a)$ enables computing $\delta Q_{hhl}$ in constant time. 

The skew-aware views of the full triangle query can be maintained in sublinear time by avoiding the listing (tabular) form of the view results. For that purpose, 
the result of a skew-aware view can be maintained in {\em factorized form}:
 Instead of using one materialized view, a hierarchy of materialized views
 is created  such that each of them admits sublinear maintenance time and all of them together guarantee constant-delay enumeration of the result of the skew-aware view. 
This technique of factorized evaluation appears in recent publications studying incremental view maintenance~\cite{BerkholzKS17,Idris:dynamic:SIGMOD:2017,NO18}.

Figure~\ref{fig:view_definitions_full_join}(left) presents the views used by \ivme to maintain the result of the full triangle query under updates to the base relations. 
The top-level view $Q^{L}$ materializes the union of the results of $Q_{hhh}$ and $Q_{lll}$ in listing form. The remaining top-level views, 
$Q^{F}_{hlt}$, $Q^{F}_{rhl}$, and $Q^{F}_{lsh}$ with $r,s,t \in \{h,l\}$, avoid materialization altogether but ensure constant-delay enumeration of their results using other auxiliary materialized views (denoted by indentation).

Figure~\ref{fig:view_definitions_full_join}(right) shows the materialized views needed to maintain the result of $Q^{F}_{hlt}$ for $t \in \{h,l\}$ in factorized form. 
These views make a view tree with input relations as leaves and updates propagating in a bottom-up manner. 
The result of $Q^{F}_{hlt}$ is distributed among two auxiliary views, $Q_{hlt}$ and $V_{RS}$. The former stores all $(a,c)$ pairs that would appear in the result of $Q^{F}_{hlt}$, while the latter provides the matching $B$-values for each $(a,c)$ pair. 
The two views together provide constant-delay enumeration of the result of 
$Q^{F}_{hlt}$. In addition to them, the view $V^{(B)}_{RS}$ serves to support constant-time updates to $T_{t}$. The view trees for 
$Q^{F}_{rhl}$ and $Q^{F}_{lsh}$ are analogous.

\begin{figure}[t]
\begin{center}
\renewcommand{\arraystretch}{1.3}
\setcounter{magicrownumbers}{0}
\begin{tabular}{ll@{\hskip 0.25in}l@{\hspace{0.6cm}}c}
\toprule
\multicolumn{2}{l}{\textsc{ApplyUpdateEnum}($\delta R_{r}(\deltaA,\deltaB),\state$)}& & Time \\
\cmidrule{1-2} \cmidrule{4-4}

\rownumber & \LET $\delta R_r = \{(\deltaA,\deltaB) \mapsto \p\}$ \\
\rownumber & \LET $\state = (\eps, N, 
\{R_h, R_l, S_h, S_l, T_h, T_l\},$ \\
& \TAB \TAB \TAB \TAB $\{ Q^{L}, Q_{hlt}, Q_{rhl}, Q_{lsh}, V_{RS}, V^{(B)}_{RS}, V_{ST}, V^{(C)}_{ST}, V_{TR}, V^{(A)}_{TR} \})$ \\

\rownumber & \IF ($r$ is $h$) &\\  
\rownumber & \TAB $Q^{L}(\deltaA,\deltaB,c) = Q^{L}(\deltaA,\deltaB,c) + \delta{R_{h}(\deltaA,\deltaB)} \cdot S_{h}(\deltaB,c) \cdot T_{h}(c,\deltaA)$ & & $\bigO{|\inst{D}|^{1-\eps}}$ \\

\rownumber & \TAB $V_{RS}(\deltaA,\deltaB,c) = \delta{R_h(\deltaA,\deltaB)} \cdot S_l(\deltaB,c)$ & & $\bigO{|\inst{D}|^\eps}$ \\

\rownumber & \TAB $V^{(B)}_{RS}(\deltaA,c) = \delta{V_{RS}}(\deltaA,\deltaB,c)$ & & $\bigO{|\inst{D}|^\eps}$ \\

\rownumber & \TAB $Q_{hlt}(c,\deltaA) =  \delta{V^{(B)}_{RS}}(\deltaA,c) 
\cdot T_t(c,\deltaA)$ & & $\bigO{|\inst{D}|^\eps}$ \\

\rownumber & \ELSE &\\
\rownumber & \TAB $Q^{L}(\deltaA,\deltaB,c) = \delta{R_{l}(\deltaA,\deltaB)} \cdot S_{l}(\deltaB,c) \cdot T_{l}(c,\deltaA)$ & & $\bigO{|\inst{D}|^{\eps}}$ \\

\rownumber & \TAB $V_{TR}(c,\deltaA,\deltaB) = \delta{R_l(\deltaA,\deltaB)} \cdot T_h(c,\deltaA)$ & & $\bigO{|\inst{D}|^{1-\eps}}$ \\

\rownumber & \TAB $V^{(A)}_{TR}(c,\deltaB) = V^{(A)}_{TR}(c,\deltaB) + \delta{V_{TR}}(c,\deltaB)$ & & $\bigO{|\inst{D}|^{1-\eps}}$ \\

\rownumber & \TAB $Q_{lsh}(\deltaB,c) = Q_{lsh}(\deltaB,c) +  \delta{V^{(A)}_{TR}}(c,\deltaB) \cdot S_s(\deltaB,c)$ & & $\bigO{|\inst{D]|^{1-\eps}}}$ \\

\rownumber & $Q_{rhl}(a,b) = Q_{rhl}(a,b) + \delta{R_{r}(\deltaA,\deltaB)} \cdot V^{(C)}_{ST}(\deltaB,\deltaA)$ & & $\bigO{1}$ \\

\rownumber & $R_{r}(\deltaA,\deltaB) = R_{r}(\deltaA,\deltaB) + 
\delta{R}_{r}(\deltaA,\deltaB)$ & &
$\bigO{1}$ \\

\midrule
\multicolumn{2}{r}{Total update time:} & &
$\bigO{N^{\max\{\eps, 1-\eps\}}}$ \\
\bottomrule
\end{tabular}
\end{center}\vspace{-1em}
\caption{
 (left) Maintaining an \ivme state under a single-tuple update to 
 support constant-delay enumeration of the result 
 of the full triangle query. 
 \textsc{ApplyUpdateEnum} takes as input an update 
 $\delta R_r$ to the heavy or light part of $R$ and the current 
 \ivme state of a database $\inst{D}$ partitioned using $\eps \in [0,1]$. 
 $r$, $s$, and $t$ stand for $h$ or $l$. 
(right) The time complexity of computing deltas. 
The procedures for updates to $S$ and $T$ are similar.
}
\label{fig:applyUpdate_full_join}
\end{figure}

\subsection{Preprocessing Time  for the Full Triangle Query} 
The preprocessing stage builds the initial \ivme state $\state = (\dbeps, \inst{V}, N)$ for a given database $\db$ and $\eps\in[0,1]$. This step partitions the input relations and computes the views in $\inst{V}$ from Figure~\ref{fig:view_definitions_full_join} before processing any update. If $\db$ is empty, the preprocessing cost is $\bigO{1}.$

\begin{proposition}\label{prop:preprocessing_step_fulljoin}
Given a database $\db$ and $\eps\in[0,1]$, constructing the 
initial \ivme state of $\db$ to support the maintenance 
of the result of the full triangle query 
takes $\bigO{|\db|^{\frac{3}{2}}}$ time.
\end{proposition}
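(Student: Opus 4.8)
The plan is to mirror the preprocessing analysis of the counting case (Proposition~\ref{prop:preprocessing_step}), adapting it to the larger set of materialized views listed in Figure~\ref{fig:view_definitions_full_join}. First I would observe that setting $N = 2\ztimes|\db|+1$ takes constant time and that strictly partitioning $R$, $S$, and $T$ on $A$, $B$, and $C$ with threshold $N^{\eps}$ takes $\bigO{|\db|}$ time, exactly as in the single-variable case. The remaining work is to bound the cost of computing the ten views in $\inst{V}$, and I would argue that their total cost is dominated by a single $\bigO{|\db|^{\frac{3}{2}}}$ term.

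The dominant contribution comes from the listing view $Q^{L}$, which materializes the union of the results of the two monochromatic skew-aware views $Q_{hhh}$ and $Q_{lll}$ in tabular form. Each of these is a full triangle join over parts of the input relations, so by the Loomis--Whitney bound its result contains at most $|\db|^{\frac{3}{2}}$ distinct tuples, and a worst-case optimal join algorithm~\cite{Ngo:SIGREC:2013} enumerates them in $\bigO{|\db|^{\frac{3}{2}}}$ time. Summing the two contributions and deduplicating keeps both the size and the computation time of $Q^{L}$ at $\bigO{|\db|^{\frac{3}{2}}}$.

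Next I would bound the auxiliary views attached to the three factorized skew-aware views. Consider the branch rooted at $Q^{F}_{hlt}$, whose view tree contains $V_{RS}(a,b,c) = R_{h}(a,b)\ztimes S_{l}(b,c)$, its aggregate $V^{(B)}_{RS}(a,c) = \sum_b V_{RS}(a,b,c)$, and $Q_{hlt}(a,c) = V^{(B)}_{RS}(a,c)\ztimes T_{t}(c,a)$. For $V_{RS}$ I would reuse the two-way size argument from the proof of Proposition~\ref{prop:preprocessing_step}: iterating over $R_{h}$ and expanding each $B$-value through the light part $S_{l}$ (at most $\frac{3}{2}N^{\eps}$ matching $C$-values) gives the bound $|R_{h}|\ztimes N^{\eps}$, whereas iterating over $S_{l}$ and expanding through the at most $2N^{1-\eps}$ distinct $A$-values of the heavy part $R_{h}$ gives $|S_{l}|\ztimes N^{1-\eps}$; taking the cheaper order yields $\bigO{N^{1+\min\{\eps,1-\eps\}}}$, which bounds both its size and its computation time. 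The aggregate $V^{(B)}_{RS}$ is obtained in time linear in $|V_{RS}|$ and hence within the same bound, while $Q_{hlt}$ is a constant-time lookup per tuple of $T_{t}$, costing $\bigO{|\db|}$. By the symmetry of the triangle query, the view trees rooted at $Q^{F}_{rhl}$ and $Q^{F}_{lsh}$ admit identical analyses.

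Since $\max_{\eps\in[0,1]}\{1+\min\{\eps,1-\eps\}\} = \frac{3}{2}$, every auxiliary view is computed in $\bigO{|\db|^{\frac{3}{2}}}$ time, so the overall preprocessing time is dominated by the materialization of $Q^{L}$ and equals $\bigO{|\db|^{\frac{3}{2}}}$. The main obstacle I anticipate is the careful bookkeeping for the $V_{RS}$-style views: I must justify the two alternative evaluation orders and select the cheaper one to obtain the $\min\{\eps,1-\eps\}$ exponent rather than a naive $1+\max\{\eps,1-\eps\}$ bound, and I must confirm that materializing $Q^{L}$ as a genuine listing stays within the Loomis--Whitney budget even after deduplicating across $Q_{hhh}$ and $Q_{lll}$.
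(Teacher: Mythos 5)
Your proposal is correct and follows essentially the same route as the paper's proof: linear-time partitioning, a worst-case optimal join for $Q^{L}$ in $\bigO{|\db|^{\frac{3}{2}}}$ time, the two-way evaluation-order argument from Proposition~\ref{prop:preprocessing_step} giving $\bigO{|\db|^{1+\min\{\eps,1-\eps\}}}$ for the $V_{RS}$-style views, linear time for the aggregates and for $Q_{hlt}$, and symmetry for the remaining view trees. The extra care about deduplication in $Q^{L}$ is harmless but unnecessary, since the domain-partition condition makes the heavy and light parts disjoint on the partitioning variable.
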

\begin{proof}
Partitioning the input relations takes $\bigO{|\db|}$ time. Computing $Q^{L}$ using worst-case optimal join algorithms takes $\bigO{|\db|^{\frac{3}{2}}}$ time~\cite{Ngo:SIGREC:2013}. The remaining top-level views $Q^{F}_{hlt}$, $Q^{F}_{rhl}$, and  $Q^{F}_{lsh}$ with $r,s,t \in \{h,l\}$ are not materialized. Computing the auxiliary views $V_{RS}$ and $V^{(B)}_{RS}$ takes $\bigO{|\db|^{1+\min\{\eps,1-\eps\}}}$ time, as explained in the proof of Proposition~\ref{prop:preprocessing_step}. The view $Q_{hlt}$ intersects $T_{t}$ and $V^{(B)}_{RS}$ in linear time. The same reasoning applies to the other views. Thus, the overall preprocessing time is $\bigO{|\db|^{\frac{3}{2}}}$.
\end{proof}

\subsection{Space Complexity of Maintaining the Full Triangle Query}
We next analyze the space complexity of the \ivme algorithm.

\begin{proposition}\label{prop:space_complexity_fulljoin}
Given a database $\db$, the \ivme state constructed from $\db$ to support 
the maintenance of the result of the full triangle 
query takes $\bigO{|\db|^{\frac{3}{2}}}$ space.
\end{proposition}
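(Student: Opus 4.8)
The plan is to bound the size of each materialized view in the state separately and observe that their sum is dominated by the single $\bigO{|\db|^{\frac{3}{2}}}$ term coming from $Q^{L}$. First I would dispense with the cheap parts: the relation partitions $\{R_h,R_l,S_h,S_l,T_h,T_l\}$ together occupy $\bigO{|\db|}$ space since they partition the input, and the constants $\eps,N$ take $\bigO{1}$ space. It then remains to bound the ten views in $\inst{V}$, which I would group into three families: the listing view $Q^{L}$; the three ``wide'' auxiliary views $V_{RS},V_{ST},V_{TR}$ together with their projections $V^{(B)}_{RS},V^{(C)}_{ST},V^{(A)}_{TR}$; and the three ``narrow'' enumeration views $Q_{hlt},Q_{rhl},Q_{lsh}$.

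The key step, and the only genuinely new ingredient relative to the count case, is bounding $Q^{L}$. Recall $Q^{L}(a,b,c)=\sum_{u\in\{h,l\}} R_u(a,b)\cdot S_u(b,c)\cdot T_u(c,a)$ is stored in listing form, so its space is proportional to the number of distinct triples $(a,b,c)$ it contains. Every such triple is a triangle in $\db$, i.e.\ satisfies $R(a,b)\cdot S(b,c)\cdot T(c,a)\neq 0$, so the number of distinct triples in $Q^{L}$ is at most the total number of triangles in $\db$. By the Loomis--Whitney (AGM) bound already invoked in the introduction~\cite{AYZ:Counting:1997,Ngo:PODS:2018}, this count is at most $|\db|^{\frac{3}{2}}$, whence $|Q^{L}|=\bigO{|\db|^{\frac{3}{2}}}$.

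Next I would bound the wide views. These are the three-variable analogues of the views from Figure~\ref{fig:view_definitions}, e.g.\ $V_{RS}(a,b,c)=R_h(a,b)\cdot S_l(b,c)$, so the same two-way counting argument as in the proof of Proposition~\ref{prop:space_complexity} applies. Iterating over $R_h$ and extending by matching $C$-values gives $|V_{RS}|\leq |R_h|\cdot\max_b|\sigma_{B=b}S_l| < N\cdot\frac{3}{2}N^{\eps}$, since $S_l$ is light on $B$; iterating over $S_l$ instead gives $|V_{RS}|\leq |S_l|\cdot\max_b|\sigma_{B=b}R_h|<N\cdot 2N^{1-\eps}$, where I use that $|\sigma_{B=b}R_h|\leq|\pi_A R_h|\leq 2N^{1-\eps}$ because each tuple of $\sigma_{B=b}R_h$ carries a distinct $A$-value and $R_h$ is heavy on $A$. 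Taking the minimum yields $|V_{RS}|=\bigO{N^{1+\min\{\eps,1-\eps\}}}=\bigO{|\db|^{1+\min\{\eps,1-\eps\}}}$ via $N=\Theta(|\db|)$; the arguments for $V_{ST}$ and $V_{TR}$ are symmetric. The projections $V^{(B)}_{RS},V^{(C)}_{ST},V^{(A)}_{TR}$ each have no more distinct keys than their wide counterparts, so they obey the same bound.

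Finally, the narrow views $Q_{hlt}(a,c)=V^{(B)}_{RS}(a,c)\cdot T_t(c,a)$, $Q_{rhl}(a,b)$, and $Q_{lsh}(b,c)$ are intersections of an auxiliary view with an input relation part, so each stores at most as many keys as the participating relation part, giving $|Q_{hlt}|\leq|T_t|\leq|\db|$ and likewise $\bigO{|\db|}$ for the other two. Summing all contributions, the space is $\bigO{|\db|+|\db|^{\frac{3}{2}}+|\db|^{1+\min\{\eps,1-\eps\}}+|\db|}$, and since $1+\min\{\eps,1-\eps\}\leq\frac{3}{2}$ for all $\eps\in[0,1]$, the total is $\bigO{|\db|^{\frac{3}{2}}}$. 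I do not expect a serious obstacle here: once the Loomis--Whitney bound pins down $|Q^{L}|$, everything else reduces to the skew bounds already established; the only point requiring slight care is the second size estimate for the wide views, where one must translate a per-$B$-value degree in $R_h$ into the global $A$-value count.

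\begin{proof}
We bound the size of each component of the state $\state=(\eps,N,\dbeps,\inst{V})$ for a database $\db$ partitioned using $\eps\in[0,1]$. The constants $\eps,N$ take $\bigO{1}$ space and the relation partitions in $\dbeps$ together take $\bigO{|\db|}$ space, as they partition the input relations. It remains to bound the views in
$$\inst{V}=\{\,Q^{L},Q_{hlt},Q_{rhl},Q_{lsh},V_{RS},V^{(B)}_{RS},V_{ST},V^{(C)}_{ST},V_{TR},V^{(A)}_{TR}\,\}.$$

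We first bound $Q^{L}$. Since $Q^{L}$ is stored in listing form, its space is proportional to the number of distinct triples $(a,b,c)$ for which $Q^{L}(a,b,c)\neq 0$. Each such triple satisfies $R(a,b)\cdot S(b,c)\cdot T(c,a)\neq 0$ and is therefore a triangle in $\db$. By the Loomis--Whitney bound~\cite{AYZ:Counting:1997,Ngo:PODS:2018}, the number of triangles in a database of size $|\db|$ is at most $|\db|^{\frac{3}{2}}$, hence $|Q^{L}|=\bigO{|\db|^{\frac{3}{2}}}$.

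We next bound the auxiliary view $V_{RS}(a,b,c)=R_h(a,b)\cdot S_l(b,c)$; the analysis for $V_{ST}$ and $V_{TR}$ is symmetric. The size of $V_{RS}$ admits two bounds. Iterating over the tuples of $R_h$ and extending each by the matching $C$-values in $S_l$, and using that $S_l$ is light on $B$, gives
$$|V_{RS}|\leq |R_h|\cdot\!\!\max_{b\in\pi_B S_l}\!\!|\sigma_{B=b}S_l| < N\cdot\tfrac{3}{2}N^{\eps}.$$
Iterating instead over the tuples of $S_l$ and extending each by the matching $A$-values in $R_h$, and using that $|\sigma_{B=b}R_h|\leq|\pi_A R_h|\leq 2N^{1-\eps}$ since $R_h$ is heavy on $A$, gives
$$|V_{RS}|\leq |S_l|\cdot\!\!\max_{b\in\pi_B R_h}\!\!|\sigma_{B=b}R_h| < N\cdot 2N^{1-\eps}.$$
Taking the minimum yields $|V_{RS}|=\bigO{N^{1+\min\{\eps,1-\eps\}}}$, which by $N=\Theta(|\db|)$ is $\bigO{|\db|^{1+\min\{\eps,1-\eps\}}}$. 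The projection $V^{(B)}_{RS}(a,c)=\sum_b V_{RS}(a,b,c)$ stores at most as many distinct keys as $V_{RS}$, so it obeys the same bound; likewise for $V^{(C)}_{ST}$ and $V^{(A)}_{TR}$.

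Finally, each of $Q_{hlt}(a,c)=V^{(B)}_{RS}(a,c)\cdot T_t(c,a)$, $Q_{rhl}(a,b)=R_r(a,b)\cdot V^{(C)}_{ST}(b,a)$, and $Q_{lsh}(b,c)=S_s(b,c)\cdot V^{(A)}_{TR}(c,b)$ is the intersection of an auxiliary view with a part of an input relation, and therefore stores at most as many keys as that relation part. Hence $|Q_{hlt}|\leq|T_t|\leq|\db|$, and analogously $|Q_{rhl}|,|Q_{lsh}|=\bigO{|\db|}$.

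Summing these contributions, the total space is
$$\bigO{|\db|+|\db|^{\frac{3}{2}}+|\db|^{1+\min\{\eps,1-\eps\}}+|\db|}.$$
Since $\max_{\eps\in[0,1]}\{1+\min\{\eps,1-\eps\}\}=\frac{3}{2}$, we have $\bigO{|\db|^{1+\min\{\eps,1-\eps\}}}=\bigO{|\db|^{\frac{3}{2}}}$, so the \ivme state takes $\bigO{|\db|^{\frac{3}{2}}}$ space.
\end{proof}
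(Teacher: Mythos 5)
Your proposal is correct and follows essentially the same route as the paper's proof: bound $Q^{L}$ by the Loomis--Whitney limit on the number of triangles, reuse the two-way counting argument for the auxiliary views $V_{RS}$, $V_{ST}$, $V_{TR}$ and their projections, and observe that the narrow views $Q_{hlt}$, $Q_{rhl}$, $Q_{lsh}$ are linear. The only difference is that you spell out inline the skew argument that the paper delegates to the proof of Proposition~\ref{prop:preprocessing_step}.
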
  
\begin{proof}
Let $\state=(\dbeps, \inst{V}, N)$ be a state of $\db$.
Figure~\ref{fig:view_definitions_full_join} summarizes the space complexity of the materialized views in $V$.
The size of $\dbeps$ and the result of $Q^{L}$ is upper-bounded by $N^{\frac{3}{2}}$, the maximum number of triangles in a database of size $N$. 
The space complexity of $V_{RS}$, $V^{(B)}_{RS}$, $V_{ST}$, $V^{(C)}_{ST}$, $V_{TR}$, and $V^{(A)}_{TR}$ is $\bigO{N^{1+\min\{\eps,1-\eps\}}}$, as discussed in the proof of Proposition~\ref{prop:preprocessing_step}.
The views $Q_{hlt}$, $Q_{rhl}$, and $Q_{lsh}$ with 
$r,s,t \in \{h,l\}$ take space linear in the database size.
From the invariant $|\db|=\Theta(N)$ follows the claimed space complexity $\bigO{|\db|^{\frac{3}{2}}}$.
\end{proof}

\subsection{Processing a Single-Tuple Update to the Full Triangle Query}
Figure~\ref{fig:applyUpdate_full_join} shows the procedure for maintaining a current state $\state$ of the full triangle query under an update $\delta R_{r}(a,b)$. 
If the update affects the heavy part $R_h$ of $R$, the procedure maintains the listing form of $Q^{L}$ (line 4) and the factorized form of $Q^{F}_{hlt}$ by propagating $\delta{R_h}$ through the view tree from Figure~\ref{fig:view_definitions_full_join} (lines 5-7). If the update affects the light part $R_l$ of $R$, the procedure maintains $Q^{L}$ (line 9) and propagates 
$\delta{R_l}$ through the view tree for $Q^{F}_{lsh}$ (lines 10-12). Finally, it updates $Q_{rhl}$ (line 13) and the part of $R$ affected by $\delta{R_{r}}$ (line 14).
The views $V_{ST}$ and $V^{(C)}_{ST}$ remain unchanged as they do not refer to $R_h$ or $R_l$.

\begin{proposition}\label{prop:single_step_time_fulljoin}
Given a state $\state$ constructed from a 
database $\db$ for a fixed $\eps\in[0,1]$ to support the maintenance 
of the result of the full triangle query, \ivme maintains $\state$ under 
a single-tuple update to any input 
relation in $\bigO{|\db|^{\max\{\eps,1-\eps\}}}$ time.
\end{proposition}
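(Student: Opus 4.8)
The plan is to bound the running time of \textsc{ApplyUpdateEnum} (Figure~\ref{fig:applyUpdate_full_join}) line by line, mirroring the argument for the count query in Proposition~\ref{prop:single_step_time}, and then take the maximum over all lines. The only facts I need are the two degree bounds carried by every \ivme state: a heavy part contains at most $2N^{1-\eps}$ distinct values of its partitioning variable (each such value has degree at least $\frac{1}{2}N^{\eps}$), and a light part pairs fewer than $\frac{3}{2}N^{\eps}$ tuples with any single value of its partitioning variable. Since $|\db| = \Theta(N)$, it suffices to establish the bound $\bigO{N^{\max\{\eps,1-\eps\}}}$ in terms of the threshold base $N$, and then conclude by the symmetry of the query and of the view set in Figure~\ref{fig:view_definitions_full_join} that updates to $S$ and $T$ behave identically.

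First I would analyze the heavy branch (lines 4--7, executed when the update hits $R_h$). Line 4 maintains the listing view $Q^{L}$ by enumerating the tuples of $\sigma_{A=\deltaA}T_h$ and, for each resulting $C$-value $c$, performing a constant-time lookup of $S_h(\deltaB,c)$; because $T_h$ is heavy on $C$ it has at most $2N^{1-\eps}$ distinct $C$-values, so this costs $\bigO{N^{1-\eps}}$. Lines 5--7 propagate $\delta R_h$ up the view tree for $Q^{F}_{hlt}$: the key observation is that $\delta R_h$ fixes $\deltaA$ and $\deltaB$, so the delta of each intermediate view ($\delta V_{RS}$, then $\delta V^{(B)}_{RS}$, then the update to $Q_{hlt}$) is supported only on the $C$-values paired with $\deltaB$ in $S_l$. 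Since $S_l$ is light on $B$, there are fewer than $\frac{3}{2}N^{\eps}$ such values, and each step does constant work per value, giving $\bigO{N^{\eps}}$ for each of lines 5--7.

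Next I would treat the light branch (lines 9--12) symmetrically. Line 9 updates $Q^{L}$ from the all-light part by iterating over the fewer than $\frac{3}{2}N^{\eps}$ $C$-values paired with $\deltaB$ in $S_l$ and looking up $T_l(c,\deltaA)$, costing $\bigO{N^{\eps}}$. Lines 10--12 propagate $\delta R_l$ up the view tree for $Q^{F}_{lsh}$; here the delta is supported on the $C$-values paired with $\deltaA$ in $T_h$, of which there are at most $2N^{1-\eps}$ because $T_h$ is heavy on $C$, so each of these lines costs $\bigO{N^{1-\eps}}$. The remaining two lines are independent of the branch: line 13 maintains the factorized view $Q_{rhl}$ by a single constant-time lookup in $V^{(C)}_{ST}(\deltaB,\deltaA)$, which after summing out $C$ is precisely the count-query view $V_{ST}$ used in Proposition~\ref{prop:single_step_time}, and line 14 writes $\delta R_r$ into the affected part of $R$ in constant time. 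Taking the maximum of all line costs yields $\bigO{\max\{N^{\eps},N^{1-\eps}\}} = \bigO{N^{\max\{\eps,1-\eps\}}}$, hence $\bigO{|\db|^{\max\{\eps,1-\eps\}}}$.

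The main obstacle, and the one point that genuinely differs from the count case, is justifying that maintaining the \emph{factorized} top-level views never forces materialization of a full intermediate relation: I must verify that at every node of each view tree the propagated delta has support bounded by the relevant heavy/light degree bound, so that the per-value constant work does not accumulate into a linear cost. Concretely, the argument hinges on the fact that a single-tuple update binds both $\deltaA$ and $\deltaB$, collapsing the otherwise two-dimensional delta to a one-dimensional set of surviving $C$-values whose size is governed by whether the matching relation part is heavy or light on $C$. Once this support bound is established at each view-tree node, the line-by-line tally is routine.
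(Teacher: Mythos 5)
Your proposal is correct and follows essentially the same route as the paper's own proof: a line-by-line tally of \textsc{ApplyUpdateEnum} using the two degree bounds ($|\pi_C T_h| \leq 2N^{1-\eps}$ and $|\sigma_{B=\deltaB}S_l| < \frac{3}{2}N^{\eps}$), the observation that the view-tree deltas are supported only on the surviving $C$-values because $\deltaA$ and $\deltaB$ are fixed, and the concluding appeal to $|\db|=\Theta(N)$ and symmetry for updates to $S$ and $T$. No gaps.
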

\begin{proof}
Figure~\ref{fig:applyUpdate_full_join} shows the time complexity of each maintenance statement of the \textsc{ApplyUpdateEnum} procedure, for a given single-tuple update $\delta R_{r}(a,b)$ and a state 
$\state=(\dbeps, \inst{V}, N)$ of $\db$. 
This complexity is determined by the number of $C$-values that need to be iterated over during delta computation. 

We first analyze the case when $\delta{R_{r}}$ affects the heavy part $R_h$ of $R$.
Updating the view $Q^{L}$ (line 4) requires iterating over $C$-values. The number of distinct $C$-values in the heavy part $T_h$ is at most $2N^{1-\eps}$.
Computing $\delta V_{RS}$ (line 5) needs to iterate over at most $\frac{3}{2}N^{\eps}$ $C$-values in $S_l$ for the given $b$.
Propagating $\delta V_{RS}$ through the view tree for $Q^{F}_{hlt}$ shown in Figure~\ref{fig:view_definitions_full_join} (lines 5-7) takes time linear in the size of this delta, that is, $\bigO{N^{\eps}}$. 

We now consider the case when $\delta{R_{r}}$ affects the light part $R_l$ of $R$. Updating $Q^{L}$ (line 9) requires iterating over at most $\frac{3}{2}N^{\eps}$ $C$-values in $S_l$ for the given $b$.
Computing $\delta V_{TR}$ (line 10) touches at most $2N^{1-\eps}$ distinct $C$-values in $T_h$. Propagating this delta through the view tree for $Q^{F}_{lsh}$ (lines 10-12) takes $\bigO{N^{1-\eps}}$ time.   

Finally, updating $Q_{rhl}$ and the part of $R$ affected by $\delta R_{r}$ takes constant time. 
The total execution time of the procedure from Figure~\ref{fig:applyUpdate_full_join} is $\bigO{N^{\max\{\eps,1-\eps\}}}$. 
From the invariant $|\db| = \Theta(N)$ follows the claimed time complexity $\bigO{|\db|^{\max\{\eps,1-\eps\}}}$. The analysis for updates to $S$ and $T$ is similar due to the symmetry of the triangle query and materialized views.
\end{proof}

\subsection{Rebalancing Partitions  for the Full Triangle Query}
The major rebalancing procedure 
for the full triangle query recomputes the materialized views from Figure~\ref{fig:view_definitions_full_join} in 
$\bigO{|\db|^{\frac{3}{2}}}$ time. 
Minor rebalancing is accomplished by 
moving tuples between relation parts,
as in Section \ref{sec:rebalancing}. Hence,
its time complexity is   
$\bigO{|\db|^{\eps + \max\{\eps, 1-\eps\}}}$.

We now prove the main theorem of this section.

\begin{proof}[Proof of Theorem \ref{theo:main_result_fulljoin}]
The preprocessing time $\bigO{|\db|^{\frac{3}{2}}}$ follows from Proposition~\ref{prop:preprocessing_step_fulljoin}, while the space complexity $\bigO{|\db|^{\frac{3}{2}}}$ follows from Proposition~\ref{prop:space_complexity_fulljoin}.
The materialized views used by \ivme (cf. Figure~\ref{fig:view_definitions_full_join}) provide constant-delay enumeration of the full triangle query result. 
Materializing $Q^{L}$ in listing form achieves this goal.
The views $Q^{F}_{hlt}$, $Q^{F}_{rhl}$, and $Q^{F}_{lsh}$ are stored in factorized form but can enumerate their results with constant delay. 
Figure~\ref{fig:view_definitions_full_join} shows the view tree for $Q^{F}_{hlt}$ where the views $V_{RS}$, $V^{(B)}_{RS}$, and $Q_{hlt}$ are constructed and maintained in a bottom-up manner. The root $Q_{hlt}$ contains only the $(a,c)$ pairs that would appear in the result of $Q^{F}_{hlt}$, and for each $(a,c)$ pair, $V_{RS}$ provides only the $B$-values that exist in both $R_h$ and $S_l$. Thus, iterating over $Q_{hlt}$ and $V_{RS}$ can enumerate the result of $Q^{F}_{hlt}$ with constant delay. Similar analysis applies to $Q^{F}_{rhl}$ and $Q^{F}_{lsh}$.

As in the proof of Theorem~\ref{theo:main_result} in Section~\ref{sec:main_proof}, 
the time $\bigO{|\db|^{\frac{3}{2}}}$ to process major rebalancing is amortized over $\Omega(|\db|)$ updates.
Likewise,  the time to do minor rebalancing  is amortized over 
$\Omega(|\db|^{\eps})$ updates. Hence, the amortized 
major and minor rebalancing times are   
$\bigO{|\db|^{\frac{1}{2}}}$
and 
$\bigO{|\db|^{\max\{\eps, 1-\eps\}}}$, respectively. 
Since by Proposition \ref{prop:single_step_time_fulljoin}, 
the time to process a single-tuple update is 
$\bigO{|\db|^{\max\{\eps, 1-\eps\}}}$, 
the overall amortized update  is
$\bigO{|\db|^{\max\{\eps, 1-\eps\}}}$.
\end{proof}

\section{Loomis-Whitney Count Queries}
\label{sec:loomis-whitney}
In this section, we consider the incremental maintenance of Loomis-Whitney count queries, which 
generalize  the triangle count query. 
A Loomis-Whitney query of degree $n \geq 3$ is of the form
\begin{align*}
Q() = \sum\limits_{a_1, \ldots, a_n}  
R_1(a_1,a_2,...,a_{n-1}) \ztimes R_2(a_2,a_3,...,a_n) \ztimes ... \ztimes 
R_n(a_n, a_1,...,a_{n-2}),
\end{align*}
where each relation $R_i$ is over the schema $\inst{A}_i = (A_j)_{j \in \{1,\ldots, n\} \setminus \{k\}}$ with $i = (k\bmod{n})+1$.
The triangle count query is a Loomis-Whitney query of degree $3$.

It turns out that any Loomis-Whitney query can be maintained with the same 
complexities as the triangle count query.
 
\begin{theorem}\label{theo:main_result_LW}
Given a database $\db$ and $\eps\in[0,1]$, \ivme maintains any Loomis-Whitney count query under single-tuple updates to $\db$ with $\bigO{|\inst{D}|^{1 + \max\{\min\{\eps, 1-\eps\}, \frac{1}{n-1}\}}}$ preprocessing time, 
$\bigO{|\db|^{\max\{\eps,1-\eps\}}}$ amortized update time, constant answer time, and $\bigO{|\db|^{1 + \min\{\eps,1-\eps\}}}$ space.
\end{theorem}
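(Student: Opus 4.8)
The plan is to transfer the proof of Theorem~\ref{theo:main_result} almost verbatim, using that for the purpose of a single update the cyclic Loomis--Whitney query behaves exactly like the triangle. I would partition each relation $R_i$ on its variable $A_i$ with threshold $N^{\eps}$, so that (indices taken modulo $n$) $R_{i-1}$ is the unique relation partitioned on the variable $A_{i-1}$ that $R_i$ omits, and decompose $Q$ into the $2^n$ skew-aware views $Q_{r_1\cdots r_n}() = \sum_{a_1,\dots,a_n}\prod_{k=1}^{n}(R_k)_{r_k}$ with $r_1,\dots,r_n \in \{h,l\}$. The decisive observation is that a single-tuple update $\delta R_i$ fixes every variable except $A_{i-1}$; hence each delta $\delta Q_{r_1\cdots r_n}$ is a sum over the single variable $A_{i-1}$ in which all other factors are constant-time lookups, and its cost is just the number of $A_{i-1}$-values iterated over. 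For $n=3$ this reproduces the triangle analysis.

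For each delta I would iterate over the cheapest source of $A_{i-1}$-values. If the factor of $R_{i-1}$ is heavy it has at most $2N^{1-\eps}$ distinct $A_{i-1}$-values, giving cost $\bigO{N^{1-\eps}}$; if some factor $(R_j)_l$ with $j\notin\{i,i-1\}$ is light, then its partition variable $A_j$ is fixed by the update, so it offers fewer than $\frac{3}{2}N^{\eps}$ matching tuples and thus $A_{i-1}$-values, giving cost $\bigO{N^{\eps}}$. The only combination admitting no sublinear iteration is $(R_{i-1})_l$ light together with $(R_j)_h$ heavy for all $j\notin\{i,i-1\}$, mirroring $Q_{hhl}$ of the triangle. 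For it I would materialize one auxiliary view per relation, $V_i(\inst{a}) = \sum_{a_{i-1}} (R_{i-1})_l \ztimes \prod_{j\notin\{i,i-1\}}(R_j)_h$ over the schema $\inst{A}_i$ of $R_i$, so that the bad delta becomes the lookup $\delta R_i(\inst{a}) \ztimes V_i(\inst{a})$.

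The step I expect to be the main obstacle is bounding $|V_i| = \bigO{N^{1+\min\{\eps,1-\eps\}}}$, because $V_i$ is now the projection of a join of $n-1$ relations onto $\inst{A}_i = \{A_1,\dots,A_n\}\setminus\{A_{i-1}\}$. Writing an output tuple as $(a_{i-2},\inst{p})$ with $\inst{p}$ over the variables other than $A_{i-1},A_{i-2}$, I would prove two complementary bounds. The first, $\bigO{N^{2-\eps}}$, iterates over the at most $N$ tuples of $(R_{i-1})_l$ and completes the single missing variable $A_{i-2}$ through the at most $2N^{1-\eps}$ distinct $A_{i-2}$-values of the heavy factor $(R_{i-2})_h$. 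The second, $\bigO{N^{1+\eps}}$, observes that each output tuple has a witness $a_{i-1}$ for which $(a_{i-2},a_{i-1})$ occurs in $(R_{i-2})_h$ (at most $N$ such pairs), and that for each fixed $a_{i-1}$ the factor $(R_{i-1})_l$ being light on $A_{i-1}$ supplies fewer than $\frac{3}{2}N^{\eps}$ completions $\inst{p}$; summing over the pairs gives $\frac{3}{2}N^{1+\eps}$. The minimum is the claimed size, and the same two strategies compute $V_i$ within the same time bound. The delicate part is the cyclic index arithmetic: one must verify that $R_{i-2}$ is a heavy factor containing both $A_{i-2}$ and $A_{i-1}$, which holds for all $n\geq 3$.

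The remaining pieces transfer directly from Section~\ref{sec:main_proof}. Preprocessing computes the scalar $Q$ by a worst-case optimal join algorithm~\cite{Ngo:SIGREC:2013} in $\bigO{|\db|^{1+\frac{1}{n-1}}}$ time and the $n$ auxiliary views in $\bigO{|\db|^{1+\min\{\eps,1-\eps\}}}$ time, yielding the stated preprocessing bound. An update to $(R_i)_h$ maintains only the $n-2$ views $V_k$ in which $R_i$ is a heavy factor, each in $\bigO{N^{\eps}}$ time since the light factor $(R_{k-1})_l$ is light on the variable $A_{k-1}$ that the update fixes; an update to $(R_i)_l$ maintains the single view $V_{i+1}$ in $\bigO{N^{1-\eps}}$ time by iterating the $A_{i-1}$-values of its heavy factor $(R_{i-1})_h$. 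As there are $\bigO{1}$ views for fixed $n$, one update costs $\bigO{N^{\max\{\eps,1-\eps\}}}$. Finally, rebalancing again only moves tuples between parts, with at most $\bigO{N^{\eps}}$ moves between consecutive rebalancings, so the query-agnostic amortization of Section~\ref{sec:main_proof} applies unchanged and gives amortized update time $\bigO{|\db|^{\max\{\eps,1-\eps\}}}$, while the materialized scalar $Q$ gives constant answer time.
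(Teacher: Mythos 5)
Your proposal is correct and follows the same overall architecture as the paper: partition each $R_i$ on $A_i$, decompose into $2^n$ skew-aware views, observe that an update to $R_i$ leaves only $A_{i-1}$ free, identify the unique hard delta (light $R_{i-1}$ with all remaining non-updated parts heavy), precompute one auxiliary view per relation to turn it into a lookup, and reuse the query-agnostic amortization. The one genuine difference is the auxiliary view itself for $n\ge 4$: you set $V_i=\sum_{a_{i-1}}R_{i-1}^{l}\ztimes\prod_{j\notin\{i,i-1\}}R_j^{h}$ (one light, $n-2$ heavy factors), whereas the paper's displayed general formula is $\sum_{a_{i-1}}R_{i+1}^{h}\ztimes\prod_{j\notin\{i,i+1\}}R_j^{l}$ (one heavy, $n-2$ light); these coincide only at $n=3$. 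Your variant is the one the hard case $u_{i-1}=l$, $u_j=h$ for $j\notin\{i,i-1\}$ actually requires, and it is what the paper's own Case~2 silently instantiates, so your choice makes the argument self-consistent where the paper's displayed definition is not. The cost is that the size and maintenance bounds for $V_i$ must be rederived from scratch, and your derivation is sound: the two complementary bounds $\bigO{N^{2-\eps}}$ (extend each of the at most $N$ tuples of $R_{i-1}^{l}$ by one of the at most $2N^{1-\eps}$ distinct $A_{i-2}$-values of $R_{i-2}^{h}$) and $\bigO{N^{1+\eps}}$ (at most $N$ pairs $(a_{i-2},a_{i-1})$ in $R_{i-2}^{h}$, each with fewer than $\frac{3}{2}N^{\eps}$ completions through $R_{i-1}^{l}$) rely only on $R_{i-2}$ being a heavy factor containing both $A_{i-2}$ and $A_{i-1}$, which you correctly verify for all $n\ge 3$; likewise your accounting of which views an update touches ($n-2$ views for a heavy-part update in $\bigO{N^{\eps}}$ each, one view for a light-part update in $\bigO{N^{1-\eps}}$) is consistent with your view definition and stays within the claimed $\bigO{N^{\max\{\eps,1-\eps\}}}$.
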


\noindent
Given that the answer time is constant, 
the update time stated in Theorem~\ref{theo:main_result_LW}
is worst-case optimal, conditioned on the \OMv conjecture.

\begin{proposition}\label{prop:lower_bound_LW}
For any $\gamma > 0$, Loomis-Whitney count query $Q$, and database $\db$,
there is no algorithm that incrementally maintains the result of $Q$ under single-tuple updates to $\db$ with arbitrary preprocessing time, $\bigO{|\db|^{\frac{1}{2} - \gamma}}$ amortized update time, and $\bigO{|\db|^{1 - \gamma}}$ answer time, unless the \OMv conjecture fails.
\end{proposition}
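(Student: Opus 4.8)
The plan is to reduce the \OuMv problem to maintaining the Loomis-Whitney count, generalizing the triangle reduction used in the proof of Proposition~\ref{prop:lower_bound_triangle_count} (Figure~\ref{fig:lower_bound_reduction}). Fix the query degree $n \geq 3$; to avoid clashing with this $n$, I write $N$ for the dimension of the \OuMv instance $(\vecnormal{M}, (\vecnormal{u}_1,\vecnormal{v}_1), \ldots, (\vecnormal{u}_N,\vecnormal{v}_N))$. For the sake of contradiction, assume an algorithm that maintains $Q$ with arbitrary preprocessing, amortized update time $\bigO{|\db|^{\frac{1}{2}-\gamma}}$, and answer time $\bigO{|\db|^{1-\gamma}}$. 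The key idea is to embed a triangle into the Loomis-Whitney hypergraph: designate two of the $n$ variables, say $A_2$ and $A_3$, as \emph{active} and freeze every other variable $A_k$ ($k \in \{1,4,5,\ldots,n\}$) to a single constant in its domain. Each $A_k$ is omitted by exactly one relation and hence occurs in $n-1 \geq 2$ relations; storing in those relations only tuples whose $A_k$-coordinate equals the frozen constant pins $A_k$ to that value in every surviving join term, so the $n$-way join collapses to a sum over $A_2,A_3 \in \{1,\ldots,N\}$.

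First I would set up the encoding. Among the $n$ relations, exactly one, call it $R_p$, omits $A_2$ and exactly one, $R_q$, omits $A_3$, and $p \neq q$. After freezing, $R_p$ is a function of $A_3$ alone and $R_q$ of $A_2$ alone, so I use $R_p$ to store the vector $\vecnormal{v}_r$ (indexed by $A_3$) and $R_q$ to store $\vecnormal{u}_r$ (indexed by $A_2$), with the remaining coordinates set to the frozen constants. Each of the remaining $n-2$ relations contains both $A_2$ and $A_3$; I pick one of them, $R_t$, to hold the matrix via $R_t(\ldots,i,\ldots,j,\ldots)=\vecnormal{M}(i,j)$, and set the other $n-3$ relations to the all-ones relation mapping every pair $(i,j)\in\{1,\ldots,N\}^2$ (with frozen coordinates) to multiplicity $1$. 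With these assignments the frozen sums collapse and
\[
Q() \;=\; \sum_{i,j \in \{1,\ldots,N\}} \vecnormal{u}_r(i)\,\vecnormal{M}(i,j)\,\vecnormal{v}_r(j) \;=\; \vecnormal{u}_r^{\text{T}}\vecnormal{M}\vecnormal{v}_r,
\]
which is nonzero if and only if the \OuMv output bit of round $r$ equals $1$. For $n=3$ there are no all-ones relations and the construction specializes to the triangle reduction.

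Finally I would account for sizes and running time, mirroring the proof of Proposition~\ref{prop:lower_bound_triangle_count}. The matrix relation and each of the $n-3$ all-ones relations have at most $N^2$ tuples, and the two vector relations at most $N$ tuples each, so $|\db| = \bigO{n N^2} = \bigO{N^2}$ because the degree $n$ is a fixed constant. Building $\vecnormal{M}$ and the all-ones relations uses $\bigO{N^2}$ single-tuple updates once; each round $r$ rewrites the two vector relations with $\bigO{N}$ updates and issues one answer query. Using $|\db|^{\frac{1}{2}-\gamma} = \bigO{N^{1-2\gamma}}$ and $|\db|^{1-\gamma} = \bigO{N^{2-2\gamma}}$, the total cost is $\bigO{N^2\cdot N^{1-2\gamma}}$ for the $\bigO{N^2}$ setup updates plus $\bigO{N\,(N\cdot N^{1-2\gamma} + N^{2-2\gamma})}$ over the $N$ rounds, i.e.\ $\bigO{N^{3-2\gamma}}$ overall, contradicting the \OuMv conjecture and hence the \OMv conjecture. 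The main obstacle is not the timing arithmetic but the combinatorial bookkeeping: verifying that for every degree $n$ there is indeed exactly one relation omitting each active variable (so the vector/matrix/all-ones assignment is well defined), and confirming that freezing the other $n-2$ variables together with the all-ones relations leaves precisely the term $\vecnormal{u}_r^{\text{T}}\vecnormal{M}\vecnormal{v}_r$ with no spurious contributions from unintended variable combinations.
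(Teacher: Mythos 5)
Your proposal is correct and follows essentially the same reduction as the paper: freeze all but two variables of the Loomis-Whitney query to a single constant, store the \OuMv vectors in the two relations that each omit one of the active variables, and encode the matrix in the relations containing both, so that the query result is nonzero exactly when $\vecnormal{u}_r^{\text{T}}\vecnormal{M}\vecnormal{v}_r = 1$. The only cosmetic difference is that the paper replicates the Boolean matrix in all $n-2$ relations containing both active variables, whereas you place it in one and fill the other $n-3$ with all-ones relations; since the matrix is Boolean and all multiplicities are nonnegative, this yields the same nonzero test, the same $\bigO{N^2}$ database size, and the same $\bigO{N^{3-2\gamma}}$ total time.
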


In the rest of Appendix \ref{sec:loomis-whitney}
we prove Theorem \ref{theo:main_result_LW}
and Proposition \ref{prop:lower_bound_LW}.
The proofs are similar to the triangle count case. 

\subsection{Adaptive Maintenance for Loomis-Whitney Queries}\label{sec:maintenance_LW}
\nop{
\begin{figure}[t]
  \begin{center}
    \renewcommand{\arraystretch}{1.2}  
    \begin{tabular}{@{\hskip 0.05in}l@{\hskip 0.4in}l@{\hskip 0.05in}}
      \toprule
      Materialized View Definition & Space Complexity \\    
      \midrule
      $Q() = \sum\limits_{u_1, \ldots, u_n\in \{h,l\}}\; \sum\limits_{a_1, \ldots, a_n} R_1^{u_1}(a_1,a_2,\ldots,a_{n-1}) \ztimes \ldots \ztimes R_n^{u_n}(a_n, a_1,\ldots,a_{n-2})$  & 
      $\bigO{1}$ \\
   $V_i(a_i, \ldots, a_n, a_1, \ldots, a_{i-2}) = \sum_{a_{i-1}} \left( R_{i+1}^h \ztimes R_{i+2}^l \ztimes \ldots \ztimes R_n^l \ztimes R_1^l \ztimes \ldots \ztimes R_{i-1}^l \right)$ & \\
      \bottomrule    
    \end{tabular}
  \end{center}
  \caption{The definitions and space complexities of the 
  views in $\inst{V} = \{ Q, V_{RS}, V_{ST}, V_{TR} \}$ 
  maintained by \ivme with refined partitions as part of 
  an \ivme   state of a database $\inst{D}$ for a fixed $\eps \in [0,1]$.} 
  \label{fig:view_definitions_loomis_whitney}
\end{figure}
}
Consider a database $\db=\{R_1, \ldots, R_n\}$, a Loomis-Whitney count query $Q$ of some degree $n \geq 3$, and a fixed $\eps\in[0,1]$. 
The \ivme variant we introduce here is very similar to the 
original \ivme algorithm maintaining the triangle count.   
It partitions each relation $R_i$ on $A_i$
into a light part $R_i^l$ and a heavy part $R_i^h$.
Then, it decomposes $Q$ into skew-aware views expressed over the relation parts: 
\begin{align*}
Q_{u_1 \ldots u_n}() =  \sum\limits_{a_1, \ldots, a_n} R_1^{u_1}(a_1,a_2,\ldots,a_{n-1}) \ztimes R_2^{u_2}(a_2,a_3,\ldots,a_n) \ztimes \ldots \ztimes R_n^{u_n}(a_n, a_1,\ldots,a_{n-2})
\end{align*} 
for $u_1,\ldots, u_n \in \{h,l\}$. The query $Q$ is the sum of these views: 
$Q() = \textstyle\sum_{u_1,\ldots, u_n \in \{h,l\}}Q_{u_1 \ldots u_n}()$. 

The \ivme variant materializes, besides $Q$, auxiliary views  
$V_i$ with $i \in \{1, \ldots ,n\}$ that join the heavy relation part $R_{i+1}^h$ with the light parts of all other relations but $R_{i}$. For convenience, we 
skip the relation schemas in the following expression:
\begin{align*}
    V_i(a_i, \ldots, a_n, a_1, \ldots, a_{i-2}) = \sum_{a_{i-1}} \left( R_{i+1}^h \ztimes R_{i+2}^l \ztimes \ldots \ztimes R_n^l \ztimes R_1^l \ztimes \ldots \ztimes R_{i-1}^l \right) 
\end{align*}
for $i \in \{1, \ldots ,n\}$.  In the above view definition, $i -1$ stands for $n$ in case of $i = 1$ and $i +1$ stands for $1$ in case of $i = n$.

An \ivme  state $(\eps, N, \dbeps, \inst{V})$ of $\inst{D}$ is defined as usual: 
$N$ is the threshold base with $\floor{\frac{1}{4}N} \leq |\db| < N$, 
$\dbeps$ contains for each $R_i$
its heavy-light partitions on variable $A_i$ with threshold $N^{\eps}$,
and $\inst{V}$ consists of $Q$ and the auxiliary views $V_1, \ldots,  V_n$.

\begin{proof}[Proof of Theorem \ref{theo:main_result_LW}]
Let $Q$ be a Loomis-Whitney query of some degree $n \geq 3$.
The analysis of the preprocessing time and 
space complexity follows the same reasoning
 as in the proofs of Propositions 
 \ref{prop:preprocessing_step} and 
 \ref{prop:space_complexity}.
 
\paragraph{Preprocessing Time and Space Complexity.}
The FAQ-width of $Q$ is $\frac{n}{n-1}$, hence, $Q$ can be computed in time $\bigO{N^{\frac{n}{n-1}}}$.
We analyze the time to compute an auxiliary view 
$V_i$. 
Each tuple in  $R_{i+1}^h$ fixes the data values of 
all but the variable $A_{i-1}$. Each tuple that results from the 
join of $R_{i+1}^h$ and $R_{i+2}^l$ fixes the values
of all variables in the query. 
Hence,  $V_i$ can be computed 
by iterating first over all tuples in $R_{i+1}^h$, and for each 
such tuple $\inst{t}$, iterating over all matching tuples $\inst{t}'$ in 
$R_{i+2}^l$. For each matching pair $\inst{t}$ and $\inst{t}'$,
we additionally look up the tuples in the other relations that agree 
with $\inst{t}$ and $\inst{t}'$ on the common variables. 
Since $R_{i+2}^l$ is light on $A_{i+2}$ and the
data value of this variable is fixed by each tuple in 
$R_{i+1}^h$, the computation time is 
$\bigO{|R_{i+1}^h| \ztimes N^{\eps}}$.
Alternatively, we can iterate over the tuples in 
$R_{i+2}^l$ and search for matching tuples in $R_{i+1}^h$.
Since the only variable whose value is not fixed by tuples from 
$R_{i+2}^l$ is $A_{i+1}$ and $R_{i+1}^h$ is heavy on that variable, 
the computation time in this case is 
$\bigO{|R_{i+2}^l| \ztimes N^{1-\eps}}$.
Hence, the overall computation time is 
$\bigO{N^{1 + \max\{\min\{\eps, 1-\eps\}, \frac{1}{n-1}\}}}$, which, due to 
$N = \Theta(|\inst{D}|)$, is 
$\bigO{|\inst{D}|^{1 + \max\{\min\{\eps, 1-\eps\}, \frac{1}{n-1}\}}}$.
The argumentation for the space complexity follows a similar reasoning. 

\nop{The space complexity of each view $V_i$ can be computed following the same reasoning an in Proposition~\ref{prop:space_complexity}. The maximum size of $V_i$ on $\dbeps$ is determined by the join size for any pair of relation parts. For instance, 
\begin{align*}
|R_{i+1}^h \ztimes R_{i+2}^l| \leq \min\{ |R_{i+1}^h| \ztimes \frac{3}{2}N^{\eps}, |R_{i+2}^l| \ztimes 2N^{1-\eps} \} < \min\{ N \ztimes \frac{3}{2}N^{\eps}, N \ztimes 2N^{1-\eps} \}.
\end{align*}
The size of the join result is thus $\bigO{N^{1+\min\{\eps, 1-\eps\}}}$. From $|\db|=\Theta(N)$ follows that $V_{i}$ on $\dbeps$ takes $\bigO{|\db|^{1+\min\{\eps,1-\eps\}}}$ space. Overall, the state $\state$ of $\db$ takes $\bigO{|\db|^{1+\min\{\eps,1-\eps\}}}$ space.
}

\paragraph{Maintaining the Result of $Q$ Under Single-Tuple Updates.}
Given an \ivme state 
$\state$
of database $\inst{D}$, we  
 analyze the computation of $\delta Q$ under a single-tuple update
  $\delta R_1^{r}= \{(\deltaA_1, \ldots , \deltaA_{n-1}) \mapsto \p\}$ 
  to relation $R_1$.
  The strategies for updates to the other relations are analogous. 
The update affects either the heavy or the light part of $R$.
It fixes the values of all variables but variable $A_n$ in the deltas of the skew-aware views:
\begin{align*}
\delta Q_{r u_2 \ldots u_n}() =  
\delta R_1^{r}(\deltaA_1,\deltaA_2,\ldots,\deltaA_{n-1}) \ztimes 
\sum_{a_n} R_2^{u_2}(\deltaA_2,\deltaA_3,\ldots,a_n) \ztimes \ldots
\ztimes R_n^{u_n}(a_n, \deltaA_1,\ldots, \deltaA_{n-2})
\end{align*} 
where $u_2,\ldots, u_n \in \{h,l\}$.
We distinguish four cases when computing a delta $\delta Q_{r u_2 \ldots u_n}$:

\begin{enumerate}
    \item Case where $u_2 = \ldots  = u_n = h$.\;
    The number of distinct $A_n$-values in $R_n^h$ is upper-bounded by $\frac{N}{\frac{1}{2}N^{\eps}} = 2N^{1-\eps}$. 
    Thus, computing $\delta Q_{r h\ldots h}$ takes $\bigO{N^{1-\eps}}$ time.

    \item Case where $u_2 = \ldots = u_{n-1} = h$ and $u_n = l$.\;
    To compute this delta, we use the following auxiliary view:
    \begin{align*}
        \hspace{-0.55cm}V_1(a_1, \ldots, a_{n-1}) = 
        \sum_{a_n} R_2^{h}(a_2,a_3,\ldots,a_n) \ztimes \ldots \ztimes R_n^{l}(a_n, a_1,\ldots,a_{n-2}).
    \end{align*}
    Then, computing $\delta Q_{rhh \ldots l}() = \delta R_1^{r}(\deltaA_1,\deltaA_2,\ldots,\deltaA_{n-1}) \ztimes  V_1(\deltaA_1, \ldots, \deltaA_{n-1})$ takes constant time. 

    \item Case where $\exists j\in [2,n) : u_j = l$ and $u_n = h$. 
    Computing the delta requires either iterating over at most $\frac{3}{2}N^{\eps}$ $A_n$-values in $R_j^l$ for the given $(\deltaA_1, \ldots, \deltaA_{n-1})$ or iterating over at most $2 N^{1-\eps}$ distinct $A_n$-values in $R_n^h$. 
    The time complexity of computing such deltas is $\bigO{\min\{\frac{3}{2}N^{\eps}, 2N^{1-\eps}\}} = \bigO{N^{\min\{\eps, 1-\eps\}}}$.

    \item Case where $\exists j\in [2,n) : u_j = l$ and $u_n = l$. 
    Computing such deltas requires iterating over at most $\frac{3}{2}N^{\eps}$ $A_n$-values in $R_j^l$ for the given $(\deltaA_1, \ldots, \deltaA_{n-1})$ and looking up in the other relations for each $(\deltaA_1, \ldots,\deltaA_{n-1}, a_{n})$. The time complexity of computing these deltas is $\bigO{N^{\eps}}$.    
\end{enumerate}

The overall time to compute the results of delta skew-aware views is 
$\bigO{N^{\max\{\eps, 1-\eps\}}} = \bigO{|\inst{D}|^{\max\{\eps, 1-\eps\}}}$.

\paragraph*{Maintaining Auxiliary Views Under Single-Tuple Updates.}
We analyze the time to maintain the views $V_2, \ldots, V_n$ under an update 
$\delta R_1^{r}= \{(\deltaA_1, \ldots , \deltaA_{n-1}) \mapsto \p\}$ to the heavy or light part of $R_1$. The analysis for updates to the other relations is similar.

We distinguish two cases when computing the deltas of these views under the 
update $\delta R_1^{r}$:

\begin{enumerate}
    \item Case where $r = h$. 
    The only view using $R_1^h$ is $V_{n}$. Computing $\delta{V_n}$ means iterating over at most $\frac{3}{2}N^{\eps}$ $A_n$-values of any of the relations 
    $\{ R_i^l \}_{i \in \{3,n\}}$, for the given $(\deltaA_1, \ldots, \deltaA_{n-1})$ and looking up in the remaining relations for each $(\deltaA_1, \ldots, \deltaA_{n-1}, a_{n})$. This delta computation requires  $\bigO{N^{\eps}}$ time.

    \item Case where $r = l$.
    The update affects the views $V_2, \ldots, V_{n-1}$, which reference 
    $R_1^l$. Let $V_i$ be a view affected by the update. 
    In case the degree of $Q$ is at least $n = 4$, we can 
    compute $\delta V_i$ by iterating over at most 
    $\frac{3}{2}N^{\eps}$ $A_n$-values $a_n$ of any of the relations $R_j^l$ with 
    $j \in \{2,n-1\} \setminus \{i\}$
    and looking up each tuple 
    $(\deltaA_1, \ldots, \deltaA_{n-1}, a_{n})$ in the remaining relations. 
    This delta computation needs $\bigO{N^{\eps}}$ time. 
    In case  the degree of $Q$ is $3$, the materilized views as well as their 
    delta computation is the same as for the  
    triangle count query. 
    Hence, in this case, delta computation needs 
    $\bigO{|\inst{D}|^{\max\{\eps, 1-\eps\}}}$ time.  
\end{enumerate}

It follows that the preprocessing time as well as the  
space complexity are $\bigO{|\inst{D}|^{1 + \min\{\eps, 1-\eps\}}}$.
The time to process a single-tuple update is
$\bigO{|\inst{D}|^{\max\{\eps, 1-\eps\}}}$.
Since the result of $Q$ is materialized, the answer time is
constant.  Rebalancing procedures  and amortization of
rebalancing costs is similar to the triangle count case   
in Section~\ref{sec:rebalancing}. Thus, the overall 
amortized update time is $\bigO{|\inst{D}|^{\max\{\eps, 1-\eps\}}}$. 
\end{proof}

\subsection{Worst-Case Optimality of \ivme for Loomis-Whitney Queries}\label{sec:lowerbound_loomis_whitney}
We show that the amortized update time stated in 
Theorem \ref{theo:main_result_LW} is worst-case optimal, conditioned 
on the \OMv conjecture.
The proof is a slight extension of the proof of 
Proposition  \ref{prop:lower_bound_triangle_count}.
in Appendix~\ref{sec:lowerbound}.
\nop{
To prove this, we reduce the \OuMv problem to maintaining 
Loomis-Whitney count queries. 
The hardness of the \OuMv problem relies on the hardness 
of the \OMv
The proof is a simple extension  
   
The lower bound proof is a slight extension of the lower bound 
proof for the triangle count case.

We state a conditional lower bound for incrementally maintaining a Loomis-Whitney count query of any degree $n\geq3$ under updates to the underlying database. The lower bound is conditioned on the \OMv conjecture, and the proof is similar to that from Appendix~\ref{sec:lowerbound}.
}

\begin{proof}[Proof of Proposition \ref{prop:lower_bound_LW}]
Let $Q$  be Loomis-Whitney count query of some degree $d \geq 3$
and assume 
that there is an algorithm $\mathcal{A}$ with amortized 
update time $\bigO{|\inst{D}|^{\frac{1}{2}-\gamma}}$ and answer time $\bigO{|\inst{D}|^{1-\gamma}}$  that maintains the result of $Q$ under single-tuple updates.
We show that $\mathcal{A}$ can be used to design an algorithm $\mathcal{B}$ that solves the $\OuMv$ problem in subcubic time, which contradicts the \OuMv conjecture (Conjecture \ref{conj:OuMv}). Since the latter 
conjecture relies on the \OMv conjecture 
(Conjecture \ref{conj:omv}), this also contracts the
  \OMv conjecture.

Let $(\vecnormal{M}, (\vecnormal{u}_1,\vecnormal{v}_1), \ldots ,(\vecnormal{u}_n,\vecnormal{v}_n))$ be an input to the \OuMv problem.
In the proof of Proposition \ref{prop:lower_bound_triangle_count}
in Appendix~\ref{sec:lowerbound},
we used the relation $S$ to encode matrix $\vecnormal{M}$ and 
the relations $R$
and $T$ to encode the vectors  
$\vecnormal{u}_i$ and $\vecnormal{v}_i$, respectively. Here, the matrix 
$\vecnormal{M}$ is encoded by the relations $R_3, \ldots R_d$
and the vectors $\vecnormal{u}_i$ and $\vecnormal{v}_i$
are encoded by the relations $R_1$ and $R_2$, respectively. 
We denote the entry of $\vecnormal{M}$ in row $i$ and column $j$ by $\vecnormal{M}(i,j)$ and the $i$-th component of $\vecnormal{v}$ by $\vecnormal{v}(i)$. 
Let $a$ be some constant.
Algorithm $\mathcal{B}$ starts from an empty database $\db=\{R_1, \ldots, R_d\}$.
Then, it executes at most $n^2$ updates to each of the relations $R_3,\ldots,R_d$ such that
$R_k = \{\, (A_k:a,\ldots, A_{d-1}:a, A_d:i, A_1:j, A_2:a, A_2:a, \ldots)
\mapsto \vecnormal{M}(i,j) \,\mid\, i,j \in \{1,\ldots, n\} \,\}$, for $3\leq k\leq d$. 
That is, for each $i,j \in \{1,\ldots, n\}$ and each relation $R_k$, 
the algorithm inserts a tuple $\vecnormal{t}$ with  multiplicity $M(i,j)$ such that 
$\vecnormal{t}[A_d]=i$, $\vecnormal{t}[A_1]=j$, and all other tuple values are fixed to $a$.
In each round $r \in \{1, \ldots , n\}$, it executes at most $2n$ updates to the relations $R_1$ and $R_2$ such that
$R_1 = \{\, (A_1:i,A_2:a,\ldots,A_{d-1}:a) \mapsto \vecnormal{v}_r(i) \,\mid\, i \in \{1,\ldots, n\} \,\}$ and
$R_2 = \{\, (A_2:a,\ldots,A_{d-1}:a,A_d:i) \mapsto \vecnormal{u}_r(i) \,\mid\, i \in \{1,\ldots, n\} \,\}$.
By construction, $\vecnormal{u}_r^{\text{T}}\vecnormal{M}\vecnormal{v}_r = 1$ if and only if there exist 
$i,j\in\{1,\ldots, n\}$ such that $\vecnormal{u}_r(i) = 1$, $\vecnormal{M}(i,j)=1$, and $\vecnormal{v}_r(j) = 1$, which is equivalent to 
$R_1(A_1:j,A_2:a,\ldots,A_{d-1}:a) \ztimes 
R_2(A_2:a,\ldots,A_{d-1}:a,A_d:i) \ztimes 
R_3(A_3:a,\ldots, A_{d-1}:a, A_{d}:i, A_{1}:j) \ztimes \ldots \ztimes
R_d(A_d:i, A_1:j, A_2:a, \ldots, A_{d-2}:a)  = 1$. Thus, the algorithm outputs $1$ at the end of round $r$ if and only if the result of $Q$ after round $r$ is nonzero. 

The following cost analysis is similar to the analysis in the 
proof of Proposition \ref{prop:lower_bound_triangle_count}.
The construction of the relations $R_3,\ldots,R_d$ from $\vecnormal{M}$ requires at most $n^2$ updates per relation. 
Given that the amortized time for each update is $\bigO{|\db|^{\frac{1}{2}-\gamma}}$
and the database size $|\db|$ is $\bigO{n^2}$, this phase takes overall $\bigO{n^2 \ztimes n^{1-2\gamma}} = \bigO{n^{3-2\gamma}}$ time. 
In each round, the algorithm performs at most $2n$ updates and outputs the result in $\bigO{|\inst{D}|^{1-\gamma}}$ time. The overall execution time is $\bigO{n^{2-2\gamma}}$ per round and $\bigO{n^{3-2 \gamma}}$ for $n$ rounds. Thus, algorithm $\mathcal{B}$ needs $\bigO{n^{3-2 \gamma}}$ time to solve the \OuMv problem, which contradicts the \OuMv conjecture  and, consequently, the \OMv conjecture.
\end{proof}
\nop{
Theorem~\ref{theo:main_result_LW} and Proposition~\ref{prop:lower_bound_LW} imply that for $\eps = \frac{1}{2}$, \ivme incrementally maintains a Loomis-Whitney count query under single-tuple updates to a database $\db$ with optimal amortized update time $\bigO{|\db|^{\frac{1}{2}}}$ and constant answer time, unless the \OMv conjecture fails.
}

\section{4-Path Count Query}
\label{sec:4_path}
We consider the problem of incrementally maintaining the result of the following
4-path count query 
\begin{align*}
Q() = \sum\limits_{a,b,c} R(a)\ztimes S(a,b) \ztimes T(b,c) \ztimes U(c)
\end{align*}
under single-tuple updates to the relations $R$, $S$, $T$, and $U$
with schemas $(A)$, $(A,B)$, $(B,C)$, and $(C)$, respectively.
\ivme maintains the 4-path count with the same 
complexities as the triangle count.  
 
\begin{theorem}\label{theo:main_result_path4}
Given a database $\db$ and $\eps \in [0,1]$, 
\ivme  incrementally maintains the 4-path count
 under single-tuple updates to $\db$ with 
$\bigO{|\inst{D}|^{1 + \min\{\eps,1-\eps\}}}$ preprocessing time, $\bigO{|\inst{D}|^{\max\{\eps,1-\eps\}}}$ amortized update time, 
constant answer time, and 
$\bigO{|\inst{D}|^{1 + \min\{\eps, 1-\eps\}}}$ space.
\end{theorem}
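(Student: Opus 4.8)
The plan is to transcribe the development of Sections~\ref{sec:strategy}--\ref{sec:rebalancing} to the 4-path query, exploiting that, as noted in the conclusion, the amortized rebalancing argument of Section~\ref{sec:main_proof} is agnostic to the query being maintained. The first step is to fix the partitioning. The only updates whose na\"ive delta is linear are those to the unary relations $R$ and $U$: an update $\delta R=\{\deltaA\mapsto\p\}$ forces a sum over all $B$-values paired with $\deltaA$ in $S$, and symmetrically $\delta U$ ranges over $B$-values paired through $T$. I would therefore partition $S$ on $A$ and $T$ on $C$ with threshold $\theta=N^{\eps}$ -- the symmetric choice under the $R\!\leftrightarrow\!U$, $S\!\leftrightarrow\!T$, $A\!\leftrightarrow\!C$ symmetry, and the one suggested by the lower-bound reduction (Proposition~\ref{prop:lower_bound_LW}-style), which encodes the $\OuMv$ matrix in the binary relations and the input vectors in $R$ and $U$. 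This yields the usual bounds: at most $2N^{1-\eps}$ heavy $A$-values in $S_h$ (resp.\ heavy $C$-values in $T_h$) and fewer than $\frac{3}{2}N^{\eps}$ tuples per light value.

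Next I would decompose $Q$ into the skew-aware views $Q_{st}()=\sum_{a,b,c}R(a)\ztimes S_s(a,b)\ztimes T_t(b,c)\ztimes U(c)$ for $s,t\in\{h,l\}$, so that $Q=\sum_{s,t}Q_{st}$. Rewriting the query as an inner product $Q=\sum_b L(b)\ztimes M(b)$ with $L(b)=\sum_a R(a)\ztimes S(a,b)$ and $M(b)=\sum_c T(b,c)\ztimes U(c)$ makes the delta rules transparent: an update to $S$ or $T$ touches a single $B$-cell of $L$ or $M$ and is handled by a constant-time lookup into a materialized partial sum, whereas updates to $R$ and $U$ are the ones requiring auxiliary views. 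I would materialize $Q$ (for constant answer time), the light partial sums $L_l(b)=\sum_a R(a)S_l(a,b)$ and $M_l(b)=\sum_c T_l(b,c)U(c)$ (linear space, sublinear maintenance), together with auxiliary views that pre-join one heavy part against one light part -- in the spirit of $V_{RS},V_{ST},V_{TR}$ of Figure~\ref{fig:view_definitions} -- indexed by the few heavy $A$- or $C$-values, so that the heavy cases of $\delta R$ and $\delta U$ collapse to constant-time or $\bigO{N^{1-\eps}}$ lookups.

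I would then carry out the per-update case analysis (updates to $R,S,T,U$, each against $S_h/S_l$ and $T_h/T_l$) in a table analogous to Figure~\ref{fig:applyUpdate}, verifying that every delta and every view-maintenance step costs $\bigO{1}$, $\bigO{N^{\eps}}$, or $\bigO{N^{1-\eps}}$, hence $\bigO{N^{\max\{\eps,1-\eps\}}}=\bigO{|\db|^{\max\{\eps,1-\eps\}}}$ per update. For preprocessing, the count itself is computed in linear time since the query is acyclic, and the auxiliary views in $\bigO{|\db|^{1+\min\{\eps,1-\eps\}}}$, matching the stated preprocessing time and space. Finally, rebalancing is done exactly as in Section~\ref{sec:rebalancing} by moving tuples between the parts of $S$ and $T$; because the number of tuples moved per rebalancing is asymptotically no larger than the number of updates since the previous one, the amortization of Section~\ref{sec:main_proof} applies verbatim and gives the $\bigO{|\db|^{\max\{\eps,1-\eps\}}}$ amortized bound.

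The hard part will be the design and size analysis of the auxiliary views. Unlike the triangle, where each pair of relations joins on the variable it is partitioned on, the two binary relations of the 4-path meet on the \emph{un-partitioned} middle variable $B$; a view such as $\sum_b S_h(a,b)\ztimes T_l(b,c)$ is light only on $C$, not on the join variable $B$, so the clean $\min\{N\ztimes N^{\eps},\,N\ztimes N^{1-\eps}\}$ bound used for $V_{RS}$ in Proposition~\ref{prop:space_complexity} does not transfer directly. The crux is therefore to route the heavy contributions of $\delta R$ and $\delta U$ through views keyed on the at most $2N^{1-\eps}$ heavy $A$- or $C$-values, and to confirm that these views still fit in $\bigO{|\db|^{1+\min\{\eps,1-\eps\}}}$ space; reconciling the $\eps<\frac{1}{2}$ and $\eps\ge\frac{1}{2}$ regimes, where the binding size bound on these central-join views switches, is where the argument must be most careful.
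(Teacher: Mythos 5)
There is a genuine gap, and it sits exactly where you flag the ``hard part'': your partitioning scheme is too coarse to make the argument go through. Partitioning $S$ only on $A$ and $T$ only on $C$ leaves the join variable $B$ uncontrolled, and this is fatal, not merely delicate. Concretely, consider the skew-aware view $Q_{hl}()=\sum_{a,b,c}R(a)\ztimes S_h(a,b)\ztimes T_l(b,c)\ztimes U(c)$. To handle an update $\delta R=\{\deltaA\mapsto\p\}$ with $\deltaA$ heavy in constant or $\bigO{N^{1-\eps}}$ time you must materialize something like $V(a)=\sum_b S_h(a,b)\ztimes M_l(b)$ with $M_l(b)=\sum_c T_l(b,c)\ztimes U(c)$, keyed on the $\bigO{N^{1-\eps}}$ heavy $A$-values. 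But then an update $\delta U=\{\deltaC\mapsto\p\}$ changes $M_l(b)$ for up to $\frac{3}{2}N^{\eps}$ values $b$ (since $T_l$ is light on $C$), and each changed $b$ can be paired in $S_h$ with up to $2N^{1-\eps}$ heavy $A$-values, so propagating to $V$ costs $\bigO{N^{\eps}\ztimes N^{1-\eps}}=\bigO{N}$. Orienting the materialization the other way (precomputing $L_h(b)=\sum_a R(a)\ztimes S_h(a,b)$ so that $\delta U$ is cheap) just moves the linear cost to $\delta R$ on a heavy $\deltaA$, whose $S_h$-degree in $B$ is unbounded. So with your partitions, one of $\delta R$, $\delta U$ is necessarily linear, and ``routing through views keyed on heavy $A$- or $C$-values'' does not escape this.

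The paper resolves this by partitioning $S$ and $T$ on \emph{both} of their variables (the refined partitions of Definition~\ref{def:loose_relation_partition_sets}), yielding parts $S_{s_As_B}$ and $T_{t_Bt_C}$, so that the $B$-degrees mediating the $S$--$T$ join are themselves classified as heavy or light. This splits the problematic case above into $S_{hl}\ztimes T_{ll}$ (both light on $B$, so deltas propagate through at most $\bigO{N^{\eps}}$ partners) and $S_{hl}\ztimes T_{hl}$ (where $T_{hl}$ is heavy on $B$, so there are only $\bigO{N^{1-\eps}}$ relevant $B$-values, captured via the indicator projection $V_{T_h}(b)=\exists_B T_{hl}(b)$ and the view $V_{RS_{hl}T_{hl}}(b)=\sum_a R(a)\ztimes S_{hl}(a,b)\ztimes V_{T_h}(b)$), and similarly for the other combinations; see Figure~\ref{fig:view_definitions_path_4}. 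The indicator-projection device and the two-variable partition are essential new ingredients relative to the triangle case, and your proposal contains neither. The rest of your outline (skew-aware decomposition, constant-time handling of $\delta S$ and $\delta T$ via partial sums, linear preprocessing by acyclicity, and reuse of the amortized rebalancing argument) does match the paper's route and would go through once the partitioning and view design are repaired.
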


The update time in Theorem \ref{theo:main_result_path4}
is worst-case optimal for $\eps = 0.5$, conditioned 
on the \OMv conjecture. This is implied by the following proposition. 

\begin{proposition}\label{prop:lower_bound_path_4}
For any $\gamma > 0$ and database $\db$,
there is no algorithm that incrementally maintains 
the 4-path count under single-tuple updates to $\db$ with arbitrary preprocessing time, $\bigO{|\db|^{\frac{1}{2} - \gamma}}$ amortized update time, and $\bigO{|\db|^{1 - \gamma}}$ answer time, unless the \OMv conjecture fails.
\end{proposition}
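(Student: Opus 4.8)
The plan is to reduce the Online Vector-Matrix-Vector multiplication problem (\OuMv, Definition~\ref{def:OuMv}) to the incremental maintenance of the 4-path count, exactly as in the proof of Proposition~\ref{prop:lower_bound_triangle_count}. I would assume for contradiction that some algorithm $\mathcal{A}$ maintains the 4-path count under single-tuple updates with arbitrary preprocessing time, amortized update time $\bigO{|\db|^{\frac{1}{2}-\gamma}}$, and answer time $\bigO{|\db|^{1-\gamma}}$ for some $\gamma>0$. From $\mathcal{A}$ I would build an algorithm $\mathcal{B}$ that, given an \OuMv instance $(\vecnormal{M},(\vecnormal{u}_1,\vecnormal{v}_1),\ldots,(\vecnormal{u}_n,\vecnormal{v}_n))$, uses $\mathcal{A}$ as a black box and runs in subcubic time, contradicting the \OuMv conjecture (Conjecture~\ref{conj:OuMv}) and hence the \OMv conjecture (Conjecture~\ref{conj:omv}).

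The crux is the encoding, where the 4-path differs from the triangle: it has three bound variables $a,b,c$, whereas $\vecnormal{u}_r^{\mathrm{T}}\vecnormal{M}\vecnormal{v}_r=\sum_{i,j}\vecnormal{u}_r(i)\ztimes\vecnormal{M}(i,j)\ztimes\vecnormal{v}_r(j)$ ranges over only two indices. I would collapse the extra variable with a singleton relation. Starting from an empty database $\db=\{R,S,T,U\}$, $\mathcal{B}$ first encodes the matrix into $S$ by issuing at most $n^2$ updates so that $S=\{(i,j)\mapsto \vecnormal{M}(i,j)\mid i,j\in\{1,\ldots,n\}\}$, and it fixes $U=\{(a_0)\mapsto 1\}$ for a constant $a_0$. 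In round $r$ it issues at most $2n$ updates so that $R=\{(i)\mapsto \vecnormal{u}_r(i)\}$ and $T=\{(j,a_0)\mapsto \vecnormal{v}_r(j)\}$. Then $Q()=\sum_{a,b,c}R(a)\ztimes S(a,b)\ztimes T(b,c)\ztimes U(c)=\sum_{i,j}\vecnormal{u}_r(i)\ztimes\vecnormal{M}(i,j)\ztimes\vecnormal{v}_r(j)$, since the only $c$ contributing is $c=a_0$ and $\sum_c T(j,c)\ztimes U(c)=\vecnormal{v}_r(j)$. Thus $Q()\neq 0$ if and only if $\vecnormal{u}_r^{\mathrm{T}}\vecnormal{M}\vecnormal{v}_r=1$, so $\mathcal{B}$ reports the correct answer in round $r$ by testing whether $Q()\neq 0$.

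The cost analysis then mirrors Proposition~\ref{prop:lower_bound_triangle_count} and is routine. The database never exceeds $|\db|=\bigO{n^2}$. Constructing $S$ costs $\bigO{n^2\ztimes(n^2)^{\frac{1}{2}-\gamma}}=\bigO{n^{3-2\gamma}}$ by the assumed amortized update time. Each of the $n$ rounds performs $\bigO{n}$ updates, costing $\bigO{n\ztimes n^{1-2\gamma}}=\bigO{n^{2-2\gamma}}$, plus one answer query costing $\bigO{(n^2)^{1-\gamma}}=\bigO{n^{2-2\gamma}}$; over $n$ rounds this totals $\bigO{n^{3-2\gamma}}$. Hence $\mathcal{B}$ solves \OuMv in $\bigO{n^{3-2\gamma}}$ time, contradicting Conjecture~\ref{conj:OuMv} and, consequently, Conjecture~\ref{conj:omv}. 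The only genuinely new step over the triangle case is designing the singleton-$U$ encoding that turns the three-variable path into the two-index product; everything else transcribes directly from the earlier reduction, which is why I expect the encoding to be the main (and essentially sole) obstacle.
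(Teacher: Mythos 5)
Your proposal is correct and follows essentially the same reduction as the paper: encode $\vecnormal{M}$ in $S$, the vectors $\vecnormal{u}_r$ and $\vecnormal{v}_r$ in $R$ and $T$, and absorb the third bound variable via a singleton dummy tuple in $U$, then transcribe the cost analysis from the triangle lower bound. The paper's proof of Proposition~\ref{prop:lower_bound_path_4} uses exactly this encoding (including the dummy tuple $U=\{(a)\mapsto 1\}$), so there is nothing to add.
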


The \ivme maintenance strategy introduced in the following sections 
is easily extendible to queries that result from $Q$ by
extending each relation by unboundedly many non-join variables. 
Such extended relations can be replaced by views that 
aggregate  away the non-join variables and can be updated 
in constant time.

In the sequel, we first prove Theorem \ref{theo:main_result_path4}
and then Proposition \ref{prop:lower_bound_path_4}.

\subsection{Adaptive Maintenance of the 4-Path Count}
We present an \ivme variant that partitions relations $S$ and $T$ on both 
variables (cf.~Definition \ref{def:loose_relation_partition_sets}) 
and leaves relations $R$ and $U$ unpartitioned.  
Hence, the partitions of $S$ and $T$ are of the form  
$\{S_{s_A,s_B}\}_{s_A,s_B \in \{h,l\}}$
and 
$\{T_{t_B,t_C}\}_{t_B,t_C \in \{h,l\}}$, respectively.
For instance, the relation part $S_{lh}$ is light on
$A$ and heavy on $B$.  
The \ivme variant 
decomposes the query into skew-aware views of the form 
\begin{align*}
Q_{s_As_Bt_Bt_C}() = \sum\limits_{a,b,c} R(a)\ztimes S_{s_A,s_B}(a,b) \ztimes T_{t_B,t_C}(b,c) \ztimes U(c)
\end{align*}
with $s_A,s_B,t_B,t_C \in \{h,l\}$.
The 4-path count can then be expressed 
as the sum of these skew-aware views:
\begin{align*}
Q() = \sum\limits_{s_A,s_B,t_B,t_C \in \{h,l\}} Q_{s_As_Bt_Bt_C}()
\end{align*}

Figure \ref{fig:view_definitions_path_4} lists all 
auxiliary views materialized to facilitate the computation of delta 
skew-aware views under updates. 
The \ivme variant uses some views containing indicator projections
of relations, which we introduce next. An indicator projection
$\displaystyle\exists_X K$ of a relation $K$ on one of 
its variables $X$ projects the tuples in $K$ onto $X$ and maps each 
value in the projection  to multiplicity $1$.

\begin{definition}[Indicator Projection]
Given a relation $K$ with schema $\inst{X}$ and a variable 
$X$ in $\inst{X}$,  an indicator projection $\displaystyle\exists_X K$
of $K$ on $X$ is defined as
$$\displaystyle\exists_X K(x) = 
\begin{cases}
1 & \text{ if } \exists \inst{x} \in \Dom(\inst{X}), \text{ with } K(\inst{x}) \neq 0
\text{ and } \inst{x}[X] = x\\
0 & \text{ otherwise }\\
\end{cases}
$$
\end{definition}

\begin{figure}[t] 
 \begin{center}
    \renewcommand{\arraystretch}{1.2}  
    \begin{tabular}{@{\hskip 0.05in}l@{\hskip 0.4in}l@{\hskip 0.05in}}
      \toprule
      Materialized View Definition & Space Complexity \\    
      \midrule
      $Q() = \sum\limits_{s_A, s_B,t_B,t_C \in \{h,l\}} \,
      \sum\limits_{a,b,c} R(a) \ztimes S_{s_As_B}(a,b) \ztimes T_{t_Bt_C}(b,c) \ztimes U(c)$ & 
      $\bigO{1}$ \\


      $V_{RS_{ll}}(b) = \sum_{a} R(a) \ztimes S_{ll}(a,b)$ & 
      $\bigO{|\inst{D}|}$ \\
      $V_{RS_{s_Ah}}(b) = \sum_{a} R(a) \ztimes S_{s_Ah}(a,b)$ & 
      $\bigO{|\inst{D}|^{1-\eps}}$ \\


      $V_{S_{ll}T_{lh}}(a,c) = \sum_{b} S_{ll}(a,b) \ztimes T_{lh}(b,c)$ & 
      $\bigO{|\inst{D}|^{1+\min{\{\,\eps, 1-\eps \,\}}}}$ \\

      $V_{S_{hl}T_{ll}}(a,c) = \sum_{b} S_{hl}(a,b) \ztimes T_{ll}(b,c)$ & 
      $\bigO{|\inst{D}|^{1+\min{\{\,\eps, 1-\eps \,\}}}}$ \\
      $V_{S_{hl}T_{t_Bh}}(a,c) = \sum_{b} S_{hl}(a,b) \ztimes T_{t_Bh}(b,c)$ & 
      $\bigO{|\inst{D}|^{\min{\{\,1 + \eps, 2-2\eps \,\}}}}$ \\


      $V_{S_{hh}T_{lh}}(a,c) = \sum_{b} S_{hh}(a,b) \ztimes T_{lh}(b,c)$ & 
      $\bigO{|\inst{D}|^{\min{\{\,1 + \eps, 2-2\eps \,\}}}}$ \\

      $V_{T_{ll}U}(b) = \sum_{c} T_{ll}(b,c) \ztimes U(c)$ & 
      $\bigO{|\inst{D}|}$ \\

      $V_{T_{ht_C}U}(b) = \sum_{c} T_{ht_C}(b,c) \ztimes U(c)$ & 
      $\bigO{|\inst{D}|^{1-\eps}}$ \\


      $V_{T_{h}}(b) = \displaystyle\exists_B T_{hl}(b)$ & 
      $\bigO{|\inst{D}|^{1-\eps}}$ \\
      $V_{RS_{hl}T_{hl}}(b) = \sum_{a} R(a) \ztimes S_{hl}(a,b) \ztimes V_{T_{h}}(b)$ &
      $\bigO{|\inst{D}|^{1-\eps}}$ \\
      $V_{RS_{ll}T_{lh}}(c) = \sum_{a} R(a) \ztimes V_{S_{ll}T_{lh}}(a,c)$ &
      $\bigO{|\inst{D}|^{1-\eps}}$ \\

      $V_{S_{h}}(b) =  \displaystyle\exists_B S_{lh}(b)$ & 
      $\bigO{|\inst{D}|^{1-\eps}}$ \\
      $V_{S_{lh}T_{lh}U}(b) = \sum_{c} V_{S_{h}}(b) \ztimes T_{lh}(b,c) \ztimes U(c)$ & 
      $\bigO{|\inst{D}|^{1-\eps}}$ \\
      
      $V_{S_{hl}T_{ll}U}(a) = \sum_{c} V_{S_{hl}T_{ll}}(a,c) \ztimes U(c)$ & 
      $\bigO{|\inst{D}|^{1-\eps}}$ \\

      \bottomrule    
    \end{tabular}
  \end{center}
  \caption{The definitions and space complexities of the materialized views in 
  $\inst{V}$  as part of a  state of a database $\inst{D} = \{R,S,T,U\}$
  partitioned for a fixed $\eps \in [0,1]$.} 
  \label{fig:view_definitions_path_4}
\end{figure}
The views $V_{T_{h}}(b)$ and $V_{S_{h}}(b)$
in Figure \ref{fig:view_definitions_path_4}
are defined as indicator projections of the relations 
$T_{hl}$ and $S_{lh}$ on $B$. The purpose
 of $V_{T_{hl}}$ in 
$V_{RS_{hl}T_{hl}}$
is to put a sublinear bound on the
number of $B$-values $b$ with  
$V_{RS_{hl}T_{hl}}(b) \neq 0$.
The role of $V_{S_{lh}}$
in
$V_{S_{lh}T_{lh}U}$
is analogous.

Given a fixed $\eps \in [0,1]$,
an \ivme state 
$(\eps,N, \inst{P}, \inst{V})$
of a database $\inst{D} = \{R,S,T,U\}$ is defined 
as in Definition \ref{fig:view_definitions}
with the only differences that  
\begin{itemize}
\item $\dbeps=  
\{R,
\{S_{s_As_B}\}_{s_A,s_B \in \{h,l\}},
\{T_{t_Bt_C}\}_{t_B,t_C \in \{h,l\}},
U\}$, where 
$\{S_{s_As_B}\}_{s_A,s_B \in \{h,l\}}$
and 
$\{T_{t_Bt_C}\}_{t_B,t_C \in \{h,l\}}$ 
are the partitions of $S$ and $T$, respectively, 
with threshold $N^{\eps}$, and 
\item $\inst{V}$ consists of the views given in
Figure \ref{fig:view_definitions_path_4}. 
\end{itemize}

\subsection{Preprocessing Time for the 4-Path Count}
In the preprocessing stage, \ivme computes the initial
state 
$\state = (\eps, N, \inst{P}, \inst{V})$ 
of a given database $\inst{D} = \{R,S,T,U\}$ such that 
$N =2 |\inst{D}|+1$ and
$\inst{P}$ consists of the relations $R$ and $U$ and the strict partitions of 
$S$ and $T$ with threshold $N^{\eps}$.

\begin{proposition}\label{prop:preprocessing_step_4_path}
Given a database $\db$ and $\eps\in[0,1]$, 
constructing the initial \ivme state
of $\db$ to support the maintenance of the 4-path count takes
$\bigO{|\inst{D}|^{1 + \min{\{\eps, 1-\eps\}}}}$ time.
\end{proposition}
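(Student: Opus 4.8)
The plan is to mirror the proof of Proposition~\ref{prop:preprocessing_step}, adapted to the fact that here $S$ and $T$ are strictly partitioned on \emph{both} of their variables (Definition~\ref{def:loose_relation_partition_sets}) while $R$ and $U$ are left unpartitioned. First I would dispose of the cheap steps: computing $N = 2\ztimes|\db|+1$ is constant time, and the strict partitioning of $S$ and $T$ is linear, since the heavy/light status of each tuple depends only on the degrees of its $A$-, $B$-, and $C$-values in the whole relation, which can be precomputed with the index structures in a single pass. The $4$-path count $Q$ itself is acyclic and can be evaluated in $\bigO{|\db|}$ time by materializing the intermediate aggregates $\sum_a R(a)\ztimes S(a,b)$ and $\sum_c T(b,c)\ztimes U(c)$ and joining them on $B$; hence $Q$ is never the bottleneck. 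What remains is a per-view time analysis for the auxiliary views of Figure~\ref{fig:view_definitions_path_4}, each of which I must compute within $\bigO{|\db|^{1+\min\{\eps,1-\eps\}}}$.

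The one-dimensional views and the indicator projections ($V_{RS_{ll}}$, $V_{RS_{s_Ah}}$, $V_{T_{ll}U}$, $V_{T_{ht_C}U}$, $V_{T_h}$, $V_{S_h}$, and the views $V_{RS_{hl}T_{hl}}$, $V_{RS_{ll}T_{lh}}$, $V_{S_{lh}T_{lh}U}$, $V_{S_{hl}T_{ll}U}$ built on top of them) are each computed by a single scan over a base relation or over an already-materialized view, so their cost is at most linear in the size of that input and thus within budget; in particular $V_{RS_{ll}T_{lh}}$ and $V_{S_{hl}T_{ll}U}$ are scans over the two-dimensional views $V_{S_{ll}T_{lh}}$ and $V_{S_{hl}T_{ll}}$, whose sizes I bound below. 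The dominant contribution comes from the two-dimensional views of the shape $V(a,c)=\sum_b S_{\ast}(a,b)\ztimes T_{\ast}(b,c)$, namely $V_{S_{ll}T_{lh}}$, $V_{S_{hl}T_{ll}}$, $V_{S_{hl}T_{t_Bh}}$, and $V_{S_{hh}T_{lh}}$. For each of these the computation time equals the join cost $\sum_b |\sigma_{B=b}S_{\ast}|\ztimes|\sigma_{B=b}T_{\ast}|$, independently of the iteration order, and I would bound this single quantity by the smaller of two estimates.

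The crux --- and the step I expect to be the main obstacle --- is obtaining the second estimate, since the naive one is insufficient for $\eps>\frac{1}{2}$. For the first estimate I use that in each of the four views at least one side is \emph{light on the join variable} $B$; if, say, $T_{\ast}$ is light on $B$ then $|\sigma_{B=b}T_{\ast}|<\frac{3}{2}N^{\eps}$, so the join cost is at most $\frac{3}{2}N^{\eps}\ztimes|S_{\ast}|=\bigO{N^{1+\eps}}$. This already suffices when $\eps\le\frac{1}{2}$ but is too large otherwise. For the second estimate I exploit that in each view at least one side is \emph{heavy on its output variable}: if, say, $T_{\ast}$ is heavy on $C$ then its total number of distinct $C$-values is at most $2N^{1-\eps}$, and since the $C$-values paired with any fixed $b$ form a subset of these we get $|\sigma_{B=b}T_{\ast}|\le 2N^{1-\eps}$, whence the join cost is at most $2N^{1-\eps}\ztimes|S_{\ast}|=\bigO{N^{2-\eps}}$ (the symmetric views use the heavy-on-$A$ side $S_{hl}$ or $S_{hh}$ analogously). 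Taking the minimum of the two estimates gives $\bigO{N^{1+\min\{\eps,1-\eps\}}}$ per view, and the views $V_{S_{hl}T_{t_Bh}}$ and $V_{S_{hh}T_{lh}}$, whose target sizes $\bigO{N^{\min\{1+\eps,2-2\eps\}}}$ are no larger, fit within the same bound. Summing over all views and using $N=\Theta(|\db|)$ yields the claimed $\bigO{|\db|^{1+\min\{\eps,1-\eps\}}}$ preprocessing time.
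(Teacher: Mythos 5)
Your proof is correct and follows essentially the same route as the paper's: constant time for $N$, linear time for the strict partitioning and for the acyclic query $Q$ itself, scan-bounded time for the unary views and for the views built on top of already-materialized ones, and a join-cost analysis for the binary views of the form $\sum_b S_{\ast}(a,b)\ztimes T_{\ast}(b,c)$. One remark in your favour: for $V_{S_{hl}T_{t_Bh}}$ and $V_{S_{hh}T_{lh}}$ the paper's written proof records only the ``light on $B$'' estimate $\bigO{N^{1+\eps}}$, which by itself does not yield the claimed total $\bigO{N^{1+\min\{\eps,1-\eps\}}}$ when $\eps>\frac{1}{2}$; your second estimate --- bounding $|\sigma_{B=b}T_{\ast}|$ (resp.\ $|\sigma_{B=b}S_{\ast}|$) by the total number of distinct values of the heavy output variable, at most $2N^{1-\eps}$ because the heavy-part condition of Definition~\ref{def:loose_relation_partition_sets} forces each such value to have degree at least $\frac{1}{2}N^{\eps}$ in the full relation --- is exactly what is needed to close that case, and it is sound.
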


\begin{proof}
Setting the value of $N$ is a constant-time operation. 
Strictly partitioning the relations can be 
accomplished in linear time.  
The FAQ-width of $Q$ is one, hence, it can be computed in 
linear time \cite{FAQ:PODS:2016}.

We can compute $V_{RS_{ll}}$ and $V_{RS_{s_Ah}}$ by iterating over 
the tuples $(a,b)$ in the $S$-part and, for each such tuple, doing a 
lookup of $a$ in $R$. 
This takes  $\bigO{|\inst{D}|}$ time.
The analysis for $V_{T_{ll}U}$ and $V_{T_{ht_C}U}$ is analogous. 

The views $V_{S_{ll}T_{lh}}$, $V_{S_{hl}T_{ll}}$, $V_{S_{hl}T_{t_Bh}}$ and $V_{S_{hh}T_{lh}}$ can be computed in two ways. 
One option is to iterate over the tuples $(a,b)$ in the $S$-part and, for each such tuple, to go over all $C$-values paired with $b$ in the $T$-part. Alternatively, 
we can iterate over the tuples $(b,c)$ in the $T$-part and, for each 
such tuple, go over all $A$-values paired with $b$ in the $S$-part. 
Computing the views $V_{S_{ll}T_{lh}}$ and $V_{S_{hl}T_{ll}}$ takes  $\bigO{|\inst{D}|^{1 + \min{\{\eps, 1-\eps\}}}}$ time: 
$T_{lh}$ is light on $B$ and heavy on $C$, so there are $\bigO{|\inst{D}|^{1 + \min{\{\eps, 1-\eps\}}}}$ $C$-values paired with each $(a,b)$ from the $S$-part; likewise,
$S_{hl}$ is light on $B$ and heavy on $A$, so there are $\bigO{|\inst{D}|^{1 + \min{\{\eps, 1-\eps\}}}}$ $A$-values paired with each $(b,c)$ from the $T$-part.
We can compute the views $V_{S_{hl}T_{t_Bh}}$ and $V_{S_{hh}T_{lh}}$ in time $\bigO{|\inst{D}|^{1 + \eps}}$: 
$S_{hl}$ is light on $B$, so there are $\bigO{|\inst{D}|^{\eps}}$ $A$-values paired with each tuple $(b,c)$ from the $T$-part; likewise, $T_{lh}$ is light on $B$, so there are $\bigO{|\inst{D}|^{\eps}}$ $C$-values paired with each tuple $(a,b)$ from $S$. 

We next analyze  the computation times for $V_{T_h}$, $V_{RS_{hl}T_{hl}}$ and $V_{RS_{ll}T_{lh}}$. The analysis for $V_{S_h}$, $V_{S_{lh}T_{lh}U}$ and $V_{S_{hl}T_{ll}U}$ is analogous.
The view $V_{T_h}$ can be computed in time $\bigO{|\inst{D}|}$ by performing 
a single pass through the $T$-part.  
The view $V_{RS_{hl}T_{hl}}$ can be computed in time $\bigO{|\inst{D}|}$ by iterating 
over the tuples $(b,c)$ in $T_{hl}$ and, for each such tuple,  
looking up $b$ in $V_{T_h}$ and 
$a$ in $R$. 
Since the view $V_{S_{ll}T_{lh}}$ is materialized, the view $V_{RS_{ll}T_{lh}}$ can be computed in time $\bigO{|\inst{D}|}$ by iterating over the tuples $(a,c)$ in $V_{S_{ll}T_{lh}}$ and doing lookups for $a$ in $R$.

Overall, the initial state of $\inst{D}$ can be computed in time $\bigO{|\inst{D}|^{1 + \min{\{\eps, 1-\eps\}}}}$. 
\end{proof}


\begin{figure}[htbp]
  \begin{center}
  \renewcommand{\arraystretch}{1.5}
  \setcounter{magicrownumbers}{0}
  \begin{tabular}{ll@{\hskip 0.25in}l@{\hspace{0.6cm}}c}
  \toprule
  \multicolumn{2}{l}{\textsc{ApplyUpdate4Path}($\delta R,\state$)}& & Time \\
  \cmidrule{1-2} \cmidrule{4-4}
  \rownumber & \LET $\delta R = \{(\deltaA) \mapsto \p\}$ \\
  \rownumber & \LET $\state = (\eps, N, 
  \{R, U\} \cup \{S_{s_As_B}\}_{s_A,s_B \in \{h,l\}} \cup \{T_{t_At_B}\}_{t_A,t_B \in \{h,l\}}, $ \\
  & \TAB \TAB $\{Q,V_{RS_{ll}},V_{RS_{lh}},V_{RS_{hh}}, V_{S_{ll}T_{lh}}, V_{S_{hl}T_{ll}},V_{S_{hl}T_{lh}},V_{S_{hl}T_{hh}},V_{S_{hh}T_{lh}},\})$ \\
  & \TAB \TAB $V_{T_{ll}U}, V_{T_{hl}U},V_{T_{hh}U},V_{T_h},V_{RS_{hl}T_{hl}},V_{RS_{ll}T_{lh}},V_{S_h},V_{S_{lh}T_{lh}U},V_{S_{hl}T_{ll}U}\})$ \\
  \rownumber & $\delta Q_{llll}() = \delta{R(\deltaA)} \ztimes \textstyle\sum_b S_{ll}(\deltaA,b) \ztimes V_{T_{ll}U}(b)$
  && $\bigO{|\inst{D}|^{\eps}}$ \\

\rownumber & $\delta Q_{lllh}() = \delta{R(\deltaA)} \ztimes \textstyle\sum_c V_{S_{ll}T_{lh}}(\deltaA,c) \ztimes U(c)$
&& $\bigO{|\inst{D}|^{1-\eps}}$ \\

  \rownumber & $\delta Q_{llht_C}() = \delta{R(\deltaA)} \ztimes \textstyle\sum_b S_{ll}(\deltaA,b) \ztimes V_{T_{ht_C}U}(b)$
  && $\bigO{|\inst{D}|^{\min{\{\eps, 1-\eps\}}}}$ \\


  \rownumber & $\delta Q_{lhll}() = \delta{R(\deltaA)} \ztimes \textstyle\sum_b S_{lh}(\deltaA,b) \ztimes V_{T_{ll}U}(b)$
  && $\bigO{|\inst{D}|^{\min{\{\eps, 1-\eps\}}}}$ \\ 

  \rownumber & $\delta Q_{lhlh}() = \delta{R(\deltaA)} \ztimes \textstyle\sum_b S_{lh}(\deltaA,b) \ztimes V_{S_{lh}T_{lh}U}(b)$
  && $\bigO{|\inst{D}|^{\min{\{\eps, 1-\eps\}}}}$ \\ 

  \rownumber & $\delta Q_{lhht_C}() = 
  \delta{R(\deltaA)} \ztimes \textstyle\sum_b S_{lh}(\deltaA, b) \ztimes 
  V_{T_{ht_C}U}(b)$
  && $\bigO{|\inst{D}|^{\min{\{\eps, 1-\eps\}}}}$ \\


  \rownumber & $\delta Q_{hlll}() = \delta{R(\deltaA)} \ztimes \textstyle V_{S_{hl}T_{ll}U}(\deltaA)$
  && $\bigO{1}$ \\

\rownumber & $\delta Q_{hlhl}() = \delta{R(\deltaA)} \ztimes \textstyle\sum_b S_{hl}(\deltaA,b) \ztimes V_{T_{hl}U}(b)$
&& $\bigO{|\inst{D}|^{1-\eps}}$ \\ 

  \rownumber & $\delta Q_{hlt_Bh}() = \delta{R(\deltaA)} \ztimes \textstyle\sum_c V_{S_{hl}T_{t_Bh}}(\deltaA,c) \ztimes U(c)$
  && $\bigO{|\inst{D}|^{1-\eps}}$ \\


  \rownumber & $\delta Q_{hhll}() = \delta{R(\deltaA)} \ztimes \textstyle\sum_b S_{hh}(\deltaA,b) \ztimes V_{T_{ll}U}(b)$
  && $\bigO{|\inst{D}|^{1-\eps}}$ \\ 

  \rownumber & $\delta Q_{hhlh}() = \delta{R(\deltaA)} \ztimes \textstyle\sum_c V_{S_{hh}T_{lh}}(\deltaA, c) \ztimes U(c)$
  && $\bigO{|\inst{D}|^{1-\eps}}$ \\

  \rownumber & $\delta Q_{hhht_C}() = \delta{R(\deltaA)} \ztimes \textstyle\sum_b S_{hh}(\deltaA, b) \ztimes V_{T_{ht_C}U}(b)$
  && $\bigO{|\inst{D}|^{1-\eps}}$ \\

  \rownumber & 
  $Q() = Q() + \sum\limits_{s_A,s_B,t_A,t_B \in \{h,l\}} \,\delta Q_{s_As_Bt_At_B}()$ & &
  $\bigO{1}$ \\

  \rownumber & $V_{RS_{ll}}(b) = V_{RS_{ll}}(b) + \delta R(\deltaA) \ztimes S_{ll}(\deltaA,b)$ & & $\bigO{|\inst{D}|^{\eps}}$ \\
  \rownumber & $V_{RS_{lh}}(b) = V_{RS_{lh}}(b) + \delta R(\deltaA) \ztimes S_{lh}(\deltaA,b)$ & & $\bigO{|\inst{D}|^{\min\{1-\eps\}}}$ \\
  \rownumber & $V_{RS_{hh}}(b) = V_{RS_{hh}}(b) + \delta R(\deltaA) \ztimes S_{hh}(\deltaA,b)$ & & $\bigO{|\inst{D}|^{1-\eps}}$ \\

  \rownumber & $V_{RS_{hl}T_{hl}}(b) = V_{RS_{hl}T_{hl}}(b) + \delta R(\deltaA) \ztimes S_{hl}(\deltaA,b) \ztimes V_{T_{h}}(b)$ & & $\bigO{|\inst{D}|^{1-\eps}}$ \\
  \rownumber & $V_{RS_{ll}T_{lh}}(c) = V_{RS_{ll}T_{lh}}(c) + \delta R(\deltaA) \ztimes V_{S_{ll}T_{lh}}(\deltaA,c)$ & & $\bigO{|\inst{D}|^{1-\eps}}$ \\
  
  \rownumber & $R(\deltaA) = R(\deltaA) + \delta{R}(\deltaA)$ & &
  $\bigO{1}$ \\
  
  \rownumber & \RETURN 
  $\state$ & &
   \\
  \midrule
  \multicolumn{2}{r}{Total update time:} & &
  $\bigO{|\inst{D}|^{\max\{\eps, 1-\eps\}}}$ \\
  \bottomrule
  \end{tabular}
  \end{center}\vspace{-1em}
  \caption{
   (left) Maintaining the result of the 4-path count under a single-tuple update to $R$. 
   \textsc{ApplyUpdate4Path} takes as input an update
    $\delta R$ and 
    the current \ivme state $\state$ of a database $\inst{D}$ partitioned using $\eps\in[0,1]$.
  It returns a new state that results from applying $\delta R$ to $\state$. 
  $s_A$, $s_B$, $t_B$, and $t_C$ can be $l$ or $h$.
  Lines 5-16 compute the deltas of the affected skew-aware views. Line 17
  maintains $Q$.
  Lines 18-22 maintain the auxiliary views affected by this update. Line 23 maintains the affected $R$.
  (right) The time complexity of computing and applying deltas.  
 The maintenance procedures for updates to $U$ is analogous.
  }
  \label{fig:applyUpdate_path_4_delta_R}
\end{figure}


\begin{figure}[htbp]
  \begin{center}
  \renewcommand{\arraystretch}{1.4}
  \setcounter{magicrownumbers}{0}
  \begin{tabular}{ll@{\hskip 0.25in}l@{\hspace{0.6cm}}c}
  \toprule
  \multicolumn{2}{l}{\textsc{ApplyUpdate4Path}($\delta S_{s_As_B},\state$)}& & Time \\
  \cmidrule{1-2} \cmidrule{4-4}
  \rownumber & \LET $\delta S_{s_As_B} = \{(\deltaA, \deltaB) \mapsto \p\}$ \\
  \rownumber & \LET $\state = (\eps, N, 
  \{R, U\} \cup \{S_{s_As_B}\}_{s_A,s_B \in \{h,l\}} \cup \{T_{t_At_B}\}_{t_A,t_B \in \{h,l\}}, $ \\
  & \TAB \TAB $\{Q,V_{RS_{ll}},V_{RS_{lh}},V_{RS_{hh}}, V_{S_{ll}T_{lh}}, V_{S_{hl}T_{ll}},V_{S_{hl}T_{lh}},V_{S_{hl}T_{hh}},V_{S_{hh}T_{lh}},\})$ \\
  & \TAB \TAB $V_{T_{ll}U}, V_{T_{hl}U},V_{T_{hh}U},V_{T_h},V_{RS_{hl}T_{hl}},V_{RS_{ll}T_{lh}},V_{S_h},V_{S_{lh}T_{lh}U},V_{S_{hl}T_{ll}U}\})$ \\

  \rownumber & $\delta Q_{s_As_Bll}() = R(\deltaA) \ztimes \delta{S_{s_As_B}(\deltaA,\deltaB)} \ztimes \textstyle\sum_c T_{ll}(\deltaB,c) \ztimes U(c)$
  && $\bigO{|\inst{D}|^{\eps}}$ \\ 
  \rownumber & $\delta Q_{s_As_Blh}() = R(\deltaA) \ztimes \delta{S_{s_As_B}(\deltaA,\deltaB)} \ztimes \textstyle\sum_c T_{lh}(\deltaB,c) \ztimes U(c)$
  && $\bigO{|\inst{D}|^{\min{\{\eps, 1-\eps\}}}}$ \\ 
  \rownumber & $\delta Q_{s_As_Bhl}() = R(\deltaA) \ztimes \delta{S_{s_As_B}(\deltaA,\deltaB)} \ztimes V_{T_{hl}U}(\deltaB)$ 
  && $\bigO{1}$ \\
  \rownumber & $\delta Q_{s_As_Bhh}() = R(\deltaA) \ztimes \delta{S_{s_As_B}(\deltaA,\deltaB)} \ztimes \textstyle\sum_c T_{hh}(\deltaB,c) \ztimes U(c)$
  && $\bigO{|\inst{D}|^{1-\eps}}$ \\

  \rownumber & 
  $Q() = Q() + \sum\limits_{s_A,s_B,t_A,t_B \in \{h,l\}} \,\delta Q_{s_As_Bt_At_B}()$ & &
  $\bigO{1}$ \\

  \rownumber & \IF ($s_A$ is $h$ \AND $s_B$ is $h$) &\\  
  \rownumber & \TAB
  $V_{RS_{hh}}(\deltaB) = V_{RS_{hh}}(\deltaB) + R(\deltaA) \ztimes \delta{S_{hh}(\deltaA,\deltaB)}$ 
  & &
  $\bigO{1}$ \\
  \rownumber & \TAB  
  $V_{S_{hh}T_{lh}}(\deltaA,c) = V_{S_{hh}T_{lh}}(\deltaA,c) + \delta S_{hh}(\deltaA, \deltaB) \ztimes T_{lh}(\deltaB, c)$
  & &
  $\bigO{|\inst{D}|^{\min\{\eps, 1-\eps\}}}$ \\
  

  \rownumber & \ELSE \IF ($s_A$ is $l$ \AND $s_B$ is $h$) &\\  
  \rownumber & \TAB
  $V_{RS_{lh}}(\deltaB) = V_{RS_{lh}}(\deltaB) + R(\deltaA) \ztimes \delta{S_{lh}(\deltaA,\deltaB)}$ 
  & &
  $\bigO{1}$ \\
  \rownumber & \TAB
  $V_{S_{h}}(\deltaB) = V_{S_{h}}(\deltaB) + \delta{S_{lh}(\deltaA,\deltaB)}$  
  & &
  $\bigO{1}$ \\

  \rownumber & \ELSE \IF ($s_A$ is $h$ \AND $s_B$ is $l$) &\\  
  \rownumber & \TAB
  $V_{S_{hl}T_{ll}}(\deltaA,c) = V_{S_{hl}T_{ll}}(\deltaA,c) + \delta S_{hl}(\deltaA, \deltaB) \ztimes T_{ll}(\deltaB, c)$
  & &
  $\bigO{|\inst{D}|^{\eps}}$ \\
  \rownumber & \TAB
  $V_{S_{hl}T_{lh}}(\deltaA,c) = V_{S_{hl}T_{lh}}(\deltaA,c) + \delta S_{hl}(\deltaA, \deltaB) \ztimes T_{lh}(\deltaB, c)$
  & &
  $\bigO{|\inst{D}|^{\min\{\eps, 1-\eps\}}}$ \\
  \rownumber & \TAB
  $V_{S_{hl}T_{hh}}(\deltaA,c) = V_{S_{hl}T_{hh}}(\deltaA,c) + \delta S_{hl}(\deltaA, \deltaB) \ztimes T_{hh}(\deltaB, c)$  
  & &
  $\bigO{|\inst{D}|^{1-\eps}}$ \\
  \rownumber & \TAB
  $V_{S_{hl}T_{ll}U}(\deltaA) = V_{S_{hl}T_{ll}U}(\deltaA) + \delta S_{hl}(\deltaA, \deltaB) \ztimes \textstyle\sum_c T_{ll}(\deltaB, c) \ztimes U(c)$  
  & &
  $\bigO{|\inst{D}|^{\eps}}$ \\

  \rownumber & \ELSE &\\ 
  \rownumber & \TAB
  $V_{RS_{ll}}(\deltaB) = V_{RS_{ll}}(\deltaB) + R(\deltaA) \ztimes \delta{S_{ll}(\deltaA,\deltaB)}$ 
  & &
  $\bigO{1}$ \\
  \rownumber & \TAB
  $V_{S_{ll}T_{lh}}(\deltaA,c) = V_{S_{ll}T_{lh}}(\deltaA,c) + S_{ll}(\deltaA, \deltaB) \ztimes T_{lh}(\deltaB, c)$  
  & &
  $\bigO{|\inst{D}|^{\min\{\eps, 1-\eps\}}}$ \\

  \rownumber & \TAB
  $V_{RS_{ll}T_{lh}}(c) = V_{RS_{ll}T_{lh}}(c) + R(\deltaA) \ztimes \delta{S_{ll}(\deltaA,\deltaB)} \ztimes T_{lh}(\deltaB, c)$  
  & &
  $\bigO{|\inst{D}|^{\min\{\eps, 1-\eps\}}}$ \\

  \rownumber & $S_{s_As_B}(\deltaA,\deltaB) = S_{s_As_B}(\deltaA,\deltaB) + \delta{S_{s_As_B}}(\deltaA,\deltaB)$ & &
  $\bigO{1}$ \\
  
  \rownumber & \RETURN 
  $\state$ & &
   \\
  \midrule
  \multicolumn{2}{r}{Total update time:} & &
  $\bigO{|\inst{D}|^{\max\{\eps, 1-\eps\}}}$ \\
  \bottomrule
  \end{tabular}
  \end{center}\vspace{-1em}
  \caption{
     (left) Maintaining the result of the 4-path count under a single-tuple update to relation $S$.
   \textsc{ApplyUpdate4Path} takes as input an update
    $\delta S_{s_As_B}$ with $s_A,s_B \in \{h,l\}$ to a part $S$ and 
    the current \ivme state $\state$ of a database $\inst{D}$ partitioned using $\eps\in[0,1]$.
  It returns a new state that results from applying $\delta S_{s_As_B}$ to $\state$.
      $s_A$, $s_B$, $t_B$, and $t_C$ can be $l$ or $h$. 
  Lines 5-8 compute the deltas of the affected skew-aware views. Line 9
  maintains $Q$.
  Lines 10-25 maintain the auxiliary views affected by the update. Line 26 maintains the affected $S$.
  (right) The time complexity of computing and applying deltas.  
  The maintenance procedures for updates to $T$ is analogous.
  }
  \label{fig:applyUpdate_path_4_delta_S}
\end{figure}

\subsection{Space Complexity of Maintaining the 4-Path Count}
\begin{proposition}\label{prop:space_complexity_4_path}
Given a database $\db$ and $\eps\in[0,1]$, the \ivme state 
constructed from $\db$ to support the maintenance of the 4-path count
takes $\bigO{|\db|^{1 +\min\{\eps,1-\eps\}}}$ space.
\end{proposition}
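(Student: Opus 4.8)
The plan is to mirror the space analysis of Proposition~\ref{prop:space_complexity} and bound each materialized view of Figure~\ref{fig:view_definitions_path_4} separately. In a state $\state=(\eps,N,\dbeps,\inst{V})$, the parameters $\eps$ and $N$ take constant space, the relations $R,U$ together with the refined partitions of $S$ and $T$ in $\dbeps$ take $\bigO{|\db|}$ space, and the result of $Q$ is a single number. Since $N=\Theta(|\db|)$ by the size invariant, it thus suffices to show that every auxiliary view in $\inst{V}$ has size $\bigO{N^{1+\min\{\eps,1-\eps\}}}$. Throughout I will use the two degree bounds implied by Definition~\ref{def:loose_relation_partition_sets}: a part heavy on a variable has at most $2N^{1-\eps}$ distinct values of that variable, and a part light on a variable pairs fewer than $\tfrac{3}{2}N^{\eps}$ tuples with each value of that variable.

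The dominant views are the binary-join views of the form $V_{S_{\cdot\cdot}T_{\cdot\cdot}}(a,c)=\sum_b S_{\cdot\cdot}(a,b)\ztimes T_{\cdot\cdot}(b,c)$, and for each of them I will combine two size estimates. The first is a \emph{light-degree} bound: in every such view at least one joined part is light on $B$, so iterating over the other part and over its at most $\tfrac{3}{2}N^{\eps}$ matching $B$-tuples gives $\bigO{N^{1+\eps}}$. The second is a \emph{projection} bound: the output pairs lie in $\pi_A S_{\cdot\cdot}\times\pi_C T_{\cdot\cdot}$. For $V_{S_{ll}T_{lh}}$ and $V_{S_{hl}T_{ll}}$ exactly one joined part is heavy on its exported variable, yielding $\bigO{N\cdot N^{1-\eps}}=\bigO{N^{2-\eps}}$ and hence $\min\{N^{1+\eps},N^{2-\eps}\}=N^{1+\min\{\eps,1-\eps\}}$. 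For $V_{S_{hl}T_{t_Bh}}$ and $V_{S_{hh}T_{lh}}$ both joined parts are heavy on their exported variable, yielding $\bigO{N^{2-2\eps}}$ and hence $\min\{N^{1+\eps},N^{2-2\eps}\}=N^{\min\{1+\eps,2-2\eps\}}$.

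The remaining views are all asymptotically smaller. Each view projecting onto a single variable that is restricted either by a part heavy on that variable (such as $V_{RS_{s_Ah}}$ or $V_{T_{ht_C}U}$) or by an indicator projection $V_{T_h}$, $V_{S_h}$ of such a part (such as $V_{RS_{hl}T_{hl}}$, $V_{RS_{ll}T_{lh}}$, $V_{S_{lh}T_{lh}U}$, $V_{S_{hl}T_{ll}U}$) has at most $2N^{1-\eps}$ nonzero entries, hence size $\bigO{N^{1-\eps}}$; the views $V_{RS_{ll}}$ and $V_{T_{ll}U}$, which range over an unrestricted $B$-value, have at most $N$ entries. Together with the constant-size $Q$, all of these are $\bigO{N^{1+\min\{\eps,1-\eps\}}}$.

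To finish, I would take the maximum over all views. The only genuinely nonroutine step is the arithmetic comparison showing that both families of join-view bounds are dominated by the target exponent, i.e.\ that $\min\{1+\eps,2-2\eps\}\le 1+\min\{\eps,1-\eps\}$ and $1-\eps\le 1+\min\{\eps,1-\eps\}$ for all $\eps\in[0,1]$; a short case split on whether $\eps\le\tfrac12$ (and on whether $\eps\le\tfrac13$ for the first inequality) settles this. Combined with $N=\Theta(|\db|)$, the overall space is $\bigO{|\db|^{1+\min\{\eps,1-\eps\}}}$. The main obstacle is bookkeeping rather than depth: one must verify that each of the fifteen views in Figure~\ref{fig:view_definitions_path_4} admits one of the three estimates above, and in particular that the indicator-projection factors $V_{T_h}$ and $V_{S_h}$ correctly cap the number of surviving $B$-values at $2N^{1-\eps}$.
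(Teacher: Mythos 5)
Your proof is correct and follows essentially the same route as the paper's: a view-by-view analysis of Figure~\ref{fig:view_definitions_path_4}, bounding the four binary join views $V_{S_{\cdot\cdot}T_{\cdot\cdot}}$ by the minimum of a light-degree bound and a heavy-projection bound, and the remaining unary views by $\bigO{N}$ or $\bigO{N^{1-\eps}}$. The only cosmetic difference is that your light-degree bound is the slightly coarser $\bigO{N^{1+\eps}}$ (the paper additionally exploits heaviness on the exported variable to get $\bigO{N^{1+\min\{\eps,1-\eps\}}}$ from that estimate alone), but your projection bound compensates and the final minima agree.
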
  

\begin{proof}
We analyze the space complexity of a state 
$\state = (\eps,N, \inst{P}, \inst{V})$ of a database $\inst{D}$. 
The space occupied by $\eps$ and $N$ is constant. The size 
of the relation partitions in $\inst{P}$ is linear.  

Figure~\ref{fig:view_definitions_path_4} gives the 
sizes of the views in $\inst{V}$. 
The size of $Q$ is constant, since it consists of an empty tuple mapped to the result of the query. 
The views $V_{S_{ll}T_{lh}}$, $V_{S_{hl}T_{ll}}$, $V_{S_{hl}T_{t_Bh}}$ and $V_{S_{hh}T_{lh}}$ admit two size bounds. 
The first bound is the product of the size of the $S$-part and the maximum number of tuples in the $T$-part that match with a single tuple in the $S$-part, i.e., 
$|S_{s_As_B}| \cdot \max_{b \in \Dom(B)}\{|\sigma_{B=b} T_{t_Bt_C}|\}$,
or symmetrically, 
the product of the size of the $T$-part and the maximum number of tuples in the 
$S$-part that match with a single tuple in the $T$-part,
i.e., $|T_{t_At_B}| \cdot \max_{b \in \Dom(B)}\{|\sigma_{B=b} S_{s_As_B}|\}$.
It follows that these views admit the size 
bound $\bigO{N^{1 + \min\{\eps,1-\eps\}}}$, because in $V_{S_{ll}T_{lh}}$ and $V_{S_{hh}T_{lh}}$, the relation part $T_{lh}$ is heavy on $C$ and light on $B$, 
and in $V_{S_{hl}T_{ll}}$ and $V_{S_{hl}T_{t_Bh}}$, the relation part $S_{hl}$ is heavy on $A$ and light on $B$.
The second bound is obtained by taking the product of the number of all possible $A$-values in the $S$-part and the number of all possible 
$C$-values in the $T$-part. 
Hence, the 
views $V_{S_{hh}T_{lh}}$ and $V_{S_{hl}T_{t_Bh}}$ admit the 
size bound $\bigO{N^{2-2\eps}}$, because the $S$-part is heavy on 
$A$ and the $T$-part is heavy on $C$. 
Thus, the size bound of these views is the minimum of these two bounds.

We next analyze the space complexity of the views $V_{RS_{ll}}$ and $V_{RS_{s_A}h}$. 
The analysis for the views $V_{T_{ll}U}$ and $V_{T_{ht_C}}$ is analogous. 
The space complexity of the views $V_{RS_{ll}}$ and $V_{RS_{s_Ah}}$ are 
$\bigO{N}$ and $\bigO{N^{1-\eps}}$, respectively, because there could be linearly many $B$-values in $S_{ll}$ and at most $\bigO{N^{1-\eps}}$ $B$-values in $S_{s_Ah}$.

Finally, we analyze the space complexity of the views $V_{T_h}$, $V_{RS_{hl}T_{hl}}$, and $V_{RS_{ll}T_{lh}}$.
The analysis for the views $V_{S_h}, V_{S_{lh}T_{lh}U}$ and $V_{S_{hl}T_{ll}U}$ is analogous. 
The size of the view $V_{T_h}$ is $\bigO{N^{1-\eps}}$ because there are 
$\bigO{N^{1-\eps}}$ distinct $B$-values in $T_{hl}$.
The size of the view $V_{RS_{hl}T_{hl}}$ is bounded by the number of $B$-values in $S_{hl}$ and $V_T$. Since there are $\bigO{N^{1-\eps}}$ $B$-values in $V_{T_h}$, the space complexity of $V_{RS_{hl}T_{hl}}$ is $\bigO{N^{1-\eps}}$.
Similarly, the size of the view $V_{RS_{ll}T_{lh}}$ is bounded by the number of $C$-values in $T_{lh}$. Since $T_{lh}$ is heavy on $C$, there are
$\bigO{N^{1-\eps}}$ $C$-values in $T_{lh}$, thus, the space complexity of $V_{RS_{ll}T_{lh}}$ is $\bigO{N^{1-\eps}}$.

Hence, taking the linear space of the relation partitions 
in $\dbeps$  into account,
 the overall space complexity is 
$\bigO{|\inst{D}|^{1 + \max\{\eps, 1-\eps\}}}$.
\end{proof}

\subsection{Processing a Single-Tuple Update to the 4-Path Count}
Figure~\ref{fig:applyUpdate_path_4_delta_R} 
presents the procedure \textsc{ApplyUpdate4Path} that takes as input an update
to $R$  and
a state $\state = (\eps, N, \dbeps,\inst{V})$ of a database $\inst{D}$
with $\dbeps = 
\{R\} \cup
\{S_{s_As_B}\}_{s_A,s_B \in \{h,l\}} \cup
\{T_{t_Bt_C}\}_{t_B,t_C \in \{h,l\}} \cup 
\{U\}$ 
and materialized views $\inst{V}$ as defined in Figure~\ref{fig:view_definitions_path_4}. The procedure 
maintains $\state$ under the update.
Figure~\ref{fig:applyUpdate_path_4_delta_S}
gives the procedure for updates to $S$. 
The procedures for updates to $U$ and $T$ are analogous to the maintenance procedures for updates to $R$ and $S$, respectively. 

\begin{proposition}\label{prop:single_step_time_4_path}
Given a state $\state$ constructed from a database $\db$ for $\eps\in[0,1]$
to support the maintenance of the 4-path count, 
\ivme maintains $\state$ under a single-tuple update to any input relation in 
$\bigO{|\db|^{\max\{\eps,1-\eps\}}}$ time.
\end{proposition}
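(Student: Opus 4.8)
The plan is to bound the running time of the two \textsc{ApplyUpdate4Path} procedures in Figures~\ref{fig:applyUpdate_path_4_delta_R} and~\ref{fig:applyUpdate_path_4_delta_S} line by line, exactly as in the proof of Proposition~\ref{prop:single_step_time}. By the symmetry of the 4-path query and of the auxiliary views in Figure~\ref{fig:view_definitions_path_4}, updates to $U$ are handled analogously to updates to $R$, and updates to $T$ analogously to updates to $S$; it therefore suffices to treat the two cases $\delta R=\{(\deltaA)\mapsto \p\}$ and $\delta S_{s_As_B}=\{(\deltaA,\deltaB)\mapsto \p\}$. Throughout I will use the two degree bounds implied by the refined partition of $S$ and $T$ (Definition~\ref{def:loose_relation_partition_sets}): a relation part heavy on a variable $X$ contains at most $2N^{1-\eps}$ distinct $X$-values, and a part light on $X$ has fewer than $\frac{3}{2}N^{\eps}$ tuples with any fixed $X$-value. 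Combined with $|\db|=\Theta(N)$ from the size invariant, showing that every line runs in $\bigO{N^{\max\{\eps,1-\eps\}}}$ time yields the claim.

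For the update to $R$, the variable $A$ is fixed to $\deltaA$, so each delta skew-aware view $\delta Q_{s_As_Bt_Bt_C}$ iterates over the $B$-values $b$ with $S_{s_As_B}(\deltaA,b)\neq 0$ and, for each, evaluates an aggregate over the $C$-values shared by $T_{t_Bt_C}$ and $U$. I would verify the time annotated on each line of Figure~\ref{fig:applyUpdate_path_4_delta_R}. Whenever $S_{s_As_B}$ is light on $A$ (the views $Q_{ls_Bt_Bt_C}$) there are fewer than $\frac{3}{2}N^{\eps}$ such $b$, and the inner aggregate is a constant-time lookup into a precomputed view such as $V_{T_{ll}U}$ or $V_{T_{ht_C}U}$; for $Q_{llht_C}$ and the $Q_{lht_Bt_C}$ cases the heaviness of $S_{lh}$ on $B$ (or of $T$ on $C$) moreover allows the smaller $\bigO{N^{\min\{\eps,1-\eps\}}}$ iteration. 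When $S_{s_As_B}$ is heavy on $A$ (the views $Q_{hs_Bt_Bt_C}$) the set of matching $b$ may be linear, and here the auxiliary views are essential: $Q_{hlll}$ is a single lookup into the fully aggregated $V_{S_{hl}T_{ll}U}$; $Q_{hlhl}$, $Q_{hlt_Bh}$, and $Q_{hhlh}$ iterate over the $\bigO{N^{1-\eps}}$ active $B$- or $C$-values carried by $V_{T_{hl}U}$, $V_{S_{hl}T_{t_Bh}}$, and $V_{S_{hh}T_{lh}}$ (each heavy on the iterated variable); and $Q_{hhll}$, $Q_{hhht_C}$ exploit that $S_{hh}$ is heavy on $B$ and hence has only $\bigO{N^{1-\eps}}$ distinct $B$-values. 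The subsequent maintenance of the views referencing $R$ is bounded the same way, with the indicator-projection view $V_{T_h}$ keeping the number of touched $B$-values at $\bigO{N^{1-\eps}}$ inside $V_{RS_{hl}T_{hl}}$. Taking the maximum over the constantly many lines gives $\bigO{N^{\max\{\eps,1-\eps\}}}$.

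For the update to $S$, both $\deltaA$ and $\deltaB$ are fixed, so each delta skew-aware view (Figure~\ref{fig:applyUpdate_path_4_delta_S}) reduces to a single $C$-aggregate over $T_{t_Bt_C}(\deltaB,\cdot)$ and $U$; its cost is fewer than $\frac{3}{2}N^{\eps}$ or at most $2N^{1-\eps}$ according to whether $T_{t_Bt_C}$ is light or heavy on the relevant variable, except for the case $t_Bt_C=hl$, which is answered in constant time by the lookup $V_{T_{hl}U}(\deltaB)$. The remaining lines maintain the auxiliary views that reference $S$, each by a constant-time update ($V_{RS_{\cdot\cdot}}$, $V_{S_h}$) or a single $C$-scan whose length is again governed by the light/heavy status of the participating $T$-part and hence bounded by $\bigO{N^{\max\{\eps,1-\eps\}}}$. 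Summing the constantly many lines and applying $|\db|=\Theta(N)$ gives the stated bound; the analysis for $U$ and $T$ follows by symmetry.

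The main obstacle, as in the triangle case, is the handling of the skew-aware views that would otherwise require linear time: those in which $S$ is heavy on $B$ while $T$ is light on $B$, so that neither the number of intermediate $B$-values nor the per-$B$ degree is individually sublinear. The crux is to confirm that the auxiliary views of Figure~\ref{fig:view_definitions_path_4}---in particular the indicator-projection views $V_{T_h}=\exists_B T_{hl}$ and $V_{S_h}=\exists_B S_{lh}$, which cap the number of active $B$-values at $\bigO{N^{1-\eps}}$, together with the fully materialized products such as $V_{S_{hl}T_{ll}U}$ and $V_{S_{lh}T_{lh}U}$ that precompute the problematic aggregates---are exactly the views consumed on the corresponding lines, and that each was already shown to be sublinearly maintainable. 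Verifying this correspondence between the views needed for constant-time or sublinear delta evaluation and the views actually materialized is the only genuinely delicate part; the per-line arithmetic is then routine.
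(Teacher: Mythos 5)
Your proposal follows essentially the same route as the paper's proof: a line-by-line analysis of \textsc{ApplyUpdate4Path} for updates to $R$ and to $S$ (with $U$ and $T$ handled by symmetry), bounding each delta via the light/heavy degree bounds and routing the otherwise-linear cases through the precomputed views of Figure~\ref{fig:view_definitions_path_4}, which is exactly what the paper does. One small slip: for $\delta Q_{llht_C}$ the $\bigO{|\db|^{\min\{\eps,1-\eps\}}}$ bound comes from $T_{ht_C}$ being heavy on $B$ (so $V_{T_{ht_C}U}$ contains only $\bigO{|\db|^{1-\eps}}$ distinct $B$-values), not from heaviness of $T$ on $C$; the stated bound is nonetheless correct.
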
 

\begin{proof}
We  first consider updates to $R$
and then to $S$.
\paragraph{Updates to $R$.}
We analyze the computation time 
of the procedure \textsc{ApplyUpdate4Path} for an update
$\delta R= \{(\deltaA) \mapsto m\}$ as given in 
Figure~\ref{fig:applyUpdate_path_4_delta_R}. 
The time to maintain the materialized views 
 is determined by the number of $B$-
 or $C$-values needed to be iterated over during delta computation.

We first analyze the computation of the deltas of the views 
that use views composed 
of a $T$-part and $U$, i.e., the computation of the deltas
 $\delta Q_{llll}$, $\delta Q_{llht_C}$, $\delta Q_{lhll}$, $\delta Q_{lhht_C}$, $\delta Q_{hlhl}$, $\delta Q_{hhll}$, and $\delta Q_{hhht_C}$ (in 
 lines 3, 5, 6, 8, 10, 12, and 14). 
We can choose between  two options when computing 
these deltas.  
We can iterate over the $B$-values paired with 
$\deltaA$ in the $S$-part and look up these $B$-values 
in the view 
that joins the $T$-part with $U$. Alternatively, we can iterate 
over the $B$-values $b$ in the view that joins the $T$-part with
$U$ and look up the tuples $(\deltaA, b)$ in the $S$-part.
The complexity is determined by the number of $B$-values we iterate 
over. 
Computing $\delta Q_{llll}$ (line 3) needs $\bigO{N^{\eps}}$ time, since 
$S_{ll}$ is light on $A$. 
Computing $\delta Q_{llht_C}$, $\delta Q_{lhll}$, and $\delta Q_{lhht_C}$ (lines 5, 6 and, 8) needs $\bigO{N^{\min\{\eps, 1-\eps\}}}$ time, 
since the $S$-part is light on $A$ and 
either the $S$-part is heavy on $B$ or there are 
$\bigO{N^{1-\eps}}$ distinct $B$-values in the view that joins the $T$-part with $U$. 
Computing $\delta Q_{hhll}$ and $\delta Q_{hhht_C}$ (lines 12 and 14) needs 
$\bigO{N^{1-\eps}}$ time, because $S_{hh}$ is heavy on $B$. 

We next analyze the time to compute the 
delta views that use views that join the $S$-part and the $T$-part, i.e., the computation times for $\delta Q_{lllh}$, $\delta Q_{hlt_Bh}$, and $\delta Q_{hhlh}$ (lines 4, 11, and 13). 
To compute these delta views we need to iterate over the $C$-values 
paired with $\deltaA$ in the view that joins the $S$-part and 
the $T$-part and to look up the $C$-values in $U$.
The complexity is determined by the number of $C$-values we need 
to iterate over. 
Since there are at most $\bigO{N^{1-\eps}}$ distinct $C$-values in the views 
$V_{S_{ll}T_{lh}}$, $V_{S_{hl}T_{t_Bh}}$, and $V_{S_{hh}T_{lh}}$, 
the time complexity of computing these deltas is $\bigO{N^{1-\eps}}$.

Computing $\delta Q_{lhlh}$ (line 7) needs the iteration 
 over the $B$-values paired with $\deltaA$ in $V_{S_{lh}T_{lh}U}$
  and the lookups of these $B$-values  in $V_{S_{lh}T_{lh}U}$. 
  Since $S_{lh}$ is heavy on $B$ and light on $A$, the computation time is $\bigO{N^{\min\{\eps, 1-\eps\}}}$.
  Note that in the definition of $\delta Q_{lhlh}$,
   the $S$-part occurs twice: once outside the view $V_{S_{lh}T_{lh}U}$
   and once within $V_{S_{lh}T_{lh}U}$. However, 
   $V_{S_{lh}T_{lh}U}$ uses just an indicator 
  projection of the $S$-part. Hence, the multiplicities in the
   $S$-part do not contribute twice to the computation of 
   $\delta Q_{lhlh}$.
Computing $\delta Q_{hlll}$ (line 9) requires a single lookup of $\deltaA$ in $V_{S_{hl}T_{ll}U}$, which takes constant time.

We now analyze the maintenance time 
for the auxiliary views. 
Maintaining $V_{RS_{ll}}$, $V_{RS_{lh}}$, and $V_{RS_{hh}}$ (lines 16-18) requires the iteration over all $B$-values paired with $\deltaA$ in the $S$-part.
Maintaining $V_{RS_{ll}}$ takes $\bigO{N^{\eps}}$ time, since $S_{ll}$ is light on $A$.
Maintaining $V_{RS_{hh}}$ takes $\bigO{N^{1-\eps}}$ time, since $S_{hh}$ is heavy on $B$.
Maintaining $V_{RS_{lh}}$ takes $\bigO{N^{\min\{\eps, 1-\eps}\}}$ time, 
since $S_{lh}$ is light on $A$ and heavy on $B$.

Maintaining $V_{RS_{hl}T_{hl}}$ requires to iterate over all distinct 
$B$-values  in $V_{T_h}$ and to do lookups of $(\deltaA, b)$ in $S_{hl}$. 
Since there are $\bigO{N^{1-\eps}}$ $B$-values in $V_{T_h}$, the computation 
time is $\bigO{N^{1-\eps}}$.
Maintaining $V_{RS_{ll}T_{lh}}$ requires the iteration over all 
$B$-values paired with $\deltaA$ in $V_{S_{ll}T_{lh}}$.
Since there are $\bigO{N^{1-\eps}}$ $B$-values in $V_{S_{ll}T_{lh}}$, the computation time is $\bigO{N^{1-\eps}}$.

It follows that the overall computation time of the 
procedure \textsc{ApplyUpdate4Path} 
in Figure~\ref{fig:applyUpdate_path_4_delta_R} 
is $\bigO{N^{\max\{\eps, 1-\eps}\}}$, which,   
by $N = \Theta(|\inst{D}|)$, is 
$\bigO{|\inst{D}|^{\max\{\eps, 1-\eps}\}}$. 

\paragraph{Updates to $S$.}
Given an update
$\delta S_{s_As_B}= \{(\deltaA,\deltaB) \mapsto \p\}$ with $s_A,s_B \in \{h,l\}$, 
we analyze the computation time of 
the procedure \textsc{ApplyUpdate4Path} in
Figure~\ref{fig:applyUpdate_path_4_delta_S}

Computing $\delta Q_{s_As_Bll}$, $\delta Q_{s_As_Blh}$,
 and $\delta Q_{s_As_Bhh}$ (lines 3, 4 and 6) requires the 
 iteration over all $C$-values in the $T$-part and the lookups
  of these $C$-values in $U$. 
The computation time is determined by the number of $C$-values needed to be iterated over.
Computing $\delta Q_{s_As_Bll}$ takes $\bigO{N^{\eps}}$ time, since $T_{ll}$ is light on $B$.
Computing $\delta Q_{s_As_Bhh}$ takes $\bigO{N^{1-\eps}}$ time, since $T_{hh}$ is heavy on $C$.
Computing $\delta Q_{s_As_Blh}$ takes $\bigO{N^{\min\{\eps, 1-\eps}\}}$ time, 
 since $T_{lh}$ is light on $B$ and heavy on $C$.
Computing $\delta Q_{s_As_Bhl}$ (line 5) takes constant time, because it only requires the lookup of $\deltaA$ in $R$ and $\deltaB$ in $V_{T_{hl}U}$.

We next analyze the maintenance 
of the auxiliary views. 
Maintaining $V_{RS_{hh}}$, $V_{RS_{lh}}$, and $V_{RS_{ll}}$ (lines 9, 12 and 20) takes constant time, because it only requires the lookup of $\deltaA$ in $R$. 
Maintaining $V_{S_{hh}T_{lh}}$ ,$V_{S_{hl}T_{ll}}$, $V_{S_{hl}T_{lh}}$, $V_{S_{hl}T_{hh}}$, and $V_{S_{ll}T_{lh}}$ (lines 10, 15-17 and 21) requires the iteration
over all $C$-values paired with $\deltaB$ in the $T$-part.
The computation time is determined by the number of $C$-values needed to be iterated over.
Maintaining $V_{S_{hl}T_{ll}}$ takes $\bigO{N^{\eps}}$ time,
 because $T_{ll}$ is light on $B$.
Maintaining $V_{S_{hl}T_{hh}}$ and $V_{S_{ll}T_{lh}}$ takes $\bigO{N^{1-\eps}}$
time, because $T_{hh}$ and $T_{lh}$ are heavy on $C$. 
Maintaining $V_{S_{hh}T_{lh}}$ and $V_{S_{hl}T_{lh}}$ takes $\bigO{N^{\min\{\eps, 1-\eps\}}}$ time, since  $T_{lh}$ is heavy on $C$ and light on $B$.

Maintaining  $V_{S_h}$ (line 13) takes constant time. 
Maintaining $V_{S_{hl}T_{ll}U}$ (line 18) needs the iteration over all 
$C$-values paired with $\deltaB$ in $T_{ll}$ and the lookups of these 
$C$-values in $U$. The computation time is $\bigO{N^{\eps}}$, 
because $T_{ll}$ is light on $B$.
Maintaining  $V_{RS_{ll}T_{lh}}$ (line 22) requires the iteration over all $C$-values paired with $\deltaB$ in $T_{lh}$. 
The computation time is $\bigO{N^{\min\{\eps, 1-\eps\}}}$, because $T_{lh}$ is heavy on $C$ and light on $B$.

We derive that the computation time of the 
procedure \textsc{ApplyUpdate4Path}
for updates to relation $S$  
is $\bigO{N^{\max\{\eps, 1-\eps}\}}$, which,   
by $N = \Theta(|\inst{D}|)$, is 
$\bigO{|\inst{D}|^{\max\{\eps, 1-\eps}\}}$. 

Hence, the overall computation time of the 
procedure \textsc{ApplyUpdate4Path}
for updates to $R$ and $S$ 
is $\bigO{|\inst{D}|^{\max\{\eps, 1-\eps}\}}$. 
\end{proof}

\subsection{Rebalancing Partitions for the 4-Path Count}
The rebalancing strategy for the 
4-path count follows the rebalancing strategy 
for the triangle count given in Section~\ref{sec:rebalancing}. 
A major rebalancing step repartitions the relations strictly 
according to the new threshold and recomputes all materialized auxiliary 
views. Repartitioning relations can be done in linear time. 
It follows from Proposition~\ref{prop:preprocessing_step_4_path}
that the auxiliary views can be computed in time 
$\bigO{|\inst{D}|^{1 + \min{\{\eps, 1-\eps\}}}}$.
This computation time is amortized over $\Omega(|\inst{D}|)$
updates.
A minor rebalancing step deletes  
$\bigO{|\inst{D}|^{\eps}}$ tuples in a
relation part and inserts them into the other  
relation part. By Proposition 
\ref{prop:single_step_time_4_path},
this takes 
$\bigO{|\inst{D}|^{\eps + \max{\{\eps, 1-\eps\}}}}$ time.
This computation time is amortized over 
$\Omega(|\inst{D}|^\eps)$ updates. 
This means that a rebalancing step needs 
$\bigO{|\inst{D}|^{\max{\{\eps, 1-\eps\}}}}$
amortized time. 

Now, we can prove the main theorem of this section.  
\begin{proof}[Proof of Theorem \ref{theo:main_result_path4}]
It follows from 
Propositions~\ref{prop:preprocessing_step_4_path}
and \ref{prop:space_complexity_4_path} that the
 preprocessing time
and the space complexity are
$\bigO{|\inst{D}|^{1 + \min\{\eps,1-\eps\}}}$. 
By Proposition~\ref{prop:single_step_time_4_path}, the 
time to process a single-tuple update is  
$\bigO{|\inst{D}|^{\max\{\eps,1-\eps\}}}$.
Since the amortized rebalancing time
is $\bigO{|\inst{D}|^{\max\{\eps,1-\eps\}}}$, 
the overall amortized update time is
 $\bigO{|\inst{D}|^{\max\{\eps,1-\eps\}}}$.
 Since the 4-path count is materialized and maintained, 
 the answer time is constant.
\end{proof}

\subsection{Worst-Case Optimality of \ivme for the 4-Path Count}
For $\eps = 0.5$, \ivme maintains the 4-path count 
with amortized $\bigO{|\inst{D}|^{\max{\{\eps, 1-\eps\}}}}$
update time and constant answer time.
By the lower bound given in 
Proposition~\ref{prop:lower_bound_path_4}, this is worst-case 
optimal, conditioned on the \OMv conjecture    
(Conjecture \ref{conj:omv}). We next prove this proposition.
  
\nop{
 The worst-case optimality of the update time of the 
\ivme strategy for $\threerpj{2} \cup \threerpj{3}$ queries, 
conditioned on the \OMv conjecture (Conjecture \ref{conj:omv}) 
follows from Proposition 
\ref{prop:lower_bound_3rel_count}. The proof of the proposition is a 
straightforward adaption of the proof of Proposition 
\ref{prop:lower_bound_triangle_count}.
}

\begin{proof}[Proof of Proposition \ref{prop:lower_bound_path_4}] 
The proof is a simple extension of the lower bound 
proof for the triangle count 
(Proposition~\ref{prop:lower_bound_triangle_count}). 
We reduce the \OuMv problem given in Definition 
\ref{def:OuMv} to the incremental maintenance of 
the 4-path count. 
We emphasize on the differences 
to the reduction in the proof of 
Proposition~\ref{prop:lower_bound_triangle_count}.

Assume that there is a dynamic algorithm 
maintaining the 4-path count 
with arbitrary preprocessing time, amortized update time 
$\bigO{|\inst{D}|^{\frac{1}{2}-\gamma}}$, 
and answer time $\bigO{|\inst{D}|^{1-\gamma}}$.
This algorithm can be used 
to solve the $\OuMv$ problem in subcubic time. This
 contradicts the \OuMv conjecture.

Let  $(\vecnormal{M}, (\vecnormal{u}_1,\vecnormal{v}_1), \ldots ,(\vecnormal{u}_n,\vecnormal{v}_n))$ be an input to the $\OuMv$ problem.
The idea is to use relation $S$ to encode the matrix
$\inst{M}$ and to use relations $R$ and $T$ to encode 
the vectors $\vecnormal{u}_i$ and $\vecnormal{v}_i$, respectively. 
We fill relation $U$ with a ``dummy'' tuple not contributing to the 
final count.  
 After the construction of the initial \ivme state for an empty database 
$\db = \{R,S,T,U\}$, we execute at most $n^2$ updates to relation $S$ such that 
$S = \{\, (i,j) \mapsto \vecnormal{M}(i,j) \,\mid\, i,j \in \{1,\ldots, n\} \,\}$.
Then, we execute an additional update to $U$ such that 
$U = \{\, (a) \mapsto 1\}$. 
In each round $r \in \{1, \ldots , n\}$, we execute at most $2n$ updates to the relations $R$ and $T$ 
such that $R = \{\, (i) \mapsto \vecnormal{u}_r(i) \,\mid\, i \in \{1,\ldots, n\} \,\}$
and $T = \{\,(i, a) \mapsto \vecnormal{v}_r(i) \,\mid\, i \in \{1,\ldots, n\} \,\}$.
At the end of round $r$, the 
algorithm outputs 
$1$ if and only if  the 4-path count is nonzero. 
The time analysis of the reduction follows the proof of 
Proposition \ref{prop:lower_bound_triangle_count}. 
\end{proof}

\section{Count Queries with Three Relations}
\label{sec:inc_maint_count_three_rels}
We investigate the incremental maintenance 
of \threer queries, i.e., 
 count queries with three relations.
 We design \ivme strategies that 
maintain \threer queries in worst-case optimal time
(conditioned on the \OMv  conjecture).

We consider queries composed of three relations 
$R$, $S$, and $T$ with schemas 
$(\inst{A}_R,\inst{A}_{RT},\inst{A}_{RS},\inst{A}_{RST})$,
$(\inst{A}_S,\inst{A}_{RS},\inst{A}_{ST},\inst{A}_{RST})$, and 
$(\inst{A}_T,\inst{A}_{ST},\inst{A}_{RT},\inst{A}_{RST})$,
respectively, where 
$\inst{A}_R$, $\inst{A}_S$, $\inst{A}_T$,
$\inst{A}_{RT}$, $\inst{A}_{RS}$, $\inst{A}_{ST}$, and $\inst{A}_{RST}$
are possibly empty tuples of
variables. The tuples are pairwise disjoint, i.e., the
same variable does not occur in two distinct tuples.
The index of each tuple indicates 
in which relations the variables in the tuple occur. 
For instance, 
all variables in $\inst{A}_{RS}$ occur in $R$ and in $S$, but none 
of them occurs in $T$.
Given such a tuple $\inst{A}_I$ with some index $I$, 
we use 
$\inst{a}_I$ to denote a tuple of data values 
over $\inst{A}_I$.
For simplicity, we 
skip the indices under the summation symbols in queries. 
By convention, the sum in a query goes over all 
data values in the domains of the non-free variables of the query.   
In count queries, all variables are bound. 
  
\begin{definition}[\threer Queries]
\label{def:threer_queries}
A \threer query is of the form 
$$Q() = \sum_{}
R(\inst{a}_R,\inst{a}_{RT},\inst{a}_{RS},\inst{a}_{RST}) \ztimes 
S(\inst{a}_S,\inst{a}_{RS},\inst{a}_{ST},\inst{a}_{RST}) \ztimes 
T(\inst{a}_T,\inst{a}_{ST},\inst{a}_{RT},\inst{a}_{RST}).$$
\end{definition}

By skipping a tuple $\inst{a}_I$ in a \threer query, 
we indicate that the corresponding variable tuple
$\inst{A}_I$ is empty.
We call the tuples 
$\inst{A}_R$, $\inst{A}_S$, and $\inst{A}_T$ non-join tuples
and the tuples  
$\inst{A}_{RT}$, $\inst{A}_{RS}$, and $\inst{A}_{ST}$ 
pair-join variable tuples.

\begin{example}
\upshape
To simplify presentation, we often show the hypergraphs 
of \threer queries. The hypergraph of a query
contains a node for each variable tuple
and a hyperedge 
 for each relation in the query.  A hyperedge corresponding 
 to a relation 
 includes those nodes in the hypergraph 
 that represent
 variable tuples in the schema of the relation. 
Consider the \threer query 
$$Q() = \sum
R(\inst{a}_R,\inst{a}_{RT},\inst{a}_{RS},\inst{a}_{RST}) \ztimes 
S(\inst{a}_S,\inst{a}_{RS},\inst{a}_{RST}) \ztimes 
T(\inst{a}_{T},\inst{a}_{RT},\inst{a}_{RST}).$$
The tuple $\inst{a}_{ST}$ is skipped in the query, which means that 
the corresponding variable tuple $\inst{A}_{ST}$ is empty. 
The following figure shows on the left-hand side 
the hypergraph of $Q$ that indicates 
that the variable tuples $\inst{A}_{T}$, $\inst{A}_R$, 
and $\inst{A}_{RS}$ are nonempty and 
the tuples $\inst{A}_{RT}$, $\inst{A}_S$, 
and $\inst{A}_{RST}$ are {\em possibly} empty.
The hypergraph of $\delta Q$ under an update 
$\delta R(\bdeltaA_R,\bdeltaA_{RT},\bdeltaA_{RS},\bdeltaA_{RST})$
to $R$ is given on the right-hand side. 
  
 \begin{center}
\begin{tikzpicture}[scale=0.35]

  \node at (0, 0.5) (Q) {$Q$};
 \node at (-4, -4) (A) {};
  \node at (4, -4) (B) {};
  \node at (0, -4) (D) {};
  \node at (2, -2) (E) {};
  \node at (-2, -2) (F) {};
 \node at (0, -2.5) (X) {};  

      \draw
       ($(F)+(-0.5,0.5)$) 
        to[out=225,in=45] ($(A) + (-0.5,0.5)$)
         to[out=225,in=135] ($(A) + (-0.5,-0.5)$)
         to[out=315,in=225] ($(A) + (0.5,-0.5)$)
        to[out=50,in=190] ($(F) + (0,-0.9)$)    
        to[out=0,in=180] ($(X) + (0,-0.5)$) 
        to[out=0,in=270] ($(X) + (0.5,0)$)
          to[out=90,in=0] ($(X) + (0,0.5)$)               
          to[out=180,in=315] ($(F) + (1,0.3)$)
          to[out=135,in=45] ($(F) + (-0.5,0.5)$);

  \draw
       ($(E)+(0.5,0.5)$) 
        to[out=315,in=135] ($(B) + (0.5,0.5)$)
         to[out=315,in=45] ($(B) + (0.5,-0.5)$)
        to[out=225,in=315] ($(B) + (-0.5,-0.5)$)
        to[out=135,in=0] ($(E) + (0,-1.1)$)  
        to[out=180,in=0] ($(X) + (0,-0.7)$)
        to[out=180,in=270] ($(X) + (-0.5,0)$)
      to[out=90,in=180] ($(X) + (0,0.7)$) 
       to[out=0,in=225] ($(E) + (-0.8,0.3)$)
       to[out=45,in=135] ($(E) + (0.5,0.5)$);

\draw
       ($(A)+(-1.1,0)$) 
        to[out=270,in=180] ($(A) + (0,-1.1)$)
         to[out=0,in=180] ($(B) + (0,-1.1)$)
         to[out=0,in=270] ($(B) + (1.1,0)$)
         to[out=90,in=0] ($(B) + (0,0.5)$)
         to[out=180,in=0] ($(D) + (1.2,0.3)$)  
         to[out=180,in=270] ($(D) + (0.7,0.8)$)  
         to[out=90,in=270] ($(X) + (0.7,0)$)
        to[out=90,in=0] ($(X) + (0,0.9)$)
        to[out=180,in=90] ($(X) + (-0.8,0)$)      
         to[out=270,in=90] ($(D) + (-0.7,0.8)$)  
         to[out=270,in=0] ($(D) + (-1.2,0.3)$)  
         to[out=180,in=0] ($(A) + (0,0.6)$)  
       to[out=180,in=90] ($(A)+(-1.1,0)$);

   \draw [color=black] (-4, -4) circle (.3);
    \draw [color=black, fill=black] (4, -4) circle (.3);
   \draw [color=black, fill = black] (0, -4) circle (.3);     
   \draw [color=black] (2, -2) circle (.3);       
   \draw [color=black, fill = black] (-2, -2) circle (.3);       
   \draw [color=black] (0, -2.5) circle (.3); 
   
   \node at (-3.9, -1.8) (T) {$T$};
  \node at (3.9, -1.8) (S) {$S$};
  \node at (0, -5.8) (R) {$R$}; 


\begin{scope}[xshift = 15cm]
 \node at (0, 0.5) (Q) {$\delta Q$};
  \node at (-4, -4) (A) {};
  \node at (4, -4) (B) {};
  \node at (0, -4) (D) {};
  \node at (2, -2) (E) {};
  \node at (-2, -2) (F) {};
 \node at (0, -2.5) (X) {};  

      \draw
       ($(F)+(-0.5,0.5)$) 
        to[out=225,in=45] ($(A) + (-0.5,0.5)$)
         to[out=225,in=135] ($(A) + (-0.5,-0.5)$)
         to[out=315,in=225] ($(A) + (0.5,-0.5)$)
        to[out=50,in=190] ($(F) + (0,-0.9)$)    
        to[out=0,in=180] ($(X) + (0,-0.5)$) 
        to[out=0,in=270] ($(X) + (0.5,0)$)
          to[out=90,in=0] ($(X) + (0,0.5)$)               
          to[out=180,in=315] ($(F) + (1,0.3)$)
          to[out=135,in=45] ($(F) + (-0.5,0.5)$);

  \draw
       ($(E)+(0.5,0.5)$) 
        to[out=315,in=135] ($(B) + (0.5,0.5)$)
         to[out=315,in=45] ($(B) + (0.5,-0.5)$)
        to[out=225,in=315] ($(B) + (-0.5,-0.5)$)
        to[out=135,in=0] ($(E) + (0,-1.1)$)  
        to[out=180,in=0] ($(X) + (0,-0.7)$)
        to[out=180,in=270] ($(X) + (-0.5,0)$)
      to[out=90,in=180] ($(X) + (0,0.7)$) 
       to[out=0,in=225] ($(E) + (-0.8,0.3)$)
       to[out=45,in=135] ($(E) + (0.5,0.5)$);

\draw[fill= light-gray,fill opacity = 0.4]
       ($(A)+(-1.1,0)$) 
        to[out=270,in=180] ($(A) + (0,-1.1)$)
         to[out=0,in=180] ($(B) + (0,-1.1)$)
         to[out=0,in=270] ($(B) + (1.1,0)$)
         to[out=90,in=0] ($(B) + (0,0.5)$)
         to[out=180,in=0] ($(D) + (1.2,0.3)$)  
         to[out=180,in=270] ($(D) + (0.7,0.8)$)  
         to[out=90,in=270] ($(X) + (0.7,0)$)
        to[out=90,in=0] ($(X) + (0,0.9)$)
        to[out=180,in=90] ($(X) + (-0.8,0)$)      
         to[out=270,in=90] ($(D) + (-0.7,0.8)$)  
         to[out=270,in=0] ($(D) + (-1.2,0.3)$)  
         to[out=180,in=0] ($(A) + (0,0.6)$)  
       to[out=180,in=90] ($(A)+(-1.1,0)$);

  \draw [color=black] (-4, -4) circle (.3);
    \draw [color=black, fill=black] (4, -4) circle (.3);
   \draw [color=black, fill = black] (0, -4) circle (.3);     
   \draw [color=black] (2, -2) circle (.3);       
   \draw [color=black, fill = black] (-2, -2) circle (.3);       
   \draw [color=black] (0, -2.5) circle (.3);
   
   \node at (-3.9, -1.8) (T) {$T$};
  \node at (3.9, -1.8) (S) {$S$};
  \node at (0, -5.8) (R) {$\delta R$}; 

\end{scope}

\end{tikzpicture}
 \end{center}

As shown in the figure, we represent a nonempty variable tuple
by a filled circle, and a possibly empty variable tuple by a non-filled circle. 
A hyperedge representing an update is depicted in gray.  
\null\hfill\qedsymbol
\end{example}

Next, we define classes of \threer queries 
with a bound on the number of nonempty 
pair-join variable tuples.

\begin{definition}[\threer Query Classes]
Given $i\in \{0,1,2,3\}$ 
the class of \threer queries with exactly  
$i$ nonempty pair-join variable tuples is denoted by \threerpj{i}.
\end{definition}  

Observe that $\threer = \bigcup_{i = 0}^{3} \threerpj{i}$.

\begin{example}
\upshape
Let 
\begin{itemize}
\item $Q_1() = \textstyle\sum R(\inst{a}_R, \inst{a}_{RT}, \inst{a}_{RS}, \inst{a}_{RST}) \ztimes 
S(\inst{a}_S,\inst{a}_{RS},\inst{a}_{ST},\inst{a}_{RST}) \ztimes 
T(\inst{a}_T,\inst{a}_{ST},\inst{a}_{RT},\inst{a}_{RST})$,

\item $Q_2() = \textstyle\sum R(\inst{a}_{RT},\inst{a}_{RS}) \ztimes 
S(\inst{a}_S,\inst{a}_{RS}) \ztimes 
T(\inst{a}_T,\inst{a}_{RT})$, 

\item $Q_3() = \textstyle\sum  
R(\inst{a}_R) \ztimes 
S(\inst{a}_S,\inst{a}_{ST}) \ztimes 
T(\inst{a}_T,\inst{a}_{ST})$, and 

\item $Q_4() = \textstyle\sum 
R(\inst{a}_R,\inst{a}_{RST}) \ztimes 
S(\inst{a}_S,\inst{a}_{RST}) \ztimes 
T(\inst{a}_T,\inst{a}_{RST})$, 
\end{itemize}
where all data value tuples given in the queries are over nonempty
variable sets.   
The hypergraphs of the queries are depicted below.  
For instance, in $Q_2$, the variable tuples    
$\inst{A}_T$, $\inst{A}_{RT}$, $\inst{A}_{RS}$, 
and $\inst{A}_{S}$ are nonempty
while 
$\inst{A}_R$, $\inst{A}_{ST}$, and $\inst{A}_{RST}$
are empty. 

\begin{center}
\begin{tikzpicture}[scale=0.35]

  \node at (0, 2) (Q) {$Q_1$};
  \node at (0, 0) (C) {};
  \node at (-4, -4) (A) {};
  \node at (4, -4) (B) {};
  \node at (0, -4) (D) {};
  \node at (2, -2) (E) {};
  \node at (-2, -2) (F) {};
 \node at (0, -2.5) (X) {};  

  \draw
       ($(C)+(0.5,0.5)$) 
        to[out=135,in=45] ($(C) + (-0.5,0.5)$)
        to[out=225,in=45] ($(A) + (-0.5,0.5)$)
         to[out=225,in=135] ($(A) + (-0.5,-0.5)$)
         to[out=315,in=225] ($(A) + (0.5,-0.5)$)
        to[out=50,in=190] ($(F) + (0,-0.9)$)    
        to[out=0,in=180] ($(X) + (0,-0.5)$) 
        to[out=0,in=270] ($(X) + (0.5,0)$)
          to[out=90,in=0] ($(X) + (0,0.5)$)               
          to[out=180,in=225] ($(F) + (1,0.3)$)         
          to[out=45,in=225] ($(C) + (0.5,-0.5)$)  
          to[out=45,in=315] ($(C) + (0.5,0.5)$);

  \draw
       ($(C)+(-0.7,0.7)$) 
        to[out=45,in=135] ($(C) + (0.7,0.7)$)
        to[out=315,in=130] ($(B) + (0.5,0.5)$)
         to[out=315,in=45] ($(B) + (0.5,-0.5)$)
        to[out=225,in=315] ($(B) + (-0.5,-0.5)$)
        to[out=135,in=0] ($(E) + (0,-1.1)$)  
        to[out=180,in=0] ($(X) + (0,-0.7)$)
        to[out=180,in=270] ($(X) + (-0.5,0)$)
      to[out=90,in=180] ($(X) + (0,0.7)$) 
       to[out=0,in=225] ($(E) + (-1,0.3)$)        
       to[out=45,in=315] ($(C) + (-0.7,-0.5)$)  
       to[out=135,in=225] ($(C) + (-0.7,0.7)$);

  \draw
       ($(A)+(-1.1,0)$) 
        to[out=270,in=180] ($(A) + (0,-1.1)$)
         to[out=0,in=180] ($(B) + (0,-1.1)$)
         to[out=0,in=270] ($(B) + (1.1,0)$)
         to[out=90,in=0] ($(B) + (0,0.5)$)
         to[out=180,in=0] ($(D) + (1.2,0.3)$)  
         to[out=180,in=270] ($(D) + (0.7,0.8)$)  
         to[out=90,in=270] ($(X) + (0.7,0)$)
        to[out=90,in=0] ($(X) + (0,0.9)$)
        to[out=180,in=90] ($(X) + (-0.8,0)$)      
         to[out=270,in=90] ($(D) + (-0.7,0.8)$)  
         to[out=270,in=0] ($(D) + (-1.2,0.3)$)  
         to[out=180,in=0] ($(A) + (0,0.6)$)  
       to[out=180,in=90] ($(A)+(-1.1,0)$);                        
      
       \draw [color=black, fill=black] (0, 0) circle (.3);
\draw [color=black, fill=black] (-4, -4) circle (.3);
        \draw [color=black, fill=black] (4, -4) circle (.3);
        \draw [color=black, fill=black] (0, -4) circle (.3);
        \draw [color=black, fill=black] (2, -2) circle (.3);
\draw [color=black, fill=black] (-2, -2) circle (.3);
\draw [color=black, fill=black] (0, -2.5) circle (.3);
         
   \node at (-3.9, -1.8) (T) {$T$};
  \node at (3.9, -1.8) (S) {$S$};
  \node at (0, -5.8) (R) {$R$}; 


\begin{scope}[xshift = 11cm]
 \node at (2, 2) (Q) {$Q_2$};
  \node at (-2.5, -2) (F) {};
  \node at (0, -2) (A) {};
  \node at (3, -2) (B) {};
  \node at (5.5, -2) (E) {};

  \draw
       ($(F)+(-1,0)$) 
        to[out=270,in=180] ($(F) + (0,-0.7)$)
         to[out=0,in=180] ($(A) + (0,-0.7)$)  
         to[out=0,in=270] ($(A) + (1,0)$)  
         to[out=90,in=0] ($(A) + (0,0.7)$)  
         to[out=180,in=0] ($(F) + (0,0.7)$)  
       to[out=180,in=90] ($(F)+(-1,0)$);                        
       
     \draw
       ($(A)+(-1,0)$) 
        to[out=270,in=180] ($(A) + (0,-0.9)$)
         to[out=0,in=180] ($(B) + (0,-0.9)$)  
         to[out=0,in=270] ($(B) + (1,0)$)  
         to[out=90,in=0] ($(B) + (0,0.9)$)  
         to[out=180,in=0] ($(A) + (0,0.9)$)  
       to[out=180,in=90] ($(A)+(-1,0)$);                            

     \draw
       ($(B)+(-1,0)$) 
        to[out=270,in=180] ($(B) + (0,-0.7)$)
         to[out=0,in=180] ($(E) + (0,-0.7)$)  
         to[out=0,in=270] ($(E) + (1,0)$)  
         to[out=90,in=0] ($(E) + (0,0.7)$)  
         to[out=180,in=0] ($(B) + (0,0.7)$)  
       to[out=180,in=90] ($(B)+(-1,0)$);

  \draw [color=black, fill=black] (-2.5, -2) circle (.3);
  \draw [color=black, fill=black] (0, -2) circle (.3);
  \draw [color=black, fill=black] (3, -2) circle (.3);
  \draw [color=black, fill=black] (5.5, -2) circle (.3);

   \node at (-1.5, -0.5) (T) {$T$};
  \node at (5, -0.5) (S) {$S$};
  \node at (1.4, -3.5) (R) {$R$}; 
\end{scope}

\begin{scope}[xshift = 23cm]
   \node at (0, 2) (Q) {$Q_3$};
   
  \node at (0, 0) (C) {};
  \node at (0, -4) (D) {};
  \node at (2, -2) (E) {};
  \node at (-2, -2) (F) {};
  
  \draw
       ($(C)+(0.8,0.8)$) 
         to[out=315,in=45] ($(C) + (0.7,-0.7)$)
        to[out=225,in=45] ($(F) + (0.7,-0.7)$)          
          to[out=225,in=315] ($(F) + (-0.8,-0.8)$)         
          to[out=135,in=225] ($(F) + (-0.7,0.7)$)     
         to[out=45,in=225] ($(C) + (-0.7,0.7)$)      
       to[out=45,in=135] ($(C) + (0.8,0.8)$);         
          
   \draw
       ($(C)+(-0.5,0.5)$) 
        to[out=45,in=135] ($(C) + (0.5,0.5)$)
        to[out=315,in=135] ($(E) + (0.5,0.5)$)
       to[out=315,in=45] ($(E) + (0.5,-0.5)$)
            to[out=225,in=315] ($(E) + (-0.5,-0.5)$)
            to[out=135,in=315] ($(C) + (-0.5,-0.5)$)                             
           to[out=135,in=225] ($(C) + (-0.5,0.5)$);

  \draw
       ($(D)+(0,0.8)$) 
        to[out=0,in=90] ($(D) + (0.8,0)$)
         to[out=270,in=0] ($(D) + (0,-0.8)$)  
         to[out=180,in=270] ($(D) + (-0.8,0)$)  
        to[out=90,in=180] ($(D) + (0,0.8)$);

       \draw [color=black, fill=black] (0, 0) circle (.3);
  \draw [color=black, fill=black] (0, -4) circle (.3);
  \draw [color=black, fill=black] (2, -2) circle (.3);
  \draw [color=black, fill=black] (-2, -2) circle (.3);
   
   \node at (-3, -0.8) (T) {$T$};
  \node at (3, -0.8) (S) {$S$};
  \node at (0, -5.6) (R) {$R$}; 

\end{scope}

\begin{scope}[xshift = 32cm]
 \node at (0, 2) (Q) {$Q_4$};

  \node at (0, -4) (D) {};
  \node at (2.5, -1.5) (E) {};
  \node at (-2.5, -1.5) (F) {};
  \node at (0, -1.5) (X) {};
  
  \draw
       ($(F)+(-0.6,0)$) 
         to[out=90,in=180] ($(F) + (0,0.6)$)
        to[out=0,in=180] ($(X) + (0,0.6)$)          
          to[out=0,in=90] ($(X) + (0.6,0)$)         
          to[out=270,in=0] ($(X) + (0,-0.6)$)     
         to[out=180,in=0] ($(F) + (0,-0.6)$)         
         to[out=180,in=270] ($(F) + (-0.6,0)$);        
    
   \draw
       ($(X)+(0,1)$) 
         to[out=0,in=90] ($(X) + (0.9,0)$)
        to[out=270,in=90] ($(D) + (0.9,0)$)          
          to[out=270,in=0] ($(D) + (0,-0.9)$)         
          to[out=180,in=270] ($(D) + (-0.9,0)$)     
         to[out=90,in=270] ($(X) + (-1,0)$)         
         to[out=90,in=180] ($(X) + (0,1)$);

   \draw
       ($(X)+(-0.8,0)$) 
         to[out=90,in=180] ($(X) + (0,0.8)$)
        to[out=0,in=180] ($(E) + (0,0.8)$)          
          to[out=0,in=90] ($(E) + (0.8,0)$)         
          to[out=270,in=0] ($(E) + (0,-0.8)$)     
         to[out=180,in=0] ($(X) + (0,-0.8)$)         
         to[out=180,in=270] ($(X) + (-0.8,0)$);           
         
 \draw [color=black, fill=black] (0, -4) circle (.3);
  \draw [color=black, fill=black] (2.5, -1.5) circle (.3);
 \draw [color=black, fill=black] (-2.5, -1.5) circle (.3);
  \draw [color=black, fill=black] (0, -1.5) circle (.3);

   \node at (-2.5, -0.1) (T) {$T$};
  \node at (3, -0.1) (S) {$S$};
  \node at (0, -5.6) (R) {$R$}; 

\end{scope}

\end{tikzpicture}
 \end{center}

The query 
$Q_1$
is a \threerpj{3} query, since all three 
pair-join variable 
tuples
$\inst{A}_{RT}$, $\inst{A}_{RS}$, and $\inst{A}_{ST}$
are nonempty.
The query is cyclic and non-hierarchical (for a definition of 
hierarchical queries, see Appendix \ref{sec:maintenance_3-rel_1-pair-join}. 
The query $Q_2$
with its two nonempty pair-join variable  
tuples 
$\inst{A}_{RT}$ and $\inst{A}_{RS}$
belongs to the class \threerpj{2}.
It is a non-hierarchical path query. 
The pair-join variable tuple $\inst{A}_{ST}$
is nonempty in $Q_3$. The other two 
 pair-join variable tuples are empty.  
Therefore, it is a 
\threerpj{1} query.
The query 
$Q_4$ is a 
 \threerpj{0} query, since it does not have any nonempty pair-join variable
 tuple. 
 The latter two queries are acyclic and hierarchical.
\null\hfill\qedsymbol
\end{example}

The maintenance complexity 
of a \threer  query 
depends on the number 
of its nonempty pair-join variable tuples. 
The relationship 
between the number of such tuples and the simplicity 
of the structure of a query is evident:
\threerpj{3} queries 
are cyclic with the triangle count being 
the simplest query of this class. 
The queries in the class 
\threerpj{2}
are acyclic and it is well-known 
that such queries admit good 
computational behaviour \cite{BeeriFMY83}.
Finally, the class $\threerpj{0} \cup \threerpj{1}$ 
consists of all hierarchical \threer queries 
 (Proposition \ref{prop:hierarchical_3_rel_queries}),
 for which it is already known that they can be maintained 
 with constant update and answer time
 \cite{BerkholzKS17}.
The following theorem summarizes our main results 
on \threer queries.

\begin{theorem}\label{theo:maintain_3rel_3pair-join}
Given a \threer query, a
database $\db$ and $\eps \in [0,1]$, 
\ivme maintains the query under single-tuple updates 
to $\inst{D}$ with the complexities given in Table 
\ref{complexity_table}.

\begin{table}[h]
\begin{center}
\renewcommand{\arraystretch}{1.2}
\begin{tabular}{@{\hskip 0.05in}l@{\hskip 0.2in}l@{\hskip 0.2in}l@{\hskip 0.1in}l@{\hskip 0.05in}}

Query Class & & Maintenance Complexities \\
\bottomrule
\threerpj{3} & preprocessing time & $\bigO{|\inst{D}|^{\frac{3}{2}}}$\\
& update time & amortized $\bigO{|\inst{D}|^{\max\{\eps,1-\eps\}}}$ \\
& answer time & constant \\
& space &  $\bigO{|\inst{D}|^{1 + \min\{\eps,1-\eps\}}}$\\

\midrule
\threerpj{2} & preprocessing time & $\bigO{|\inst{D}|}$\\
& update time & amortized $\bigO{|\inst{D}|^{\max\{\eps,1-\eps\}}}$ \\
& answer time & constant \\
& space &  $\bigO{|\inst{D}|}$\\

\midrule
$\threerpj{0} \cup \threerpj{1}$ & preprocessing time & $\bigO{|\inst{D}|}$\\
& update time & (non-amortized) constant \\
& answer time & constant \\
& space &  $\bigO{|\inst{D}|}$\\
\end{tabular}
\end{center}
\caption{The time and space complexities of maintaining 
\threer queries with \ivme.}
\label{complexity_table}
\end{table}
\end{theorem}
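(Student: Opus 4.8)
The plan is to treat the three query classes separately, preceded by one normalization common to all of them. Since every non-join variable tuple $\inst{A}_R$, $\inst{A}_S$, $\inst{A}_T$ occurs in exactly one relation, I would first replace $R$, $S$, $T$ by their projections $R''$, $S''$, $T''$ obtained by summing out these tuples, e.g.\ $R''(\inst{a}_{RT},\inst{a}_{RS},\inst{a}_{RST}) = \sum_{\inst{a}_R} R(\inst{a}_R,\inst{a}_{RT},\inst{a}_{RS},\inst{a}_{RST})$. By distributivity of the ring operations the count is unchanged, $|R''|\le|R|$, and a single-tuple update to $R$ adds its multiplicity to exactly one entry of $R''$ in constant time; likewise for $S$ and $T$. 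Hence it suffices to maintain the projected query, whose join hypergraph involves only the pair-join tuples and the triple-join tuple $\inst{A}_{RST}$, and the class membership is unaffected.

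For \threerpj{3} I would reduce the projected query to the triangle count of Query~\eqref{query:triangle} and invoke Theorem~\ref{theo:main_result}. Introduce the super-variables $\inst{A} = (\inst{A}_{RT},\inst{A}_{RST})$, $\inst{B} = \inst{A}_{RS}$, and $\inst{C} = (\inst{A}_{ST},\inst{A}_{RST})$, so that $R''$, $S''$, $T''$ acquire the schemas $(\inst{A},\inst{B})$, $(\inst{B},\inst{C})$, $(\inst{C},\inst{A})$. The key point is that folding $\inst{A}_{RST}$ into both $\inst{A}$ and $\inst{C}$ is sound: within each stored tuple the two copies of $\inst{A}_{RST}$ coincide, and the triangle join then forces the $\inst{A}_{RST}$-components carried by $\inst{A}$ and by $\inst{C}$ to agree across all three relations, so exactly the consistent tuples survive. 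The result is a genuine triangle count over relations of size $\bigO{|\db|}$ partitioned on $\inst{A}$, $\inst{B}$, $\inst{C}$, and every single-tuple update maps to one single-tuple update in super-variable form, so all bounds of Theorem~\ref{theo:main_result} transfer, including the $\bigO{|\db|^{3/2}}$ preprocessing for the initial count.

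For \threerpj{2}, say with $\inst{A}_{ST}$ empty, the same folding with $\inst{A}=(\inst{A}_{RT},\inst{A}_{RST})$ and $\inst{B}=(\inst{A}_{RS},\inst{A}_{RST})$ yields the acyclic, non-hierarchical query $Q() = \sum_{\inst{a},\inst{b}} R''(\inst{a},\inst{b})\ztimes S''(\inst{b})\ztimes T''(\inst{a})$ (non-hierarchical since the atom sets of $\inst{A}$ and $\inst{B}$ are incomparable). Here I would partition only the binary relation $R''$ on $\inst{B}$ with threshold $N^{\eps}$ and maintain $V_l(\inst{a}) = \sum_{\inst{b}} R''_l(\inst{a},\inst{b})\ztimes S''(\inst{b})$ and $V_h(\inst{b}) = \sum_{\inst{a}} R''_h(\inst{a},\inst{b})\ztimes T''(\inst{a})$, so that $Q() = \sum_{\inst{a}} V_l(\inst{a})\ztimes T''(\inst{a}) + \sum_{\inst{b}} V_h(\inst{b})\ztimes S''(\inst{b})$. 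Updates to $R''$ cost constant time; an update to $S''$ reaches $V_l$ only through a light $\inst{B}$-value and costs $\bigO{N^{\eps}}$, while an update to $T''$ reaches $V_h$ through at most $\bigO{N^{1-\eps}}$ heavy $\inst{B}$-values, giving $\bigO{|\db|^{\max\{\eps,1-\eps\}}}$ update time. Both views have size $\bigO{|\db|}$, so preprocessing and space are linear, and the rebalancing of $R''$ is amortized exactly as in Section~\ref{sec:rebalancing}.

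For $\threerpj{0}\cup\threerpj{1}$ I would use that these are precisely the hierarchical \threer queries (Proposition~\ref{prop:hierarchical_3_rel_queries}): after projection the only surviving variables are $\inst{A}_{RST}$ with at most one pair-join tuple nested below it, so a two-level view tree—materializing a view that aggregates the inner pair-join tuple and then the count over $\inst{A}_{RST}$—propagates every single-tuple update by constant-time lookups and stores $\bigO{|\db|}$ entries, matching the constant update and answer time of q-hierarchical queries~\cite{BerkholzKS17}. The main obstacle I anticipate is making the \threerpj{3} reduction fully rigorous: one must check that folding the shared tuple $\inst{A}_{RST}$ into two super-variables preserves multiplicities and the degrees on which the partition depends, that the heavy/light bounds and thresholds behave exactly as in the pure triangle case, and that the several symmetric sub-cases (which pair-join tuple is absent, which relation is updated) are all captured by one argument; the remaining complexity and amortization claims then follow by direct appeal to the triangle analysis.
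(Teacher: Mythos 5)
Your proposal is correct, and for the two non-trivial classes it takes a genuinely different route from the paper. The opening normalization (aggregating away the non-join variable tuples) and the treatment of $\threerpj{0}\cup\threerpj{1}$ via Proposition~\ref{prop:hierarchical_3_rel_queries} and a two-level view tree match the paper, whose ``optimization phase'' discards the redundant partition and arrives at exactly your view $V_{RS}(\inst{a}_{RST})$. For \threerpj{3} and \threerpj{2}, however, the paper does not reduce to the triangle theorem: it develops a general meta-strategy (Figures~\ref{fig:meta_view_definitions} and~\ref{fig:best_update_strategy}) that partitions $R$, $S$, $T$ on the pair-join tuples $\inst{A}_{RT}$, $\inst{A}_{RS}$, $\inst{A}_{ST}$ alone (not on super-variables containing $\inst{A}_{RST}$), materializes the auxiliary views $V_{S_hT_l}$, $V_{T_hR_l}$, $V_{R_hS_l}$, and re-proves the space and delta-time bounds case by case. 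Your super-variable folding $\inst{A}=(\inst{A}_{RT},\inst{A}_{RST})$, $\inst{B}=\inst{A}_{RS}$, $\inst{C}=(\inst{A}_{ST},\inst{A}_{RST})$ is sound: the duplicated $\inst{A}_{RST}$ component is consistent within every stored tuple and the triangle join propagates that equality to all three relations, multiplicities and single-tuple updates are preserved under the projection, and the proof of Theorem~\ref{theo:main_result} uses only equality and hashing on domain values, so it applies verbatim to compound variables; this buys you the \threerpj{3} row of the table essentially for free. Your \threerpj{2} decomposition also differs: you partition only the central binary relation $R''$ on $\inst{B}$ and keep $S''$, $T''$ whole, maintaining the two linear-size views $V_l$ and $V_h$, whereas the paper partitions both $R$ (on $\inst{A}_{RT}$) and $S$ (on $\inst{A}_{RS}$) and materializes $V_{R_lT}$ and $V_{R_hS_l}$; both schemes give constant-time updates to the binary relation, $\bigO{|\db|^{\eps}}$ and $\bigO{|\db|^{1-\eps}}$ for the two unary ones, linear space, and your observation that each tuple move during rebalancing costs $\bigO{1}$ here makes the amortization immediate. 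What the reduction loses relative to the paper is only uniformity of presentation: the paper's single set of view and delta tables covers all pair-join configurations and symmetric sub-cases at once, which is what lets it mechanically identify the redundant partitions in the hierarchical case.
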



Theorem \ref{theo:maintain_3rel_3pair-join} 
states that \ivme recovers 
the result   that hierarchical 
queries can be maintained with constant-time updates \cite{BerkholzKS17}. 
This result is obviously optimal with respect to update and answer times. 
The following Proposition 
\ref{prop:lower_bound_3rel_count}
implies that for $\eps = \frac{1}{2}$, 
the update and answer times  
for the other query classes
in Theorem \ref{theo:maintain_3rel_3pair-join}
are optimal as well. 

\begin{proposition}
\label{prop:lower_bound_3rel_count}
For any $\gamma > 0$ and database $\db$,
there is no algorithm that incrementally maintains a 
\threerpj{i} query with $i \in \{2,3\}$ 
under single-tuple updates to $\db$ with arbitrary preprocessing time, $\bigO{|\db|^{\frac{1}{2} - \gamma}}$ amortized update time, and $\bigO{|\db|^{1 - \gamma}}$ answer time, unless the \OMv conjecture fails.
\end{proposition}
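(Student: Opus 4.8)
The plan is to reduce the \OuMv problem (Definition~\ref{def:OuMv}) to the incremental maintenance of any fixed \threerpj{i} query with $i\in\{2,3\}$, mirroring the reduction in the proof of Proposition~\ref{prop:lower_bound_triangle_count}. I would assume for contradiction that some algorithm maintains the query with arbitrary preprocessing time, amortized update time $\bigO{|\db|^{\frac{1}{2}-\gamma}}$, and answer time $\bigO{|\db|^{1-\gamma}}$, and use it to solve \OuMv in subcubic time, contradicting the \OuMv conjecture (Conjecture~\ref{conj:OuMv}) and hence the \OMv conjecture (Conjecture~\ref{conj:omv}).

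For a \threerpj{3} query, all three pair-join variable tuples $\inst{A}_{RS}$, $\inst{A}_{ST}$, $\inst{A}_{RT}$ are nonempty, so the query carries the same cyclic join structure as the triangle count. I would fix every non-join variable (those in $\inst{A}_R$, $\inst{A}_S$, $\inst{A}_T$) together with every variable in $\inst{A}_{RST}$ to a single dummy constant, and inject one matrix/vector index into each of $\inst{A}_{RS}$, $\inst{A}_{ST}$, $\inst{A}_{RT}$. Under this encoding the three relations behave exactly as the relations of Query~\eqref{query:triangle} (up to the order of arguments), so the reduction of Proposition~\ref{prop:lower_bound_triangle_count}---insert $\vecnormal{M}$ into the relation spanning the matrix indices, then inject $\vecnormal{u}_r$ and $\vecnormal{v}_r$ into the other two relations in each round---applies verbatim, and the count is nonzero exactly when $\vecnormal{u}_r^{\text{T}}\vecnormal{M}\vecnormal{v}_r=1$.

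For a \threerpj{2} query, exactly two pair-join tuples are nonempty, so one relation shares a variable tuple with each of the other two and plays the role of the middle of a length-two path, while the third pair-join tuple is empty (say $\inst{A}_{ST}$, so that $R$ is the middle). I would encode $\vecnormal{M}$ in this middle relation using its two shared index tuples as the row and column of the matrix, encode $\vecnormal{u}_r$ and $\vecnormal{v}_r$ in the two outer relations, and again fix all non-join and $\inst{A}_{RST}$ variables to the dummy constant. This is precisely the path reduction already used in the proof of Proposition~\ref{prop:lower_bound_path_4}, and the count after round $r$ equals $\vecnormal{u}_r^{\text{T}}\vecnormal{M}\vecnormal{v}_r$. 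By symmetry, the remaining two choices of which pair of pair-join tuples is nonempty are analogous.

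The cost accounting is identical to Proposition~\ref{prop:lower_bound_triangle_count} in both cases: constructing the matrix uses $\bigO{n^2}$ updates over a database of size $\bigO{n^2}$, costing $\bigO{n^2\ztimes n^{1-2\gamma}}=\bigO{n^{3-2\gamma}}$, while each of the $n$ rounds performs $\bigO{n}$ updates and one answer query in $\bigO{n^{2-2\gamma}}$ time, for $\bigO{n^{3-2\gamma}}$ overall. This solves \OuMv in subcubic time, the desired contradiction. The only mild obstacle is bookkeeping: one must verify that fixing the non-join and $\inst{A}_{RST}$ variables to a constant leaves multiplicities untouched and that the index encoding is injective, so that a single triangle (respectively, its path analogue) is formed iff the corresponding matrix and vector entries all equal one; this is routine given the pairwise disjointness of the variable tuples guaranteed by Definition~\ref{def:threer_queries}.
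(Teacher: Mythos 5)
Your proposal is correct and follows essentially the same route as the paper: a reduction from \OuMv in which the matrix is encoded in the relation that shares a pair-join variable tuple with each of the other two, the vectors are injected into those other two relations round by round, all remaining (non-join and $\inst{A}_{RST}$) variables are pinned to a dummy constant, and the cost accounting of Proposition~\ref{prop:lower_bound_triangle_count} is reused verbatim. The paper only spells out the \threerpj{2} case (with $\inst{A}_{RT}$ empty, so $S$ is the middle relation) and dismisses \threerpj{3} as a simple extension, whereas you detail both cases and the injectivity bookkeeping explicitly; the choice of which pair-join tuple is empty differs only by symmetry.
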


As a corollary of Theorem \ref{theo:maintain_3rel_3pair-join}  and Proposition \ref{prop:lower_bound_3rel_count} we derive that \ivme achieves for each \threer
query the worst-case optimal update time with constant answer time:

\begin{corollary}[Theorem \ref{theo:maintain_3rel_3pair-join} and Proposition~\ref{prop:lower_bound_3rel_count}]\label{cor:ivme_optimal_3Rel}
Given a \threer query $Q$ and a database $\db$, \ivme incrementally maintains 
$Q$ under single-tuple updates to $\db$ with worst-case optimal update time 
 and constant answer time. 
If $Q$ is a $\threerpj{i}$ query with $i \in \{2,3\}$, the update time is amortized 
$\bigO{|\db|^{\frac{1}{2}}}$ and worst-case optimality is conditioned
on the \OMv conjecture.
If $Q$ is a \threerpj{i} query with $i \in \{0,1\}$, the update time is 
(non-amortized) constant.  
\end{corollary}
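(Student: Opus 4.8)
The plan is to reduce every \threer query to the canonical triangle-count setting and then reuse, class by class, the partitioning, auxiliary-view, delta-processing, and rebalancing apparatus established for Query~\eqref{query:triangle} in Theorem~\ref{theo:main_result}. The first step, common to all three classes, is to aggregate away the non-join variable tuples $\inst{A}_R$, $\inst{A}_S$, and $\inst{A}_T$. For each input relation I would materialize a projection view, e.g. $R'(\inst{a}_{RT},\inst{a}_{RS},\inst{a}_{RST}) = \sum_{\inst{a}_R} R(\inst{a}_R,\inst{a}_{RT},\inst{a}_{RS},\inst{a}_{RST})$, and symmetrically $S'$ and $T'$. Since $\inst{A}_R$ occurs only in $R$, distributivity gives that $Q$ computed over $R',S',T'$ equals $Q$ over $R,S,T$; moreover a single-tuple update to $R$ changes exactly one entry of $R'$ and hence translates into a single-tuple update to $R'$, so $R'$ is maintainable in constant time and linear space. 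After this step the query ranges only over the pair-join tuples and the all-shared tuple $\inst{A}_{RST}$, and its class is unchanged.

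For the \threerpj{3} case I would show that the reduced query is, up to renaming, the triangle count fibered over $\inst{A}_{RST}$: writing $A := \inst{A}_{RT}$, $B := \inst{A}_{RS}$, $C := \inst{A}_{ST}$, the relations $R'$, $S'$, $T'$ play the roles of $R$, $S$, $T$ within each fixed value of $\inst{a}_{RST}$, which only refines the grouping. The fractional edge cover LP is identical to the triangle's (the all-shared node imposes only the redundant constraint $x_R+x_S+x_T\ge 1$), so its optimum is $\frac{3}{2}$ and a worst-case optimal join computes the initial count in $\bigO{|\db|^{\frac{3}{2}}}$ as in Proposition~\ref{prop:preprocessing_step}, via~\cite{Ngo:SIGREC:2013}. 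The partitions of Definition~\ref{def:loose_relation_partition}, the three auxiliary views of Figure~\ref{fig:view_definitions}, and the per-update strategy of Proposition~\ref{prop:single_step_time} then transfer, yielding the claimed $\bigO{|\db|^{\max\{\eps,1-\eps\}}}$ update time and $\bigO{|\db|^{1+\min\{\eps,1-\eps\}}}$ space.

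For the \threerpj{2} case the reduced query is acyclic: the two nonempty pair-join tuples meet at a common relation, which becomes the centre of a join tree with the other two relations as leaves. Acyclicity (FAQ-width one) gives a worst-case optimal join in linear time, so preprocessing is $\bigO{|\db|}$; and because the one auxiliary view needed to avoid the sole linear-time delta (a semijoin-style view joining the centre relation with one leaf) ranges over only the variables of a single relation, it stays linear-sized, giving $\bigO{|\db|}$ space. The query is nonetheless non-hierarchical, so constant update time is impossible under the \OMv conjecture (Proposition~\ref{prop:lower_bound_3rel_count}); I would therefore still partition the relations and split each delta into skew-aware pieces exactly as in Section~\ref{sec:strategy}, reaching $\bigO{|\db|^{\max\{\eps,1-\eps\}}}$ amortized update time. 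For $\threerpj{0}\cup\threerpj{1}$ the reduced query is hierarchical (Proposition~\ref{prop:hierarchical_3_rel_queries}); here every delta and every auxiliary view reduces to constant-time lookups and updates over group-by keys, recovering the known constant-time, linear-space maintenance of hierarchical queries~\cite{BerkholzKS17} with no partitioning and hence no amortization.

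For the two partitioned classes the rebalancing and amortization are inherited from Section~\ref{sec:rebalancing} rather than reproved: as observed in the conclusion, the amortization argument depends only on rebalancing being realized by moving tuples between relation parts, with the number of moved tuples per rebalancing bounded by the number of updates since the previous one, and both prerequisites hold verbatim here. The \emph{main obstacle} I anticipate is the \threerpj{2} case: I must pin down, for each placement of the two nonempty pair-join tuples and for both the empty and nonempty status of $\inst{A}_{RST}$, a concrete set of auxiliary views that are simultaneously linear in size and sufficient to keep every skew-aware delta sublinear under the chosen partition. Threading the all-shared tuple $\inst{A}_{RST}$ through the degree-based partition and the view definitions without inflating either view size or update cost is the delicate point; by contrast, the \threerpj{3} and hierarchical cases should follow almost mechanically once the reduction is in place.
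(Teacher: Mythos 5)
Your proposal is correct and follows essentially the same route as the paper: the corollary is simply the combination of the upper bounds of Theorem~\ref{theo:maintain_3rel_3pair-join} instantiated at $\eps=\frac{1}{2}$ with the \OuMv-based lower bound of Proposition~\ref{prop:lower_bound_3rel_count}, and your re-derivation of those upper bounds (aggregating away the non-join variable tuples, partitioning each relation on its pair-join tuple, triangle-style auxiliary views for \threerpj{3}, linear group-by views for \threerpj{2}, and the hierarchical characterization of $\threerpj{0}\cup\threerpj{1}$ via Proposition~\ref{prop:hierarchical_3_rel_queries} together with~\cite{BerkholzKS17}) mirrors Appendix~\ref{sec:inc_maint_count_three_rels}. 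The only loose end is the \threerpj{2} case you flag yourself: the paper in fact needs \emph{two} linear-size auxiliary views rather than one (Figure~\ref{fig:view_definitions_2_non_empty_join_pair}), one serving updates to each non-centre relation, each with group-by variables contained in the centre relation's schema --- exactly as your ``semijoin-style'' description predicts, so this is a matter of enumeration rather than a missing idea.
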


In Appendix \ref{sec:meta_strategy} we introduce a general 
\ivme strategy for the incremental maintenance of arbitrary \threer queries. 
In Appendices 
\ref{sec:maintenance_3-rel_3-pair-join} - \ref{sec:maintenance_3-rel_1-pair-join} we
show that this generic strategy  
admits the optimal results 
for the query classes
in Theorem \ref{theo:maintain_3rel_3pair-join}.
In Appendix \ref{sec:maintenance_3-rel_1-pair-join}
we additionally show that 
$\threerpj{0}\cup \threerpj{1}$
consists of all hierarchical \threer  queries.

\subsection{A General \ivme Strategy for \threer Queries}
\label{sec:meta_strategy}
In this section we generalize 
the \ivme strategy for the maintenance of the triangle count 
to arbitrary \threer queries. 
In Sections \ref{sec:strategy} and 
\ref{sec:rebalancing} we have seen
that the \ivme approach 
consists of three basic components: 
relation partitioning, view materialization, and
delta evaluation. 
In Appendices
\ref{sec:3r_rel_part}, \ref{sec:3r_mat_views}, and 
\ref{sec:3r_eval_strats}) we introduce  
the relation partitioining, views and delta evaluation 
strategies for \threer queries. 
In case of \threerpj{i} queries with $i \leq \{0,1\}$,
the evaluation strategies undergo an optimization phase,
which we explain in Appendix \ref{sec:3r_opt_phase}. 
Afterwards, we give 
the definition of an \ivme state 
for \threer queries (Appendix 
\ref{sec:3r_ivme_states}).     
Then, 
we analyze the space and time complexities of the introduced
views and delta evaluation strategies 
(Appendices \ref{sec:3r_space} and \ref{sec:3r_time}).
 In Appendices 
\ref{sec:maintenance_3-rel_3-pair-join} - \ref{sec:maintenance_3-rel_1-pair-join} 
we show that restricting the general \ivme 
strategy to specific \threer classes 
results in specialized strategies admitting the complexity results 
given in Table \ref{complexity_table}.

Note that each relation in a \threer query can be replaced 
by a view that aggregates away the non-join variables.
Such views admit linear space and 
constant update time. Hence, 
without loss of generality, we assume in the following 
that in all considered queries  
 the non-join variable tuples are empty. 
For the rest of Appendix \ref{sec:inc_maint_count_three_rels} 
we fix an $\eps \in [0,1]$.

\subsubsection{Relation Partitioning}
\label{sec:3r_rel_part}
Unlike the triangle case
where the partition of a relation is based on the degrees of values over a {\em single} variable, 
the partition of a relation in a \threer query 
can depend on the 
degrees of value tuples.
The following definition slightly extends Definition
\ref{def:loose_relation_partition}
in this respect.
Note the difference between the partitioning introduced in 
Definition 
\ref{def:loose_relation_partition_sets}
 and the one introduced in the following 
 Definition \ref{def:loose_relation_partition_3R}.
 In the former case, a relation is partitioned on each variable 
 from a variable set. In the latter case, a tuple of variables 
 is treated like a single variable.
 
\begin{definition}[Relation Partition for \threer Queries]\label{def:loose_relation_partition_3R}
Given a relation $K$ over schema $X$, a tuple $\inst{A}$ of variables 
from the schema of $K$, and a threshold $\theta$, 
a partition of $K$ on 
$\inst{A}$ with threshold $\theta$ is a set $\{ K_h, K_l \}$ satisfying the following conditions:
\\[6pt]
\begin{tabular}{@{\hskip 0.5in}rl}
{(union)} & 
$K(\inst{x}) = K_h(\inst{x}) + K_l(\inst{x})$ for $\inst{x} \in \Dom(\inst{X})$ \\[4pt]
{(domain partition)} & $(\pi_{\inst{A}}K_h) \cap (\pi_{\inst{A}}K_l) = \emptyset$ \\[4pt]
{(heavy part)} & for all $\inst{a} \in \pi_{\inst{A}}R_h:\; 
|\sigma_{\inst{A} =\inst{a}} 
R_h| \geq \frac{1}{2}\,\theta$ \\[4pt]
{(light part)} & for all $\inst{a} \in \pi_{\inst{A}}R_l:\; |\sigma_{\inst{A}=\inst{a}} R_l| < 
\frac{3}{2} \,\theta$
\end{tabular}\\[6pt]
The set $\{ K_h, K_l \}$ is called a strict partition of $K$ on $\inst{A}$ with threshold 
$\theta$ if it satisfies the union and 
domain partition conditions and the following strict versions
of the heavy part and light part conditions: 
\\[6pt]
\begin{tabular}{@{\hskip 0.5in}rl}
{(strict heavy part)} & for all $\inst{a} \in \pi_{\inst{A}}K_h:\; |\sigma_{\inst{A}=\inst{a}} K_h| \geq 
\theta$ \\[4pt]
{(strict light part)} & for all $\inst{a} \in \pi_{\inst{A}}K_l:\; |\sigma_{\inst{A}=\inst{a}} K_l| < \theta$
\end{tabular}
\end{definition}

As usual, we call the relations $K_h$ and $K_l$ the \emph{heavy} and the 
\emph{light} part of $K$, respectively. 
\ivme does not necessarily partition
all relations of \threer queries. 
Given a \threer query as in Definition 
\ref{def:threer_queries}, 
we assign each relation 
$K \in \{R,S,T\}$ a 
tuple of partition variables.
We define the partition variable tuples 
of $R$, $S$, and $T$ as   
$\inst{A}_{RT}$, 
$\inst{A}_{RS}$, and 
$\inst{A}_{ST}$, respectively. 
\ivme partitions a
relation $K$
on its partition variable tuple  
if the latter tuple is nonempty. 
Otherwise, relation $K$ is not partitioned. 
In the former case we define
the parts of $K$
as  $\parts{K} = \{K_l,K_h\}$. 
In case $K$ is not partitioned, 
we set $\parts{K} = \{K\}$.
Just like in the triangle case, 
the precise threshold of the partitions
depends on the current database
in an \ivme state, which will be defined 
in Appendix \ref{sec:3r_ivme_states}.

\subsubsection{Materialized Views}
\label{sec:3r_mat_views}
\begin{figure}[t]
\begin{center}
\begin{tikzpicture}[scale=0.35]

  \node at (0, 2) (Q) {$Q_{R_lS_hT_l}$};
  \node at (0, 0) (C) {};
  \node at (-4, -4) (A) {};
  \node at (4, -4) (B) {};
  \node at (0, -4) (D) {};
  \node at (2, -2) (E) {};
  \node at (-2, -2) (F) {};
 \node at (0, -2.5) (X) {};

  \draw
       ($(C)+(0.5,0.5)$) 
        to[out=135,in=45] ($(C) + (-0.5,0.5)$)
        to[out=225,in=45] ($(A) + (-0.5,0.5)$)
         to[out=225,in=135] ($(A) + (-0.5,-0.5)$)
         to[out=315,in=225] ($(A) + (0.5,-0.5)$)
        to[out=50,in=190] ($(F) + (0,-0.9)$)    
        to[out=0,in=180] ($(X) + (0,-0.5)$) 
        to[out=0,in=270] ($(X) + (0.5,0)$)
          to[out=90,in=0] ($(X) + (0,0.5)$)               
          to[out=180,in=225] ($(F) + (1,0.3)$)         
          to[out=45,in=225] ($(C) + (0.5,-0.5)$)  
          to[out=45,in=315] ($(C) + (0.5,0.5)$);

  \draw
       ($(C)+(-0.7,0.7)$) 
        to[out=45,in=135] ($(C) + (0.7,0.7)$)
        to[out=315,in=130] ($(B) + (0.5,0.5)$)
         to[out=315,in=45] ($(B) + (0.5,-0.5)$)
        to[out=225,in=315] ($(B) + (-0.5,-0.5)$)
        to[out=135,in=0] ($(E) + (0,-1.1)$)  
        to[out=180,in=0] ($(X) + (0,-0.7)$)
        to[out=180,in=270] ($(X) + (-0.5,0)$)
      to[out=90,in=180] ($(X) + (0,0.7)$) 
       to[out=0,in=225] ($(E) + (-1,0.3)$)        
       to[out=45,in=315] ($(C) + (-0.7,-0.5)$)  
       to[out=135,in=225] ($(C) + (-0.7,0.7)$);

  \draw
       ($(A)+(-1.1,0)$) 
        to[out=270,in=180] ($(A) + (0,-1.1)$)
         to[out=0,in=180] ($(B) + (0,-1.1)$)
         to[out=0,in=270] ($(B) + (1.1,0)$)
         to[out=90,in=0] ($(B) + (0,0.5)$)
         to[out=180,in=0] ($(D) + (1.2,0.3)$)  
         to[out=180,in=270] ($(D) + (0.7,0.8)$)  
         to[out=90,in=270] ($(X) + (0.7,0)$)
        to[out=90,in=0] ($(X) + (0,0.9)$)
        to[out=180,in=90] ($(X) + (-0.8,0)$)      
         to[out=270,in=90] ($(D) + (-0.7,0.8)$)  
         to[out=270,in=0] ($(D) + (-1.2,0.3)$)  
         to[out=180,in=0] ($(A) + (0,0.6)$)  
       to[out=180,in=90] ($(A)+(-1.1,0)$);

  \draw [color=black, fill=black] (0, 0) circle (.3);
  \draw [color=black, fill=black] (-4, -4) circle (.3);
  \draw [color=black, fill=black] (4, -4) circle (.3);  
  \draw [color=black, fill=black] (0, -2.5) circle (.3);
   
   \node at (-3.9, -1.8) (T) {$T_l$};
  \node at (3.9, -1.8) (S) {$S_h$};
  \node at (0, -5.8) (R) {$R_l$};


\begin{scope}[xshift = 13cm]
 \node at (0, 2) (Q) {$\delta Q_{R_lS_hT_l}$};
  \node at (0, 0) (C) {};
  \node at (-4, -4) (A) {};
  \node at (4, -4) (B) {};
  \node at (0, -4) (D) {};
  \node at (2, -2) (E) {};
  \node at (-2, -2) (F) {};
 \node at (0, -2.5) (X) {};

  \draw
       ($(C)+(0.5,0.5)$) 
        to[out=135,in=45] ($(C) + (-0.5,0.5)$)
        to[out=225,in=45] ($(A) + (-0.5,0.5)$)
         to[out=225,in=135] ($(A) + (-0.5,-0.5)$)
         to[out=315,in=225] ($(A) + (0.5,-0.5)$)
        to[out=50,in=190] ($(F) + (0,-0.9)$)    
        to[out=0,in=180] ($(X) + (0,-0.5)$) 
        to[out=0,in=270] ($(X) + (0.5,0)$)
          to[out=90,in=0] ($(X) + (0,0.5)$)               
          to[out=180,in=225] ($(F) + (1,0.3)$)         
          to[out=45,in=225] ($(C) + (0.5,-0.5)$)  
          to[out=45,in=315] ($(C) + (0.5,0.5)$);

  \draw
       ($(C)+(-0.7,0.7)$) 
        to[out=45,in=135] ($(C) + (0.7,0.7)$)
        to[out=315,in=130] ($(B) + (0.5,0.5)$)
         to[out=315,in=45] ($(B) + (0.5,-0.5)$)
        to[out=225,in=315] ($(B) + (-0.5,-0.5)$)
        to[out=135,in=0] ($(E) + (0,-1.1)$)  
        to[out=180,in=0] ($(X) + (0,-0.7)$)
        to[out=180,in=270] ($(X) + (-0.5,0)$)
      to[out=90,in=180] ($(X) + (0,0.7)$) 
       to[out=0,in=225] ($(E) + (-1,0.3)$)        
       to[out=45,in=315] ($(C) + (-0.7,-0.5)$)  
       to[out=135,in=225] ($(C) + (-0.7,0.7)$);

  \draw[fill= light-gray,fill opacity = 0.4]
       ($(A)+(-1.1,0)$) 
        to[out=270,in=180] ($(A) + (0,-1.1)$)
         to[out=0,in=180] ($(B) + (0,-1.1)$)
         to[out=0,in=270] ($(B) + (1.1,0)$)
         to[out=90,in=0] ($(B) + (0,0.5)$)
         to[out=180,in=0] ($(D) + (1.2,0.3)$)  
         to[out=180,in=270] ($(D) + (0.7,0.8)$)  
         to[out=90,in=270] ($(X) + (0.7,0)$)
        to[out=90,in=0] ($(X) + (0,0.9)$)
        to[out=180,in=90] ($(X) + (-0.8,0)$)      
         to[out=270,in=90] ($(D) + (-0.7,0.8)$)  
         to[out=270,in=0] ($(D) + (-1.2,0.3)$)  
         to[out=180,in=0] ($(A) + (0,0.6)$)  
       to[out=180,in=90] ($(A)+(-1.1,0)$);

  \draw [color=black, fill=black] (0, 0) circle (.3);       
  \draw [color=black, fill=black] (0, 0) circle (.3);
  \draw [color=black, fill=black] (-4, -4) circle (.3);
  \draw [color=black, fill=black] (4, -4) circle (.3);
   \draw [color=black, fill=black] (0, -2.5) circle (.3);
   
   \node at (-3.9, -1.8) (T) {$T_l$};
  \node at (3.9, -1.8) (S) {$S_h$};
  \node at (0, -5.8) (R) {$\delta R_l$}; 
\end{scope}

\begin{scope}[xshift = 26cm]
   \node at (0, 2.2) (Q) {replacing $S_h$ and $T_l$ };
   \node at (0, 1) (Q) {by a  view};
  \node at (-4, -4) (A) {};
  \node at (4, -4) (B) {};
  \node at (0, -4) (D) {};
  \node at (2, -2) (E) {};
  \node at (-2, -2) (F) {};
 \node at (0, -2.5) (X) {};

  \draw
    ($(A) + (-0.7,0)$)
         to[out=270,in=180] ($(A) + (0,-0.7)$)
         to[out=0,in=270] ($(A) + (0.7,0)$)
        to[out=90,in=180] ($(F) + (0,-0.9)$)    
        to[out=0,in=180] ($(X) + (0,-0.5)$) 
        to[out=0,in=180] ($(E) + (0,-0.9)$)
         to[out=0,in=90] ($(B) + (-0.7,0)$)
          to[out=270,in=180] ($(B) + (0,-0.7)$)
          to[out=0,in=270] ($(B) + (0.7,0)$)
         to[out=90,in=270] ($(B) + (0.7,2)$)  
         to[out=90,in=0] ($(B) + (0,3)$)                               
         to[out=180,in=0] ($(A) + (0,3)$)                  
         to[out=180,in=90] ($(A) + (-0.7,2)$)                           
         to[out=270,in=90] ($(A) + (-0.7,0)$);

  \draw[fill= light-gray,fill opacity = 0.4]
       ($(A)+(-1.1,0)$) 
        to[out=270,in=180] ($(A) + (0,-1.1)$)
         to[out=0,in=180] ($(B) + (0,-1.1)$)
         to[out=0,in=270] ($(B) + (1.1,0)$)
         to[out=90,in=0] ($(B) + (0,0.5)$)
         to[out=180,in=0] ($(D) + (1.2,0.3)$)  
         to[out=180,in=270] ($(D) + (0.7,0.8)$)  
         to[out=90,in=270] ($(X) + (0.7,0)$)
        to[out=90,in=0] ($(X) + (0,0.9)$)
        to[out=180,in=90] ($(X) + (-0.8,0)$)      
         to[out=270,in=90] ($(D) + (-0.7,0.8)$)  
         to[out=270,in=0] ($(D) + (-1.2,0.3)$)  
         to[out=180,in=0] ($(A) + (0,0.6)$)  
       to[out=180,in=90] ($(A)+(-1.1,0)$);

  \draw [color=black, fill=black] (-4, -4) circle (.3);
  \draw [color=black, fill=black] (4, -4) circle (.3);
   \draw [color=black, fill=black] (0, -2.5) circle (.3);
   
   \node at (0, -0.3) (T) {$V_{S_hT_l}$};
  \node at (0, -5.8) (R) {$\delta R_l$}; 
\end{scope}

\end{tikzpicture}
 \end{center}
 \caption{
 (left) The hypergraph of 
a skew-aware view 
$Q_{R_lS_hT_l}$. (center) 
The hypergraph the delta view 
$\delta Q_{R_lS_hT_l}$ under an 
update 
$\delta R_l$. 
(right) 
The hypergraph the delta view
obtained from
$\delta Q_{R_lS_hT_l}$ by replacing relations $S_h$ and $T_l$
by a view.
} 
  \label{fig:example_single_relation_view}
\end{figure}
\begin{figure}[hbtp]
  \begin{center}
    \renewcommand{\arraystretch}{1.2}  
    \begin{tabular}{@{\hskip 0.0in}l@{\hskip 0.3in}l@{\hskip 0.0in}}
      \toprule
      Materialized View & Space\\    
      \midrule \\[-0.5cm]
   
   $Q()=\sum\limits_{Q' \in SAV}Q'()$ &$\bigO{1}$\\[0.3cm]    
          \hline \\[-0.4cm]
          

          if $\inst{A}_{ST} \neq \emptyset$, $\inst{A}_{RS} \neq \emptyset$,
          and $\inst{A}_{RT} \neq \emptyset$, 
          materialize 
    &  \\
  
$V_{S_hT_l}(\inst{a}_{RT},\inst{a}_{RS},\inst{a}_{RST}) = 
\sum 
S_{h} (\inst{a}_{RS}, \inst{a}_{ST}, \inst{a}_{RST}) \cdot 
T_{l}(\inst{a}_{ST}, \inst{a}_{RT}, \inst{a}_{RST})$ 
  &  $\bigO{|\inst{D}|^{1 + \min\{\eps, 1- \eps\}}}$\\[0.3cm]
  
         if $\inst{A}_{ST} \neq \emptyset$, $\inst{A}_{RS} \neq \emptyset$, 
         and $\inst{A}_{RT} = \emptyset$, 
         materialize 
    &  \\
  
$V_{S_hT_l}(\inst{a}_{RS},\inst{a}_{RST}) = 
\sum 
S_{h} (\inst{a}_{RS}, \inst{a}_{ST}, \inst{a}_{RST}) \cdot 
T_{l}(\inst{a}_{ST}, \inst{a}_{RST})$ 
  &  $\bigO{|\inst{D}|}$\\[0.3cm]

      if $\inst{A}_{ST} \neq \emptyset$, $\inst{A}_{RS} = \emptyset$,
      and $\inst{A}_{RT} \neq  \emptyset$ 
      materialize 
      & \\
     
$V_{ST_l}(\inst{a}_{RT},\inst{a}_{RST}) = 
\sum
S (\inst{a}_{ST}, \inst{a}_{RST}) \cdot 
T_{l}(\inst{a}_{ST}, \inst{a}_{RT}, \inst{a}_{RST})$ 
      &    $\bigO{|\inst{D}|}$\\[0.3cm]

if $\inst{A}_{ST} \neq \emptyset$, 
$\inst{A}_{RS} = \emptyset$, and $\inst{A}_{RT} = \emptyset$, 
materialize 
&
\\

$V_{ST_{t}}(\inst{a}_{RST}) = 
\sum 
S (\inst{a}_{ST}, \inst{a}_{RST}) \cdot 
T_{t}(\inst{a}_{ST}, \inst{a}_{RST})$ 
for each  $T_t \in \{T_l,T_h\}$     &    $\bigO{|\inst{D}|}$\\[0.3cm]     
 
   \hline \\[-0.4cm]


          if $\inst{A}_{RT} \neq \emptyset$, $\inst{A}_{ST} \neq \emptyset$,
          and $\inst{A}_{RS} \neq \emptyset$, 
          materialize 
    &  \\
  
$V_{R_lT_h}(\inst{a}_{RS},\inst{a}_{ST},\inst{a}_{RST} ) = 
\sum 
T_{h} (\inst{a}_{ST}, \inst{a}_{RT}, \inst{a}_{RST}) \cdot 
R_{l}(\inst{a}_{RT}, \inst{a}_{RS}, \inst{a}_{RST})$ 
  &  $\bigO{|\inst{D}|^{1 + \min\{\eps, 1- \eps\}}}$ \\[0.3cm]
  
      if $\inst{A}_{RT} \neq \emptyset$, 
      $\inst{A}_{ST} \neq \emptyset$, and
      $\inst{A}_{RS} = \emptyset$, 
      materialize 
    &  \\
  
$V_{R_lT_h}(\inst{a}_{ST},\inst{a}_{RST} ) = 
\sum
T_{h} (\inst{a}_{ST}, \inst{a}_{RT}, \inst{a}_{RST}) \cdot 
R_{l}(\inst{a}_{RT}, \inst{a}_{RST})$ 
  &  $\bigO{|\inst{D}|}$ \\[0.3cm]

      if $\inst{A}_{RT} \neq \emptyset$, 
      $\inst{A}_{ST} = \emptyset$, and 
      $\inst{A}_{RS} \neq \emptyset$, 
      materialize 
      & \\
     
$V_{R_lT}(\inst{a}_{RS},\inst{a}_{RST}) = 
\sum 
T (\inst{a}_{RT}, \inst{a}_{RST}) \cdot 
R_{l}(\inst{a}_{RT}, \inst{a}_{RS}, \inst{a}_{RST})$ 
      &    $\bigO{|\inst{D}|}$ \\[0.3cm]

if $\inst{A}_{RT} \neq \emptyset$, 
$\inst{A}_{ST} = \emptyset$, and $\inst{A}_{RS} = \emptyset$, 
materialize 
&
\\

$V_{R_rT}(\inst{a}_{RST}) = 
\sum
T (\inst{a}_{RT}, \inst{a}_{RST}) \cdot 
R_{r}(\inst{a}_{RT}, \inst{a}_{RST})$ 
 for each  $R_r \in \{R_l,R_h\}$
     &    $\bigO{|\inst{D}|}$ \\[0.3cm]     
   \hline \\[-0.4cm]
   
    
          if $\inst{A}_{RS} \neq \emptyset$, $\inst{A}_{RT} \neq \emptyset$,
          and $\inst{A}_{ST} \neq \emptyset$, 
          materialize 
    &  \\
  
$V_{R_hS_l}(\inst{a}_{ST},\inst{a}_{RT},\inst{a}_{RST} ) = 
\sum
R_{h} (\inst{a}_{RT}, \inst{a}_{RS}, \inst{a}_{RST}) \cdot 
S_{l}(\inst{a}_{RS}, \inst{a}_{ST}, \inst{a}_{RST})$ 
  &  $\bigO{|\inst{D}|^{1 + \min\{\eps, 1- \eps\}}}$ \\[0.3cm]

          if $\inst{A}_{RS} \neq \emptyset$, $\inst{A}_{RT} \neq \emptyset$,
          and $\inst{A}_{ST} = \emptyset$, 
          materialize 
    &  \\
  
$V_{R_hS_l}(\inst{a}_{RT},\inst{a}_{RST} ) = 
\sum
R_{h} (\inst{a}_{RT}, \inst{a}_{RS}, \inst{a}_{RST}) \cdot 
S_{l}(\inst{a}_{RS},  \inst{a}_{RST})$ 
  &  $\bigO{|\inst{D}|}$ \\[0.3cm]

          if $\inst{A}_{RS} \neq \emptyset$, $\inst{A}_{RT} = \emptyset$,
          and $\inst{A}_{ST} \neq \emptyset$, 
          materialize  
          & \\
     
$V_{RS_l}(\inst{a}_{ST},\inst{a}_{RST}) = 
\sum 
R (\inst{a}_{RS}, \inst{a}_{RST}) \cdot 
S_{l}(\inst{a}_{RS}, \inst{a}_{ST}, \inst{a}_{RST})$ 
      &    $\bigO{|\inst{D}|}$ \\[0.3cm]

         if $\inst{A}_{RS} \neq \emptyset$, $\inst{A}_{RT} = \emptyset$,
          and $\inst{A}_{ST} = \emptyset$, 
          materialize  
          & \\

$V_{RS_{s}}(\inst{a}_{RST}) = 
\sum
R (\inst{a}_{RS}, \inst{a}_{RST}) \cdot 
S_{s}(\inst{a}_{RS}, \inst{a}_{RST})$ 
for each  $S_s \in \{S_l,S_h\}$     &    $\bigO{|\inst{D}|}$ \\[0.3cm]

      \bottomrule   
    \end{tabular}
  \end{center}
  \caption{The definitions and space complexities 
  of all views materialized 
  by \ivme for the maintenance of a
  \threer  query. 
 $\SAV$ is the set of all skew-aware views. 
    The views are defined over a database $\inst{D}$ partitioned 
     for a fixed $\eps \in [0,1]$.} 
  \label{fig:meta_view_definitions}
\end{figure}
We define skew-aware views similarly to the triangle case. 
Given $R' \in \parts{R}$, $S' \in \parts{S}$, and $T' \in \parts{T}$,
a skew-aware view
$Q_{R'S'T'}$ 
results from a \threer query
$Q$ by replacing the relations $R$, $S$, and $T$
by $R'$, $S'$, and $T'$, respectively.
Let $\SAV$ be the set of all skew-aware views 
of $Q$. 
The query $Q$ can be written 
as a union of skew-aware views: 
$Q() = \sum_{Q' \in \SAV} Q'()$.  
Thanks to the delta rules in Section 
\ref{sec:preliminaries}, 
 the delta of $Q$ under 
a single-tuple update is the sum 
of the deltas of the skew-aware views.
In addition to the query $Q$, which returns the final count result, \ivme 
maintains auxiliary views that help 
to compute the deltas of skew-aware views in sublinear time.
These views are simple generalizations of the views 
used for the triangle count. 
Figure 
\ref{fig:meta_view_definitions}
lists all auxiliary views maintained by
\ivme. The index of a  view identifier 
 indicates 
the relation parts the view is composed of. 
A view is identified by the set of relation parts occurring in the view 
and its tuple of free variables. 
For instance, the view $V_{S_hT_l}(\inst{a}_{RT}, 
\inst{a}_{RS}, \inst{a}_{RST})$
is composed of  $S_h$ and $T_l$
and its tuple of free variables 
is $(\inst{A}_{RT}, \inst{A}_{RS}, \inst{A}_{RST})$. 
In Figure \ref{fig:meta_view_definitions},  each set of four auxiliary views between two lines
builds a group. Observe that the views in the 
second group result from the first group by replacing 
$R$, $S$, and $T$ by $S$, $T$, and $R$, respectively. 
The third group is obtained from the second one by 
applying the same replacement rules.

We give an intuitive explanation 
in which cases auxiliary views 
are useful.  
Assume that each relation is strictly partitioned 
with threshold $|\inst{D}|^{\eps}$
where $\inst{D} = \{R,S,T\}$. 
We derive two 
bounds on the partition $\{R_l,R_h\}$
similar to the triangle count case:
given any tuple $\inst{a}_{RT}$, 
the number of tuples $(\inst{a}_{RS}, \inst{a}_{RST})$  
paired with $\inst{a}_{RT}$ in $R_l$
 is less than $|\inst{D}|^{\eps}$; the number of tuples
in $\pi_{\inst{A}_{RT}}R_h$ is at most $\frac{|\inst{D}|}{|\inst{D}|^{\eps}} = |\inst{D}|^{1-\eps}$. 
The bounds for the partitions $\{S_l,S_h\}$
and $\{T_l,T_h\}$ are analogous. 
Consider the skew-aware view
$Q_{R_lS_lT_h}() = 
\textstyle\sum R_l(\inst{a}_{RT},\inst{a}_{RS},\inst{a}_{RST}) \ztimes
S_l(\inst{a}_{RS},\inst{a}_{ST},\inst{a}_{RST}) \ztimes 
T_h(\inst{a}_{ST},\inst{a}_{RT},\inst{a}_{RST})$.
Assume that we would like 
to compute the delta 
$\delta Q_{R_lS_lT_h}() = 
\delta R_l(\bdeltaA_{RT},\bdeltaA_{RS},\bdeltaA_{RST}) \ztimes
\textstyle\sum 
S_l(\bdeltaA_{RS},\inst{a}_{ST},\bdeltaA_{RST}) \ztimes 
T_h(\inst{a}_{ST},\bdeltaA_{RT},\bdeltaA_{RST})$
under an update 
$\delta R_l = \{(\bdeltaA_{RT},\bdeltaA_{RS},\bdeltaA_{RST}) \mapsto \p\}$.
The sum in the delta view is defined 
 over all tuples 
 $\inst{a}_{ST}$ paired 
 with $(\bdeltaA_{RS}, \bdeltaA_{RST})$ in $S_l$ 
 and with 
$(\bdeltaA_{RT},\bdeltaA_{RST})$ in $T_h$.
Since $S_l$ is light, the number 
of $\inst{a}_{ST}$-tuples 
paired with $(\bdeltaA_{RS},\bdeltaA_{RST})$ in $S_l$ 
is less than $|\inst{D}|^{\eps}$. 
Furthermore, 
since $T_h$ is heavy, the number 
of distinct tuples over $\inst{A}_{ST}$ in 
$T_h$
can be at most $|\inst{D}|^{1-\eps}$.
Therefore, the sum in 
 $\delta Q_{R_lS_lT_h}$ 
 iterates over at most 
 $|\inst{D}|^{\min\{\eps, 1-\eps\}}$
 relevant 
$\inst{A}_{ST}$-tuples.
In Appendix \ref{sec:3r_eval_strats} 
it will be shown 
that  $\delta Q_{R_lS_lT_h}$ can indeed be computed in
time $\bigO{|\inst{D}|^{\min\{\eps, 1-\eps\}}}$.

One delta query where 
sublinear delta computation requires 
the exploitation of additional auxiliary views is
$\delta Q_{R_lS_hT_l}() = 
\delta R_l(\bdeltaA_{RT},\bdeltaA_{RS},\bdeltaA_{RST}) \ztimes
\textstyle\sum
S_h(\bdeltaA_{RS},\inst{a}_{ST},\bdeltaA_{RST}) \ztimes 
T_l(\inst{a}_{ST},\bdeltaA_{RT},\bdeltaA_{RST})$.
This case is demonstrated  in
Figure \ref{fig:example_single_relation_view}.
In this case, there is no sublinear bound on the number 
of $\inst{A}_{ST}$-tuples that are paired with
$(\bdeltaA_{RS},\bdeltaA_{RST})$ in $S_h$ 
 and with 
$(\bdeltaA_{RT}, \bdeltaA_{RST})$ in 
$T_l$.
Therefore, \ivme materializes 
the auxiliary view    
$V_{S_hT_l}(\inst{a}_{RT},\inst{a}_{RS},\inst{a}_{RST}) = 
\textstyle\sum 
S_{h} (\inst{a}_{RS}, \inst{a}_{ST}, \inst{a}_{RST}) \ztimes 
T_{l}(\inst{a}_{ST}, \inst{a}_{RT}, \inst{a}_{RST})$ 
over the update-independent part of the delta query. 
Then, the computation of 
$\delta Q_{R_lS_hT_l}$ amounts to a constant-time 
look-up in that view. The second 
row of Figure \ref{fig:hypergraphs_evaluation_strategies} 
visualizes the hypergraphs of all delta skew-aware views
for which \ivme uses auxiliary views 
to enable delta computation under an update $\delta R_r$
in sublinear time.

\subsubsection{Delta Evaluation Strategies}
\label{sec:3r_eval_strats}
For each skew-aware or auxiliary view 
given in Figure \ref{fig:meta_view_definitions}
and a single-tuple update to a relation occurring in the 
view, we define an evaluation strategy to compute 
the delta view. We express an evaluation 
strategy as a count query where the order of the relations
from left to right
defines the nesting structure of the loops iterating over the relations.
Figure \ref{fig:best_update_strategy} gives for each view including
a relation (part) $R_r \in \parts{R}$, 
the evaluation 
strategy to compute the delta of the view under an update 
$\delta R_r = \{(\bdeltaA_{RT}, \bdeltaA_{RS}, \bdeltaA_{RST}) \mapsto 
\p\}$.
Figure \ref{fig:hypergraphs_evaluation_strategies} categorizes 
the hypergraphs of delta skew-aware views according 
to the type of evaluation  strategies. 
If we replace the relation symbols $R$, $S$ and $T$ in Figure 
\ref{fig:best_update_strategy} by $S$, $T$ and $R$, respectively, 
we obtain the evaluation strategies  under updates to relation $S$.
Replacing $R$, $S$ and $T$  by $T$, $R$ and $S$, respectively, 
results in the evaluation strategies under updates  to $T$.

\begin{figure}[t]
  \begin{center}
    \renewcommand{\arraystretch}{1.3}  
    \begin{tabular}{@{\hskip 0.0in}l@{\hskip 0.07in}l
    @{\hskip 0.07in}l@{\hskip 0.0in}}
      \toprule
      Delta evaluation strategy & Note &Time\\    
      \midrule

      $\delta Q_{R_rS_lT_l} ()= \delta R_r(\bdeltaA_{RT},\bdeltaA_{RS},\bdeltaA_{RST}) \cdot
      \textstyle\sum S_l(\bdeltaA_{RS},\inst{a}_{ST},\bdeltaA_{RST}) \cdot 
      T_l(\inst{a}_{ST},\bdeltaA_{RT},\bdeltaA_{RST})$  
      &  &
      $\bigO{|\inst{D}|^{\eps}}$ \\
 

      $\delta Q_{R_rS_lT_h} ()=      
       \delta R_r(\bdeltaA_{RT},\bdeltaA_{RS},\bdeltaA_{RST}) \cdot \textstyle\sum S_l(\bdeltaA_{RS},\inst{a}_{ST},\bdeltaA_{RST}) \cdot 
      T_h(\inst{a}_{ST},\bdeltaA_{RT},\bdeltaA_{RST})$ & 
      $\eps \leq \frac{1}{2}$ &$\bigO{|\inst{D}|^{\eps}}$ \\


$\delta Q_{R_rS_hT_h} ()= \delta R_r(\bdeltaA_{RT},\bdeltaA_{RS},\bdeltaA_{RST}) \cdot
\textstyle\sum T_h(\inst{a}_{ST},\bdeltaA_{RT},\bdeltaA_{RST}) \cdot 
      S_h(\bdeltaA_{RS},\inst{a}_{ST},\bdeltaA_{RST})$ & &$\bigO{|\inst{D}|^{1-\eps}}$ 
      \\


    $\delta Q_{R_rS_lT_h} ()= \delta R_r(\bdeltaA_{RT},\bdeltaA_{RS},\bdeltaA_{RST}) \cdot
       \textstyle\sum T_h(\inst{a}_{ST},\bdeltaA_{RT},\bdeltaA_{RST}) \cdot 
      S_l(\bdeltaA_{RS},\inst{a}_{ST},\bdeltaA_{RST})$ & $\eps > \frac{1}{2}$ &
      $\bigO{|\inst{D}|^{1-\eps}}$ \\
      
    $\delta Q_{R_rST_h} ()= \delta R_r(\bdeltaA_{RT},\bdeltaA_{RST}) \cdot \textstyle\sum T_h(\inst{a}_{ST},\bdeltaA_{RT},\bdeltaA_{RST}) \cdot 
      S(\inst{a}_{ST},\bdeltaA_{RST})$ & $R_r \in \{R_l,R_h\}$  & $\bigO{|\inst{D}|^{1-\eps}}$ \\

      $\delta Q_{R_rS_sT} ()= 
      \delta R_r(\bdeltaA_{RT},\bdeltaA_{RS},\bdeltaA_{RST}) \cdot
      S_s(\bdeltaA_{RS},\bdeltaA_{RST}) \cdot 
      T(\bdeltaA_{RT},\bdeltaA_{RST})$ & & $\bigO{1}$ \\


    $\delta Q_{R_rS_hT_l} () = \delta R_r(\bdeltaA_{RT},\bdeltaA_{RS},\bdeltaA_{RST}) \cdot$ 
   $V_{S_hT_l}(\bdeltaA_{RT},\bdeltaA_{RS},\bdeltaA_{RST})$ & & $\bigO{1}$ \\

    $\delta Q_{R_rST_l} ()= 
    \delta R_r(\bdeltaA_{RT},\bdeltaA_{RST}) \cdot$ 
   $V_{ST_l}(\bdeltaA_{RT},\bdeltaA_{RST})$ & & $\bigO{1}$ \\

    $\delta Q_{RST_h} ()= \delta R(\bdeltaA_{RST}) \cdot$ 
   $V_{ST_h}(\bdeltaA_{RST})$ & & $\bigO{1}$ \\
      \hline \\[-0.5cm]          
          

$\delta V_{R_hS_l}(\inst{a}_{ST},\bdeltaA_{RT},\bdeltaA_{RST}) =
\delta R_{h} (\bdeltaA_{RT}, \bdeltaA_{RS}, \bdeltaA_{RST}) \cdot  
 S_{l}(\bdeltaA_{RS}, \inst{a}_{ST}, \bdeltaA_{RST})$  & &
      $\bigO{|\inst{D}|^{\eps}}$\\     
      
  
  $\delta V_{R_hS_l}(\bdeltaA_{RT},\bdeltaA_{RST}) = 
  \delta R_{h} (\bdeltaA_{RT}, \bdeltaA_{RS}, \bdeltaA_{RST}) \cdot  
 S_{l}(\bdeltaA_{RS}, \bdeltaA_{RST})$  & & 
      $\bigO{1}$\\    
  
      
$\delta V_{RS_l}(\inst{a}_{ST},\bdeltaA_{RST}) = \delta R (\bdeltaA_{RS}, \bdeltaA_{RST}) \cdot 
S_{l}(\bdeltaA_{RS}, \inst{a}_{ST}, \bdeltaA_{RST})$  & &
      $\bigO{|\inst{D}|^{\eps}}$\\     


$\delta V_{RS_s}(\bdeltaA_{RST}) = 
\delta R (\bdeltaA_{RS}, \bdeltaA_{RST}) \cdot 
S_{s}(\bdeltaA_{RS}, \bdeltaA_{RST})$  & &
      $\bigO{1}$\\     
      

       $\delta V_{R_lT_h}(\bdeltaA_{RS},\inst{a}_{ST},\bdeltaA_{RST}) =
       \delta R_{l}(\bdeltaA_{RT}, \bdeltaA_{RS}, \bdeltaA_{RST}) \cdot 
T_{h} (\inst{a}_{ST}, \bdeltaA_{RT}, \bdeltaA_{RST})$ & &
      $\bigO{|\inst{D}|^{1-\eps}}$\\     
 
 
        $\delta V_{R_lT_h}(\inst{a}_{ST},\bdeltaA_{RST}) =
        \delta R_{l}(\bdeltaA_{RT}, \bdeltaA_{RST}) \cdot 
T_{h} (\inst{a}_{ST}, \bdeltaA_{RT}, \bdeltaA_{RST})$ & &
      $\bigO{|\inst{D}|^{1-\eps}}$\\     
 

       $\delta V_{R_lT}(\bdeltaA_{RS},\bdeltaA_{RST}) = 
       \delta R_{l}(\bdeltaA_{RT}, \bdeltaA_{RS}, \bdeltaA_{RST}) \cdot 
T (\bdeltaA_{RT}, \bdeltaA_{RST})$ & &
      $\bigO{1}$\\
      
       $\delta V_{R_rT}(\bdeltaA_{RST}) = \delta R_{r}(\bdeltaA_{RT}, \bdeltaA_{RST}) \cdot 
T (\bdeltaA_{RT}, \bdeltaA_{RST})$ & &
      $\bigO{1}$\\ 
      
      \bottomrule 
    \end{tabular}
  \end{center}
   \caption{The evaluation strategies and time complexities to
  compute the deltas of the views in 
  Figure  \ref{fig:meta_view_definitions} under an update 
  $\delta R_r = \{(\inst{a}_{RT}, \inst{a}_{RS}, \inst{a}_{RST})\mapsto \p\}$. 
 It holds 
  $R_r \in \parts{R}$ and $S_s \in \parts{S}$.} 
  \label{fig:best_update_strategy}
\end{figure}

We give the interpretation of some exemplary 
evaluation strategies from Figure \ref{fig:best_update_strategy}.
\begin{itemize} 
\item The evaluation strategy to compute the delta 
of $Q_{RrS_lT_l}$ under  
$\delta R_r = \{(\bdeltaA_{RT}, \bdeltaA_{RS}, \bdeltaA_{RST}) \mapsto \p\}$
with $R_r \in {\parts{R}}$ 
is $\delta R_r(\bdeltaA_{RT},\bdeltaA_{RS},\bdeltaA_{RST}) \ztimes 
\textstyle\sum S_l(\bdeltaA_{RS},\inst{a}_{ST},\bdeltaA_{RST}) \ztimes 
      T_l(\inst{a}_{ST},\bdeltaA_{RT},\bdeltaA_{RST})$. 
      In order to compute the sum, the strategy dictates to 
iterate over all    
      $\inst{A}_{ST}$-tuples $\inst{a}_{ST}$ paired with 
      $(\bdeltaA_{RS}, \bdeltaA_{RST})$
      in $S_l$ and for each such tuple, to look up the multiplicity of
      $(\inst{a}_{ST},\bdeltaA_{RT},\bdeltaA_{RST})$ in $T_l$.
We give the pseudocode of the strategy:
 
\begin{center}
\renewcommand{\arraystretch}{1.0}
\begin{tabular}{l}
\toprule 
{\textsc{ComputeDelta}}($Q_{R_rS_lT_l}$,  
$\delta R_r = \{(\bdeltaA_{RT},\bdeltaA_{RS},\bdeltaA_{RST}) \mapsto \p\}$)\\
\midrule
$sum = 0$\\
\FOREACH\STAB $\inst{a}_{ST}$ such that $(\bdeltaA_{RS},\inst{a}_{ST}, 
\bdeltaA_{RST}) \in S_l$ \\
\TAB  $sum \mathrel{+{=}} S_l(\bdeltaA_{RS},\inst{a}_{ST},\bdeltaA_{RST}) \ztimes 
     T_l(\inst{a}_{ST},\bdeltaA_{RT},\bdeltaA_{RST})$ \\
$\delta Q_{R_rS_lT_l}() = \delta R_r(\bdeltaA_{RT},\bdeltaA_{RS},
\bdeltaA_{RST}) \ztimes sum$\\
\OUTPUT $\delta Q_{R_rS_lT_l}$\\
\bottomrule
\end{tabular}
\end{center}
   
\item In case of $Q_{R_rS_hT_l}$, the evaluation strategy is 
$\delta R_r(\bdeltaA_{RT},\bdeltaA_{RS},\bdeltaA_{RST}) \ztimes 
V_{S_hT_l}(\bdeltaA_{RT},\inst{a}_{RS},\bdeltaA_{RST})$, which amounts to a 
look-up of the multiplicity of $(\bdeltaA_{RT},\inst{a}_{RS},\bdeltaA_{RST})$ in 
$V_{S_hT_l}$: 

\begin{center}
\renewcommand{\arraystretch}{1.0}
\begin{tabular}{l}
\toprule 
{\textsc{ComputeDelta}}($Q_{R_rS_hT_l}$,  
$\delta R_r = \{(\bdeltaA_{RT},\bdeltaA_{RS},\bdeltaA_{RST}) \mapsto \p\}$)\\
\midrule
$\delta Q_{R_rS_hT_l}() =  \delta R_r(\bdeltaA_{RT},\bdeltaA_{RS},
\bdeltaA_{RST}) \ztimes 
V_{S_hT_l}(\bdeltaA_{RT},\bdeltaA_{RS},\bdeltaA_{RST})$ \\
\OUTPUT $\delta Q_{R_rS_hT_l}$ \\
\bottomrule
\end{tabular}
\end{center}

\item The evaluation strategy for the delta of 
$V_{R_hS_l}(\inst{a}_{ST},\inst{a}_{RT},\inst{a}_{RST})$
under the update $\delta R_{h} = 
\{(\bdeltaA_{RT}, \bdeltaA_{RS}, \bdeltaA_{RST}) \mapsto \p \}$ 
is given by  $\delta R_{h} (\bdeltaA_{RT}, \bdeltaA_{RS}, \bdeltaA_{RST})$ 
$\ztimes$  
$S_{l}(\bdeltaA_{RS}, \inst{a}_{ST}, \bdeltaA_{RST})$.
This means that we iterate over each $\inst{A}_{ST}$-tuple 
$\inst{a}_{ST}$ paired with 
$(\bdeltaA_{RT}, \bdeltaA_{RST})$ in $S_l$ and 
for each such tuple, we add 
$\delta R_{h} (\bdeltaA_{RT}, \bdeltaA_{RS}, \bdeltaA_{RST})$ 
$\ztimes$  
$S_{l}(\bdeltaA_{RS}, \inst{a}_{ST}, \bdeltaA_{RST})$ to 
the multiplicity of 
$\delta V_{R_hS_l}(\inst{a}_{ST},\bdeltaA_{RT},\bdeltaA_{RST})$.

\begin{center}
\renewcommand{\arraystretch}{1.0}
\begin{tabular}{l}
\toprule 
{\textsc{ComputeDelta}}($V_{R_hS_l}(\inst{a}_{ST},\inst{a}_{RT},\inst{a}_{RST})$,  
$\delta R_h = \{(\bdeltaA_{RT},\bdeltaA_{RS},\bdeltaA_{RST}) \mapsto \p\}$)\\
\midrule
$\delta V_{R_hS_l} = \emptyset$ \\
\FOREACH\STAB $\inst{a}_{ST}$ such that $(\bdeltaA_{RS},\inst{a}_{ST}, 
\bdeltaA_{RST}) \in S_l$ \\
\TAB  $\delta V_{R_hS_l}(\inst{a}_{ST},\bdeltaA_{RT},\bdeltaA_{RST})  \mathrel{+{=}} 
\delta R_h(\bdeltaA_{RT},\bdeltaA_{RS},\bdeltaA_{RST}) \ztimes 
S_l(\bdeltaA_{RS},\bdeltaA_{ST},\bdeltaA_{RST})$ \\
\OUTPUT $\delta V_{R_hS_l}$ \\
\bottomrule
\end{tabular}
\end{center}
\end{itemize}

\begin{figure}[hbtp]
\begin{center}
\begin{tikzpicture}[scale=0.35]


\node at (19, 2.3) (strategy) {Delta skew-aware 
views with delta evaluation strategy 
$\delta R_r  \cdot \textstyle\sum S_s \cdot T_t$};
  \node at (0, 0) (C) {};
  \node at (-4, -4) (A) {};
  \node at (4, -4) (B) {};
  \node at (0, -4) (D) {};
  \node at (2, -2) (E) {};
  \node at (-2, -2) (F) {};
 \node at (0, -2.5) (X) {};  
       
         \draw
       ($(C)+(0.5,0.5)$) 
        to[out=135,in=45] ($(C) + (-0.5,0.5)$)
        to[out=225,in=45] ($(A) + (-0.5,0.5)$)
         to[out=225,in=135] ($(A) + (-0.5,-0.5)$)
         to[out=315,in=225] ($(A) + (0.5,-0.5)$)
        to[out=50,in=190] ($(F) + (0,-0.9)$)    
        to[out=0,in=180] ($(X) + (0,-0.5)$) 
        to[out=0,in=270] ($(X) + (0.5,0)$)
          to[out=90,in=0] ($(X) + (0,0.5)$)               
          to[out=180,in=225] ($(F) + (1,0.3)$)         
          to[out=45,in=225] ($(C) + (0.5,-0.5)$)  
          to[out=45,in=315] ($(C) + (0.5,0.5)$);

  \draw
       ($(C)+(-0.7,0.7)$) 
        to[out=45,in=135] ($(C) + (0.7,0.7)$)
        to[out=315,in=130] ($(B) + (0.5,0.5)$)
         to[out=315,in=45] ($(B) + (0.5,-0.5)$)
        to[out=225,in=315] ($(B) + (-0.5,-0.5)$)
        to[out=135,in=0] ($(E) + (0,-1.1)$)  
        to[out=180,in=0] ($(X) + (0,-0.7)$)
        to[out=180,in=270] ($(X) + (-0.5,0)$)
      to[out=90,in=180] ($(X) + (0,0.7)$) 
       to[out=0,in=225] ($(E) + (-1,0.3)$)        
       to[out=45,in=315] ($(C) + (-0.7,-0.5)$)  
       to[out=135,in=225] ($(C) + (-0.7,0.7)$);

  \draw[fill= light-gray,fill opacity = 0.4]
       ($(A)+(-1.1,0)$) 
        to[out=270,in=180] ($(A) + (0,-1.1)$)
         to[out=0,in=180] ($(B) + (0,-1.1)$)
         to[out=0,in=270] ($(B) + (1.1,0)$)
         to[out=90,in=0] ($(B) + (0,0.5)$)
         to[out=180,in=0] ($(D) + (1.2,0.3)$)  
         to[out=180,in=270] ($(D) + (0.7,0.8)$)  
         to[out=90,in=270] ($(X) + (0.7,0)$)
        to[out=90,in=0] ($(X) + (0,0.9)$)
        to[out=180,in=90] ($(X) + (-0.8,0)$)      
         to[out=270,in=90] ($(D) + (-0.7,0.8)$)  
         to[out=270,in=0] ($(D) + (-1.2,0.3)$)  
         to[out=180,in=0] ($(A) + (0,0.6)$)  
       to[out=180,in=90] ($(A)+(-1.1,0)$);

\draw [color=black, fill=black] (0, 0) circle (.3);
\draw [color=black, fill=black] (-4, -4) circle (.3);
\draw [color=black, fill=black] (4, -4) circle (.3);
\draw [color=black] (0, -2.5) circle (.3); 
   
   \node at (-3.9, -1.8) (T) {$T_l$};
  \node at (3.9, -1.8) (S) {$S_l$};
  \node at (0, -5.8) (R) {$\delta R_r$}; 


\begin{scope}[xshift = 11.5cm]
  \node at (0, 0) (C) {};
  \node at (-4, -4) (A) {};
  \node at (4, -4) (B) {};
  \node at (0, -4) (D) {};
  \node at (2, -2) (E) {};
  \node at (-2, -2) (F) {};
 \node at (0, -2.5) (X) {};  

   \draw
       ($(C)+(0.5,0.5)$) 
        to[out=135,in=45] ($(C) + (-0.5,0.5)$)
        to[out=225,in=45] ($(A) + (-0.5,0.5)$)
         to[out=225,in=135] ($(A) + (-0.5,-0.5)$)
         to[out=315,in=225] ($(A) + (0.5,-0.5)$)
        to[out=50,in=190] ($(F) + (0,-0.9)$)    
        to[out=0,in=180] ($(X) + (0,-0.5)$) 
        to[out=0,in=270] ($(X) + (0.5,0)$)
          to[out=90,in=0] ($(X) + (0,0.5)$)               
          to[out=180,in=225] ($(F) + (1,0.3)$)         
          to[out=45,in=225] ($(C) + (0.5,-0.5)$)  
          to[out=45,in=315] ($(C) + (0.5,0.5)$);

  \draw
       ($(C)+(-0.7,0.7)$) 
        to[out=45,in=135] ($(C) + (0.7,0.7)$)
        to[out=315,in=130] ($(B) + (0.5,0.5)$)
         to[out=315,in=45] ($(B) + (0.5,-0.5)$)
        to[out=225,in=315] ($(B) + (-0.5,-0.5)$)
        to[out=135,in=0] ($(E) + (0,-1.1)$)  
        to[out=180,in=0] ($(X) + (0,-0.7)$)
        to[out=180,in=270] ($(X) + (-0.5,0)$)
      to[out=90,in=180] ($(X) + (0,0.7)$) 
       to[out=0,in=225] ($(E) + (-1,0.3)$)        
       to[out=45,in=315] ($(C) + (-0.7,-0.5)$)  
       to[out=135,in=225] ($(C) + (-0.7,0.7)$);

  \draw[fill= light-gray,fill opacity = 0.4]
       ($(A)+(-1.1,0)$) 
        to[out=270,in=180] ($(A) + (0,-1.1)$)
         to[out=0,in=180] ($(B) + (0,-1.1)$)
         to[out=0,in=270] ($(B) + (1.1,0)$)
         to[out=90,in=0] ($(B) + (0,0.5)$)
         to[out=180,in=0] ($(D) + (1.2,0.3)$)  
         to[out=180,in=270] ($(D) + (0.7,0.8)$)  
         to[out=90,in=270] ($(X) + (0.7,0)$)
        to[out=90,in=0] ($(X) + (0,0.9)$)
        to[out=180,in=90] ($(X) + (-0.8,0)$)      
         to[out=270,in=90] ($(D) + (-0.7,0.8)$)  
         to[out=270,in=0] ($(D) + (-1.2,0.3)$)  
         to[out=180,in=0] ($(A) + (0,0.6)$)  
       to[out=180,in=90] ($(A)+(-1.1,0)$);                         
      
\draw [color=black, fill=black] (0, 0) circle (.3);
\draw [color=black, fill=black] (-4, -4) circle (.3);
\draw [color=black, fill=black] (4, -4) circle (.3);
\draw [color=black] (0, -2.5) circle (.3);

\node at (-3.9, -1.8) (T) {$T_h$};
\node at (3.9, -1.8) (S) {$S_l$};
\node at (0, -5.8) (R) {$\delta R_r$}; 
\node at (3.3, 0) (eps) {$\eps \leq \frac{1}{2}$};
\end{scope}

\begin{scope}[xshift = 23cm]
  \node at (0, 0) (C) {};
  \node at (4, -4) (B) {};
  \node at (0, -4) (D) {};
  \node at (2, -2) (E) {};
  \node at (-2, -2) (F) {};
 \node at (0, -2.5) (X) {};  

  \draw
       ($(C)+(0.5,0.5)$) 
        to[out=135,in=45] ($(C) + (-0.5,0.5)$)
         to[out=225,in=45] ($(F) + (-0.5,0.5)$)
        to[out=225,in=135] ($(F) + (-0.6,-0.6)$)   
        to[out=315,in=180] ($(X) + (0,-0.5)$) 
        to[out=0,in=270] ($(X) + (0.5,0)$)
          to[out=90,in=0] ($(X) + (0,0.5)$)               
          to[out=180,in=225] ($(F) + (1,0.3)$)         
          to[out=45,in=225] ($(C) + (0.4,-0.4)$)  
          to[out=45,in=315] ($(C) + (0.5,0.5)$);

  \draw
       ($(C)+(-0.7,0.7)$) 
        to[out=45,in=135] ($(C) + (0.7,0.7)$)
        to[out=315,in=130] ($(B) + (0.5,0.5)$)
         to[out=315,in=45] ($(B) + (0.5,-0.5)$)
        to[out=225,in=315] ($(B) + (-0.5,-0.5)$)
        to[out=135,in=0] ($(E) + (0,-1.1)$)  
        to[out=180,in=0] ($(X) + (0,-0.7)$)
        to[out=180,in=270] ($(X) + (-0.5,0)$)
      to[out=90,in=180] ($(X) + (0,0.7)$) 
       to[out=0,in=225] ($(E) + (-1,0.3)$)        
       to[out=45,in=315] ($(C) + (-0.7,-0.5)$)  
       to[out=135,in=225] ($(C) + (-0.7,0.7)$);

  \draw[fill= light-gray,fill opacity = 0.4]
       ($(D)+(-0.8,0)$) 
        to[out=270,in=180] ($(D) + (0,-1)$)  
        to[out=0,in=180] ($(B) + (0,-1.1)$)
         to[out=0,in=270] ($(B) + (1.1,0)$)
         to[out=90,in=0] ($(B) + (0,0.5)$)
         to[out=180,in=0] ($(D) + (1.2,0.3)$)  
         to[out=180,in=270] ($(D) + (0.7,0.8)$)  
         to[out=90,in=270] ($(X) + (0.7,0)$)
        to[out=90,in=0] ($(X) + (0,0.9)$)
        to[out=180,in=90] ($(X) + (-0.8,0)$) 
        to[out=270,in=90] ($(D) + (-0.8,0)$);       
      
\draw [color=black, fill=black] (0, 0) circle (.3);
\draw [color=black, fill=black] (4, -4) circle (.3);
\draw [color=black] (0, -2.5) circle (.3); 
   
   \node at (-2.5, -0.5) (T) {$T_l$};
  \node at (3.9, -1.8) (S) {$S_l$};
  \node at (2, -5.6) (R) {$\delta R$}; 

\end{scope}

\begin{scope}[xshift = 34.5cm]
  \node at (0, 0) (C) {};
  \node at (4, -4) (B) {};
  \node at (0, -4) (D) {};
  \node at (2, -2) (E) {};
  \node at (-2, -2) (F) {};
 \node at (0, -2.5) (X) {};  

 \draw
       ($(C)+(0.5,0.5)$) 
        to[out=135,in=45] ($(C) + (-0.5,0.5)$)
         to[out=225,in=45] ($(F) + (-0.5,0.5)$)
        to[out=225,in=135] ($(F) + (-0.6,-0.6)$)   
        to[out=315,in=180] ($(X) + (0,-0.5)$) 
        to[out=0,in=270] ($(X) + (0.5,0)$)
          to[out=90,in=0] ($(X) + (0,0.5)$)               
          to[out=180,in=225] ($(F) + (1,0.3)$)         
          to[out=45,in=225] ($(C) + (0.4,-0.4)$)  
          to[out=45,in=315] ($(C) + (0.5,0.5)$);

  \draw
       ($(C)+(-0.7,0.7)$) 
        to[out=45,in=135] ($(C) + (0.7,0.7)$)
        to[out=315,in=130] ($(B) + (0.5,0.5)$)
         to[out=315,in=45] ($(B) + (0.5,-0.5)$)
        to[out=225,in=315] ($(B) + (-0.5,-0.5)$)
        to[out=135,in=0] ($(E) + (0,-1.1)$)  
        to[out=180,in=0] ($(X) + (0,-0.7)$)
        to[out=180,in=270] ($(X) + (-0.5,0)$)
      to[out=90,in=180] ($(X) + (0,0.7)$) 
       to[out=0,in=225] ($(E) + (-1,0.3)$)        
       to[out=45,in=315] ($(C) + (-0.7,-0.5)$)  
       to[out=135,in=225] ($(C) + (-0.7,0.7)$);

  \draw[fill= light-gray,fill opacity = 0.4]
       ($(D)+(-0.8,0)$) 
        to[out=270,in=180] ($(D) + (0,-1)$)
        to[out=0,in=180] ($(B) + (0,-1.1)$)
         to[out=0,in=270] ($(B) + (1.1,0)$)
         to[out=90,in=0] ($(B) + (0,0.5)$)
         to[out=180,in=0] ($(D) + (1.2,0.3)$)  
         to[out=180,in=270] ($(D) + (0.7,0.8)$)  
         to[out=90,in=270] ($(X) + (0.7,0)$)
        to[out=90,in=0] ($(X) + (0,0.9)$)
        to[out=180,in=90] ($(X) + (-0.8,0)$) 
        to[out=270,in=90] ($(D) + (-0.8,0)$);       
      
\draw [color=black, fill=black] (0, 0) circle (.3);
\draw [color=black, fill=black] (4, -4) circle (.3);
\draw [color=black] (0, -2.5) circle (.3);

   \node at (-2.5, -0.5) (T){$T_h$};
  \node at (3.9, -1.8) (S) {$S_l$};
  \node at (2, -5.6) (R) {$\delta R$}; 
  \node at (3.3, 0) (eps) {$\eps \leq \frac{1}{2}$};
\end{scope}



\begin{scope}[yshift = -6.5cm]
  \node at (-4, -4) (A) {};
  \node at (4, -4) (B) {};
  \node at (0, -4) (D) {};
  \node at (2, -2) (E) {};
  \node at (-2, -2) (F) {};
 \node at (0, -2.5) (X) {};  

      \draw
       ($(F)+(-0.5,0.5)$) 
        to[out=225,in=45] ($(A) + (-0.5,0.5)$)
         to[out=225,in=135] ($(A) + (-0.5,-0.5)$)
         to[out=315,in=225] ($(A) + (0.5,-0.5)$)
        to[out=50,in=190] ($(F) + (0,-0.9)$)    
        to[out=0,in=180] ($(X) + (0,-0.5)$) 
        to[out=0,in=270] ($(X) + (0.5,0)$)
          to[out=90,in=0] ($(X) + (0,0.5)$)               
          to[out=180,in=315] ($(F) + (1,0.3)$)
          to[out=135,in=45] ($(F) + (-0.5,0.5)$);

  \draw
       ($(E)+(0.5,0.5)$) 
        to[out=315,in=135] ($(B) + (0.5,0.5)$)
         to[out=315,in=45] ($(B) + (0.5,-0.5)$)
        to[out=225,in=315] ($(B) + (-0.5,-0.5)$)
        to[out=135,in=0] ($(E) + (0,-1.1)$)  
        to[out=180,in=0] ($(X) + (0,-0.7)$)
        to[out=180,in=270] ($(X) + (-0.5,0)$)
      to[out=90,in=180] ($(X) + (0,0.7)$) 
       to[out=0,in=225] ($(E) + (-0.8,0.3)$)
       to[out=45,in=135] ($(E) + (0.5,0.5)$);

\draw[fill= light-gray,fill opacity = 0.4]
       ($(A)+(-1.1,0)$) 
        to[out=270,in=180] ($(A) + (0,-1.1)$)
         to[out=0,in=180] ($(B) + (0,-1.1)$)
         to[out=0,in=270] ($(B) + (1.1,0)$)
         to[out=90,in=0] ($(B) + (0,0.5)$)
         to[out=180,in=0] ($(D) + (1.2,0.3)$)  
         to[out=180,in=270] ($(D) + (0.7,0.8)$)  
         to[out=90,in=270] ($(X) + (0.7,0)$)
        to[out=90,in=0] ($(X) + (0,0.9)$)
        to[out=180,in=90] ($(X) + (-0.8,0)$)      
         to[out=270,in=90] ($(D) + (-0.7,0.8)$)  
         to[out=270,in=0] ($(D) + (-1.2,0.3)$)  
         to[out=180,in=0] ($(A) + (0,0.6)$)  
       to[out=180,in=90] ($(A)+(-1.1,0)$);

\draw [color=black, fill=black] (-4, -4) circle (.3);
\draw [color=black, fill=black] (4, -4) circle (.3);
\draw [color=black] (0, -2.5) circle (.3); 
   
   \node at (-3.9, -1.8) (T) {$T$};
  \node at (3.9, -1.8) (S) {$S_s$};
  \node at (0, -5.8) (R) {$\delta R_r$}; 
\end{scope}


\begin{scope}[yshift = -6.5cm, xshift = 11.5cm]
  \node at (-4, -4) (A) {};
  \node at (0, -4) (D) {};
  \node at (2, -2.5) (E) {};
  \node at (-2, -2) (F) {};
 \node at (0, -2.5) (X) {};  

      \draw
       ($(F)+(-0.5,0.5)$) 
        to[out=225,in=45] ($(A) + (-0.5,0.5)$)
         to[out=225,in=135] ($(A) + (-0.5,-0.5)$)
         to[out=315,in=225] ($(A) + (0.5,-0.5)$)
        to[out=50,in=190] ($(F) + (0,-0.9)$)    
        to[out=0,in=180] ($(X) + (0,-0.5)$) 
        to[out=0,in=270] ($(X) + (0.5,0)$)
          to[out=90,in=0] ($(X) + (0,0.5)$)               
          to[out=180,in=315] ($(F) + (1,0.3)$)
          to[out=135,in=45] ($(F) + (-0.5,0.5)$);

  \draw
       ($(E)+(0,0.7)$) 
        to[out=0,in=90] ($(E) + (0.7,0)$)  
        to[out=270,in=0] ($(E) + (0,-0.7)$)  
        to[out=180,in=0] ($(X) + (0,-0.7)$)
        to[out=180,in=270] ($(X) + (-0.5,0)$)
      to[out=90,in=180] ($(X) + (0,0.7)$) 
       to[out=0,in=180] ($(E) + (0,0.7)$);

  \draw[fill= light-gray,fill opacity = 0.4]
       ($(A)+(-1.1,0)$) 
        to[out=270,in=180] ($(A) + (0,-1.1)$)
         to[out=0,in=180] ($(D) + (0,-1.1)$)  
         to[out=0,in=270] ($(D) + (0.8,0)$)  
         to[out=90,in=270] ($(X) + (0.7,0)$)
        to[out=90,in=0] ($(X) + (0,0.9)$)
        to[out=180,in=90] ($(X) + (-0.8,0)$)      
         to[out=270,in=90] ($(D) + (-0.7,0.8)$)  
         to[out=270,in=0] ($(D) + (-1.2,0.3)$)  
         to[out=180,in=0] ($(A) + (0,0.6)$)  
       to[out=180,in=90] ($(A)+(-1.1,0)$);

\draw [color=black, fill=black] (-4, -4) circle (.3);
\draw [color=black] (0, -2.5) circle (.3); 
   
   \node at (-3.9, -1.8) (T) {$T$};
  \node at (3.6, -1.8) (S) {$S$};
  \node at (-2, -5.8) (R) {$\delta R_r$}; 
\end{scope}


\begin{scope}[xshift = 23cm,yshift = -6.5cm]
  \node at (4, -4) (B) {};
  \node at (0, -4) (D) {};
  \node at (2, -2) (E) {};
  \node at (-2, -2.5) (F) {};
 \node at (0, -2.5) (X) {};  

      \draw
 
            ($(F)+(-0.6,0)$) 
         to[out=90,in=180] ($(F) + (0,0.6)$)
        to[out=0,in=180] ($(X) + (0,0.6)$)          
          to[out=0,in=90] ($(X) + (0.6,0)$)         
          to[out=270,in=0] ($(X) + (0,-0.6)$)     
         to[out=180,in=0] ($(F) + (0,-0.6)$)         
         to[out=180,in=270] ($(F) + (-0.6,0)$);                   

  \draw
       ($(E)+(0.5,0.5)$) 
        to[out=315,in=135] ($(B) + (0.5,0.5)$)
         to[out=315,in=45] ($(B) + (0.5,-0.5)$)
        to[out=225,in=315] ($(B) + (-0.5,-0.5)$)
        to[out=135,in=0] ($(E) + (0,-1.1)$)  
        to[out=180,in=0] ($(X) + (0,-0.7)$)
        to[out=180,in=270] ($(X) + (-0.5,0)$)
      to[out=90,in=180] ($(X) + (0,0.7)$) 
       to[out=0,in=225] ($(E) + (-0.8,0.3)$)
       to[out=45,in=135] ($(E) + (0.5,0.5)$);

  \draw[fill= light-gray,fill opacity = 0.4]
       ($(D)+(-0.8,0)$) 
        to[out=270,in=180] ($(D) + (0,-1)$)  
        to[out=0,in=180] ($(B) + (0,-1.1)$)
         to[out=0,in=270] ($(B) + (1.1,0)$)
         to[out=90,in=0] ($(B) + (0,0.5)$)
         to[out=180,in=0] ($(D) + (1.2,0.3)$)  
         to[out=180,in=270] ($(D) + (0.7,0.8)$)  
         to[out=90,in=270] ($(X) + (0.7,0)$)
        to[out=90,in=0] ($(X) + (0,0.9)$)
        to[out=180,in=90] ($(X) + (-0.8,0)$) 
        to[out=270,in=90] ($(D) + (-0.8,0)$);       .

\draw [color=black, fill=black] (4, -4) circle (.3);
\draw [color=black] (0, -2.5) circle (.3);

   \node at (-3.3, -1.8) (T) {$T$};
  \node at (3.9, -1.8) (S) {$S_s$};
  \node at (2, -5.8) (R) {$\delta R$}; 
\end{scope}

\begin{scope}[xshift=34.5cm,yshift=-6.5cm]
  \node at (0, -4) (D) {};
  \node at (2.5, -2) (E) {};
  \node at (-2.5, -2) (F) {};
  \node at (0, -2) (X) {};
  
  \draw
       ($(F)+(-0.6,0)$) 
         to[out=90,in=180] ($(F) + (0,0.6)$)
        to[out=0,in=180] ($(X) + (0,0.6)$)          
          to[out=0,in=90] ($(X) + (0.6,0)$)         
          to[out=270,in=0] ($(X) + (0,-0.6)$)     
         to[out=180,in=0] ($(F) + (0,-0.6)$)         
         to[out=180,in=270] ($(F) + (-0.6,0)$);        
    
  \draw[fill= light-gray,fill opacity = 0.4]
       ($(X)+(0,1)$) 
         to[out=0,in=90] ($(X) + (0.9,0)$)
        to[out=270,in=90] ($(D) + (0.9,0)$)          
          to[out=270,in=0] ($(D) + (0,-0.9)$)         
          to[out=180,in=270] ($(D) + (-0.9,0)$)     
         to[out=90,in=270] ($(X) + (-1,0)$)         
         to[out=90,in=180] ($(X) + (0,1)$);

   \draw
       ($(X)+(-0.8,0)$) 
         to[out=90,in=180] ($(X) + (0,0.8)$)
        to[out=0,in=180] ($(E) + (0,0.8)$)          
          to[out=0,in=90] ($(E) + (0.8,0)$)         
          to[out=270,in=0] ($(E) + (0,-0.8)$)     
         to[out=180,in=0] ($(X) + (0,-0.8)$)         
         to[out=180,in=270] ($(X) + (-0.8,0)$);

\draw [color=black] (0, -2) circle (.3); 
   
      \node at (-2.5, -0.6) (T) {$T$};
  \node at (3, -0.6) (S) {$S$};
  \node at (0, -5.6) (R) {$\delta R$}; 
\end{scope}

 \node at (19, -16.6) (strategy) {Delta skew-aware 
  views with delta evaluation strategy 
$\delta R_r  \cdot V_{S_sT_t}$};;
 
 \begin{scope}[xshift = 3cm,yshift=-19cm]
 
  \node at (0, 0) (C) {};
  \node at (-4, -4) (A) {};
  \node at (4, -4) (B) {};
  \node at (0, -4) (D) {};
  \node at (2, -2) (E) {};
  \node at (-2, -2) (F) {};
 \node at (0, -2.5) (X) {};  

   \draw
       ($(C)+(0.5,0.5)$) 
        to[out=135,in=45] ($(C) + (-0.5,0.5)$)
        to[out=225,in=45] ($(A) + (-0.5,0.5)$)
         to[out=225,in=135] ($(A) + (-0.5,-0.5)$)
         to[out=315,in=225] ($(A) + (0.5,-0.5)$)
        to[out=50,in=190] ($(F) + (0,-0.9)$)    
        to[out=0,in=180] ($(X) + (0,-0.5)$) 
        to[out=0,in=270] ($(X) + (0.5,0)$)
          to[out=90,in=0] ($(X) + (0,0.5)$)               
          to[out=180,in=225] ($(F) + (1,0.3)$)         
          to[out=45,in=225] ($(C) + (0.5,-0.5)$)  
          to[out=45,in=315] ($(C) + (0.5,0.5)$);

  \draw
       ($(C)+(-0.7,0.7)$) 
        to[out=45,in=135] ($(C) + (0.7,0.7)$)
        to[out=315,in=130] ($(B) + (0.5,0.5)$)
         to[out=315,in=45] ($(B) + (0.5,-0.5)$)
        to[out=225,in=315] ($(B) + (-0.5,-0.5)$)
        to[out=135,in=0] ($(E) + (0,-1.1)$)  
        to[out=180,in=0] ($(X) + (0,-0.7)$)
        to[out=180,in=270] ($(X) + (-0.5,0)$)
      to[out=90,in=180] ($(X) + (0,0.7)$) 
       to[out=0,in=225] ($(E) + (-1,0.3)$)        
       to[out=45,in=315] ($(C) + (-0.7,-0.5)$)  
       to[out=135,in=225] ($(C) + (-0.7,0.7)$);

  \draw[fill= light-gray,fill opacity = 0.4]
       ($(A)+(-1.1,0)$) 
        to[out=270,in=180] ($(A) + (0,-1.1)$)
         to[out=0,in=180] ($(B) + (0,-1.1)$)
         to[out=0,in=270] ($(B) + (1.1,0)$)
         to[out=90,in=0] ($(B) + (0,0.5)$)
         to[out=180,in=0] ($(D) + (1.2,0.3)$)  
         to[out=180,in=270] ($(D) + (0.7,0.8)$)  
         to[out=90,in=270] ($(X) + (0.7,0)$)
        to[out=90,in=0] ($(X) + (0,0.9)$)
        to[out=180,in=90] ($(X) + (-0.8,0)$)      
         to[out=270,in=90] ($(D) + (-0.7,0.8)$)  
         to[out=270,in=0] ($(D) + (-1.2,0.3)$)  
         to[out=180,in=0] ($(A) + (0,0.6)$)  
       to[out=180,in=90] ($(A)+(-1.1,0)$);

\draw [color=black, fill=black] (0, 0) circle (.3);
\draw [color=black, fill=black] (-4, -4) circle (.3);
\draw [color=black, fill=black] (4, -4) circle (.3);
\draw [color=black] (0, -2.5) circle (.3);    
   
   \node at (-3.9, -1.8) (T) {$T_l$};
  \node at (3.9, -1.8) (S) {$S_h$};
  \node at (0, -5.8) (R) {$\delta R_r$}; 

\end{scope}

\begin{scope}[xshift = 13cm, yshift=-19cm]
  \node at (0, 0) (C) {};
  \node at (4, -4) (B) {};
  \node at (0, -4) (D) {};
  \node at (2, -2) (E) {};
  \node at (-2, -2) (F) {};
 \node at (0, -2.5) (X) {};  

  \draw
       ($(C)+(0.5,0.5)$) 
        to[out=135,in=45] ($(C) + (-0.5,0.5)$)
         to[out=225,in=45] ($(F) + (-0.5,0.5)$)
        to[out=225,in=135] ($(F) + (-0.6,-0.6)$)   
        to[out=315,in=180] ($(X) + (0,-0.5)$) 
        to[out=0,in=270] ($(X) + (0.5,0)$)
          to[out=90,in=0] ($(X) + (0,0.5)$)               
          to[out=180,in=225] ($(F) + (1,0.3)$)         
          to[out=45,in=225] ($(C) + (0.4,-0.4)$)  
          to[out=45,in=315] ($(C) + (0.5,0.5)$);

  \draw
       ($(C)+(-0.7,0.7)$) 
        to[out=45,in=135] ($(C) + (0.7,0.7)$)
        to[out=315,in=130] ($(B) + (0.5,0.5)$)
         to[out=315,in=45] ($(B) + (0.5,-0.5)$)
        to[out=225,in=315] ($(B) + (-0.5,-0.5)$)
        to[out=135,in=0] ($(E) + (0,-1.1)$)  
        to[out=180,in=0] ($(X) + (0,-0.7)$)
        to[out=180,in=270] ($(X) + (-0.5,0)$)
      to[out=90,in=180] ($(X) + (0,0.7)$) 
       to[out=0,in=225] ($(E) + (-1,0.3)$)        
       to[out=45,in=315] ($(C) + (-0.7,-0.5)$)  
       to[out=135,in=225] ($(C) + (-0.7,0.7)$);

  \draw[fill= light-gray,fill opacity = 0.4]
       ($(D)+(-0.8,0)$) 
        to[out=270,in=180] ($(D) + (0,-1)$)
        to[out=0,in=180] ($(B) + (0,-1.1)$)
         to[out=0,in=270] ($(B) + (1.1,0)$)
         to[out=90,in=0] ($(B) + (0,0.5)$)
         to[out=180,in=0] ($(D) + (1.2,0.3)$)  
         to[out=180,in=270] ($(D) + (0.7,0.8)$)  
         to[out=90,in=270] ($(X) + (0.7,0)$)
        to[out=90,in=0] ($(X) + (0,0.9)$)
        to[out=180,in=90] ($(X) + (-0.8,0)$) 
        to[out=270,in=90] ($(D) + (-0.8,0)$);

\draw [color=black, fill=black] (0, 0) circle (.3);
\draw [color=black, fill=black] (4, -4) circle (.3);
\draw [color=black] (0, -2.5) circle (.3);

   \node at (-2.5, -0.5) (T) (T) {$T_l$};
  \node at (3.9, -1.8) (S) {$S_h$};
  \node at (2, -5.6) (R) {$\delta R$}; 

\end{scope}


\begin{scope}[xshift=25.5cm, yshift=-19cm]
  \node at (0, 0) (C) {};
  \node at (-4, -4) (A) {};
  \node at (0, -4) (D) {};
  \node at (2, -2) (E) {};
  \node at (-2, -2) (F) {};
 \node at (0, -2.5) (X) {};  

      \draw
       ($(C)+(0.5,0.5)$) 
        to[out=135,in=45] ($(C) + (-0.5,0.5)$)
        to[out=225,in=45] ($(A) + (-0.5,0.5)$)
         to[out=225,in=135] ($(A) + (-0.5,-0.5)$)
         to[out=315,in=225] ($(A) + (0.5,-0.5)$)
        to[out=50,in=190] ($(F) + (0,-0.9)$)    
        to[out=0,in=180] ($(X) + (0,-0.5)$) 
        to[out=0,in=270] ($(X) + (0.5,0)$)
          to[out=90,in=0] ($(X) + (0,0.5)$)               
          to[out=180,in=225] ($(F) + (1,0.3)$)         
          to[out=45,in=225] ($(C) + (0.5,-0.5)$)  
          to[out=45,in=315] ($(C) + (0.5,0.5)$);

 \draw
       ($(C)+(-0.7,0.7)$) 
        to[out=45,in=135] ($(C) + (0.7,0.7)$) 
        to[out=315,in=135] ($(E) + (0.7,0.7)$)  
      to[out=315,in=0] ($(E) + (0,-1.2)$)      
        to[out=180,in=0] ($(X) + (0,-0.7)$)
        to[out=180,in=270] ($(X) + (-0.5,0)$)
      to[out=90,in=180] ($(X) + (0,0.7)$) 
       to[out=0,in=225] ($(E) + (-1,0.3)$)         
       to[out=45,in=315] ($(C) + (-0.6,-0.6)$)  
       to[out=135,in=225] ($(C) + (-0.7,0.7)$);

  \draw[fill= light-gray,fill opacity = 0.4]
       ($(A)+(-1.1,0)$) 
        to[out=270,in=180] ($(A) + (0,-1.1)$)
         to[out=0,in=180] ($(D) + (0,-1.1)$)  
         to[out=0,in=270] ($(D) + (0.8,0)$)  
         to[out=90,in=270] ($(X) + (0.7,0)$)
        to[out=90,in=0] ($(X) + (0,0.9)$)
        to[out=180,in=90] ($(X) + (-0.8,0)$)      
         to[out=270,in=90] ($(D) + (-0.7,0.8)$)  
         to[out=270,in=0] ($(D) + (-1.2,0.3)$)  
         to[out=180,in=0] ($(A) + (0,0.6)$)  
       to[out=180,in=90] ($(A)+(-1.1,0)$);

\draw [color=black, fill=black] (0, 0) circle (.3);
\draw [color=black, fill=black] (-4, -4) circle (.3);
\draw [color=black] (0, -2.5) circle (.3);  
   
   \node at (-3.9, -1.8) (T) {$T_l$};
  \node at (3.2, -0.8) (S) {$S$};
  \node at (-2, -5.8) (R) {$\delta R_r$}; 

\end{scope}


\begin{scope}[xshift=34cm, yshift=-19cm]
  \node at (0, 0) (C) {};
  \node at (0, -4) (D) {};
  \node at (2, -2) (E) {};
  \node at (-2, -2) (F) {};
 \node at (0, -2.5) (X) {};  

 \draw
       ($(C)+(0.5,0.5)$) 
        to[out=135,in=45] ($(C) + (-0.5,0.5)$)
         to[out=225,in=45] ($(F) + (-0.5,0.5)$)
        to[out=225,in=135] ($(F) + (-0.6,-0.6)$)   
        to[out=315,in=180] ($(X) + (0,-0.5)$) 
        to[out=0,in=270] ($(X) + (0.5,0)$)
          to[out=90,in=0] ($(X) + (0,0.5)$)               
          to[out=180,in=225] ($(F) + (1,0.3)$)         
          to[out=45,in=225] ($(C) + (0.4,-0.4)$)  
          to[out=45,in=315] ($(C) + (0.5,0.5)$);            
                              
 \draw
       ($(C)+(-0.7,0.7)$) 
        to[out=45,in=135] ($(C) + (0.7,0.7)$) 
        to[out=315,in=135] ($(E) + (0.7,0.7)$)  
      to[out=315,in=0] ($(E) + (0,-1.2)$)      
        to[out=180,in=0] ($(X) + (0,-0.7)$)
        to[out=180,in=270] ($(X) + (-0.5,0)$)
      to[out=90,in=180] ($(X) + (0,0.7)$) 
       to[out=0,in=225] ($(E) + (-1,0.3)$)         
       to[out=45,in=315] ($(C) + (-0.6,-0.6)$)  
       to[out=135,in=225] ($(C) + (-0.7,0.7)$);

  \draw[fill= light-gray,fill opacity = 0.4]
       ($(X)+(0,0.9)$) 
         to[out=0,in=90] ($(X) + (0.7,0)$)  
         to[out=270,in=90] ($(D) + (0.7,0)$)  
         to[out=270,in=0] ($(D) + (0,-0.7)$)
         to[out=180,in=270] ($(D) + (-0.7,0)$)         
         to[out=90,in=270] ($(X) + (-0.7,0)$)
         to[out=90,in=180] ($(X) + (-0,0.9)$);

 \draw [color=black, fill=black] (0, 0) circle (.3);
\draw [color=black] (0, -2.5) circle (.3); 
   
   \node at (-2.5, -0.5) (T) {$T_t$};
  \node at (3.2, -0.8) (S) {$S$};
  \node at (0, -5.3) (R) {$\delta R$}; 

\end{scope}

\node at (19, -29.1) (strategy) {Delta skew-aware 
views with delta evaluation strategy
$\delta R_r  \cdot \textstyle\sum T_t \cdot S_s$};;
      
\begin{scope}[xshift = 5cm, yshift=-31.5cm]
 \node at (0, 0) (C) {};
  \node at (-4, -4) (A) {};
  \node at (4, -4) (B) {};
  \node at (0, -4) (D) {};
  \node at (2, -2) (E) {};
  \node at (-2, -2) (F) {};
 \node at (0, -2.5) (X) {};  

    \draw
       ($(C)+(0.5,0.5)$) 
        to[out=135,in=45] ($(C) + (-0.5,0.5)$)
        to[out=225,in=45] ($(A) + (-0.5,0.5)$)
         to[out=225,in=135] ($(A) + (-0.5,-0.5)$)
         to[out=315,in=225] ($(A) + (0.5,-0.5)$)
        to[out=50,in=190] ($(F) + (0,-0.9)$)    
        to[out=0,in=180] ($(X) + (0,-0.5)$) 
        to[out=0,in=270] ($(X) + (0.5,0)$)
          to[out=90,in=0] ($(X) + (0,0.5)$)               
          to[out=180,in=225] ($(F) + (1,0.3)$)         
          to[out=45,in=225] ($(C) + (0.5,-0.5)$)  
          to[out=45,in=315] ($(C) + (0.5,0.5)$);

  \draw
       ($(C)+(-0.7,0.7)$) 
        to[out=45,in=135] ($(C) + (0.7,0.7)$)
        to[out=315,in=130] ($(B) + (0.5,0.5)$)
         to[out=315,in=45] ($(B) + (0.5,-0.5)$)
        to[out=225,in=315] ($(B) + (-0.5,-0.5)$)
        to[out=135,in=0] ($(E) + (0,-1.1)$)  
        to[out=180,in=0] ($(X) + (0,-0.7)$)
        to[out=180,in=270] ($(X) + (-0.5,0)$)
      to[out=90,in=180] ($(X) + (0,0.7)$) 
       to[out=0,in=225] ($(E) + (-1,0.3)$)        
       to[out=45,in=315] ($(C) + (-0.7,-0.5)$)  
       to[out=135,in=225] ($(C) + (-0.7,0.7)$);

  \draw[fill= light-gray,fill opacity = 0.4]
       ($(A)+(-1.1,0)$) 
        to[out=270,in=180] ($(A) + (0,-1.1)$)
         to[out=0,in=180] ($(B) + (0,-1.1)$)
         to[out=0,in=270] ($(B) + (1.1,0)$)
         to[out=90,in=0] ($(B) + (0,0.5)$)
         to[out=180,in=0] ($(D) + (1.2,0.3)$)  
         to[out=180,in=270] ($(D) + (0.7,0.8)$)  
         to[out=90,in=270] ($(X) + (0.7,0)$)
        to[out=90,in=0] ($(X) + (0,0.9)$)
        to[out=180,in=90] ($(X) + (-0.8,0)$)      
         to[out=270,in=90] ($(D) + (-0.7,0.8)$)  
         to[out=270,in=0] ($(D) + (-1.2,0.3)$)  
         to[out=180,in=0] ($(A) + (0,0.6)$)  
       to[out=180,in=90] ($(A)+(-1.1,0)$);

\draw [color=black, fill=black] (0, 0) circle (.3);
\draw [color=black, fill=black] (-4, -4) circle (.3);
\draw [color=black, fill=black] (4, -4) circle (.3);
\draw [color=black] (0, -2.5) circle (.3); 
   
   \node at (-3.9, -1.8) (T) {$T_h$};
  \node at (3.9, -1.8) (S) {$S_h$};
  \node at (0, -5.8) (R) {$\delta R_r$}; 
\end{scope}

\begin{scope}[xshift=16.5cm,yshift=-31.5cm]
 \node at (0, 0) (C) {};
  \node at (-4, -4) (A) {};
  \node at (4, -4) (B) {};
  \node at (0, -4) (D) {};
  \node at (2, -2) (E) {};
  \node at (-2, -2) (F) {};
 \node at (0, -2.5) (X) {};  

   \draw
       ($(C)+(0.5,0.5)$) 
        to[out=135,in=45] ($(C) + (-0.5,0.5)$)
        to[out=225,in=45] ($(A) + (-0.5,0.5)$)
         to[out=225,in=135] ($(A) + (-0.5,-0.5)$)
         to[out=315,in=225] ($(A) + (0.5,-0.5)$)
        to[out=50,in=190] ($(F) + (0,-0.9)$)    
        to[out=0,in=180] ($(X) + (0,-0.5)$) 
        to[out=0,in=270] ($(X) + (0.5,0)$)
          to[out=90,in=0] ($(X) + (0,0.5)$)               
          to[out=180,in=225] ($(F) + (1,0.3)$)         
          to[out=45,in=225] ($(C) + (0.5,-0.5)$)  
          to[out=45,in=315] ($(C) + (0.5,0.5)$);

  \draw
       ($(C)+(-0.7,0.7)$) 
        to[out=45,in=135] ($(C) + (0.7,0.7)$)
        to[out=315,in=130] ($(B) + (0.5,0.5)$)
         to[out=315,in=45] ($(B) + (0.5,-0.5)$)
        to[out=225,in=315] ($(B) + (-0.5,-0.5)$)
        to[out=135,in=0] ($(E) + (0,-1.1)$)  
        to[out=180,in=0] ($(X) + (0,-0.7)$)
        to[out=180,in=270] ($(X) + (-0.5,0)$)
      to[out=90,in=180] ($(X) + (0,0.7)$) 
       to[out=0,in=225] ($(E) + (-1,0.3)$)        
       to[out=45,in=315] ($(C) + (-0.7,-0.5)$)  
       to[out=135,in=225] ($(C) + (-0.7,0.7)$);

  \draw[fill= light-gray,fill opacity = 0.4]
       ($(A)+(-1.1,0)$) 
        to[out=270,in=180] ($(A) + (0,-1.1)$)
         to[out=0,in=180] ($(B) + (0,-1.1)$)
         to[out=0,in=270] ($(B) + (1.1,0)$)
         to[out=90,in=0] ($(B) + (0,0.5)$)
         to[out=180,in=0] ($(D) + (1.2,0.3)$)  
         to[out=180,in=270] ($(D) + (0.7,0.8)$)  
         to[out=90,in=270] ($(X) + (0.7,0)$)
        to[out=90,in=0] ($(X) + (0,0.9)$)
        to[out=180,in=90] ($(X) + (-0.8,0)$)      
         to[out=270,in=90] ($(D) + (-0.7,0.8)$)  
         to[out=270,in=0] ($(D) + (-1.2,0.3)$)  
         to[out=180,in=0] ($(A) + (0,0.6)$)  
       to[out=180,in=90] ($(A)+(-1.1,0)$);

\draw [color=black, fill=black] (0, 0) circle (.3);
\draw [color=black, fill=black] (-4, -4) circle (.3);
\draw [color=black, fill=black] (4, -4) circle (.3);
\draw [color=black] (0, -2.5) circle (.3); 
   
   \node at (-3.9, -1.8) (T) {$T_h$};
  \node at (3.9, -1.8) (S) {$S_l$};
  \node at (0, -5.8) (R) {$\delta R_r$}; 
    \node at (3.3, 0) (eps) {$\eps > \frac{1}{2}$};
\end{scope}

\begin{scope}[xshift = 28cm, yshift=-31.5cm]
  \node at (0, 0) (C) {};
  \node at (4, -4) (B) {};
  \node at (0, -4) (D) {};
  \node at (2, -2) (E) {};
  \node at (-2, -2) (F) {};
 \node at (0, -2.5) (X) {};  

 \draw
       ($(C)+(0.5,0.5)$) 
        to[out=135,in=45] ($(C) + (-0.5,0.5)$)
         to[out=225,in=45] ($(F) + (-0.5,0.5)$)
        to[out=225,in=135] ($(F) + (-0.6,-0.6)$)   
        to[out=315,in=180] ($(X) + (0,-0.5)$) 
        to[out=0,in=270] ($(X) + (0.5,0)$)
          to[out=90,in=0] ($(X) + (0,0.5)$)               
          to[out=180,in=225] ($(F) + (1,0.3)$)         
          to[out=45,in=225] ($(C) + (0.4,-0.4)$)  
          to[out=45,in=315] ($(C) + (0.5,0.5)$);

  \draw
       ($(C)+(-0.7,0.7)$) 
        to[out=45,in=135] ($(C) + (0.7,0.7)$)
        to[out=315,in=130] ($(B) + (0.5,0.5)$)
         to[out=315,in=45] ($(B) + (0.5,-0.5)$)
        to[out=225,in=315] ($(B) + (-0.5,-0.5)$)
        to[out=135,in=0] ($(E) + (0,-1.1)$)  
        to[out=180,in=0] ($(X) + (0,-0.7)$)
        to[out=180,in=270] ($(X) + (-0.5,0)$)
      to[out=90,in=180] ($(X) + (0,0.7)$) 
       to[out=0,in=225] ($(E) + (-1,0.3)$)        
       to[out=45,in=315] ($(C) + (-0.7,-0.5)$)  
       to[out=135,in=225] ($(C) + (-0.7,0.7)$);

  \draw[fill= light-gray,fill opacity = 0.4]
       ($(D)+(-0.8,0)$) 
        to[out=270,in=180] ($(D) + (0,-1)$)
        to[out=0,in=180] ($(B) + (0,-1.1)$)
         to[out=0,in=270] ($(B) + (1.1,0)$)
         to[out=90,in=0] ($(B) + (0,0.5)$)
         to[out=180,in=0] ($(D) + (1.2,0.3)$)  
         to[out=180,in=270] ($(D) + (0.7,0.8)$)  
         to[out=90,in=270] ($(X) + (0.7,0)$)
        to[out=90,in=0] ($(X) + (0,0.9)$)
        to[out=180,in=90] ($(X) + (-0.8,0)$) 
        to[out=270,in=90] ($(D) + (-0.8,0)$);

\draw [color=black, fill=black] (0, 0) circle (.3);
\draw [color=black, fill=black] (4, -4) circle (.3);
\draw [color=black] (0, -2.5) circle (.3);

   \node at (-2.5, -0.5) (T) {$T_h$};
  \node at (3.9, -1.8) (S) {$S_h$};
  \node at (2, -5.6) (R) {$\delta R$}; 

\end{scope}


\begin{scope}[xshift = 12cm, yshift=-40cm]
  \node at (0, 0) (C) {};
  \node at (4, -4) (B) {};
  \node at (0, -4) (D) {};
  \node at (2, -2) (E) {};
  \node at (-2, -2) (F) {};
 \node at (0, -2.5) (X) {};  

 \draw
       ($(C)+(0.5,0.5)$) 
        to[out=135,in=45] ($(C) + (-0.5,0.5)$)
         to[out=225,in=45] ($(F) + (-0.5,0.5)$)
        to[out=225,in=135] ($(F) + (-0.6,-0.6)$)   
        to[out=315,in=180] ($(X) + (0,-0.5)$) 
        to[out=0,in=270] ($(X) + (0.5,0)$)
          to[out=90,in=0] ($(X) + (0,0.5)$)               
          to[out=180,in=225] ($(F) + (1,0.3)$)         
          to[out=45,in=225] ($(C) + (0.4,-0.4)$)  
          to[out=45,in=315] ($(C) + (0.5,0.5)$);

  \draw
       ($(C)+(-0.7,0.7)$) 
        to[out=45,in=135] ($(C) + (0.7,0.7)$)
        to[out=315,in=130] ($(B) + (0.5,0.5)$)
         to[out=315,in=45] ($(B) + (0.5,-0.5)$)
        to[out=225,in=315] ($(B) + (-0.5,-0.5)$)
        to[out=135,in=0] ($(E) + (0,-1.1)$)  
        to[out=180,in=0] ($(X) + (0,-0.7)$)
        to[out=180,in=270] ($(X) + (-0.5,0)$)
      to[out=90,in=180] ($(X) + (0,0.7)$) 
       to[out=0,in=225] ($(E) + (-1,0.3)$)        
       to[out=45,in=315] ($(C) + (-0.7,-0.5)$)  
       to[out=135,in=225] ($(C) + (-0.7,0.7)$);

  \draw[fill= light-gray,fill opacity = 0.4]
       ($(D)+(-0.8,0)$) 
        to[out=270,in=180] ($(D) + (0,-1)$)
        
        to[out=0,in=180] ($(B) + (0,-1.1)$)
         to[out=0,in=270] ($(B) + (1.1,0)$)
         to[out=90,in=0] ($(B) + (0,0.5)$)
         to[out=180,in=0] ($(D) + (1.2,0.3)$)  
         to[out=180,in=270] ($(D) + (0.7,0.8)$)  
         to[out=90,in=270] ($(X) + (0.7,0)$)
        to[out=90,in=0] ($(X) + (0,0.9)$)
        to[out=180,in=90] ($(X) + (-0.8,0)$) 
        to[out=270,in=90] ($(D) + (-0.8,0)$);

\draw [color=black, fill=black] (0, 0) circle (.3);
\draw [color=black, fill=black] (4, -4) circle (.3);
\draw [color=black] (0, -2.5) circle (.3);

   \node at (-2.5, -0.5) (T){$T_h$};
  \node at (3.9, -1.8) (S) {$S_l$};
  \node at (2, -5.6) (R) {$\delta R$}; 
      \node at (3.3, 0) (eps) {$\eps > \frac{1}{2}$};
\end{scope}


\begin{scope}[xshift=23.5cm, yshift=-40cm]
  \node at (0, 0) (C) {};
  \node at (-4, -4) (A) {};
  \node at (0, -4) (D) {};
  \node at (2, -2) (E) {};
  \node at (-2, -2) (F) {};
 \node at (0, -2.5) (X) {};  

      \draw
       ($(C)+(0.5,0.5)$) 
        to[out=135,in=45] ($(C) + (-0.5,0.5)$)
        to[out=225,in=45] ($(A) + (-0.5,0.5)$)
         to[out=225,in=135] ($(A) + (-0.5,-0.5)$)
         to[out=315,in=225] ($(A) + (0.5,-0.5)$)
        to[out=50,in=190] ($(F) + (0,-0.9)$)    
        to[out=0,in=180] ($(X) + (0,-0.5)$) 
        to[out=0,in=270] ($(X) + (0.5,0)$)
          to[out=90,in=0] ($(X) + (0,0.5)$)               
          to[out=180,in=225] ($(F) + (1,0.3)$)         
          to[out=45,in=225] ($(C) + (0.5,-0.5)$)  
          to[out=45,in=315] ($(C) + (0.5,0.5)$);

 \draw
       ($(C)+(-0.7,0.7)$) 
        to[out=45,in=135] ($(C) + (0.7,0.7)$) 
        to[out=315,in=135] ($(E) + (0.7,0.7)$)  
      to[out=315,in=0] ($(E) + (0,-1.2)$)      
        to[out=180,in=0] ($(X) + (0,-0.7)$)
        to[out=180,in=270] ($(X) + (-0.5,0)$)
      to[out=90,in=180] ($(X) + (0,0.7)$) 
       to[out=0,in=225] ($(E) + (-1,0.3)$)         
       to[out=45,in=315] ($(C) + (-0.6,-0.6)$)  
       to[out=135,in=225] ($(C) + (-0.7,0.7)$);

  \draw[fill= light-gray,fill opacity = 0.4]
       ($(A)+(-1.1,0)$) 
        to[out=270,in=180] ($(A) + (0,-1.1)$)
         to[out=0,in=180] ($(D) + (0,-1.1)$)           
         to[out=0,in=270] ($(D) + (0.8,0)$)  
         to[out=90,in=270] ($(X) + (0.7,0)$)
        to[out=90,in=0] ($(X) + (0,0.9)$)
        to[out=180,in=90] ($(X) + (-0.8,0)$)      
         to[out=270,in=90] ($(D) + (-0.7,0.8)$)  
         to[out=270,in=0] ($(D) + (-1.2,0.3)$)  
         to[out=180,in=0] ($(A) + (0,0.6)$)  
       to[out=180,in=90] ($(A)+(-1.1,0)$);

\draw [color=black, fill=black] (0, 0) circle (.3);
\draw [color=black, fill=black] (-4, -4) circle (.3);
\draw [color=black] (0, -2.5) circle (.3);

   \node at (-3.9, -1.8) (T) {$T_h$};
  \node at (3.2, -0.8) (S) {$S$};
  \node at (-2, -5.8) (R) {$\delta R_r$}; 

\end{scope}	
\end{tikzpicture}
 \end{center}
 \caption{Hypergraphs of delta skew-aware views categorized with respect 
 to the type of the evaluation strategy under an update to (a part of) relation 
 $\delta R$.
It holds 
$R_r \in \{R_l,R_h\}$, 
$S_s \in \{S_l,S_h\}$, and 
$T_r \in \{T_l,T_h\}$.}
\label{fig:hypergraphs_evaluation_strategies}
\end{figure}

\subsubsection{Optimization Phase for $\threerpj{0} \cup \threerpj{1}$ Queries}
\label{sec:3r_opt_phase}
For a $\threerpj{0} \cup \threerpj{1}$ query, the relation partitioning, view definitions and 
delta evaluation strategies developed 
in the previous sections undergo
an optimization phase. 
The aim of this phase is to 
improve the maintenance strategy 
by discarding redundant partitions.    

Given a view $V$ and two relations 
$K$ and $K'$, we denote by $V[K \mapsto K']$
the view that results from $V$ by replacing $K$
by $K'$. The partitioning of a relation $K$ is called redundant
if for each pair $V_1, V_2$ of views with $V_2 = V_1[K_l \mapsto K_h]$ 
and each update $\delta K'$,
the strategy to compute the delta of $V_1$ under 
$\delta K'$ is the same
as  the strategy to compute the delta of $V_2$ under 
$\delta K'$.  

In the optimization phase, \ivme discards 
redundant partitions $\{K_l,K_h\}$
as follows. 
It defines $\parts{K} = \{K\}$ and replaces each pair $V_1,V_2$ of views with $V_2 = V_1[K_l \mapsto K_h]$
by a single view $V = V_1[K_l \mapsto K]$. The 
delta evaluation strategy for $V_1$ (or $V_2$) under updates to $K_l$ becomes 
the delta evaluation strategy for $V$ under updates to $K$.   

\subsubsection{\ivme States}
\label{sec:3r_ivme_states}
We introduce \ivme states for \threer queries. 
Given a database $\db = \{R, S, T\}$, an \ivme state of $\db$ is a tuple 
$(\eps, \dbeps, \inst{V}, N)$ where 
$\dbeps = \parts{R} \cup \parts{S} \cup \parts{T}$, 
$\inst{V}$ is the set of materialized views as given in 
Figure \ref{fig:meta_view_definitions}, and $N \in \mathbb{N}$ is the 
threshold base
such that 
$\floor{\frac{1}{4}N} \leq |\db| < N$. Each partitioned relation 
 $K$ is partitioned 
on its partition variable tuple with threshold $N^\eps$. 
Recall 
that the partition variable 
tuples of $R$, $S$, and $T$ are 
$\inst{A}_{RT}$, $\inst{A}_{RS}$, and $\inst{A}_{ST}$, 
respectively.
 
We derive two upper bounds for each partition $\{K_h, K_l\}\subseteq \dbeps$.
In case $K = R$,  it holds  that for any tuple 
$\inst{a}_{RT}$ over $\inst{A}_{RT}$, the number 
of tuples $(\inst{a}_{RS}, \inst{a}_{RST})$  paired with $\inst{a}_{RT}$
 in $R_l$ is less than $\frac{3}{2}N^{\eps}$. Moreover, 
the number of distinct tuples over $\inst{A}_{RT}$ in $R_h$ is at most 
$\frac{N}{\frac{1}{2}N^{\eps}} = 2N^{1-\eps}$. The bounds for the other partitions are analogous.

\subsubsection{Space}
\label{sec:3r_space}
We explain the space complexities of the views given in Figure 
  \ref{fig:meta_view_definitions}. Due to symmetry it suffices 
  to consider the first five views. 
  We assume that the views 
  are part of  an \ivme state of a database $\inst{D}$ with threshold base $N$.
  
\begin{itemize}
\item $Q$: Since the view consists of the empty tuple mapped to a single 
 value, its size is constant. 


\item $V_{S_hT_l}(\inst{a}_{RT}, \inst{a}_{RS}, \inst{a}_{RST})$:
The size of this view is upper-bounded
by the size of the join of 
$S_h$ and  $T_l$.
The size of this join 
can be upper-bounded in two ways. 
Since $T_l$ is light,  
it can be bounded 
by $|S_h| \ztimes  \frac{3}{2}N^{\eps} = 
\bigO{N \ztimes N^{\eps}} = \bigO{N^{1 + \eps}}$.
Since $S_h$ is heavy, it can also be bounded
by $|T_l| \ztimes  2N^{1-\eps} =
 \bigO{N \ztimes N^{1 - \eps}} = \bigO{N^{2 - \eps}}$.
It follows that the size of the view is 
$\bigO{N^{1 + \min\{\eps, 1-\eps\}}}$.
By using $N = \Theta(|\inst{D}|)$, we derive that 
the size of the view is $\bigO{|\inst{D}|^{1 + \min\{\eps, 1-\eps\}}}$.

\item $V_{S_hT_l}(\inst{a}_{RS}, \inst{a}_{RST})$:
As $\inst{A}_{RS} \cup  \inst{A}_{RST}$ is contained in 
the schema of $S$, the size of the view is linear.  
\end{itemize}
For the remaining views 
$V_{ST_l}(\inst{a}_{RT}, \inst{a}_{RST})$
and
$V_{ST_t}(\inst{a}_{RST})$
the analysis is similar  
to the latter case. 
Since the group-by variables 
of both views are contained in 
the schema of one of the relations defining
the views, 
 the sizes of both views 
 are linear.

\subsubsection{Update Time}
\label{sec:3r_time}
We explain the time complexities of the delta evaluation strategies 
given in Figure \ref{fig:best_update_strategy}.
Assume that the partitions and views
are part of an \ivme state of a database $\inst{D}$
with some threshold base $N$. Let 
$\delta R_r = \{(\inst{a}_{RT}, \inst{a}_{RS} , \inst{a}_{RST}) \mapsto \p\}$
be a single-tuple update. 

\begin{itemize}
\item Delta computation for the views 
$Q_{R_rS_sT}$, $\delta Q_{R_rS_hT_l}$, 
$Q_{R_rST_l}$, 
$Q_{RST_h}$, 
$V_{R_hS_l}(\inst{a}_{RT} , \inst{a}_{RST})$,
$V_{RS_s}(\inst{a}_{RST})$,
$V_{TR_l}(\inst{a}_{RS},\inst{a}_{RST})$, and $V_{TR_r}(\inst{a}_{RST})$:
In these cases, all variables in the delta views are fixed to the constants 
given by the update $\delta R_r$. Hence, delta computation 
amounts to constant-time look-ups.  

\item Views $Q_{R_rS_lT_l}$, $Q_{R_rS_lT_h}$ with $\eps \leq \frac{1}{2}$, 
$V_{R_hS_l}(\inst{a}_{ST}, \inst{a}_{RT}, \inst{a}_{RST})$,
and $V_{RS_l}(\inst{a}_{ST},\inst{a}_{RST})$:
In these cases the delta evaluation strategies dictate 
to iterate over all $\inst{A}_{ST}$-tuples
$\inst{a}_{ST}$ paired with $(\bdeltaA_{RS},\bdeltaA_{RST})$ in 
$S_l$ and,
if the strategy contains a relation part $T_t$, to look up the multiplicity of
$(\inst{a}_{ST},\bdeltaA_{RT}, \bdeltaA_{RST})$ in 
$T_h$ for each $\inst{a}_{ST}$.
Since
$S_l$ is light, $S_l$ 
contains less than $\frac{3}{2} N^{\eps}$ distinct  
$\inst{A}_{ST}$-tuples paired with 
$\bdeltaA_{RS}$.
Moreover, since 
$\inst{a}_{ST}$, $\bdeltaA_{RT}$, and $\bdeltaA_{RST}$
fix all variable values of 
$T_t$, the latter view  can contain at most one tuple
with these values.
Hence, the overall 
computation time is $\bigO{N^{\eps}}$, thus, by $N = \Theta(|\inst{D}|)$, 
 $\bigO{|\inst{D}|^{\eps}}$.

\item Views $Q_{R_rS_hT_h}$, $Q_{R_rS_lT_h}$ with $\eps > \frac{1}{2}$, $Q_{R_rST_h}$ with $R_r \in \{R_l,R_h\}$,
$V_{T_hR_l}(\inst{a}_{RS}, \inst{a}_{ST}, \inst{a}_{RST})$,
and 
$V_{T_hR_l}(\inst{a}_{ST}, \inst{a}_{RST})$:
In these cases the evaluation strategies dictate to iterate
over all $\inst{A}_{ST}$-tuples $\inst{a}_{ST}$ paired with 
$(\bdeltaA_{RT}, \bdeltaA_{RST})$ in $T_h$ and, if the strategy 
contains a view $S_s$, to look up the multiplicity of
$(\bdeltaA_{RS}, \inst{a}_{ST},\bdeltaA_{RST})$ 
in $S_s$ for each $\inst{a}_{ST}$.
Since $T_h$ is heavy, $T_h$ can contain at most      
$2 N^{1-\eps}$ distinct  
$\inst{A}_{ST}$-tuples. Furthermore, 
there can be at most one tuple in $S_s$ with values
from $\bdeltaA_{RS}$, $\inst{a}_{ST}$, and $\bdeltaA_{RST}$.
This implies that the computation time is 
$\bigO{N^{1-\eps}}$. By $N = \Theta(|\inst{D}|)$, it follows that the
computation time is
 $\bigO{|\inst{D}|^{1-\eps}}$.
\end{itemize}

\subsection{Incremental Maintenance for \threerpj{3} Queries}\label{sec:maintenance_3-rel_3-pair-join} 
We show that the general 
\ivme strategy 
 introduced in Appendix \ref{sec:meta_strategy} 
 admits the complexitiy results for \threerpj{3}
queries as given in 
Theorem \ref{theo:maintain_3rel_3pair-join}. 
In case of \threerpj{3} queries, 
all pair-join variable tuples, i.e., $\inst{A}_{RT}$, $\inst{A}_{RS}$, and 
$\inst{A}_{ST}$, are nonempty, which means
that \ivme decides to partition 
all three relations.
Figure 
\ref{fig:view_definitions_non_empty_join_pair}
gives the restriction of the views and 
delta evaluation strategies in 
Figures 
  \ref{fig:meta_view_definitions}
  and
  \ref{fig:best_update_strategy}
to the case where all pair-join variable tuples 
are nonempty.

 \begin{figure}[t]
   \begin{center}
    \renewcommand{\arraystretch}{1.2}  
    \begin{tabular}{@{\hskip 0.0in}l@{\hskip 0.3in}l@{\hskip 0.0in}}
      \toprule
      Materialized View & Space\\    
      \midrule \\[-0.5cm]
   
   $Q()=\sum\limits_{Q' \in SAV}Q'()$ 
   &$\bigO{1}$\\

$V_{S_hT_l}(\inst{a}_{RT},\inst{a}_{RS},\inst{a}_{RST} ) = 
\textstyle\sum 
S_{h} ( \inst{a}_{RS}, \inst{a}_{ST}, \inst{a}_{RST}) \cdot 
T_{l}(\inst{a}_{ST}, \inst{a}_{RT}, \inst{a}_{RST})$ 
  &  $\bigO{|\inst{D}|^{1 + \min\{\eps, 1- \eps\}}}$\\ 
        


$V_{T_hR_l}(\inst{a}_{RS},\inst{a}_{ST},\inst{a}_{RST} ) = 
\textstyle\sum 
T_{h} (\inst{a}_{ST}, \inst{a}_{RT}, \inst{a}_{RST}) \cdot 
R_{l}(\inst{a}_{RT}, \inst{a}_{RS}, \inst{a}_{RST})$ 
  &  $\bigO{|\inst{D}|^{1 + \min\{\eps, 1- \eps\}}}$ \\
 
   
  
$V_{R_hS_l}(\inst{a}_{ST},\inst{a}_{RT},\inst{a}_{RST} ) = 
\textstyle\sum 
R_{h} (\inst{a}_{RT}, \inst{a}_{RS}, \inst{a}_{RST}) \cdot 
S_{l}(\inst{a}_{RS}, \inst{a}_{ST}, \inst{a}_{RST})$ 
  &  $\bigO{|\inst{D}|^{1 + \min\{\eps, 1- \eps\}}}$ \\
      \bottomrule   
    \end{tabular}
  \end{center}
  \begin{center}
    \renewcommand{\arraystretch}{1.2}  
    \begin{tabular}{@{\hskip 0.0in}l
    @{\hskip 0.1in}l@{\hskip 0.1in}l@{\hskip 0.0in}}
      \toprule
      Delta evaluation strategy & Note & Time\\    
      \midrule

      $\delta Q_{R_rS_lT_l}() = 
      \delta R_r(\bdeltaA_{RT},\bdeltaA_{RS},\bdeltaA_{RST}) \cdot
      \textstyle\sum S_l(\bdeltaA_{RS},\inst{a}_{ST},\bdeltaA_{RST}) \cdot 
      T_l(\inst{a}_{ST},\bdeltaA_{RT},\bdeltaA_{RST})$ & & $\bigO{|\inst{D}|^{\eps}}$ \\
 

      $\delta Q_{R_rS_lT_h}() = \delta R_r(\bdeltaA_{RT},\bdeltaA_{RS},\bdeltaA_{RST}) \cdot
       \textstyle\sum S_l(\bdeltaA_{RS},\inst{a}_{ST},\bdeltaA_{RST}) \cdot 
      T_h(\inst{a}_{ST},\bdeltaA_{TR},\bdeltaA_{RST})$ & $\eps \leq \frac{1}{2}$ & $\bigO{|\inst{D}|^{\eps}}$\\


   $\delta Q_{R_rS_hT_h}() = 
   \delta R_r(\bdeltaA_{RT},\bdeltaA_{RS},\bdeltaA_{RST}) \cdot
   \textstyle\sum T_h(\inst{a}_{ST},\bdeltaA_{TR},\bdeltaA_{RST}) \cdot 
      S_h(\bdeltaA_{RS},\inst{a}_{ST},\bdeltaA_{RST})$ & & $\bigO{|\inst{D}|^{1-\eps}}$ \\


    $\delta Q_{R_rS_lT_h}() = \delta R_r(\bdeltaA_{RT},\bdeltaA_{RS},\bdeltaA_{RST}) \cdot
       \textstyle\sum T_h(\inst{a}_{ST},\bdeltaA_{TR},\bdeltaA_{RST}) \cdot 
      S_l(\bdeltaA_{RS},\inst{a}_{ST},\bdeltaA_{RST})$ & $\eps > \frac{1}{2}$ & $\bigO{|\inst{D}|^{1-\eps}}$ \\


    $\delta Q_{R_rS_hT_l} ()= 
    \delta R_r(\bdeltaA_{RT},\bdeltaA_{RS},\bdeltaA_{RST}) \cdot
    V_{S_hT_l}(\bdeltaA_{TR},\bdeltaA_{RS},\bdeltaA_{RST})$ & & $\bigO{1}$ \\


$\delta V_{R_hS_l}(\inst{a}_{ST},\bdeltaA_{TR},\bdeltaA_{RST}) = 
\delta R_{h} (\bdeltaA_{TR}, \bdeltaA_{RS}, \bdeltaA_{RST}) \cdot  
 S_{l}(\bdeltaA_{RS}, \inst{a}_{ST}, \bdeltaA_{RST})$  & & 
      $\bigO{|\inst{D}|^{\eps}}$\\
      

       $\delta V_{T_hR_l}(\bdeltaA_{RS},\inst{a}_{ST},\bdeltaA_{RST}) =
       \delta R_{l}(\bdeltaA_{RT}, \bdeltaA_{RS}, \bdeltaA_{RST}) \cdot 
T_{h} (\inst{a}_{ST}, \bdeltaA_{TR}, \bdeltaA_{RST})$ & &
      $\bigO{|\inst{D}|^{1-\eps}}$\\
 
     
      \bottomrule 
    \end{tabular}
  \end{center}
  \vspace{-0.3cm}
  \caption{(top table) The definitions and space complexities 
  of the views materialized by \ivme for the 
  maintenance of a \threerpj{3} query. 
  The views are those views from  
  Figure 
  \ref{fig:meta_view_definitions}
  where all pair-join variable tuples 
  $\inst{A}_{RS}$, $\inst{A}_{ST}$, and $\inst{A}_{RT}$
   are nonempty.  $\SAV$ is the set of all skew-aware views of such queries.
   (bottom table) The delta evaluation 
      strategies from Figure 
 \ref{fig:best_update_strategy} for computing the 
 deltas of the views in the top table. 
 It holds 
 $R_r \in \{R_l,R_h\}$.}
  \label{fig:view_definitions_non_empty_join_pair}
\end{figure}

\nop{
\paragraph{Effect of the Optimization Phase.}
Observe that 
$Q_{R_rS_hT_l} = Q_{R_rS_lT_l}[S_l \mapsto S_h]$
and the delta evaluation strategies for both views 
under updates to $R_r$ are different.
This implies that  the partitioning of relation $S$
is not redundant. Since in case of 
\threerpj{3} queries the delta evaluation strategies 
for updates to $S_s$ and $T_t$ are symmetric to 
the strategies for updates to $R_r$, 
it follows that the partitioning of the other two 
relations is not redundant either.   
Thus, in the optimization phase \ivme does not discard 
any partitioning.  
}

\paragraph{Maintenance Complexities.}
We first show that the preprocessing time is
$\bigO{|\inst{D}|^{\frac{3}{2}}}$.
For an initial database $\inst{\inst{D}}$,
the preprocessing stage consists 
of setting the threshold base, 
which requires constant time, the strict partitioning 
of the relations, which can be done in 
linear time, and the computation of the views
in Figure \ref{fig:view_definitions_non_empty_join_pair}, 
which we consider next.
Since the FAQ-width of $Q$ is 
  $\frac{3}{2}$ \cite{FAQ:PODS:2016}, it can be computed 
in time  $\bigO{|\inst{D}|^{\frac{3}{2}}}$.  
Following the reasoning in the size analysis,
it can be  shown that 
the view 
$V_{S_hT_l}(\inst{a}_{RT}, \inst{a}_{RS}, \inst{a}_{RST})$
can be computed 
in time
  $\bigO{|\inst{D}|^{1 + \min\{\eps, 1-\eps\}}}$.
 The analysis of the computation times 
for the other views 
in Figure \ref{fig:view_definitions_non_empty_join_pair}
 are analogous. It follows that the overall preprocessing time is 
$\bigO{|\inst{D}|^{\frac{3}{2}}}$.

The space complexity of \ivme is dominated by the size of the partitions
and the views. From the top table in Figure  
\ref{fig:view_definitions_non_empty_join_pair} and the fact that 
the sizes of partitions are linear, we derive that 
\ivme needs 
$\bigO{|\inst{D}|^{1 + \min\{\eps, 1-\eps\}}}$
space. 

It follows from 
 the complexity results in the bottom table in Figure
\ref{fig:view_definitions_non_empty_join_pair} 
that the time to process a single-tuple update is 
$\bigO{|\inst{D}|^{\max\{\eps, 1-\eps\}}}$.
Following the same 
reasoning as in the triangle count case, we can show
that the minor and major  rebalancing times are 
$\bigO{|\inst{D}|^{\eps + \max\{\eps, 1-\eps\}}}$
and 
$\bigO{|\inst{D}|^{1 + \min\{\eps, 1-\eps\}}}$, respectively.
The minor rebalancing time is amortized 
over $\Omega(|\inst{D}|^{\eps})$ and 
the major rebalancing time over 
$\Omega(|\inst{D}|)$ updates.
This implies that the amortized update time 
is $\bigO{|\inst{D}|^{\max\{\eps, 1-\eps\}}}$.
Since the final count $Q$ is included in the set of materialized 
views, the answer time is constant.

\subsection{Incremental Maintenance for \threerpj{2} Queries}\label{sec:maintenance_3-rel_2-pair-join}
\begin{figure}[t]
  \begin{center}
    \renewcommand{\arraystretch}{1.2}  
    \begin{tabular}{@{\hskip 0.0in}l@{\hskip 0.3in}l@{\hskip 0.0in}}
      \toprule
      Materialized View & Space\\    
      \midrule \\[-0.5cm]
   
   $Q()=\sum\limits_{Q' \in SAV}Q'()$ & $\bigO{1}$\\


$V_{R_lT}(\inst{a}_{RS},\inst{a}_{RST}) = 
\textstyle\sum 
T (\inst{a}_{RT}, \inst{a}_{RST}) \cdot 
R_{l}(\inst{a}_{RT}, \inst{a}_{RS}, \inst{a}_{RST})$ 
      &    $\bigO{|\inst{D}|}$ \\

    
$V_{R_hS_l}(\inst{a}_{RT},\inst{a}_{RST} ) = 
\textstyle\sum  
R_{h} (\inst{a}_{RT}, \inst{a}_{RS}, \inst{a}_{RST}) \cdot 
S_{l}(\inst{a}_{RS}, \inst{a}_{RST})$ 
  &  $\bigO{|\inst{D}|}$ \\
 
      \bottomrule   
    \end{tabular}
  \end{center}
  \vspace{-0.5cm}
  \begin{center}
    \renewcommand{\arraystretch}{1.2}  
    \begin{tabular}{@{\hskip 0.0in}l@{\hskip 0.1in}l
    @{\hskip 0.1in}l@{\hskip 0.0in}}
      \toprule
      Delta evaluation strategy & Note &Time\\    
      \midrule


      $\delta Q_{R_rS_sT}() = 
      \delta R_r(\bdeltaA_{RT},\bdeltaA_{RS},\bdeltaA_{RST}) \cdot
      S_s(\bdeltaA_{RS},\bdeltaA_{RST}) \cdot 
      T(\bdeltaA_{RT},\bdeltaA_{RST})$ & & $\bigO{1}$ \\


$\delta V_{R_hS_l}(\inst{a}_{ST},\bdeltaA_{RT},\bdeltaA_{RST}) = 
\delta R_{h} (\bdeltaA_{RT}, \bdeltaA_{RS}, \bdeltaA_{RST}) \cdot  
 S_{l}(\bdeltaA_{RS}, \inst{a}_{ST}, \bdeltaA_{RST})$  & &
      $\bigO{|\inst{D}|^{\eps}}$\\


       $\delta V_{TR_l}(\bdeltaA_{RS},\bdeltaA_{RST}) = 
       \delta R_{l}(\bdeltaA_{TR}, \bdeltaA_{RS}, \bdeltaA_{RST}) \cdot 
T (\bdeltaA_{TR}, \bdeltaA_{RST})$ & &
      $\bigO{1}$\\
      
     \hline\\[-0.5cm]   
  

      $\delta Q_{R_lS_sT} ()= 
      \delta S_s(\bdeltaA_{RS},\bdeltaA_{RST}) \cdot
      V_{R_lT}(\bdeltaA_{RS},\bdeltaA_{RST}) $ & & $\bigO{1}$  \\

        $\delta Q_{R_hS_sT} ()= 
        \delta S_s(\bdeltaA_{RS},\bdeltaA_{RST}) \cdot
        \textstyle\sum 
      R_h(\inst{a}_{RT},\bdeltaA_{RS}, \bdeltaA_{RST})
      \cdot 
      T(\inst{a}_{RT},\bdeltaA_{RST})$ & & $\bigO{|\inst{D}|^{1-\eps}}$ \\

          

$\delta V_{R_hS_l}(\inst{a}_{RT},\bdeltaA_{RST}) = 
\delta S_{l} (\bdeltaA_{RS},  \bdeltaA_{RST}) \cdot  
 R_{h}(\inst{a}_{RT}, \bdeltaA_{RS}, \bdeltaA_{RST})$  & & 
      $\bigO{|\inst{D}|^{1 -\eps}}$\\
      
  \hline\\[-0.5cm]
      
   $\delta Q_{R_lS_lT} ()= 
   \delta T(\bdeltaA_{RT},\bdeltaA_{RST}) \cdot
   \textstyle\sum R_l(\inst{a}_{RT},\bdeltaA_{RS},\bdeltaA_{RST}) \cdot 
      S_l(\bdeltaA_{RS}, \bdeltaA_{RST})$ & & $\bigO{|\inst{D}|^{\eps}}$ \\

      
         $\delta Q_{R_lS_hT} ()= 
         \delta T(\bdeltaA_{RT},\bdeltaA_{RST}) \cdot
         \textstyle\sum R_l(\bdeltaA_{RT},\inst{a}_{RS},\bdeltaA_{RST}) \cdot 
      S_h(\inst{a}_{RS},\bdeltaA_{RST})$ & $\eps \leq \frac{1}{2}$ & $\bigO{|\inst{D}|^{\eps}}$ \\

         $\delta Q_{R_hS_lT} ()= 
         \delta T(\bdeltaA_{RT},\bdeltaA_{RST}) \cdot  
      V_{R_hS_l}(\bdeltaA_{RT},\bdeltaA_{RST})$
       &&  $\bigO{1}$ \\

      
         $\delta Q_{R_lS_hT} ()= 
         \delta T(\bdeltaA_{RT},\bdeltaA_{RST}) \cdot
         \textstyle\sum  S_h(\inst{a}_{RS}, \bdeltaA_{RST}) \cdot R_l(\bdeltaA_{RT},\inst{a}_{RS},\bdeltaA_{RST})$
     & $\eps > \frac{1}{2}$ & $\bigO{|\inst{D}|^{1 -\eps}}$ \\

      
         $\delta Q_{R_hS_hT}() = 
         \delta T(\bdeltaA_{RT},\bdeltaA_{RST}) \cdot
         \textstyle\sum  S_h(\inst{a}_{RS}, \bdeltaA_{RST}) \cdot R_h(\bdeltaA_{RT},
      \inst{a}_{RS},\bdeltaA_{RST})$ & & $\bigO{|\inst{D}|^{1 -\eps}}$ \\

       $\delta V_{R_lT}(\inst{a}_{RS},\bdeltaA_{RST}) = 
       \delta T(\bdeltaA_{RT}, \bdeltaA_{RST}) \cdot 
R_{l} (\bdeltaA_{RT}, \inst{a}_{RS}, \bdeltaA_{RST})$ & &
      $\bigO{|\inst{D}|^{\eps}}$ \\
      
      \bottomrule 
    \end{tabular}
  \end{center}
   \vspace{-0.3cm}
   \caption{
 (top table) The restriction of the views in Figure 
  \ref{fig:meta_view_definitions} to the case of 
 a \threerpj{2} queries where $\inst{A}_{ST}$ is empty.
  $\SAV$ is the set of all skew-aware views of such queries.
 (bottom table)   The delta evaluation 
      strategies for computing the 
 deltas of the views in the top table under updates 
 to all three relations. 
 It holds $R_r \in \{R_l,R_h\}$ and $S_s \in \{S_l,S_h\}$.
 The strategies for updates to $R_r$ 
 are from Figure
 \ref{fig:best_update_strategy}. The strategies for updates to
 the other relations follow from Figure
 \ref{fig:best_update_strategy} by symmetry.
 } 
  \label{fig:view_definitions_2_non_empty_join_pair}
\end{figure}

We show that the restriction of the 
general \ivme strategy from Appendix \ref{sec:meta_strategy}
to  \threerpj{2} queries results in an 
\ivme strategy that maintains \threerpj{2} queries
with the complexities as given in
Theorem \ref{theo:maintain_3rel_3pair-join}. 
Without loss of generality, we consider
\threerpj{2} queries where the pair-join variable tuples 
$\inst{A}_{RT}$ and $\inst{A}_{RS}$ are nonempty
and $\inst{A}_{ST}$ is empty. 
The cases where one of the 
variable tuples $\inst{A}_{RT}$ and $\inst{A}_{RS}$
is empty, and the other tuples are nonempty are handled analogously.

\ivme partitions all relations besides 
$T$, since $\inst{A}_{ST}$, the partition variable tuple of 
$T$, is empty. 
Figure
\ref{fig:view_definitions_2_non_empty_join_pair}
gives the restrictions of the views and 
delta evaluation strategies in Figures  
\ref{fig:meta_view_definitions}
and
\ref{fig:best_update_strategy}
to the case where $\inst{A}_{ST}$
is empty and the other pair-join variable tuples are nonempty .  
Since the views of \threerpj{2} queries are not symmetric, the bottom table in
Figure 
 \ref{fig:view_definitions_2_non_empty_join_pair} gives the evaluation strategies
for updates to all three relations.

\nop{
\paragraph{Effect of the Optimization Phase.}
Observe that it holds 
$Q_{R_hS_lT} = Q_{R_lS_lT}[R_l \mapsto R_h]$ 
and $Q_{R_lS_hT} = Q_{R_lS_lT}[S_l \mapsto S_h]$.
According to Figure   
 \ref{fig:view_definitions_2_non_empty_join_pair},
under updates to $T$ 
the delta evaluation strategies  
for both $Q_{R_hS_lT}$ and $Q_{R_lS_hT}$ differ 
from the delta evaluation strategy
 for  $Q_{R_lS_lT}$. It follows
 that the partitions of $R$ and $S$
 are non-redundant, which means 
 that the optimization phase does not discard 
 these partitions.  
}

\paragraph{Maintenance Complexities.}
All maintained views admit hypertree decompositions
with the group-by variables on top of all other variables
and fractional hypertree width one \cite{GroheM14}.
This implies that the FAQ-width of the views 
is one. Hence, the views can be computed in linear time. 
It follows that the  preprocessing time is linear in the
 database size.  

It follows from the space complexities in the top table in Figure  
   \ref{fig:view_definitions_2_non_empty_join_pair}
that \ivme needs linear space.  
By the time complexities in the bottom table of Figure 
  \ref{fig:view_definitions_2_non_empty_join_pair}
and an amortization analysis along the lines of the 
proof  of Theorem~\ref{theo:main_result},
it follows that the amortized 
update time is $\bigO{\inst{|D|}^{\max\{\eps, 1-\eps\}}}$.
The materialization of the result of $Q$  
guarantees constant answer time.

\subsection{Incremental Maintenance for
$\threerpj{0} \cup \threerpj{1}$ Queries}\label{sec:maintenance_3-rel_1-pair-join} 
In this section we show that 
$\threerpj{0} \cup \threerpj{1}$ queries can 
be maintained 
with the complexities given in 
Theorem \ref{theo:maintain_3rel_3pair-join}:
$\bigO{|\inst{D}|}$ preprocessing time,
(non-amortized) constant update time,
constant answer time, and 
$\bigO{|\inst{D}|}$ space.
  We first prove that 
this class of  queries 
is equal to the class of non-hierarchical \threer 
queries. From this, the above complexity 
results follow immediately 
\cite{BerkholzKS17}. We then 
show that the 
general \ivme strategy presented in 
Appendix \ref{sec:meta_strategy}
recovers the same complexity results
when restricted to $\threerpj{0} \cup \threerpj{1}$ queries
and optimized as described in Appendix 
\ref{sec:3r_opt_phase}.

\paragraph{Hierarchical \threer Queries.}
We recall the definition of hierarchical queries
\cite{DalviS07a}. Given a variable $A$, we denote by 
$\atoms{A}$, the set of all relation symbols 
containing  $A$ in their schemas. 
A query is called hierarchical if for each pair 
of variables $A$ and $B$, it holds 
$\atoms{A} \subseteq \atoms{B}$,
$\atoms{B} \subseteq \atoms{A}$, or
$\atoms{A} \cap \atoms{B} = \emptyset$.

\begin{proposition}\label{prop:hierarchical_3_rel_queries}
A \threer query is hierarchical if and only if it is 
a $\threerpj{0} \cup \threerpj{1}$ query.
\end{proposition}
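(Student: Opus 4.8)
The plan is to prove both directions of the biconditional by translating the hierarchical condition into a condition on the pair-join variable tuples $\inst{A}_{RT}$, $\inst{A}_{RS}$, $\inst{A}_{ST}$. Recall that $\atoms{A}$ denotes the set of relation symbols whose schema contains $A$. Since we assume all non-join variable tuples are empty (as noted at the end of Appendix~\ref{sec:meta_strategy}), every variable of a \threer query lies in exactly one of the three pair-join tuples, and a variable $A$ in $\inst{A}_{RT}$ has $\atoms{A} = \{R,T\}$, a variable in $\inst{A}_{RS}$ has $\atoms{A} = \{R,S\}$, and a variable in $\inst{A}_{ST}$ has $\atoms{A} = \{S,T\}$. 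This is the key observation: the possible values of $\atoms{A}$ are exactly the three two-element subsets of $\{R,S,T\}$, determined by which pair-join tuple $A$ belongs to. Variables within the same tuple share the same $\atoms{\cdot}$ value.

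First I would handle the forward direction by contrapositive: suppose the query has at least two nonempty pair-join tuples, i.e. it is a \threerpj{2} or \threerpj{3} query, and show it is not hierarchical. Pick variables $A$ and $B$ from two distinct nonempty pair-join tuples; without loss of generality say $A \in \inst{A}_{RT}$ and $B \in \inst{A}_{RS}$, so $\atoms{A} = \{R,T\}$ and $\atoms{B} = \{R,S\}$. Then $\atoms{A} \cap \atoms{B} = \{R\} \neq \emptyset$, yet neither set contains the other (since $T \in \atoms{A} \setminus \atoms{B}$ and $S \in \atoms{B}\setminus\atoms{A}$). This violates the hierarchical condition, so the query is non-hierarchical. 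The same argument works for any pair of the three two-element subsets, since any two distinct two-element subsets of a three-element set intersect in exactly one element while being incomparable under inclusion.

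For the converse, I would assume the query is a $\threerpj{0}\cup\threerpj{1}$ query, i.e. at most one pair-join tuple is nonempty, and verify the hierarchical condition directly for an arbitrary pair $A, B$. If at most one pair-join tuple is nonempty, then all variables of the query lie in a single common pair-join tuple, hence $\atoms{A} = \atoms{B}$ for every pair of variables, which trivially satisfies $\atoms{A} \subseteq \atoms{B}$. The $\threerpj{0}$ case (no nonempty pair-join tuple) is vacuous, as the query has no join variables at all. Thus every such query is hierarchical.

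I do not anticipate a serious obstacle here; the main point requiring care is justifying the reduction to empty non-join tuples, namely that adding non-join variables to a relation does not affect the $\atoms{\cdot}$ relationships among the \emph{join} variables and that a non-join variable $A$ has $\atoms{A}$ equal to a singleton, which is comparable (by inclusion) with the $\atoms{\cdot}$ of every variable in the same relation and disjoint from those in other relations. One should confirm that including singleton $\atoms{\cdot}$ sets cannot break the hierarchical property in the $\threerpj{0}\cup\threerpj{1}$ case and cannot repair it in the \threerpj{2}$\cup$\threerpj{3} case. This is a short finite check over the possible containment patterns and is the only place where the argument touches the non-join variables at all.
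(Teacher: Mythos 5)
Your overall strategy (forward direction by exhibiting a violating pair from two distinct nonempty pair-join tuples; converse by checking the containment patterns of the $\atoms{\cdot}$ sets) is the same as the paper's. But your ``key observation'' is wrong as stated, and the error propagates into the converse. You claim that once the non-join tuples are empty, every variable lies in one of the three \emph{pair}-join tuples and hence $\atoms{A}$ is always a two-element subset of $\{R,S,T\}$. This forgets the triple-join tuple $\inst{A}_{RST}$ from Definition~\ref{def:threer_queries}, whose variables occur in all three relations and have $\atoms{A}=\{R,S,T\}$. Consequently your converse argument --- ``all variables of the query lie in a single common pair-join tuple, hence $\atoms{A}=\atoms{B}$ for every pair'' --- is false for a $\threerpj{1}$ query with nonempty $\inst{A}_{RST}$: there you get both $\{R,S,T\}$ and a two-element set. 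The conclusion survives, since $\{R,S,T\}$ contains every other $\atoms{\cdot}$ set and so never violates the hierarchical condition, but that case must be stated and checked; the paper's proof does exactly this (its cases where one variable comes from $\inst{A}_{RST}$). The forward direction is unaffected, since you only need one violating pair and you correctly find it among two distinct nonempty pair-join tuples.

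Two smaller points. First, the ``without loss of generality'' of Appendix~\ref{sec:meta_strategy} (replacing relations by views that aggregate away non-join variables) is a statement about the maintenance algorithm, not about the syntactic hierarchical property of the query, so you cannot simply import it; you do recognize this and sketch the right repair (a non-join variable has singleton $\atoms{\cdot}$, which is contained in the $\atoms{\cdot}$ of every variable of the same relation and disjoint from that of any variable not occurring in that relation), and that check is sound --- the paper instead handles the non-join tuples directly in its case analysis rather than via a reduction. Second, your remark that the $\threerpj{0}$ case is ``vacuous, as the query has no join variables at all'' again overlooks $\inst{A}_{RST}$: a $\threerpj{0}$ query may well have join variables, namely those shared by all three relations.
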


\begin{proof}
\underline{\textit{The ``if''-direction.}}
 We consider a $\threerpj{0} \cup \threerpj{1}$
 query of the form
 $$Q() = \sum
R(\inst{a}_R,\inst{a}_{RS},\inst{a}_{RST}) \ztimes 
S(\inst{a}_S,\inst{a}_{RS},\inst{a}_{RST}) \ztimes 
T(\inst{a}_T, \inst{a}_{RST}),$$ where 
the pair-join variable tuples 
$\inst{A}_{RT}$ and $\inst{A}_{ST}$ are empty 
and all other variable tuples are possibly nonempty.
Cases where other pair-join variable tuples are set to empty 
are handled along the same lines. 
 
We show that $Q$ is hierarchical.
Given two variables 
$A$ and $B$, we need to prove that 
$\atoms{A} \subseteq \atoms{B}$,
$\atoms{B} \subseteq \atoms{A}$, or
$\atoms{A} \cap \atoms{B} = \emptyset$.
In case both variables are included in 
$\inst{A}_{RS}$, $\inst{A}_{RST}$ or the 
same non-join variable tuple,  
the first two conditions 
obviously hold. 
We further distinguish the following cases:  
\begin{itemize}
\item $A$ and $B$ are from distinct non-join variable tuples:
It immediately follows that 
$\atoms{A} \cap \atoms{B} = \emptyset$.

\item $A$ is from a non-join variable tuple 
and $B$ is from $\inst{A}_{RS}$:
This means 
that $\atoms{B} = \{R,S\}$. If $A$ is included 
in $\inst{A}_R$ or $\inst{A}_S$, it holds 
$\atoms{A} \subseteq \atoms{B}$. If
$A$ is included in $\inst{A}_T$,
we have
$\atoms{A} \cap \atoms{B} = \emptyset$. 

\item $A$ is from a non-join variable tuple and $B$
is from  $\inst{A}_{RST}$:
We have
$\atoms{B} = \{R,S,T\}$. Each non-join variable, and hence $A$,  occurs in the schema 
of exactly one relation. 
Thus $\atoms{A} \subseteq \atoms{B}$.

\item $A$ is from  $\inst{A}_{RS}$ and $B$ is from $\inst{A}_{RST}$:
We have $\atoms{A} = \{R,T\} \subseteq   
\{R,S,T\} = \atoms{B}$.  
\end{itemize}

\smallskip
\underline{\textit{The ``only if''-direction.}}
Let 
$$Q() = \sum
R(\inst{a}_R,\inst{a}_{RT},\inst{a}_{RS},\inst{a}_{RST}) \ztimes 
S(\inst{a}_S,\inst{a}_{RS},\inst{a}_{ST},\inst{a}_{RST}) \ztimes 
T(\inst{a}_T,\inst{a}_{ST},\inst{a}_{RT},\inst{a}_{RST})$$
be a hierarchical \threer query. 
 We show that 
at most one of the pair-join variable tuples 
 $\inst{A}_{RT}$, $\inst{A}_{RS}$, and $\inst{A}_{ST}$
 can be nonempty.
For the sake of 
contradiction assume that two pair-join variable tuples, say 
$\inst{A}_{RS}$ and $\inst{A}_{ST}$
are nonempty. All other choices 
are handled analogously. 
Let $A$ be from $\inst{A}_{RS}$ and $B$ from $\inst{A}_{ST}$. 
This means that $\atoms{A} = \{R,S\}$ and $\atoms{B} = \{S,T\}$. Hence,
neither $\atoms{A} \subseteq \atoms{B}$ nor $\atoms{B} \subseteq \atoms{A}$
nor $\atoms{A} \cap \atoms{B} = \emptyset$.  
We conclude that 
$Q$ cannot be hierarchical, which contradicts our initial
assumption. 
\end{proof}

We consider $\threerpj{0} \cup \threerpj{1}$ queries where 
the pair-join variable tuples $\inst{A}_{RT}$ and $\inst{A}_{ST}$ are 
 empty. 
Cases where other pairs of pair-join variable tuples are empty 
 are handled completely analogously. 
 Figure 
 \ref{fig:view_definitions_1_non_empty_join_pair}
shows the restrictions of the views and strategies 
in Figures 
\ref{fig:meta_view_definitions}
and
\ref{fig:best_update_strategy}
to $\threerpj{0} \cup \threerpj{1}$ queries.
Since the views of this kind of queries 
are not symmetric, we give in 
the bottom table in Figure
\ref{fig:view_definitions_1_non_empty_join_pair}
the delta evaluation strategies under  updates to all relations.

\begin{figure}[t]
  \begin{center}
    \renewcommand{\arraystretch}{1.2}  
    \begin{tabular}{@{\hskip 0.0in}l@{\hskip 0.3in}l@{\hskip 0.0in}}
      \toprule
      Materialized View & Space\\    
      \midrule \\[-0.5cm]
   
 $Q()=\sum\limits_{Q' \in SAV}Q'()$  &$\bigO{1}$\\
          

$V_{RS_{s}}(\inst{a}_{RST}) = 
\textstyle\sum
R (\inst{a}_{R}, \inst{a}_{RS}, \inst{a}_{RST}) \cdot 
S_{s}(\inst{a}_{S}, \inst{a}_{RS}, \inst{a}_{RST})$ 
     &    $\bigO{|\inst{D}|}$ \\

      \bottomrule   
    \end{tabular}
  \end{center}
\vspace{-0.5cm}
\begin{center}
    \renewcommand{\arraystretch}{1.2}  
    \begin{tabular}{@{\hskip 0.0in}l@{\hskip 0.1in}l
    @{\hskip 0.1in}l@{\hskip 0.0in}}
      \toprule
      Delta evaluation strategy & Time \\    
      \midrule
       

      $\delta Q_{RS_sT} ()= 
      \delta R(\bdeltaA_R, \bdeltaA_{RS},\bdeltaA_{RST}) \cdot
      V_{S_s}(\bdeltaA_{RS},\bdeltaA_{RST}) \cdot 
      V_{T}(\bdeltaA_{RT},\bdeltaA_{RST})$ & $\bigO{1}$ \\

$\delta V_{RS_s}(\bdeltaA_{RST}) = 
\delta R (\bdeltaA_{R}, \bdeltaA_{RS}, \bdeltaA_{RST}) \cdot 
V_{S_{s}}(\bdeltaA_{RS}, \bdeltaA_{RST})$  & 
      $\bigO{1}$\\   
      
      \hline\\[-0.5cm]
       $Q_{RS_sT} = 
       \delta S_s(\bdeltaA_S, \bdeltaA_{RS},\bdeltaA_{RST}) \cdot
       V_{T}(\bdeltaA_{RST}) \cdot 
      V_{R}(\bdeltaA_{RS}, \bdeltaA_{RST})$ & $\bigO{1}$ \\
           
  $\delta V_{RS_s}(\bdeltaA_{RST}) =
  \delta S_s(\bdeltaA_S, \bdeltaA_{RS},\bdeltaA_{RST}) \cdot
      V_R(\bdeltaA_{RS},\bdeltaA_{RST})$ &     
      $\bigO{1}$ 
      \\

      \hline\\[-0.5cm]
      
       $\delta Q_{RS_sT} () = 
       \delta T(\bdeltaA_T, \bdeltaA_{RST}) \cdot 
     V_{RS_s}(\bdeltaA_{RST})$ & $\bigO{1}$\\
      \bottomrule   
    \end{tabular}
\end{center}
\vspace{-0.3cm}
\caption{
 (top table) The restriction of the views in Figure 
  \ref{fig:meta_view_definitions} to the case of   
 $\threerpj{0} \cup \threerpj{1}$ queries where $\inst{A}_{RT}$
 and $\inst{A}_{ST}$ are empty.
  $\SAV$ is the set of all skew-aware views of such queries. 
    (bottom table) The delta evaluation 
      strategies for computing the 
 deltas of the views in the top table. It
 holds $S_s \in \{S_l,S_h\}$. The strategies for updates to relation $R$
 are from Figure 
 \ref{fig:best_update_strategy}. The strategies for updates to the 
 other relations follow from Figure 
 \ref{fig:best_update_strategy} by symmetry. 
 The optimization phase 
 discards the partitioning of relation $S$ and replaces 
 in all views and update strategies the part  
 $S_s$ by $S$.} 
  \label{fig:view_definitions_1_non_empty_join_pair}
  \end{figure}

\paragraph{Effect of the Optimization Phase.}
It can easily be derived from the bottom table in Figure 
\ref{fig:view_definitions_1_non_empty_join_pair}
that for any view $V$ including $S_l$ and for any update 
$\delta K$, the evaluation strategies to compute the deltas 
of $V$ and $V[S_l \mapsto S_h]$ are the same, which means 
that the partitioning of S is redundant. Hence, in the
optimization phase, \ivme discards 
the partitioning of relation $S$.

\paragraph{Maintenance Complexities.}
The analysis of the preprocessing time is analogous 
to the case of 
$\threerpj{2}$ queries.
Since for any maintained view, the FAQ-width is one,
the views can be computed in linear time, thus, the preprocessing time is linear. 

It follows from the top table in Figure  
\ref{fig:view_definitions_1_non_empty_join_pair}
that the space needed by \ivme to incrementally maintain a 
$\threerpj{0} \cup \threerpj{1}$ query is 
linear. We derive from the bottom table in the same Figure
that the update time is constant. 
As a result of the optimization phase,   
\ivme decides to not partition any relation. Hence,  
the maintenance procedure does not include  
rebalancing steps, which means that no more 
than constant time is needed at any update step. 
It follows that the constant update time is non-amortized.

\subsection{Worst-Case Optimality of \ivme for \threer Queries}
The worst-case optimality of the update time of the 
\ivme strategy for $\threerpj{2} \cup \threerpj{3}$ queries, 
conditioned on the \OMv conjecture (Conjecture \ref{conj:omv}) 
follows from Proposition 
\ref{prop:lower_bound_3rel_count}. The proof of the proposition is a 
straightforward adaption of the proof of Proposition 
\ref{prop:lower_bound_triangle_count}.

\begin{proof}[Proof of Proposition \ref{prop:lower_bound_3rel_count}] 
We reduce the \OuMv problem given in Definition 
\ref{def:OuMv} to the incremental maintenance of 
$\threerpj{2} \cup \threerpj{3}$ queries. The reduction 
is along the same lines as in the proof of 
Proposition
\ref{prop:lower_bound_triangle_count}.
We explain the main idea of the reduction  to the maintenance 
of $\threerpj{2}$ queries. The case for 
$\threerpj{3}$ queries is a simple extension.  

Let 
$$Q() = \sum
R(\inst{a}_R,\inst{a}_{RT},\inst{a}_{RS},\inst{a}_{RST}) \ztimes 
S(\inst{a}_S,\inst{a}_{RS},\inst{a}_{ST},\inst{a}_{RST}) \ztimes 
T(\inst{a}_T,\inst{a}_{ST},\inst{a}_{RT},\inst{a}_{RST})$$
be a $\threerpj{2}$ query where the variable tuples 
$\inst{A}_{RS}$ and $\inst{A}_{ST}$ are nonempty.
Assume that variable $A_{RS}$ is included in
$\inst{A}_{RS}$ and the variable $A_{ST}$ is included 
in  $\inst{A}_{ST}$.    
Assume also that there is a dynamic algorithm 
that maintains the result of $Q$
with arbitrary preprocessing time, amortized update time 
$\bigO{|\inst{D}|^{\frac{1}{2}-\gamma}}$, 
and answer time $\bigO{|\inst{D}|^{1-\gamma}}$.
We can use this algorithm to solve
the $\OuMv$ problem in subcubic time, which contradicts the \OuMv conjecture.

Let  $(\vecnormal{M}, (\vecnormal{u}_1,\vecnormal{v}_1), \ldots ,(\vecnormal{u}_n,\vecnormal{v}_n))$ be an input to the $\OuMv$ problem.
After the construction of the initial \ivme state from an empty database 
$\db = \{R,S,T\}$, we execute at most $n^2$ updates to relation $S$ such that 
$S = \{\, (A_{RS}:i,A_{ST}:j, a , \ldots , a) \mapsto \vecnormal{M}(i,j) \,\mid\, i,j \in \{1,\ldots, n\} \,\}$ for some constant $a$. 
In each round $r \in \{1, \ldots , n\}$, we execute at most $2n$ updates to the relations $R$ and $T$ 
such that $R = \{\, (A_{RS}:i, a , \ldots , a) \mapsto \vecnormal{u}_r(i) \,\mid\, i \in \{1,\ldots, n\} \,\}$
and $T = \{\, (A_{ST}:i, a , \ldots , a) \mapsto \vecnormal{v}_r(i) \,\mid\, i \in \{1,\ldots, n\} \,\}$.
The algorithm outputs 
$1$ at the end of round $r$ if and only if  $Q()$ is nonzero. 
The time analysis of the reduction is   exactly the same as in 
the proof of Proposition \ref{prop:lower_bound_triangle_count}. 
\end{proof} 

 \end{document}